\newlength{\lyxlabelwidth} 
\theoremstyle{plain}
\newtheorem{thm}{\protect\theoremname}
\theoremstyle{plain}
\newtheorem{prop}[thm]{\protect\propositionname}
\theoremstyle{plain}
\newtheorem{lem}[thm]{\protect\lemmaname}
\let\oldref\ref
\renewcommand{\ref}{\cref}
\newcommand\mycitep[2][usedefault, addprefix=\global]{\citep[#1,~][]{#2}}%
\let\oldbibliography\bibliography
\renewcommand{\bibliography}[1]{
\renewcommand{\ref}{\oldref}
\oldbibliography{#1}
\renewcommand{\ref}{\cref}
}
\newcommand{\myfootref}[1]{Eq.~(\oldref{#1})}
\def\cleartheorem#1{%
\expandafter\let\csname#1\endcsname\relax
\expandafter\let\csname c@#1\endcsname\relax
}
\newif\ifnotlyx
\newlength{\lyxlabelwidth} %
\theoremstyle{plain}
\newtheorem{thm}{\protect\theoremname}\theoremstyle{plain}
\newtheorem{prop}{\protect\propositionname}\theoremstyle{plain}
\newtheorem{lem}{\protect\lemmaname}\theoremstyle{plain}
\theoremstyle{plain}
\newtheorem{condition}{Condition}
\crefname{condition}{Condition}{Conditions}
\crefname{cor}{Corollary}{Corollaries}
\crefname{lem}{Lemma}{Lemmas}
\crefname{prop}{Proposition}{Propositions}
\crefname{thm}{Theorem}{Theorems}
\providecommand{\corollaryname}{Corollary}
\providecommand{\lemmaname}{Lemma}
\providecommand{\propositionname}{Proposition}
\providecommand{\theoremname}{Theorem}
\crefname{condition}{Condition}{Conditions}
\providecommand{\cc}[1]{}
\providecommand{\lemmaname}{Lemma}
\providecommand{\propositionname}{Proposition}
\providecommand{\theoremname}{Theorem}
\begin{document}
\title{Dependence of integrated, instantaneous, and fluctuating entropy
production on the initial state in quantum and classical processes}
\author{Artemy Kolchinsky}
\email{artemyk@gmail.com}

\affiliation{Santa Fe Institute, 1399 Hyde Park Road, Santa Fe, NM 87501, USA}
\author{David H. Wolpert}
\altaffiliation{Complexity Science Hub, Vienna; Arizona State University, Tempe, AZ}

\affiliation{Santa Fe Institute, 1399 Hyde Park Road, Santa Fe, NM 87501, USA}
\begin{abstract}
We consider the additional entropy production (EP) incurred by a fixed
quantum or classical process on some initial state $\rho$, above
the minimum EP incurred by the same process on any initial state.
We show that this additional EP, which we term the ``mismatch cost
of $\rho$'', has a universal information-theoretic form: it is given
by the contraction of the relative entropy between $\rho$ and the
least-dissipative initial state $\varphi$ over time. We derive versions
of this result for integrated EP incurred over the course of a process,
for trajectory-level fluctuating EP, and for instantaneous EP rate.
We also show that mismatch cost for fluctuating EP obeys an integral
fluctuation theorem. Our results demonstrate a fundamental relationship
between \emph{thermodynamic irreversibility} (generation of EP) and
\emph{logical irreversibility} (inability to know the initial state
corresponding to a given final state). We use this relationship to
derive quantitative bounds on the thermodynamics of quantum error
correction and to propose a thermodynamically-operationalized measure
of the logical irreversibility of a quantum channel. Our results hold
for both finite and infinite dimensional systems, and generalize beyond
EP to many other thermodynamic costs, including nonadiabatic EP, free
energy loss, and entropy gain.
\end{abstract}
\maketitle

\section{Introduction}
The second law of thermodynamics states that the total entropy of
a system and any coupled reservoirs cannot decrease during a physical
process. For this reason, the overall amount of entropy production
(EP) is the fundamental measure of the irreversibility of the process
in both classical and quantum thermodynamics~\citep{seifert2012stochastic,deffnerQuantumThermodynamicsIntroduction2019}.

Consider a quantum system coupled to one or more thermodynamic reservoirs.
Suppose the system starts in some initial state $\rho$ and evolves
for a time interval $t\in[0,\tau]$, and that the evolution of the
system's state can be formalized in terms of a quantum channel $\Phi$
that takes initial states to final states, $\rho\mapsto\Phi(\rho)$.
The {integrated EP} incurred during this process can be written
as a function of the initial state $\rho$ as~\citep{esposito2010entropy,deffnerNonequilibriumEntropyProduction2011a,landi2020irreversible}
\begin{equation}
\Sigma(\rho)=S(\Phi(\rho))-S(\rho)+Q(\rho),\label{eq:EP0-ent}
\end{equation}
where $S(\cdot)$ is von Neumann entropy and $Q(\rho)$ is the \emph{entropy
flow}, i.e., the increase of the thermodynamic entropy of the coupled
reservoirs. The precise form of the entropy flow term $Q$ is determined
by the number and characteristics of the coupled reservoirs (for instance,
for a single heat bath at inverse temperature $\beta$, $Q$ is equal
to $\beta$ times the generated heat).

Deriving expressions and bounds for EP has important implications
for understanding the thermodynamic efficiency of various artificial
and biological devices, and it serves as a major focus of research
in nonequilibrium statistical physics~\citep{seifert2012stochastic, landi2020irreversible, van2013stochastic, jarzynski_equalities_2011}.
Some of this research derives exact expressions for EP given a fully
specified protocol and a fixed initial state~\citep{esposito2010entropy,deffnerNonequilibriumEntropyProduction2011a}.
Other research derives bounds on EP in terms of general properties
of the dynamics (e.g., the fluctuations of observables, as in ``thermodynamic
uncertainty relations''~\citep{gingrich2016dissipation, gingrich2017fundamental}).
A third approach considers bounds on EP in terms of various properties
of the driving protocol, such as the driving speed~\citep{sivak2012thermodynamic, esposito2010finite, shiraishi_speed_2018}
or constraints on the available generators~\citep{wilming_second_2016, kolchinsky2020entropy}.

In this paper, we consider the complementary issue, and analyze how the EP incurred
during a fixed physical process depends on the initial state $\rho$.
This question is relevant whenever there is a fixed process
that may be carried out with different initial states. For example,
one can imagine a fixed biological process whose initial state can depend on a fluctuating environment,
and wish to know how its thermodynamic efficiency depends on the state of the environment~\citep{kolchinsky2017maximizing}.
As another example, one can imagine
a fixed computational device whose input distribution can be set by
different users \citep{kolchinsky2017maximizing,kolchinsky2016dependence},
and wish to know how its thermodynamic efficiency depends on the
variability among the users. In a similar vein, one can imagine a feedback-control apparatus that extracts thermodynamic
work from a system, in which there is uncertainty about the
initial statistical state of the observed system. In these cases,
as well as many others, it is useful to know how the amount of EP
changes as the initial state is varied.

The dependence of EP on the initial state is well-understood in some
special cases. In particular, for a free relaxation toward an equilibrium
Gibbs state $\pi$, the EP incurred by initial state $\rho$ is the
drop of the relative entropy between $\rho$ and $\pi$ over time~\citep{breuer2002theory,deffnerNonequilibriumEntropyProduction2011a,landi2020irreversible},
\begin{equation}
\Sigma(\rho)=S(\rho\Vert\pi)-S(\Phi(\rho)\Vert\pi).\label{eq:relax}
\end{equation}
Note that if there are multiple equilibrium states, any one can be
equivalently chosen as the reference equilibrium state $\pi$ in \ref{eq:relax}
(see \footnote{The fact that any equilibrium state can be chosen as the reference
state follows immediately from our results as stated later in the
paper, such as \myfootref{eq:equalityBasis}. Consider any two equilibrium
states \unexpanded{$\pi,\pi'$} and EP \unexpanded{$\Sigma$}
defined relative to reference equilibrium state $\pi$, as in \myfootref{eq:relax}.
Since \unexpanded{$\pi'$} is also an equilibrium state, it must (1) be a minimizer
of \unexpanded{$\Sigma$}, (2) achieve \unexpanded{$\Sigma(\pi')=0$}, and (3) satisfy \unexpanded{$\Phi(\pi')=\pi'$}.
Then, as long as \unexpanded{$S(\rho\Vert\pi')<\infty$}, \myfootref{eq:equalityBasis}
gives \unexpanded{$\Sigma(\rho)=S(\rho\Vert\pi')-S(\Phi(\rho)\Vert\pi')$}, which
means that EP defined relative to reference equilibrium state \unexpanded{$\pi$}
(LHS) is equal to EP defined relative to reference equilibrium state
\unexpanded{$\pi'$} (RHS).}).

In fact, \ref{eq:relax} can be generalized beyond simple relaxations,
to processes with arbitrary driving and/or multiple reservoirs (such
that no equilibrium state exists). In previous work \citep{kolchinsky2016dependence,kolchinsky2017maximizing,wolpert2020thermodynamic}\footnote{See also \citep{kolchinsky2020thermodynamic} for a derivation of
\myfootref{eq:classicalMismatch} for a classical system with a countably
infinite state space but deterministic dynamics.}, we analyzed the \emph{mismatch cost of $\rho$} for a finite-state
classical process, which we defined as the extra integrated EP incurred
by the process on initial distribution $\rho$, in addition to the
EP incurred by the process on the optimal initial distribution that
minimizes EP, $\varphi\in\mathop{\arg\min}_{\omega}\Sigma(\omega)$.
We showed that as long as $\mathrm{supp}\,\rho\subseteq\mathrm{supp}\,\varphi$,
mismatch cost can be expressed as the contraction of relative entropy
between $\rho$ and $\varphi$,
\begin{equation}
\Sigma(\rho)-\Sigma(\varphi)=S(\rho\Vert\varphi)-S(\Phi(\rho)\Vert\Phi(\varphi)).\label{eq:classicalMismatch}
\end{equation}
The right hand side is non-negative by the monotonicity
of relative entropy \citep{muller2017monotonicity} and vanishes if
$\rho=\varphi$. \ref{eq:relax} is a special case of \ref{eq:classicalMismatch},
since in a free relaxation $\varphi$ is the Gibbs equilibrium state
$\pi$, which has full support and obeys $\Sigma(\pi)=0$, $\Phi(\pi)=\pi$.
This relationship is visualized in \ref{fig:1}. \ref{eq:classicalMismatch}
was recently generalized to finite-dimensional quantum processes by
Riechers and Gu~\citep{riechers2020initial,riechersImpossibilityLandauerBound2021}\footnote{Although Ref.~\citep{riechers2020initial} never explicitly states the assumption
of a finite-dimensional Hilbert space, it is implicit in the derivations of that paper.
For example, in infinite dimensional spaces, it cannot be assumed
that the directional derivative can be written in terms of the gradient
(as in the derivation of Theorem 1 in \citep{riechers2020initial}),
that the directional derivative at the optimizer with full support
vanishes (as in Eq. 10 in \citep{riechers2020initial}), or that \unexpanded{$S(\rho\Vert\varphi)<\infty$}
whenever \unexpanded{$\mathrm{supp}\,\rho\subseteq\mathrm{supp}\,\varphi$}.}.

\begin{figure}
\includegraphics[width=0.8\columnwidth]{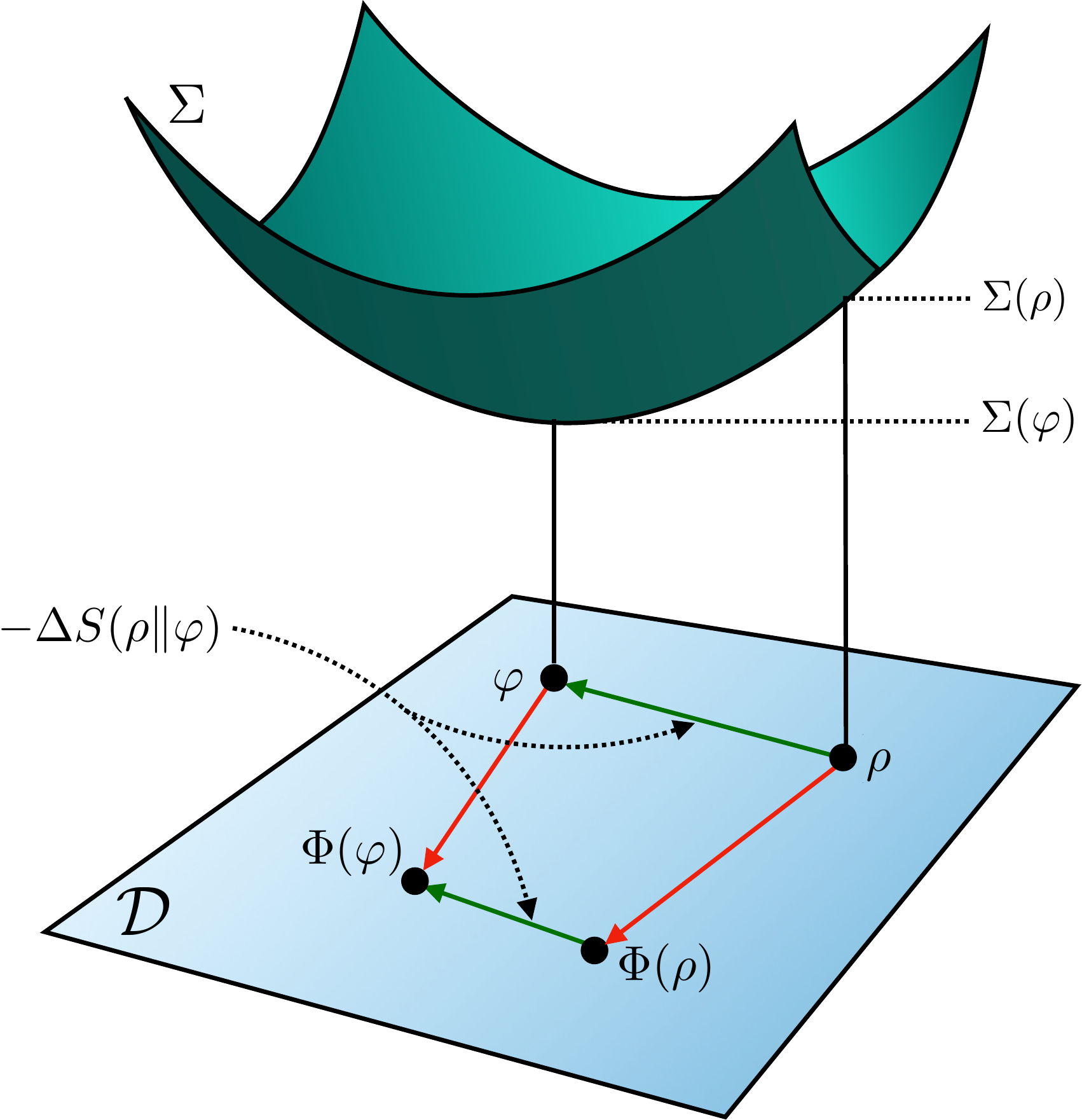}\caption{\label{fig:1}\textbf{Information-theoretic form of mismatch cost.
}The top surface represents the entropy production (EP) $\Sigma$
as a function of the initial state $\rho$, for a physical process
whose dynamics are described by the quantum channel $\Phi$ (red arrows).
The bottom surface represents the set of states $\mathcal{D}$. \ref{eq:classicalMismatch}
says that the extra EP incurred by some initial state $\rho$, additional
to the EP incurred by the optimal initial state $\varphi$ which minimizes
EP, is equal to the decrease of relative entropy between $\rho$ and
$\varphi$ over time (contraction of green arrows).}
\end{figure}

In this paper, we extend these earlier results in several ways:
\begin{itemize}[wide,labelindent=0pt,labelwidth=!]
\item In \ref{sec:integratedEP}, we show that the expression for mismatch
cost in \ref{eq:classicalMismatch} holds for arbitrary quantum systems,
both finite and infinite dimensional, and coupled to any number of
idealized or non-idealized reservoirs. We also show that this expression
applies not only when $\varphi$ is the globally optimal initial state,
but also when $\varphi$ is the optimal incoherent state
(relative to a given set of projection operators), which can
be used to decompose mismatch cost into separate quantum and classical
contributions. Finally, we derive simple sufficient conditions that
guarantee that the optimal initial state $\varphi$ has full support,
which allows \ref{eq:classicalMismatch} to be applied to arbitrary
$\rho$ (since \ref{eq:classicalMismatch} holds only when the support
of $\rho$ falls within the support of $\varphi$).
\item In \ref{sec:Fluctuating-EP}, we analyze mismatch cost for the \emph{fluctuating
EP}, that is the trajectory-level EP generated when a physical process
undergoes stochastically sampled realizations~\citep{campisi2011colloquium}.
We derive an expression for trajectory-level fluctuating mismatch
cost, which can be seen as the trajectory-level version of \ref{eq:classicalMismatch}. We also demonstrate that this expression obeys an integral fluctuation
theorem.
\item In \ref{sec:eprate}, we analyze mismatch cost for the \emph{instantaneous}
\emph{EP rate} incurred at a given instant in time. We show that,
similarly to the case of integrated EP and fluctuating EP, mismatch
cost for EP rate can be expressed in terms of the instantaneous rate
of the contraction of relative entropy between the actual initial
state $\rho$ and the optimal initial state $\varphi$ which minimizes
the EP rate.
\item In \ref{sec:classical}, we discuss our results in the context of
classical systems. In particular, we demonstrate that all of our results
apply to discrete-state and continuous-state classical systems,
where they describe the dependence of classical EP on the choice of
the initial probability distribution.
\end{itemize}
After deriving the above results, in \ref{sec:logicalvsthermo} we
discuss them within the context of thermodynamics of information processing.
In particular, we show that our expressions for mismatch cost imply
a fundamental relationship between \emph{thermodynamic irreversibility}
(generation of EP) and \emph{logical irreversibility} (inability to
know the initial state corresponding to a given final state). We use
this relationship to derive quantitative bounds on the thermodynamics
of quantum error correction, and to propose an operational measure
of the logical irreversibility of a quantum channel $\Phi$, which
provides a lower bound on the worst-case EP incurred by any physical
process that implements $\Phi$.

In \ref{sec:EP-Type-functionals} we show that our results for mismatch
cost apply not only to EP (which is the main focus of this paper)
but in fact to any function that can be written in the general form of
\ref{eq:EP0-ent}, as the increase of system entropy plus some linear
term. Examples of such functions include many thermodynamic costs
of interest beyond EP, including nonadiabatic EP \citep{horowitz2013entropy,horowitz2014equivalent,esposito2010three,manzanoQuantumFluctuationTheorems2018},
free energy loss \citep{kolchinsky2017maximizing,faist2019thermodynamic},
and entropy gain \citep{plastino1995fisher,holevo2011entropy,holevo2011entropyB}.
For any such thermodynamic cost, the extra cost incurred by initial
state $\rho$, additional to that incurred by the optimal initial
state $\varphi$ which minimizes that cost, is given by the contraction
of relative entropy between $\rho$ and $\varphi$ over time.

Before proceeding, we briefly review some relevant prior literature
and introduce some necessary notation. We finish with a brief discussion
in \ref{sec:Discussion}.

\subsection{Relevant prior literature}

In our own prior work~\citep{kolchinsky2016dependence,kolchinsky2017maximizing,wolpert2020thermodynamic},
we derived an expression of mismatch cost for the integrated EP incurred
by a finite-state classical system. In addition, in this earlier work we
showed that mismatch cost has important implications for understanding
the thermodynamics of classical information processing, including
computation with digital circuits \citep{wolpert2020thermodynamic}
and deterministic classical Turing machines \citep{kolchinsky2020thermodynamic}.
Finally, we also used mismatch cost to study the thermodynamics of
free-energy harvesting systems, both in classical and quantum systems
\citep{kolchinsky2017maximizing}.

Riechers and Gu analyzed mismatch cost for integrated EP incurred
by finite-dimensional quantum systems. They used these results to
analyze the thermodynamics of information erasure in finite-dimensional
quantum systems, as well as the ``thermodynamic cost of modularity''
\citep{riechers2020initial,riechersImpossibilityLandauerBound2021}.

An important precursor of mismatch cost appeared in \citep{maroney2009generalizing}.
This paper considered one specific quantum process that carries out
information processing over a set of classical logical states. It
was pointed out that if the protocol is thermodynamically reversible
for some initial distribution $\varphi$ over logical states, then
for any other initial distribution $\rho$ over the logical states,
$\Sigma(\rho)=S({\rho}\Vert{\varphi})-S(\Phi(\rho)\Vert\Phi(\varphi))$
\mycitep[Eq.~168]{maroney2009generalizing}. This can be seen as a
special case of classical mismatch cost, where the optimal state $\varphi$
is thermodynamically reversible (so $\Sigma(\varphi)=0$). A similar
result was derived for a specific classical process in \citep{wolpert_arxiv_beyond_bit_erasure_2015}.
Some related ideas were also discussed in Turgut \citep{turgut_relations_2009}.

\subsection{Notational preliminaries}

\label{sec:Background}

We use $\mathcal{D}$ to indicate the set of all states (i.e., density
operators) over the system's Hilbert space $\mathcal{H}$, which may
be finite or infinite dimensional. For any orthogonal set of projection
operators $P=\{\Pi_{1},\Pi_{2},\dots\}$, we define
\begin{equation}
\mathcal{D}_{P}:=\{\rho\in\mathcal{D}:\rho=\sum_{\Pi \in P}\Pi \rho\Pi \}\label{eq:densBdef}
\end{equation}
as the set of states that are incoherent relative to projectors in
$P$. Note that the set of projection operators $P$ may be complete
($\sum_{\Pi \in P}\Pi =I$) or incomplete ($\sum_{\Pi\in P}\Pi \ne I$).
Special cases of $\mathcal{D}_{P}$ include the set of all states
$\mathcal{D}$ ($P=\{I\}$), the set of states with support limited
to some subspace $\mathcal{H}'\subset\mathcal{H}$ ($P=\{\Pi\}$
such that $\Pi\mathcal{H}=\mathcal{H}'$), and the set of states diagonal
in some orthonormal basis $\{\vert i\rangle\}_{i}$ ($P=\{\vert i\rangle\langle i\vert\}_{i}$).
We write
\begin{equation}
\mathcal{H}_{P}=\mathcal{H}\sum_{\Pi\in P}\Pi\label{eq:hilbB}
\end{equation}
to indicate the Hilbert subspace spanned by the projection operators
in $P$.

{We use the von Neumann entropy of state $\rho\in\mathcal{D}$,
\[
S(\rho):=-\mathrm{tr}\{\rho\ln\rho\}.
\]
We also use the (quantum) relative entropy, defined for any pair
of states $\rho,\varphi\in\mathcal{D}$ as
\begin{align}
S(\rho\Vert\varphi):= & \begin{cases}
\mathrm{tr}\{\rho(\ln\rho-\ln\varphi)\} & \text{if \ensuremath{\mathrm{supp}\,\rho\subseteq\mathrm{supp}\,\varphi}}\\
\infty & \text{otherwise}
\end{cases}\label{eq:relentDef}
\end{align}
For notational convenience, we often write the change of relative
entropy under some quantum channel $\Phi$ as
\begin{align}
\Delta S({\rho}\Vert{\varphi}) & :=S(\Phi(\rho)\Vert\Phi(\varphi))-S(\rho\Vert\varphi).\label{eq:ddDef}
\end{align}
}

Finally, given some quantum channel $\Phi$ and some reference state
$\varphi\in\mathcal{D}$, the Petz \emph{recovery map} is defined
as \mycitep[Sec.~12.3]{wildeQuantumInformationTheory2017}\footnote{The definition in \myfootref{eq:recov} holds for finite dimensional
spaces and \unexpanded{$\rho$} such that \unexpanded{$\mathrm{supp}\,\rho\subseteq\mathrm{supp}\,\varphi$}.
For a more general definition, see \citep{petz1988sufficiency,jungeUniversalRecoveryMaps2018}.}
\begin{equation}
\mathcal{R}_{\Phi}^{\varphi}(\rho):=\varphi^{1/2}\Phi^{\dagger}(\Phi(\varphi)^{-1/2}\rho\Phi(\varphi)^{-1/2})\varphi^{1/2}.\label{eq:recov}
\end{equation}
The recovery map undoes the effect of $\Phi$ on the reference state,
so that $\mathcal{R}_{\Phi}^{\varphi}(\Phi(\varphi))=\varphi$. It
can be seen as a generalization of the Bayesian inverse to quantum
channels \citep{leiferFormulationQuantumTheory2013}.

\section{Mismatch Cost for Integrated EP}

\label{sec:integratedEP}

In our first set of results, we consider the state dependence of integrated
EP, in terms of the additional integrated EP incurred by some
initial state $\rho$ rather than the optimal initial state $\varphi$.

Our results apply to $\Sigma(\rho)$ as defined in \ref{eq:EP0-ent}
in terms of the increase of system entropy plus the entropy flow,
where $\Phi$ is some positive and trace-preserving map and the entropy
flow $Q$ is some linear function (which we assume is lower-semicontinuous).
Our results also apply when $\Sigma(\rho)$ is defined in terms of
an explicitly-modeled system+environment that jointly evolve in a
unitary manner as $\rho\otimes\omega\to U(\rho\otimes\omega)U^{\dagger}$.
In this case, the quantum channel can be expressed in the Stinespring
form as $\Phi(\rho)=\mathrm{tr}_{Y}\{U(\rho\otimes\omega)U^{\dagger}\}$
(where $\mathrm{tr}_{Y}$ indicates a partial trace over the environment),
and EP can be written as
\begin{equation}
\Sigma(\rho)=S(U(\rho\otimes\omega)U^{\dagger}\Vert\Phi(\rho)\otimes\omega).\label{eq:functype2-maintext2}
\end{equation}
This expression for EP often appears in recent work on quantum thermodynamics
\citep{esposito2010entropy,ptaszynskiEntropyProductionOpen2019,landi2020irreversible}.

These two formulations of EP, \ref{eq:EP0-ent} and \ref{eq:functype2-maintext2},
have different advantages and disadvantages. {\ref{eq:EP0-ent}
can be more experimentally accessible since --- unlike \ref{eq:functype2-maintext2}
-- it does not require knowledge of the exact state and evolution
of the environment, only the total amount of entropy flow (e.g., as could be
measured by a calorimeter). For the same reason, \ref{eq:EP0-ent}
is also more appropriate for studying EP for a system coupled to ``idealized''
baths (which have infinite size and instantaneous self-equilibration
\citep{breuer2002theory}). On the other hand, \ref{eq:functype2-maintext2}
is more appropriate for studying EP for a system coupled to more realistic
``non-idealized'' baths (which have finite size and possibly slow
relaxation times).} From a purely mathematical perspective,
the two forms are equivalent for any $\rho$ with finite entropy:
\ref{eq:functype2-maintext2} can be rewritten in the form of \ref{eq:EP0-ent}
and vice versa (see \ref{prop:formequiv} in the appendix).

Now consider the set of states $\mathcal{D}_{P}$, defined as in \ref{eq:densBdef}
in terms of a set of projection operators $P$, as well as any state
$\rho\in\mathcal{D}_{P}$. {As mentioned below, common choices of $\mathcal{D}_{P}$ include the set of all states
(corresponding to $P=\{I\}$) and the set of states that are incoherent
relative to some basis (corresponding to $P=\{\vert i\rangle\langle i\vert\}_{i}$
for some basis $\{\vert i\rangle\}$).} We analyze the mismatch cost
of $\rho$, defined as the additional integrated EP incurred by $\rho$
relative to an optimal initial state within $\mathcal{D}_{P}$, $\varphi_{P}\in\mathop{\arg\min}_{\omega\in\mathcal{D}_{P}}\Sigma(\omega)$.
Our first result is that as long as $S(\rho\Vert\varphi_{P})<\infty$,
the mismatch cost is equal to the drop in relative entropy between
$\rho$ and $\varphi_{P}$ during the process,
\begin{equation}
\Sigma(\rho)-\Sigma(\varphi_{P})=-\Delta S({\rho}\Vert{\varphi_{P}}).\label{eq:equalityBasis}
\end{equation}
A sketch of the proof of this result is provided at the end of this
section, with details left for \ref{app:proofs-integrated}.

\ref{eq:equalityBasis} is a generalization of \ref{eq:classicalMismatch},
which holds for both finite and infinite dimensional systems, as well
as for optimizers $\varphi$ within arbitrary sets $\mathcal{D}_{P}$.
In the special case when $\mathcal{D}_{P}=\mathcal{D}$ (as induced
by $P=\{I\}$), \ref{eq:equalityBasis} expresses the ``global''
mismatch cost, the additional integrated EP incurred by the initial
state $\rho$ relative to a global optimizer $\varphi_{\mathcal{D}}\in\mathop{\arg\min}_{\omega\in\mathcal{D}}\Sigma(\omega)$.

\begin{figure}
\includegraphics[width=0.8\columnwidth]{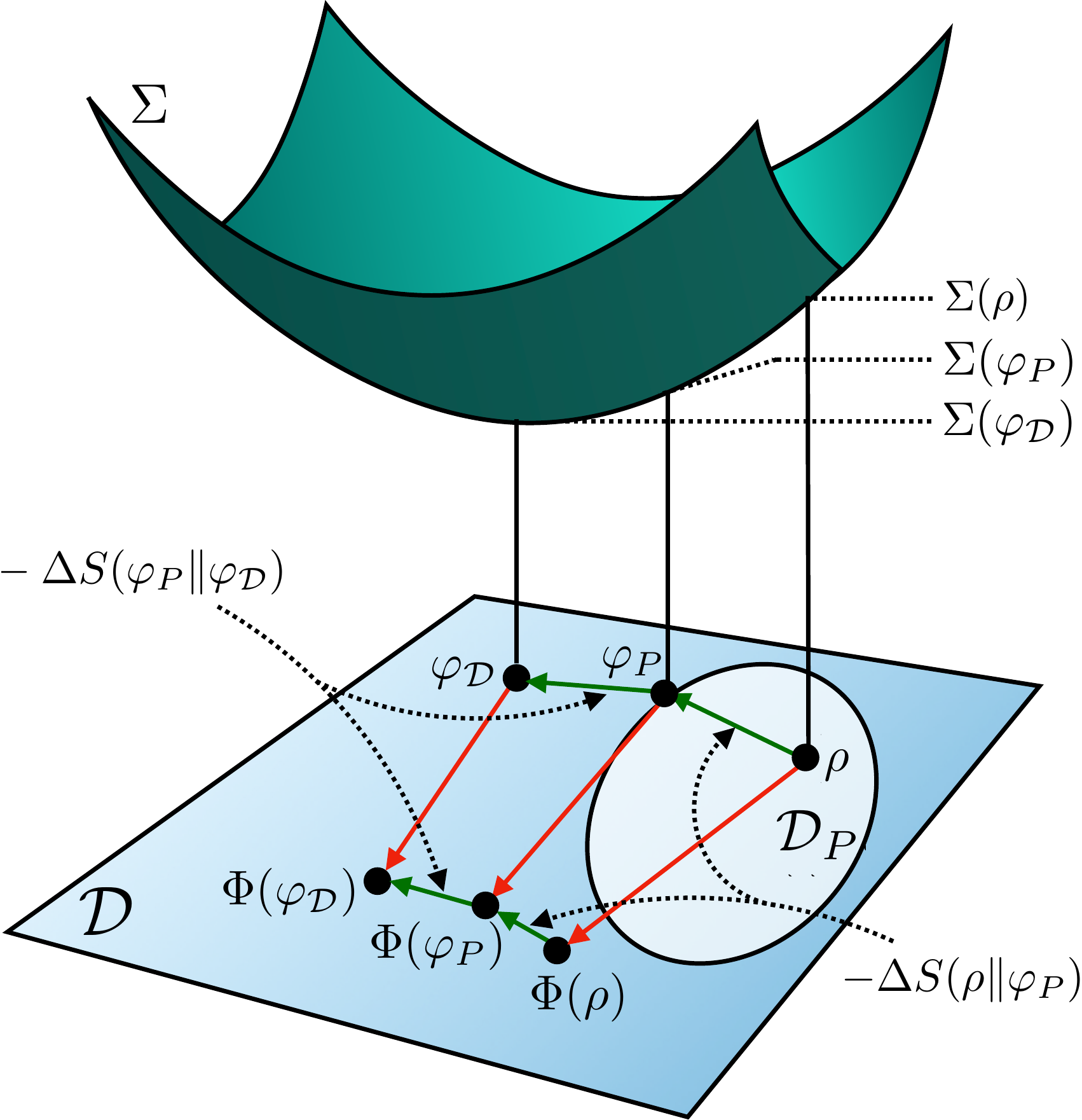}\caption{\label{fig:decomp}The mismatch cost of $\rho$ relative to the global
optimizer $\varphi_{\mathcal{D}}$ can be decomposed into a sum of
a \textquotedblleft classical\textquotedblright{} and \textquotedblleft quantum\textquotedblright{}
components, \ref{eq:mdecomp}. The classical component is given by
contraction of relative entropy between $\rho$ and $\varphi_{P}$,
the optimal state in the set of states diagonal in the same basis
as $\rho$ ($\mathcal{D}_{P}$, shown as a light oval). The quantum
component is given by the contraction of relative entropy between
$\varphi_{P}$ and $\varphi_{\mathcal{D}}$. (Compare to \ref{fig:1}.)}
\end{figure}

We can derive various useful decompositions of mismatch cost by applying
\ref{eq:equalityBasis} in an iterative manner. For example, consider
an orthonormal basis $\{\vert i\rangle\}_{i}$ that diagonalizes $\rho$.
Let $P=\{\vert i\rangle\langle i\vert\}_{i}$ so that $\mathcal{D}_{P}$
is the set of states diagonal in that basis, which in particular contains
$\rho$. Also let $\varphi_{P}\in\mathop{\arg\min}_{\omega\in\mathcal{D}_{P}}\Sigma(\omega)$
be an optimal initial state within $\mathcal{D}_{P}$, and let $\varphi_{\mathcal{D}}\in\mathop{\arg\min}_{\omega\in\mathcal{D}}\Sigma(\omega)$
be a global optimizer. In general, $\varphi_{\mathcal{D}}$ will not
be diagonal in the same basis as $\rho$, and so will not belong to
$\mathcal{D}_{P}$. We can then write
\[
\Sigma(\rho)-\Sigma(\varphi_{\mathcal{D}})=(\Sigma(\rho)-\Sigma(\varphi_{P}))+(\Sigma(\varphi_{P})-\Sigma(\varphi_{\mathcal{D}})),
\]
and --- assuming that $S(\rho\Vert\varphi_{P})$ and $S(\varphi_{P}\Vert\varphi_{\mathcal{D}})$
are finite --- apply \ref{eq:equalityBasis} to the two terms on the RHS. This
leads to the following decomposition of the global mismatch cost of
$\rho$ into two non-negative terms, which is visualized in \ref{fig:decomp}:
\begin{equation}
\Sigma(\rho)-\Sigma(\varphi_{\mathcal{D}})=-\Delta S({\rho}\Vert{\varphi_{P}})-\Delta S({\varphi_{P}}\Vert{\varphi_{\mathcal{D}}}).\label{eq:mdecomp}
\end{equation}
The first term, $-\Delta S({\rho}\Vert{\varphi_{P}})$, reflects the
mismatch cost between $\rho$ and $\varphi_{P}$. Since these two
states are diagonal in the same basis, it can be seen as the classical
contribution to mismatch cost. The second term, $-\Delta S({\varphi_{P}}\Vert{\varphi_{\mathcal{D}}})$,
is the purely quantum contribution to mismatch cost, which vanishes
when $\rho$ and $\varphi_{\mathcal{D}}$ can be diagonalized in the
same basis (since then $\Sigma(\varphi_{P})-\Sigma(\varphi_{\mathcal{D}})=0$).

Note that \ref{eq:mdecomp} is different from the decomposition of
mismatch cost into coherent and classical components previously derived
in \mycitep[Eq.~14]{riechers2020initial}. First, in our decomposition
both the classical and quantum are always non-negative (which is not
necessarily the case in \citep{riechers2020initial}). {Another
difference is that our decomposition does not include terms explicitly
related to the ``relative entropy of coherence'' \citep{baumgratz_quantifying_2014},
which appear in \mycitep[Eq.~14]{riechers2020initial} (as well as
in other classical-vs-quantum decompositions derived for EP in relaxation
processes \citep{santos_role_2019,francica_role_2019} and for quantum
work extraction \citep{francica2020quantum}).}

We now state our most generally applicable result for integrated EP
mismatch cost. Let $\mathcal{S}\subseteq\mathcal{D}$ be \emph{any}
convex subset of states, which may or may not have the form defined
in \ref{eq:densBdef}. Then, for any state $\rho\in\mathcal{S}$ and
a minimizer $\varphi_{\mathcal{S}}\in\mathop{\arg\min}_{\omega\in\mathcal{S}}\Sigma(\omega)$,
as long as $S(\rho\Vert\varphi_{\mathcal{S}})<\infty$,
\begin{equation}
\Sigma(\rho)-\Sigma(\varphi_{\mathcal{S}})\ge-\Delta S({\rho}\Vert{\varphi_{\mathcal{S}}}).\label{eq:ineqConvex}
\end{equation}
Equality holds if $(1-\lambda)\varphi_{\mathcal{S}}+\lambda\rho\in\mathcal{S}$
for some $\lambda<0$.

Since $\Sigma(\varphi_{\mathcal{S}})\ge0$ by the second law, \ref{eq:ineqConvex}
implies
\begin{equation}
\Sigma(\rho)\ge-\Delta S({\rho}\Vert{\varphi_{\mathcal{S}}}).\label{eq:genBound}
\end{equation}
The RHS of this bound is non-negative by the monotonicity of relative
entropy~\citep{muller2017monotonicity}. Thus, \ref{eq:genBound}
gives a tighter bound on EP than the second law, $\Sigma(\rho)\ge0$.
This tighter bound reflects the additional EP due to a suboptimal
choice of the initial state within any convex set of states $\mathcal{S}\ni\rho$.

We now briefly sketch the derivation of \ref{eq:equalityBasis,eq:ineqConvex},
leaving formal proofs for \ref{app:proofs-integrated}. A central
idea behind our derivations is that EP is a convex function whose ``amount
of convexity'' has a simple information-theoretic expression. Specifically,
using some simple algebra, it can be shown that for any convex mixture
$\varphi(\lambda)=(1-\lambda)\varphi+\lambda\rho$ of two states $\rho$
and $\varphi$,
\begin{multline}
(1-\lambda)\Sigma(\varphi)+\lambda\Sigma(\rho)-\Sigma(\varphi(\lambda))=\\
-\lambda\Delta S({\rho}\Vert{\varphi(\lambda)})-(1-\lambda)\Delta S({\varphi}\Vert{\varphi(\lambda)}),\label{eq:conv2-1}
\end{multline}
The quantity on the right hand side of \ref{eq:conv1} has been called
\emph{entropic disturbance} in quantum information theory \citep{shirokovLowerSemicontinuityEntropic2017a,buscemiUnifiedApproachInformationDisturbance2009,buscemiApproximateReversibilityContext2016}.
It is non-negative by monotonicity of relative entropy \citep{muller2017monotonicity},
which proves that $\Sigma$ is convex. Next, we consider the directional
derivatives of $\Sigma$ at $\varphi$ in the direction of $\rho$,
\[
{\textstyle {\textstyle \partial_{\lambda}^{+}}}\Sigma(\varphi(\lambda))\vert_{\lambda=0}=\lim_{{\lambda\to0^{+}}}\frac{\Sigma(\varphi(\lambda))-\Sigma(\varphi)}{\lambda}.
\]
In \ref{thm:genf} in the appendix, we rearrange \ref{eq:conv2-1}
and compute the appropriate limits to show that the directional derivative
can be evaluated as
\begin{equation}
{\textstyle {\textstyle \partial_{\lambda}^{+}}}\Sigma(\varphi(\lambda))\vert_{\lambda=0}=\Sigma(\rho)-\Sigma(\varphi)+\Delta S({\rho}\Vert{\varphi}).\label{eq:dd0}
\end{equation}

\ref{eq:ineqConvex} follows from \ref{eq:dd0} and the fact that
the directional derivative toward at the minimizer must be non-negative
(otherwise one could decrease the value of EP by moving slightly from
$\varphi$ to $\rho$, contradicting the fact that $\varphi$ is a
minimizer). To derive \ref{eq:equalityBasis}, suppose that $\varphi$
is a minimizer of EP within a set of states $\mathcal{D}_{P}$ defined
as in \ref{eq:densBdef}. If $\rho\ge\alpha\varphi$ for some $\alpha>0$,
then the directional derivative in \ref{eq:dd0} vanishes (since $\lambda=0$
is the minimizer of the function $\lambda\mapsto\Sigma(\varphi(\lambda))$
in the open set $(-\alpha,1)$), which in combination with \ref{eq:dd0}
implies \ref{eq:equalityBasis}. If $\rho\not\ge\alpha\varphi$ for
all $\alpha>0$, then \ref{eq:equalityBasis} can be derived by considering
a sequence of finite-rank projections of $\rho$ onto the top $n$
eigenvectors of $\varphi$, and then using continuity properties of
EP and relative entropy.

Note that our expression for mismatch cost, $-\Delta S({\rho}\Vert{\varphi})$,
depends both on the quantum channel $\Phi$ and the optimal state
$\varphi\in\mathop{\arg\min}_{\omega}\Sigma(\omega)$. The optimal
state $\varphi$ in turn depends on $\Phi$ and the entropy flow function
$Q$, which will encode various details of the physical process under
consideration (such as the precise trajectory of the driving Hamiltonians,
etc.). In general, the same channel $\Phi$ can be implemented with
different physical process, which will have different entropy flow
functions $Q$ and optimizers $\varphi$. For this reason, different
implementations of the same channel $\Phi$ can lead to different
values of mismatch cost for the same initial state $\rho$.

We also note that in order to evaluate some of our results numerically,
one must find an optimal state $\varphi\in\mathop{\arg\min}_{\omega}\Sigma(\omega)$.
In some special cases, $\varphi$ can be found in closed form. One
such case is considered below, in our analysis of protocols that obey
a symmetry group. Another example occurs when $\varphi\in\mathop{\arg\min}_{\omega\in\mathcal{D}_{P}}$
is a minimizer within some set of states $\mathcal{D}_{P}$
and $\Phi$ is input-independent (there is some $\rho'$ such that
$\Phi(\rho)=\rho'$ for all $\rho$). Then, writing the entropy flow
term in trace form as $Q(\rho)=\mathrm{tr}\{\rho A\}$, it is straightforward
to show that the minimizer must have the following form \footnote{This follows by writing \unexpanded{$\Sigma(\rho)=S(\Phi(\rho))-S(\rho)+\mathrm{tr}\{\Pi A\Pi \rho\}$}\unexpanded{$=S(\rho\Vert\varphi)+\text{const}$},
where \unexpanded{$\varphi$} is defined as in \myfootref{eq:optsol}.}:
\begin{equation}
\varphi=e^{-\sum_{\Pi\in P}\Pi A\Pi}/\mathrm{tr}\{e^{-\sum_{\Pi\in P}\Pi A\Pi}\}.\label{eq:optsol}
\end{equation}
More generally, $\varphi$ can be found using numerical techniques.
Because $\Sigma$ is a convex function, this optimization can be performed
efficiently (some appropriate algorithms are discussed in \citep{ramakrishnan2019non}).

\subsection{Support conditions}

\label{subsec:Support-conditions-integrated}

Our result for mismatch cost, \ref{eq:equalityBasis}, only apply
when $S(\rho\Vert\varphi_{P})<\infty$, for which it is necessary
that
\begin{align}
\mathrm{supp}\,\rho\subseteq\mathrm{supp}\,\varphi_{P}.\label{eq:suppcond-1}
\end{align}
(In finite dimensions, \ref{eq:suppcond-1} is both necessary and
sufficient for $S(\rho\Vert\varphi_{P})<\infty$; in infinite dimensions,
it is necessary but not sufficient). Here, we show that \ref{eq:suppcond-1}
is satisfied in many cases of interest.

To begin, we consider some set of states $\mathcal{D}_{P}$, while
making the weak assumption that the physical process is such that
$\Sigma(\rho)$ is finite for all pure states in $\mathcal{D}_{P}$.
Then, \ref{thm:connectedmeansfullsupport} in the appendix shows that
the support of the optimizer $\varphi_{P}\in\mathop{\arg\min}_{\omega\in\mathcal{D}_{P}}\Sigma(\rho)$
and its orthogonal complement must be non-interacting subspaces under
the action of $\Phi$,
\begin{equation}
\Phi(\varphi_{P})\perp\Phi(\omega)\qquad\forall\omega\in\mathcal{D}_{P}:\omega\perp\varphi.\label{eq:connres-2}
\end{equation}
Now, suppose that $\Phi$ is ``irreducible'' (over $P$) in the sense that pairs
of states which jointly span $\mathcal{H}_{P}$ always incur some
overlap,
\begin{equation}
\Phi(\omega)\not\perp\Phi(\omega')\quad\forall\omega,\omega'\in\mathcal{D}_{P}:\mathrm{supp}\,(\omega+\omega')=\mathcal{H}_{P},\label{eq:fsp0-1}
\end{equation}
where $\mathcal{H}_{P}$ is defined as in \ref{eq:hilbB}. Then, it
must be that $\mathrm{supp}\,\varphi_{P}=\mathcal{H}_{P}$, since
otherwise there would be some state $\omega\in\mathcal{D}_{P}$ that
leads to a contradiction between \ref{eq:connres-2,eq:fsp0-1}.

To summarize, our results show that if $\Phi$ is irreducible in sense
of \ref{eq:fsp0-1}, then the support condition in \ref{eq:suppcond-1}
must hold. Note that \ref{eq:fsp0-1} is satisfied when the support
of all output states is equal,
\begin{equation}
\mathrm{supp}\,\Phi(\rho)=\mathrm{supp}\,\Phi(\omega)\qquad\forall\rho,\omega,\label{eq:m234}
\end{equation}
such as the common situation when $\Phi(\rho)>0$ for all $\rho$.

Conversely, if $\Phi$ is \emph{not} irreducible in the sense of \ref{eq:fsp0-1},
then one can decompose the $\mathcal{H}_{P}$ into a set of orthogonal
subspaces $\mathcal{H}_{1},\mathcal{H}_{2},\dots$ such that \ref{eq:fsp0-1}
holds in each subspace \footnote{In general, this decomposition will not be unique: imagine
the trivial case where, in \ref{eq:EP0-ent}, \unexpanded{$\Phi=\mathrm{Id}$}
and \unexpanded{$Q(\rho)=0$}; then, \unexpanded{$\Sigma(\rho)=0$}
for all \unexpanded{$\rho$}, and any complete basis \unexpanded{$\{\vert i\rangle\}$}
can be used to define a basin decomposition.}. Such orthogonal subspaces have been previously called ``basins''
in the quantum context \citep{riechers2020initial} and ``islands''
in the classical context \citep{wolpert2020thermodynamic}. Using
the arguments above, it can be shown that the optimal state within
each basin $\mathcal{H}_{i}$ will have support equal to $\mathcal{H}_{i}$;
from \ref{eq:connres-2}, it also follows that optimal states within
different basins will not interact under the action of $\Phi$. This
resolves a conjecture in \citep{riechers2020initial} and justifies
the decomposition of $\Sigma$ developed in that paper into a sum
of mismatch costs incurred within each basin, plus an ``inter-basin
coherence'' term (for details, see Appendix~E in \citep{riechers2020initial}).

\subsection{Example}

\label{subsec:Example-symm}

To illustrate our results with a concrete example, we analyze the
EP incurred by a process that obeys a symmetry group. (For related
analyses for classical systems see \citep{kolchinsky2020entropy},
and for quantum systems see \citep{janzing_quantum_2006,marvianHowQuantifyCoherence2016,vaccaro_tradeoff_2008}).

To begin, consider a physical process whose dynamics $\Phi$ commute
with some unitary $U$,
\begin{equation}
\Phi(U\rho U^{\dagger})=U\Phi(\rho)U^{\dagger}\qquad\forall\rho,\label{eq:covar0}
\end{equation}
implying that the dynamics are ``covariant'' under $U$ \citep{holevoNoteCovariantDynamical1993}
. Furthermore, suppose that the entropy flow function $Q$ associated
with the process is invariant under the action of the same unitary,
\begin{equation}
Q(\rho)=Q(U\rho U^{\dagger})\qquad\forall\rho.\label{eq:var0}
\end{equation}
\ref{eq:covar0} says that in terms of dynamics, it does not matter
when one first applies $U$ to the initial $\rho$ and then evolves
the system under $\Phi$, or first evolves the system under $\Phi$
and then applies the unitary $U$. \ref{eq:var0} says that in terms
of thermodynamics, the entropy flow doesn't change when one transforms
$\rho$ by $U$.

For simplicity, we will first assume that $U$ is some involution
($UU=I$). For concreteness, one can imagine that $U$ involves flipping
the state of a qubit in a quantum circuit, which does not interact
with the other qubits nor change state during the operation of the
circuit (it can be verified that \ref{eq:covar0} and \ref{eq:var0}
will hold under these assumptions).

\begin{figure}
\includegraphics[width=0.8\columnwidth]{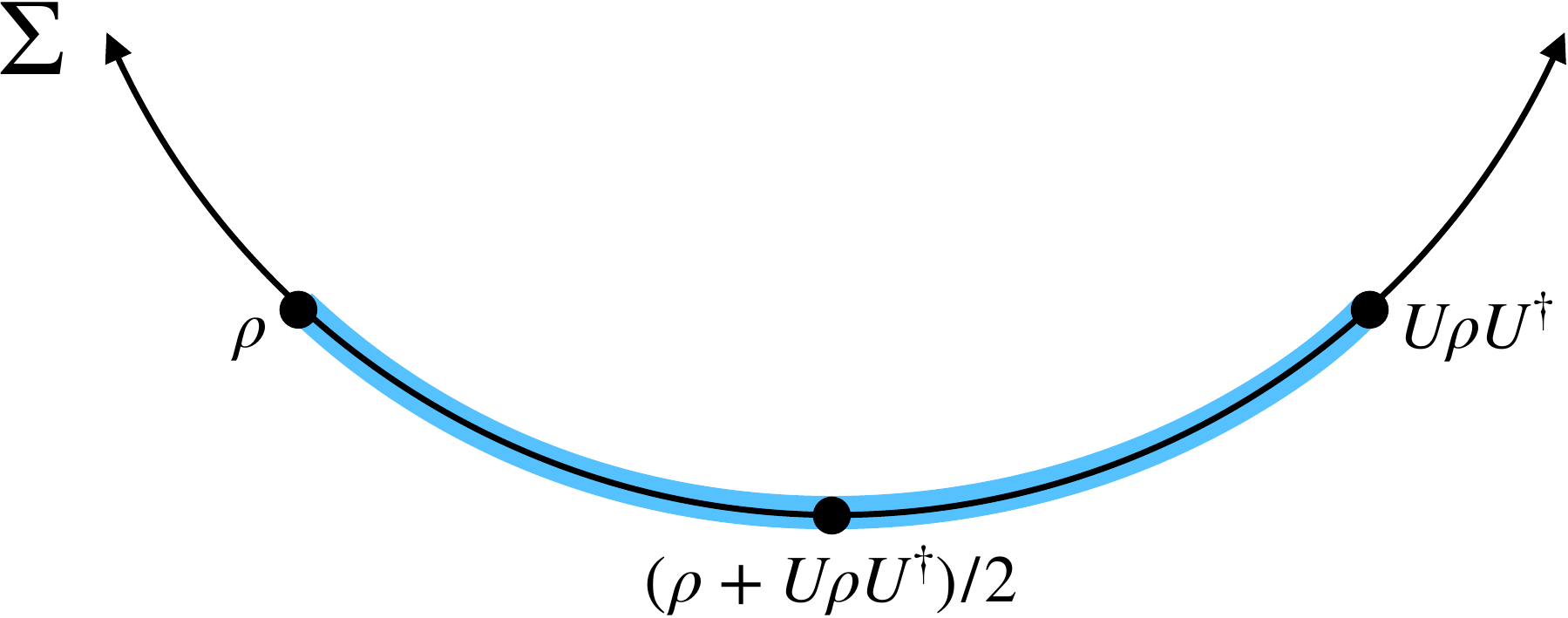}\caption{\label{fig:symm}As an example, we consider a physical process in
which the EP is invariant under some unitary involution, $\Sigma(\rho)=\Sigma(U\rho U^{\dagger})$
and $UU=I$. For any $\rho$, the uniform mixture $(\rho+U\rho U^{\dagger})/2$
achieves minimum EP within the set of convex combinations of $\rho$
and $U\rho U^{\dagger}$, $\mathcal{S}=\{\lambda\rho+(1-\lambda)U\rho U^{\dagger}:\lambda\in[0,1]\}$
(thick blue line). This leads to the lower bound on EP incurred by
state $\rho$, \ref{eq:g1}.}
\end{figure}

Plugging \ref{eq:covar0} and \ref{eq:var0} into \ref{eq:EP0-ent},
and using the fact that von Neumann entropy is invariant under unitary
transformations, we see that the EP incurred by the process is invariant
under $U$:
\begin{equation}
\Sigma(\rho)=\Sigma(U\rho U^{\dagger})\qquad\forall\rho.\label{eq:var0-1}
\end{equation}
We can now use the results derived above to bound the EP incurred
by any initial state $\rho$. To guide intuition, in \ref{fig:symm}
we plot the EP incurred by states in the set $\mathcal{S}$ consisting
of convex combinations of $\rho$ and $U\rho U^{\dagger}$. Observe
that for any such convex combination $\omega=\lambda\rho+(1-\lambda)U\rho U^{\dagger}\in\mathcal{S}$,
\begin{align}
\Sigma(\omega) & =(\Sigma(\omega)+\Sigma(U\omega U^{\dagger}))/2\nonumber \\
& \ge\Sigma((\omega+U\omega U^{\dagger})/2)\nonumber \\
& =\Sigma((\rho+U\rho U^{\dagger})/2),\label{eq:m7}
\end{align}
where we first used \ref{eq:var0-1}, then the convexity of $\Sigma$,
and finally that $(\omega+U\omega U^{\dagger})/2=(\rho+U\rho U^{\dagger})/2$
(which follows from some simple algebra and the fact that $U$ is involution). \ref{eq:m7}
implies that minimizer of EP in $\mathcal{S}$ is $(\rho+U\rho U^{\dagger})/2$.
Next, for convenience, define the linear operator $\Psi(\rho)=(\rho+U\rho U^{\dagger})/2$.
\ref{eq:genBound} then gives the following EP bound:
\begin{align}
\Sigma(\rho) & \ge S(\rho\Vert\Psi(\rho))-S(\Phi(\rho)\Vert\Phi(\Psi(\rho)))\nonumber \\
& =S(\rho\Vert\Psi(\rho))-S(\Phi(\rho)\Vert\Psi(\Phi(\rho))),\label{eq:g1}
\end{align}
where in the second line we used that $\Phi$ and $\Psi$ commute
(due to linearity of $\Phi$ and \ref{eq:covar0}).

It is straightforward to generalize this result from simple involutions
to more general symmetry groups. Let $G$ be a finite group that acts
on $\mathcal{H}$ via a set of unitaries $\{U_{g}:g\in G\}$ (the
involution example above corresponds to the $S_{2}$ group which acts
on $\mathcal{H}$ via $\{I,U\}$). Suppose that \ref{eq:covar0} and
\ref{eq:var0} (and hence \ref{eq:var0-1}) hold for each $U_{g}$
individually. Using \ref{eq:genBound} and a similar derivation as
above, one can show that \ref{eq:g1} still holds, as long as the
operator $\Psi$ is defined as a uniform average over all elements
of the group, $\Psi(\rho):=\frac{1}{|G|}\sum U_{g}\rho U_{g}^{\dagger}$.

In the quantum information literature, the linear operator $\Psi$
is called a ``twirling'' operator \citep{vaccaro_tradeoff_2008}.
Moreover, the quantity $S(\rho\Vert\Psi(\rho))$ in \ref{eq:g1} is
known as \emph{relative entropy of asymmetry}, and it measures the
amount of asymmetry in state $\rho$ relative to the group $G$ \citep{marvianHowQuantifyCoherence2016,vaccaro_tradeoff_2008}.
Thus, \ref{eq:g1} shows that for any process that is invariant under
the action of a symmetry group, in the sense that \ref{eq:covar0} and \ref{eq:var0}
are obeyed, the EP involved in transforming $\rho\to\Phi(\rho)$
is lower bounded by the decrease of asymmetry during that transformation.
Said somewhat differently, any process that obeys a symmetry group
must dissipate asymmetry as EP.

\section{Mismatch Cost for Fluctuating EP}

\label{sec:Fluctuating-EP}

In our second set of results, we analyze EP and mismatch cost at the
level of individual stochastic realizations of the physical process.
To begin, we briefly review the definitions of fluctuating EP as used
in quantum stochastic thermodynamics.

Consider a system that evolves according to the channel $\Phi$ from
some initial mixed state $\rho=\sum_{i}p_{i}\vert i\rangle\langle i\vert$
to some final mixed state $\Phi(\rho)=\sum_{\phi}p'_{\phi}\vert\phi\rangle\langle\phi\vert$.
Suppose that this stochastic process is carried out multiple times,
resulting in a set of randomly sampled realizations. Each realization
can be characterized by the associated initial pure state $\vert i\rangle\langle i\vert$,
the final pure state $\vert\phi\rangle\langle\phi\vert$, and the
associated entropy flow $q\in\mathbb{R}$ (i.e., the increase of the
thermodynamic entropy of the reservoirs that occurs during that realization).
The fluctuating EP of realization $(i\!\shortrightarrow\!\phi,q)$
is then given by~\citep{esposito2006fluctuation,esposito2009nonequilibrium,campisi2011colloquium}
\begin{equation}
\sigma_{\rho}(i\!\shortrightarrow\!\phi,q):=(-\ln p'_{\phi}+\ln p_{i})+q,\label{eq:fluctEP}
\end{equation}
while the probability of realization $(i\!\shortrightarrow\!\phi,q)$
is given by
\begin{align}
p_{\rho}(i,\phi,q) & =p_{\rho}(i,\phi)p(q\vert i,\phi)\label{eq:m2}\\
& =p_{i}T_{\Phi}(\phi\vert i)p(q\vert i,\phi)\\
& =p_{i}\mathrm{tr}\{\Phi(\vert i\rangle\langle i\vert)\vert\phi\rangle\langle\phi\vert\}p(q\vert i,\phi).\label{eq:m3}
\end{align}
In \ref{eq:m3}, $p_{i}$ is the probability of initial pure state
$\vert i\rangle\langle i\vert$, $p(q\vert i,\phi)$ is the conditional
probability of entropy flow $q$ given the transition $i\!\shortrightarrow\!\phi$,
and
\begin{equation}
T_{\Phi}(\phi\vert i)=\mathrm{tr}\{\Phi(\vert i\rangle\langle i\vert)\vert\phi\rangle\langle\phi\vert\}\label{eq:jk2}
\end{equation}
is the conditional probability of the final pure state $\vert\phi\rangle\langle\phi\vert$
given the initial pure state $\vert i\rangle\langle i\vert$ under
$\Phi$.

In quantum stochastic thermodynamics, the terms $q$ and $p(q\vert i,\phi)$
have been defined and operationalized in various ways, including via
two-point projective measurements~\citep{esposito2006fluctuation,manzanoNonequilibriumPotentialFluctuation2015,manzanoQuantumFluctuationTheorems2018},
weak measurements~\citep{allahverdyanNonequilibriumQuantumFluctuations2014a},
POVMs~\citep{kwonFluctuationTheoremsQuantum2019}, and dynamic Bayesian
networks~\citep{micadeiQuantumFluctuationTheorems2020}. In all cases,
however, these terms are chosen so that two conditions are satisfied:
(1) fluctuating EP agrees with integrated EP in expectation,
\begin{equation}
\langle\sigma_{\rho}\rangle_{p_{\rho}}=\Sigma(\rho),\label{eq:m1}
\end{equation}
where $\langle\cdot\rangle_{p_{\rho}}$ indicates expectation under
$p_{\rho}(i,\phi,q)$, and (2) fluctuating EP obeys an integral
fluctuation theorem (IFT),
\begin{equation}
\langle e^{-\sigma_{\rho}}\rangle_{p_{\rho}}=\gamma,\label{eq:IFT1}
\end{equation}
where $\gamma$ is either equal to 1 or (more generally) some number between 0 and 1 that quantifies the ``absolute irreversibility'' of the process \citep{funo2015quantum}. Importantly,
our results below do not depend on the particular definition
of $q$ and $p(q\vert i,\phi)$, only on the fact that fluctuating
EP can be written in the general form of \ref{eq:fluctEP}.

Below, we define fluctuating mismatch cost as the trajectory-level
version of the mismatch cost $\Sigma(\rho)-\Sigma(\varphi)$, where
$\varphi$ is an optimal initial (mixed) state that minimizes EP.
Before proceeding, consider some convex set of states $\mathcal{S}\subseteq\mathcal{D}$.
Let $\varphi\in\mathop{\arg\min}_{\omega\in\mathcal{S}}\Sigma(\omega)$
indicate an optimizer in $\mathcal{S}$and let $\rho\in\mathcal{S}$
indicate some state in $\mathcal{S}$ such that $S(\rho\Vert\varphi)<\infty$.
We will assume that
\begin{equation}
\Sigma(\rho)-\Sigma(\varphi)=-\Delta S({\rho}\Vert{\varphi}).\label{eq:z4}
\end{equation}
By \ref{eq:equalityBasis}, \ref{eq:z4} is satisfied whenever $\mathcal{S}=\mathcal{D}_{P}$;
more generally, it is satisfied if the equality form of \ref{eq:ineqConvex}
holds.

Below we consider two cases differently: (1) the simpler ``commuting''
case, where the initial state $\rho$ commutes with $\varphi$ and
the final state $\Phi(\rho)$ commutes with $\Phi(\varphi)$ (note
that this special case includes all classical processes; see \ref{app:classical}
for details); (2) the more complicated ``non-commuting'' case, where
$\rho$ does not commute with $\varphi$ and/or $\Phi(\rho)$ does
not commute with $\Phi(\varphi)$.

\subsection{Commuting case}

\label{subsec:Mismatch-cost-forcomm}

\begin{figure}
\includegraphics[width=1\columnwidth]{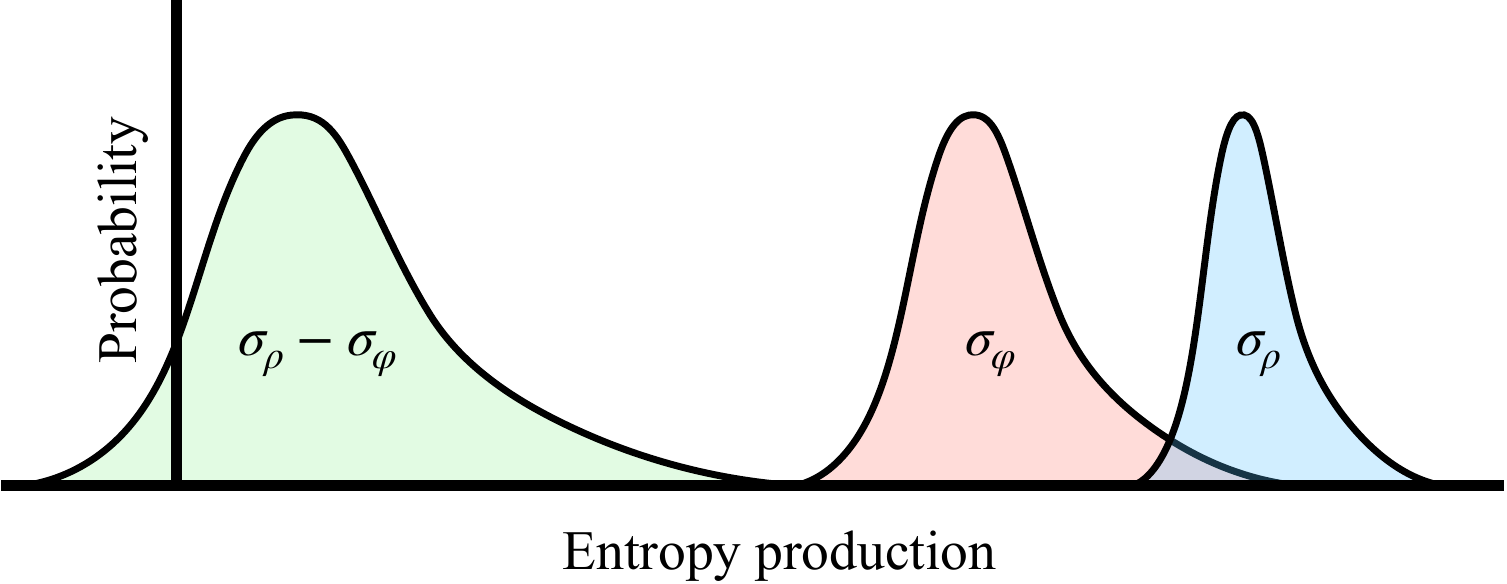}\caption{\label{fig:ft}Red and blue curves show the probability distribution
of $\sigma_{\rho}$ and $\sigma_{\varphi}$, the fluctuating EP incurred
by stochastic realizations sampled from some initial state $\rho$
and the optimal initial state $\varphi$ (which minimizes integrated
EP). Each of these fluctuating EP terms individually obeys an integral
fluctuating theorem (IFT), \ref{eq:IFT1}. We show that difference
of these fluctuating EP terms, $\sigma_{\rho}-\sigma_{\varphi}$,
is the fluctuating expression of mismatch cost, and that it also obeys
an IFT, \ref{eq:IFT}.}
\end{figure}

We first assume that the initial states $\rho$ and $\varphi$ commute,
as do the final states $\Phi(\rho)$ and $\Phi(\varphi)$. This means
that $\varphi$ can be diagonalized in the same basis as $\rho$,
$\varphi=\sum_{i}r_{i}\vert i\rangle\langle i\vert$, and $\Phi(\varphi)$
can be diagonalized in the same basis as $\Phi(\rho)$, $\Phi(\varphi)=\sum_{\phi}r'_{\phi}\vert\phi\rangle\langle\phi\vert$.

We then define the \emph{fluctuating mismatch cos}t of a given realization
$(i\!\shortrightarrow\!\phi,q)$ as the difference between $\sigma_{\rho}(i\!\shortrightarrow\!\phi,q)$,
the fluctuating EP of the actual realization, and $\sigma_{\varphi}(i\!\shortrightarrow\!\phi,q)$,
the fluctuating EP assigned to the same realization $(i\!\shortrightarrow\!\phi,q)$
if the physical process were started from the initial mixed state
$\varphi$:
\begin{align}
& \sigma_{\rho}(i\!\shortrightarrow\!\phi,q)-\sigma_{\varphi}(i\!\shortrightarrow\!\phi,q)\label{eq:mDef0}\\
& \qquad=(-\ln p'_{\phi}+\ln p_{i})-(-\ln r'_{\phi}+\ln r_{i}).\label{eq:mm0}
\end{align}
(Note that this is different from $\sigma_{\rho}(i\!\shortrightarrow\!\phi,q)-\Sigma(\varphi)$,
the additional fluctuating EP incurred by realization $(i\!\shortrightarrow\!\phi,q)$
under the initial state $\rho$, additional to the \emph{expected
EP} achieved by the optimal initial state $\varphi$.)

We now derive our main results for fluctuating mismatch cost, which
are also illustrated in \ref{fig:ft} (see \ref{app:fluctuating}
for all derivations). First, a simple calculation shows that \ref{eq:mDef0}
is a proper definition of fluctuating mismatch cost, in that its expectation
under $p_{\rho}(i,\phi,q)$ is equal to the mismatch cost for integrated
EP,
\begin{align}
\langle\sigma_{\rho}-\sigma_{\varphi}\rangle_{p_{\rho}} & =-\Delta S({\rho}\Vert{\varphi})=\Sigma(\rho)-\Sigma(\varphi).\label{eq:fluct0}
\end{align}
Second, the fluctuating mismatch cost obeys an IFT,
\begin{equation}
\langle e^{-(\sigma_{\rho}-\sigma_{\varphi})}\rangle_{p_{\rho}}=\gamma\in(0,1],\label{eq:IFT}
\end{equation}
where $\gamma$ is a ``correction factor'' that accounts for the
fact that some initial pure states are never seen when sampling from
$\rho$. Formally, this correction factor is defined as
\[
\gamma=\mathrm{tr}\{\Pi^{\rho}(\mathcal{R}_{\Phi}^{\varphi}(\Phi(\rho)))\},
\]
where $\mathcal{R}_{\Phi}^{\varphi}$ is the recovery map from \ref{eq:recov}
and $\Pi^{\rho}$ is the projection onto the support of $\rho$. This
correction factor achieves its maximum value of 1 when the $\rho$
has the same support as $\varphi$, and is closely related to the
notion of ``absolute irreversibility'' studied by Funo et al. \citep{funo2015quantum}.

Note that mismatch cost for integrated EP is always non-negative,
$\Sigma(\rho)-\Sigma(\varphi)\ge0$, since $\varphi$ is a minimizer
of EP. On the other hand, applying Jensen's inequality to the IFT
in \ref{eq:IFT} gives the lower bound $\Sigma(\rho)-\Sigma(\varphi)\ge-\ln\gamma$,
which is stronger than the first one whenever $\gamma<1$. Furthermore,
using standard techniques in stochastic thermodynamics (see \ref{app:fluctuating}),
the IFT in \ref{eq:IFT} implies that negative values of fluctuating
mismatch cost are exponentially unlikely,
\begin{equation}
\mathrm{Pr}\big[(\sigma_{\rho}-\sigma_{\varphi})\le-\xi\big]\le\gamma e^{-\xi}.\label{eq:fluctBound}
\end{equation}

In stochastic thermodynamics, the fluctuating EP of a trajectory typically
reflects how much the trajectory's probability violates time-reversal
symmetry between the process under consideration and a special ``time-reversed''
version of the process \citep{seifert2012stochastic,campisi2011colloquium}.
In contrast, our derivations do not explicitly involve any time-reversed
process. However, it is possible to interpret fluctuating mismatch
cost as \textit{implicitly} referencing the violation of time-reversal
symmetry. Let $\mathcal{R}_{\Phi}^{\varphi}$ indicate the Petz recovery
map, where the optimal initial state $\varphi$ is chosen as the reference
state, and let $T_{\mathcal{R}_{\Phi}^{\varphi}}(i\vert\phi)$ indicate
the corresponding conditional probability, defined as in \ref{eq:jk2}
but for the channel $\mathcal{R}_{\Phi}^{\varphi}$ rather than $\Phi$.
Then, as we show in \ref{app:fluctuating}, the fluctuating mismatch
cost in \ref{eq:mm0} can be written as
\begin{equation}
\sigma_{\rho}(i\!\shortrightarrow\!\phi,q)-\sigma_{\varphi}(i\!\shortrightarrow\!\phi,q)=\ln\frac{p_{i}T_{\Phi}(\phi\vert i)}{p'_{\phi}T_{\mathcal{R}_{\Phi}^{\varphi}}(i\vert\phi)}.\label{eq:ldb1}
\end{equation}
Thus, fluctuating mismatch cost reflects the breaking of time-reversal
symmetry, as quantified by the difference between the joint probability
of starting on pure state $\vert i\rangle\langle i\vert$ and ending
on pure state $\vert\phi\rangle\langle\phi\vert$ under the regular
process, versus the joint probability of starting on pure state $\vert\phi\rangle\langle\phi\vert$
and ending on pure state $\vert i\rangle\langle i\vert$ under the
time-reversed process specified by the Petz recovery map. (See also
Ref.~\citep{buscemi2020fluctuation} for a related fluctuation theorem
that also makes use of the Petz recovery map.)

\subsection{Non-commuting case}

We now consider the more general case when the pair of initial states
$\rho,\varphi$ and/or the pair of final states $\Phi(\rho),\Phi(\varphi)$
do not commute. In this case, the pair of initial states $\rho,\varphi$
and/or final states $\Phi(\rho),\Phi(\varphi)$ cannot be simultaneously
diagonalized, so one cannot define fluctuating mismatch cost as in
\ref{eq:mDef0}. Nonetheless, we show that it is still possible to
define a non-commuting version of \ref{eq:mm0}, which is a proper
trajectory-level measure of mismatch cost, obeys an IFT, and reflects the
breaking of time-reversal symmetry in a way analogous to \ref{eq:ldb1}.

To derive our results, we employ a framework recently developed by
Kwon and Kim \citep{kwonFluctuationTheoremsQuantum2019}, which provides
a fluctuation theorem for quantum processes which is stated in terms
of a quantum channel $\Phi$, an initial state $\rho$, and some arbitrary
``reference state'' $\varphi$. Write the spectral resolutions of
the initial mixed states as $\rho=\sum_{i}p_{i}\vert i\rangle\langle i\vert$
and $\varphi=\sum_{a}r_{a}\vert a\rangle\langle a\vert$, and write
the spectral resolutions of the final mixed states as $\Phi(\rho)=\sum_{\phi}p'_{\phi}\vert\phi\rangle\langle\phi\vert$
and $\Phi(\varphi)=\sum_{\alpha}r'_{\alpha}\vert\alpha\rangle\langle\alpha\vert$.
Then, in the framework of \citep{kwonFluctuationTheoremsQuantum2019},
each stochastic realization of a process that carries out $\Phi$
on initial state $\rho$ is characterized by four factors: (1) an
initial pure state $\vert i\rangle\langle i\vert$ in the basis of
$\rho$, (2) a final pure state $\vert\phi\rangle\langle\phi\vert$
in the basis of $\Phi(\rho)$, (3) an initial (generally off-diagonal)
term $\vert a\rangle\langle b\vert$ in the basis of the reference
state $\varphi$, and (4) a final (generally off-diagonal) term $\vert\alpha\rangle\langle\beta\vert$
in the basis of the reference state $\Phi(\varphi)$.

Given these four factors, each realization can be assigned the following
fluctuating quantity \mycitep[Eq.~11]{kwonFluctuationTheoremsQuantum2019},
\begin{align}
& m_{\rho,\varphi}(i,a,b\!\shortrightarrow\!\phi,\alpha,\beta)\label{eq:kwonFluctEP0}\\
& \quad=-\ln p'_{\phi}+\ln p_{i}+\frac{1}{2}\ln r'_{\alpha}r'_{\beta}-\frac{1}{2}\ln r_{a}r_{b}\label{eq:kwonFLUCTEP}\\
& \quad=-\ln p'_{\phi}+\ln p_{i}+\ln\frac{T_{\Phi}(\alpha,\beta\vert a,b)}{T_{\mathcal{R}_{\Phi}^{\varphi}}(a,b\vert\alpha,\beta)},\label{eq:ldb2}
\end{align}
where $T_{\Phi}$ and $T_{\mathcal{R}_{\Phi}^{\varphi}}$ encode the
forward and backward conditional \emph{quasiprobability} distributions,
\begin{align*}
T_{\Phi}(\alpha,\beta\vert a,b) & =\langle\alpha\vert\Phi(\vert a\rangle\langle b\vert)\vert\beta\rangle,\\
T_{\mathcal{R}_{\Phi}^{\varphi}}(a,b\vert \alpha,\beta) & =\langle a \vert\mathcal{R}_{\Phi}^{\varphi}(\vert \alpha \rangle\langle \beta \vert)\vert b \rangle.
\end{align*}
Note that the backward conditional quasiprobability distribution is
defined in terms of the Petz recovery map, \ref{eq:recov}. In Ref.~\citep{kwonFluctuationTheoremsQuantum2019},
the quantity $m_{\rho,\varphi}$ is interpreted as a kind of ``fluctuating
EP'' defined relative to an arbitrary reference state $\varphi$,
which is purely information-theoretic in nature (i.e., this fluctuating
EP does not \emph{a priori} have anything to do with thermodynamic
entropy production). As we discuss below, our interpretation of $m_{\rho,\varphi}$
will be somewhat different.

Before proceeding, we discuss how one might compute the expectation
of $m_{\rho,\varphi}$ under a joint probability distribution over
realizations $i,a,b\!\shortrightarrow\!\phi,\alpha,\beta$. In fact,
no such joint probability distribution can exist, because in general
it is impossible to assign valid joint probability to the outcomes
of non-commuting observables \citep{allahverdyanExcludingJointProbabilities2018}.
However, one can assign each realization $i,a,b\!\shortrightarrow\!\phi,\alpha,\beta$
the following \emph{quasiprobability} \mycitep[Eq.~13]{kwonFluctuationTheoremsQuantum2019},
\begin{multline}
\tilde{p}_{\rho}(i,a,b,\phi,\alpha,\beta):=\\
p_{i}\,\langle\phi\vert\alpha\rangle\langle\alpha\vert\Phi\big(\vert a\rangle\langle a\vert i\rangle\langle i\vert b\rangle\langle b\vert\big)\vert\beta\rangle\langle\beta\vert\phi\rangle.\label{eq:mm3}
\end{multline}
(See Appendix D in \citep{kwonFluctuationTheoremsQuantum2019} for
details of how the quasiprobability distribution in \ref{eq:mm3}
can be operationally measured.) Although the quasiprobability distribution
$\tilde{p}_{\rho}$ can take negative values for certain outcomes,
it nonetheless has positive and correct marginal distributions over
the outcomes of the individual observables. Using this, the expectation
of $m_{\rho,\varphi}$ (as defined in \ref{eq:kwonFluctEP0}) under
$\tilde{p}_{\rho}$ can be shown to be equal to the contraction of
relative entropy between $\rho$ and $\varphi$ \citep[Eq.~25]{kwonFluctuationTheoremsQuantum2019},
\begin{equation}
\langle m_{\rho,\varphi}(i,a,b\!\shortrightarrow\!\phi,\alpha,\beta)\rangle_{\tilde{p}_{\rho}}=-\Delta S({\rho}\Vert{\varphi}).\label{eq:fluctnn}
\end{equation}
Moreover, this quantity also satisfies an IFT (Appendix G in \citep{kwonFluctuationTheoremsQuantum2019}),
\begin{equation}
\langle e^{m_{\rho,\varphi}(i,a,b\!\shortrightarrow\!\phi,\alpha,\beta)}\rangle_{\tilde{p}_{\rho}}=\gamma,\label{eq:ift2}
\end{equation}
where $\gamma=\mathrm{tr}\{\Pi^{\rho}(\mathcal{R}_{\Phi}^{\varphi}(\Phi(\rho)))\}\in(0,1]$.

Our interpretation of the quantity $m_{\rho,\varphi}$ is somewhat
different from the one discussed in \citep{kwonFluctuationTheoremsQuantum2019}.
As mentioned, we choose the reference state $\varphi$ to be a minimizer of EP,
and assume that it satisfies the relation $\Sigma(\rho)-\Sigma(\varphi)=-\Delta S({\rho}\Vert{\varphi})$,
\ref{eq:z4}. Then, $m_{\rho,\varphi}$ acquires a concrete thermodynamic
meaning: given \ref{eq:fluctnn}, it is the expression of fluctuating mismatch cost (i.e., difference of thermodynamic entropy production terms), which applies even when states $\rho$ and
$\varphi$ do not commute. This holds because \ref{eq:fluctnn} and
\ref{eq:z4} together give the non-commuting analogue of \ref{eq:fluct0}:
\begin{equation}
\langle m_{\rho,\varphi}(i,a,b\!\shortrightarrow\!\phi,\alpha,\beta)\rangle_{\tilde{p}_{\rho}}=\Sigma(\rho)-\Sigma(\varphi).\label{eq:m32}
\end{equation}
Similarly, the expression of the breaking of time-reversal symmetry
in \ref{eq:ldb2} is the non-commuting analogue of \ref{eq:ldb1},
while the IFT in \ref{eq:ift2} is the non-commuting analogue of \ref{eq:IFT}.

As mentioned, the quasiprobability distribution $\tilde{p}_{\rho}$
can assign negative values to some joint outcomes. For this reason,
one cannot generally derive an exponential bound on the probability
of negative mismatch cost as in \ref{eq:fluctBound}. Nonetheless,
via the series expansion of the exponential function, the IFT in \ref{eq:ift2}
can still be shown to constrain all moments of fluctuating mismatch
cost \citep[p.~13]{kwonFluctuationTheoremsQuantum2019}.

Finally, in the case that the pair of initial states $\rho$ and
$\varphi$ as well as the pair of final states $\Phi(\rho)$ and
$\Phi(\varphi)$ commute --- and therefore can be diagonalized in
the same basis --- the quasiprobability distribution $\tilde{p}_{\rho}$
defined in \ref{eq:mm3} reduces to a regular (non-negative) probability
distribution,
\[
\tilde{p}_{\rho}(i,a,b,\phi,\alpha,\beta)=\begin{cases}
p_{\rho}(i,\phi) & \text{if \ensuremath{i\!=\!a\!=\!b} and \ensuremath{\phi\!=\!\alpha\!=\!\beta}}\\
0 & \text{otherwise}
\end{cases}
\]
where $p_{\rho}(i,\phi)=p_{i}\mathrm{tr}\{\Phi(\vert i\rangle\langle i\vert)\vert\phi\rangle\langle\phi\vert\}$
(as appeared in \ref{eq:m2} and \ref{eq:m3}). Then, taking expectations
under $\tilde{p}_{\rho}(i,a,b,\phi,\alpha,\beta)$ is equivalent to
taking expectations under $p_{\rho}(i,\phi)$, which recovers
the ``commuting case'' results (presented in the previous section)
as a special case of the more general analysis discussed in this section.

\subsection{Example}

\label{subsec:Example-reset-process}

We now illustrate our results for fluctuating mismatch cost using
the example of a ``reset'' process (see also analyses in \citep{riechers2020initial,riechersImpossibilityLandauerBound2021}).

Consider a finite-dimensional quantum process that maps any initial
state $\rho$ to the same final pure state $\vert\phi\rangle\langle\phi\vert$,
so that the dynamics are described by the following input-independent
channel:
\begin{equation}
\Phi(\rho)=\vert\phi\rangle\langle\phi\vert\qquad\forall\rho.\label{eq:input-ind}
\end{equation}
This type of process can represent erasure of information (e.g., the
reset of a qubit) or the preparation of some special pure state (e.g.,
preparation of some desired entangled state). Let $\varphi\in\mathop{\arg\min}_{\omega\in\mathcal{D}}\Sigma(\omega)$
indicate the initial mixed state that minimizes EP for this process,
and note that we do not assume that $\varphi$ achieves vanishing
EP. From \ref{eq:m234} and \ref{subsec:Support-conditions-integrated}, it is easy to verify that $\varphi$ must
have full support.

Now suppose that the process is initialized on some initial mixed
state $\rho$. For simplicity, we assume that $\rho$ commutes with
$\varphi$, so that both can be diagonalized in the same basis ($\rho=\sum_{i}p_{i}\vert i\rangle\langle i\vert$
and $\varphi=\sum_{i}r_{i}\vert i\rangle\langle i\vert$). Since we
assume a finite-dimensional system and $\varphi$ has full support,
$S(\rho\Vert\varphi)<\infty$ and
\[
\Sigma(\rho)-\Sigma(\varphi)=-\Delta S({\rho}\Vert{\varphi})=S(\rho\Vert\varphi)
\]
by \ref{eq:equalityBasis}. This means that \ref{eq:z4} holds, allowing
us to apply the results we derived for fluctuating mismatch cost in
the commuting case, such as \ref{eq:fluct0} and \ref{eq:IFT}.

In particular, consider some realization of the process in which the
system goes from an initial pure state $\vert i\rangle\langle i\vert$
to the final pure state $\vert\phi\rangle\langle\phi\vert$. The fluctuating
mismatch cost for this realization can be written in the following
simple form:
\begin{equation}
\sigma_{\rho}(i\!\shortrightarrow\!\phi,q)-\sigma_{\varphi}(i\!\shortrightarrow\!\phi,q)=\ln p_{i}-\ln r_{i},\label{eq:fluct8}
\end{equation}
where we used \ref{eq:mm0} and the fact that $p'_{\phi}=r'_{\phi}=1$. \ref{eq:fluct8} means that
the fluctuating mismatch cost incurred in mapping $i\!\shortrightarrow\!\phi$
is the log ratio of the probability of pure state $\vert i\rangle\langle i\vert$
under the actual initial mixed state $\rho$ and the optimal initial mixed
state $\varphi$ that minimizes EP.

Recall that fluctuating mismatch cost obeys the IFT in \ref{eq:IFT}.
Given \ref{eq:fluctBound}, this means that the probability of observing
negative mismatch is exponentially unlikely: the probability that
$\sigma_{\varphi}(i\!\shortrightarrow\!\phi,q)$ exceeds $\sigma_{\rho}(i\!\shortrightarrow\!\phi,q)$
by $\xi$ (or more) is upper bounded by $e^{-\xi}$.

\section{Mismatch Cost for EP Rate}

\label{sec:eprate}

In our third set of results, we analyze the state dependence of the
instantaneous EP rate. We consider an open quantum system coupled
to some number of reservoirs, which evolves according to a Lindblad
equation, ${\textstyle \frac{d}{dt}}\rho(t)\!=\!\mathcal{L}(\rho(t))$.
The EP rate incurred by state $\rho$ is \citep{spohn1978irreversible,spohn_entropy_1978,alicki_quantum_1979}
\begin{equation}
\dot{\Sigma}(\rho)={\textstyle {\textstyle \frac{d}{dt}}}S(\rho(t))+\dot{Q}(\rho),\label{eq:EPr}
\end{equation}
where $\dot{Q}:\mathcal{D}\to\mathbb{R}$ is a linear function that
reflects the rate of entropy flow to the environment. Note that the rate
of entropy change ${\textstyle {\textstyle \frac{d}{dt}}}S(\rho(t))$
depends on the Lindbladian $\mathcal{L}$. As above, the precise definition
of $\mathcal{L}$ or $\dot{Q}$ will generally reflect various details
of the system and the coupled reservoirs. For simplicity, here we
assume that $\dim\mathcal{H}<\infty$ (results for the $\dim\mathcal{H}=\infty$
case, which require some additional technicalities, are left for
\ref{app:EPrate}).

It is important to note that the derivative in \ref{eq:EPr} is evaluated
at $t=0$, meaning that $\dot{\Sigma}(\rho)$ expresses the instantaneous
EP rate incurred at the same time that the system is found in state
$\rho$. An alternative analysis, which we do not consider here, would
consider the EP rate incurred at some later time $t>0$, given that
the process is initialized in state $\rho$ at $t=0$.

Consider some set of states $\mathcal{D}_{P}$, defined as in \ref{eq:densBdef}
for a set of projection operators $P$. Let $\varphi_{P}\in\mathop{\arg\min}_{\omega\in\mathcal{D}_{P}}\dot{\Sigma}(\omega)$
indicate the state which minimizes the EP rate within this set. Then,
for any $\rho\in\mathcal{D}_{P}$ such that $S(\rho\Vert\varphi_{P})<\infty$,
the additional EP rate incurred by $\rho$ above that incurred by
$\varphi_{P}$ is given by the instantaneous rate of contraction of
the relative entropy between $\rho$ and $\varphi_{P}$,
\begin{equation}
\dot{\Sigma}(\rho)-\dot{\Sigma}(\varphi_{P})=-{\textstyle {\textstyle \frac{d}{dt}}}S(\rho(t)\Vert\varphi_{P}(t)),\label{eq:EPr-equality}
\end{equation}
which is the continuous-time analogue of \ref{eq:equalityBasis}.
The proof of this result is sketched at the end of this section, with
details left for \ref{app:EPrate}.

We refer to the additional instantaneous EP rate incurred by $\rho$,
above that incurred by an optimal state $\varphi_{P}$, the \emph{instantaneous
mismatch cost} of $\rho$. In the special case where $\mathcal{D}_{P}=\mathcal{D}$
(when $P=\{I\}$), \ref{eq:EPr-equality} expresses the global instantaneous
mismatch cost, reflecting the additional EP rate incurred by state
$\rho$ rather than a global optimizer, $\varphi_{\mathcal{D}}\in\mathop{\arg\min}_{\omega\in\mathcal{D}}\dot{\Sigma}(\omega)$.

We can decompose instantaneous mismatch cost by applying \ref{eq:EPr-equality}
in an iterative manner. In particular, we can derive a decomposition
into classical and quantum contributions analogous to \ref{eq:mdecomp}.
As above, define $P=\{\vert i\rangle\langle i\vert\}_{i}$ for an
orthonormal basis $\{\vert i\rangle\}_{i}$ that diagonalizes $\rho$. Then,
let $\varphi_{P}\in\mathop{\arg\min}_{\omega\in\mathcal{D}_{P}}\dot{\Sigma}(\omega)$
be an optimal state within $\mathcal{D}_{P}$, and let $\varphi_{\mathcal{D}}\in\mathop{\arg\min}_{\omega\in\mathcal{D}}\dot{\Sigma}(\omega)$
be a global optimizer. Using a similar derivation as in \ref{eq:mdecomp},
we can decompose the global instantaneous mismatch cost into two non-negative terms,
\begin{align}
\dot{\Sigma}(\rho)-\dot{\Sigma}(\varphi_{\mathcal{D}}) & =[\dot{\Sigma}(\rho)-\dot{\Sigma}(\varphi_{P})]+[\dot{\Sigma}(\varphi_{P})-\dot{\Sigma}(\varphi_{\mathcal{D}})]\nonumber \\
& =-{\textstyle {\textstyle \frac{d}{dt}}}S(\rho(t)\Vert\varphi_{P}(t))-{\textstyle {\textstyle \frac{d}{dt}}}S(\varphi_{P}(t)\Vert\varphi_{\mathcal{D}}(t)).\label{eq:mdecomp-1}
\end{align}
The first term, reflecting the mismatch between $\rho$ and $\varphi_{P}$
which are diagonal in the same basis, is the classical contribution
to instantaneous mismatch cost. The second term, reflecting the mismatch
between $\varphi_{P}$ and $\varphi_{\mathcal{D}}$, vanishes when
$\rho$ and $\varphi_{\mathcal{D}}$ can be diagonalized in the same
basis, and is the quantum contribution to instantaneous mismatch cost.

Our most generally applicable result concerns the instantaneous mismatch
cost of $\rho$ relative to an optimal state within some arbitrary
convex subset of states $\mathcal{S}\subseteq\mathcal{D}$. Given
any state $\rho\in\mathcal{S}$ and an optimizer $\varphi_{\mathcal{S}}\in\mathop{\arg\min}_{\omega\in\mathcal{S}}\dot{\Sigma}(\omega)$,
as long as $S(\rho\Vert\varphi_{\mathcal{S}})<\infty$, it is the
case that
\begin{equation}
\dot{\Sigma}(\rho)-\dot{\Sigma}(\varphi_{\mathcal{S}})\ge-{\textstyle {\textstyle \frac{d}{dt}}}S(\rho(t)\Vert\varphi_{\mathcal{S}}(t)),\label{eq:ineqConvex-1}
\end{equation}
with equality if $(1-\lambda)\varphi_{\mathcal{S}}+\lambda\rho\in\mathcal{S}$
for some $\lambda<0$. Since $\dot{\Sigma}(\varphi_{\mathcal{S}})\ge0$
for Lindbladian dynamics \citep{spohn_entropy_1978}, \ref{eq:ineqConvex-1}
implies
\begin{equation}
\dot{\Sigma}(\rho)\ge-{\textstyle {\textstyle \frac{d}{dt}}}S(\rho(t)\Vert\varphi_{\mathcal{S}}(t)).\label{eq:genBound-1}
\end{equation}
The RHS is non-negative by the monotonicity of relative entropy. This
provides a tighter bound on the EP rate than the second law, $\dot{\Sigma}(\rho)\ge0$,
which reflects a suboptimal choice of the state within some convex
set of states.

We now briefly sketch the proof idea behind \ref{eq:EPr-equality,eq:ineqConvex-1},
leading formal details for \ref{app:EPrate}. First, we use \ref{eq:EPr}
to define an integrated EP function as $\Sigma(\rho,t)=\int_{0}^{t}\dot{\Sigma}(\rho(t'))dt'$.
Given a pair of states $\rho,\varphi$ with finite EP rate and $S(\rho\Vert\varphi)<\infty$,
we then write the directional derivative of $\dot{\Sigma}$ at $\varphi$
in the direction of $\rho$ as
\begin{align}
& {\textstyle \partial_{\lambda}^{+}}\dot{\Sigma}(\varphi(\lambda),t)\vert_{\lambda=0}:={\textstyle \partial_{\lambda}^{+}}\partial_{t}\Sigma(\varphi(\lambda),t)\nonumber \\
& \qquad=\partial_{t}{\textstyle \partial_{\lambda}^{+}}\Sigma(\varphi(\lambda),t)\nonumber \\
& \qquad=\partial_{t}[\Sigma(\rho,t)-\Sigma(\varphi,t)+\Delta S({\rho}\Vert{\varphi})]\nonumber \\
& \qquad=\dot{\Sigma}(\rho)-\dot{\Sigma}(\varphi)+{\textstyle {\textstyle \frac{d}{dt}}}S(\rho(t)\Vert\varphi(t)),\label{eq:ddN}
\end{align}
where $\varphi(\lambda)=(1-\lambda)\varphi+\lambda\rho$. In the second
line, we used the symmetry of partial derivatives, which (as we prove
in the \cref{app:EPrate}) follows from convexity of $\Sigma$. In the third
line, we used the expression for the directional derivative of integrated
EP, \ref{eq:dd0}. \ref{eq:ineqConvex-1} follows from \ref{eq:ddN},
since the directional derivative at a minimizer must be non-negative.
To derive \ref{eq:EPr-equality}, note that if $S(\rho\Vert\varphi)<\infty$,
then $\mathrm{supp}\,\rho\subseteq\mathrm{supp}\,\varphi$ and so
it is possible to move from $\varphi\in\mathcal{D}_{P}$ both toward
and away from $\rho\in\mathcal{D}_{P}$ while remaining within the
set $\mathcal{D}_{P}$. Since $\varphi$ is a minimizer of the EP
rate within $\mathcal{D}_{P}$, this means that the directional derivative
${\textstyle \partial_{\lambda}^{+}}\dot{\Sigma}(\varphi(\lambda),t)\vert_{\lambda=0}$
must vanish.

\subsection{Support conditions}

Our result for mismatch cost, \ref{eq:EPr-equality}, only applies
when $S(\rho\Vert\varphi_{P})<\infty$. This condition in turn requires
that
\begin{align}
\mathrm{supp}\,\rho\subseteq\mathrm{supp}\,\varphi_{P}.\label{eq:suppcond-1-1}
\end{align}
Here, we show that \ref{eq:suppcond-1-1} is satisfied in many cases
of interest.

In \ref{prop:epr-fsupp} in the appendix, we prove that \ref{eq:suppcond-1-1}
holds for all $\rho\in\mathcal{D}_{P}$ and $\varphi_{P}\in\mathop{\arg\min}_{\omega\in\mathcal{D}_{P}}\dot{\Sigma}(\omega)$
as long as the Lindbladian $\mathcal{L}$ satisfies the following
``irreducibility'' condition:
\begin{equation}
\mathrm{supp}\,\mathcal{L}(\rho)\not\subseteq\mathrm{supp}\,\rho\quad\forall\rho\in\mathcal{D}_{P}:\mathrm{supp}\,\rho\ne\mathcal{H}_{P},\label{eq:fsupp-mn}
\end{equation}
where $\mathcal{H}_{P}$ is defined as in \ref{eq:hilbB} (we also
assume that $\dim\mathcal{H}<\infty$). \ref{eq:fsupp-mn} says that
whenever some state $\rho$ with partial support evolves under $\mathcal{L}$,
some probability ``leaks out'' of subspace spanned by $\rho$. In
the terminology of \citep{baumgartnerAnalysisQuantumSemigroups2008a,baumgartnerAnalysisQuantumSemigroups2008},
\ref{eq:fsupp-mn} means that $\mathcal{L}$ does not have any non-trivial
``lazy subspaces''.

If $\mathcal{L}$ is not irreducible in the sense of \ref{eq:fsupp-mn},
it may be possible to decompose the overall Hilbert space into a set
of irreducible subspaces such that \ref{eq:fsupp-1} holds in each
one~\citep{baumgartnerAnalysisQuantumSemigroups2008a,baumgartnerAnalysisQuantumSemigroups2008}.
Such subspaces have been called \emph{enclosures} in the literature
(see \citep{carboneIrreducibleDecompositionsStationary2016} for details)
and are the continuous-time analogue of ``basins'' discussed above.
We leave analysis of instantaneous mismatch cost with multiple enclosures
for future work.

\subsection{Example}

We briefly illustrate our results for instantaneous mismatch cost
by deriving a novel bound on the EP rate incurred in a non-equilibrium
stationary state.

Consider a finite-dimensional system that evolves in continuous time
according to some Lindbladian $\mathcal{L}$. Assume that the system
is coupled to multiple reservoirs and has an associated non-equilibrium
stationary state $\pi$. In addition, let $\varphi\in\mathop{\arg\min}_{\omega\in\mathcal{D}}\dot{\Sigma}(\omega)$
be a state that achieves the minimal EP rate. \ref{eq:genBound-1}
then implies the following bound on the stationary EP rate,
\begin{align}
\dot{\Sigma}(\pi) & \ge-{\textstyle {\textstyle \frac{d}{dt}}}S(\pi(t)\Vert\varphi(t))=-{\textstyle \frac{d}{dt}}S(\pi\Vert\varphi(t)),\label{eq:m1-1}
\end{align}
where ${\textstyle \frac{d}{dt}}\pi(t)=\mathcal{L}(\pi)=0$ by assumption of stationarity.

\ref{eq:m1-1} shows that for any continuous-time process, the stationary
EP rate is lower bounded by the rate at which the minimally dissipative
state $\varphi$ approaches the stationary state $\pi$ in relative
entropy.

\section{Mismatch Cost in Classical Systems}

\label{sec:classical}

We now discuss mismatch cost in the context of classical systems.
We consider both discrete-state classical systems (as might be derived
by coarse-graining an underlying phase space \citep{talknerRateDescriptionFokkerPlanck2004})
and continuous-state classical systems. For more details, see \ref{app:classical}.

\subsection{Classical integrated EP}

\label{subsec:classical-Integrated-EP}

We begin by overviewing the definition of integrated EP in classical
systems.

Consider a classical system with state space $X$ which undergoes
a driving protocol over time interval $t\in[0,\tau]$, while
coupled to some thermodynamic reservoirs. We use the notation $p'$
to indicate the final probability distribution at time $t=\tau$ corresponding
to the initial probability distribution $p$ at time $t=0$. (Note that we use
the term ``probability distribution'' to indicate a probability
mass function for discrete-state systems and a probability density
function for continuous-state systems.) In addition, following classical
stochastic thermodynamics \citep{van2013stochastic,seifert2012stochastic},
we use $\mathrm{P}(\bm{x}\vert x_{0})$ to indicate the conditional
probability of the system undergoing the trajectory $\bm{x}=\{x_{t}:t\in[0,\tau]\}$
under the regular (``forward'') protocol given initial microstate
$x_{0}$. Sometimes we will also consider the conditional probability
$\tilde{\mathrm{P}}(\tilde{\bm{x}}\vert\tilde{x}_{\tau})$ of observing
the \emph{time-reversed} trajectory $\tilde{\bm{x}}=\{\tilde{x}_{\tau-t}:t\in[0,\tau]\}$
under the \emph{time-reversed} driving protocol given initial microstate
$\tilde{x}_{\tau}$ (tilde notation like $\tilde{x}$ indicates conjugation
of odd variables such as momentum \citep{ford_entropy_2012,spinneyNonequilibriumThermodynamicsStochastic2012}).

For classical systems, there are several ways of defining integrated
EP as a function of the initial probability distribution. The first
way is the classical analogue of \ref{eq:EP0-ent},
\begin{align}
\mathsf{\Sigma}(p)=\mathsf{S}(p')-\mathsf{S}(p)+G(p),\label{eq:classicEP}
\end{align}
where $\mathsf{\Sigma}(\cdot)$ indicates classical EP as a function
of the initial probability distribution, $\mathsf{S}(\cdot)$ indicates
classical Shannon entropy and $G(\cdot)$ is a linear function that
reflects the entropy flow to the environment. As above, the precise
definition of the entropy flow term will depend on the physical setup,
such as the number and type of coupled reservoirs.

A second way to define integrated EP in classical systems is in terms
of the relative entropy between the trajectory probability distribution
under the forward process and the time-reversed backward process
\citep{spinneyEntropyProductionFull2012,esposito_three_2010},
\begin{equation}
\mathsf{\Sigma}(p)=D\big(\mathrm{P}(\bm{X}\vert X_{0})p(X_{0})\Vert\tilde{\mathrm{P}}(\tilde{\bm{X}}\vert\tilde{X}_{\tau})p'(X_{\tau})\big),\label{eq:EPclassicalKL}
\end{equation}
where $D(\cdot\Vert\cdot)$ indicates the classical relative entropy
(also called the Kullback-Leibler divergence). \ref{eq:EPclassicalKL}
expresses integrated EP directly in terms of the ``time-asymmetry''
of the stochastic process \citep{parrondoEntropyProductionArrow2009a}.
Note that the expression in \ref{eq:EPclassicalKL} is a special case
of the expression in \ref{eq:classicEP}, since it can be put in the
form of the latter by defining the entropy flow in \ref{eq:classicEP}
as the expectation $G(p)=\Big\langle\ln\mathrm{P}(\bm{x}\vert x_{0})-\ln\tilde{\mathrm{P}}(\tilde{\bm{x}}\vert\tilde{x}_{\tau})\Big\rangle_{\mathrm{P}(\bm{x}\vert x_{0})p(x_{0})}$,
and then performing some simple rearrangement.

There is also a third way to define integrated EP for continuous-state
classical systems in phase space. Consider some system $X$, and let
$Y$ indicate its explicitly modeled environment (typically, $Y$
will indicate the state of one or more heat baths). Assume that $X$
and $Y$ jointly evolve in a Hamiltonian manner starting from an initial
distribution $p(x_{0},y_{0})=p(x_{0})\pi(y_{0}\vert x_{0})$ at time
$t=0$ to some final distribution $p'(x_{\tau},y_{\tau})$ at time
$t=\tau$, where $\pi(y_{0}\vert x_{0})$ is the conditional equilibrium
distribution induced by some system-environment Hamiltonian. The integrated
EP incurred by initial distribution $p$ can then be defined as
\begin{equation}
\mathsf{\Sigma}(p)=D({p'(X_{\tau},Y_{\tau})}\Vert{p'(X_{\tau})\pi({Y_{\tau}\vert X_{\tau}})}).\label{eq:contm2PRE}
\end{equation}
(see \mycitep[Eq.~15]{millerEntropyProductionTime2017}, \mycitep[Eq.~49]{strasberg_stochastic_2017},
and \citep{seifertFirstSecondLaw2016a}). \ref{eq:contm2PRE} is the
classical analogue of \ref{eq:functype2-maintext2}, though generalized
to allow equilibrium correlations between the environment and the
system (see discussion in Appendix A of \citep{strasberg_stochastic_2017}).

We now discuss mismatch cost for classical integrated EP. First, consider
a discrete-state classical system, such that the state space $X$
is a countable set. In this case, our results for quantum mismatch
cost can be directly applied, since a discrete-state classical process
can be expressed as a special case of a quantum process. In particular,
let $\mathcal{D}_{P}$, defined as in \ref{eq:densBdef}, indicate
the set of density operators diagonal in some fixed reference basis.
Then, any probability distribution $p$ over $X$ can be expressed
as a density operator in $\mathcal{D}_{P}$, and any classical dynamics
can be expressed as a special quantum channel that maps elements of
$\mathcal{D}_{P}$ to elements of $\mathcal{D}_{P}$ (see \ref{app:classical-discrete}
for details). Under this mapping, the expressions for quantum and
classical EP (\ref{eq:EP0-ent} versus \cref{eq:classicEP,eq:EPclassicalKL,eq:contm2PRE})
become equivalent, and we can analyze mismatch cost for classical
integrated EP using the results presented above, such as \ref{eq:equalityBasis,eq:ineqConvex}.
For instance, we have the following classical analogue of \ref{eq:equalityBasis}:
given any initial distribution $p$ and an optimal initial distribution
within the set of all distributions, $r\in\mathop{\arg\min}_{s}\mathsf{\Sigma}(s)$,
mismatch cost can be written as
\begin{equation}
\mathsf{\Sigma}(p)-\mathsf{\Sigma}(r)=-\Delta D(p\Vert r),\label{eq:clmm}
\end{equation}
as long as $D(p\Vert r)<\infty$.

For classical systems in continuous state space, such that $X\subseteq\mathbb{R}^{n}$,
the mapping from our quantum results to classical mismatch cost is
not as direct, because in general it is not possible to represent
a continuous probability distribution in terms of a density operator
over a separable Hilbert space. Nonetheless, as long as an appropriate
``translation'' is carried out, the same proof techniques used to
derive mismatch cost results for quantum integrated EP can also be
used to derive analogous results for continuous classical systems,
such as \ref{eq:clmm}. This translation is described in detail in
\ref{app:classical-continuous}.

\subsection{Classical fluctuating EP}

Next, we show that our results for fluctuating mismatch cost also
apply to classical systems. The underlying logic of the derivation
is the same as for the commuting case for quantum systems described
in \ref{subsec:Mismatch-cost-forcomm}, though with somewhat different
notation.

Consider a classical system that undergoes a physical process, which
starts from the initial distribution $p$ and ends on the final distribution
$p'$. In general, the fluctuating EP incurred by some state trajectory
$\bm{x}$ can be expressed as \citep{seifert2012stochastic}
\[
\sigma_{p}(\bm{x})=\ln p(x_{0})-\ln p'(x_{\tau})+q(\bm{x}),
\]
where $q(\bm{x})$ is the increase of the entropy of all coupled reservoirs
incurred by trajectory $\bm{x}(t)$. Let $r$ indicate the initial
probability distribution that minimizes EP, so that \ref{eq:clmm}
holds, and let $r'$ indicate the corresponding final distribution.
We define classical fluctuating mismatch cost as the difference between
the fluctuating EP incurred by the trajectory $\bm{x}$ under the
actual initial distribution $p$ and the optimal initial distribution
$r$,
\begin{multline}
\sigma_{p}(\bm{x})-\sigma_{r}(\bm{x})=\\{}
[-\ln p'(x_{\tau})+\ln p(x_{0})]-[-\ln r'(x_{\tau})+\ln r(x_{0}))],\label{eq:clfluctm-3}
\end{multline}
which is the classical analogue of \ref{eq:mDef0}. It is easy to
verify that \ref{eq:clfluctm-3} is the proper trajectory-level expression
of classical mismatch cost,
\begin{equation}
\langle\sigma_{p}-\sigma_{r}\rangle_{\mathrm{P}(\bm{x}\vert x_{0})p(x_{0})}=-\Delta D(p\Vert r)=\mathsf{\Sigma}(p)-\mathsf{\Sigma}(r).\label{eq:j3}
\end{equation}
Moreover, using a derivation similar to the one in \ref{app:fluctuating},
it can be shown that \ref{eq:clfluctm-3} obeys an IFT,
\begin{equation}
\langle e^{-(\sigma_{p}-\sigma_{r})}\rangle_{\mathrm{P}(\bm{x}\vert x_{0})p(x_{0})}=\gamma,\label{eq:clift-3}
\end{equation}
where $\gamma\in(0,1]$ is a correction factor that equals 1 when $p$
and $r$ have the same support (see \ref{eq:classicalGamma} in the
appendix). \ref{eq:j3} and \ref{eq:clift-3} are the classical analogues
of \ref{eq:fluct0} and \ref{eq:IFT} respectively. Moreover, the
IFT in \ref{eq:clift-3} implies that same exponential bound on negative
mismatch cost as in \ref{eq:fluctBound}.

We can also derive the classical analogue of \ref{eq:ldb1}, which
expresses fluctuating mismatch cost in terms of the breaking of time-reversal
symmetry. Note that for a classical system, the Petz recovery map
is simply the Bayesian inverse of the conditional probability distribution
$\mathrm{P}(x_{\tau}\vert x_{0})$ with respect to the probability
distribution $r$, $\mathrm{P}(x_{0}\vert x_{\tau})=\mathrm{P}(x_{\tau}\vert x_{0})r(x_{0})/r'(x_{\tau})$
\citep{leiferFormulationQuantumTheory2013,wildeQuantumInformationTheory2017}.
In \ref{app:classical}, we show that
\begin{align}
\sigma_{p}(\bm{x})-\sigma_{r}(\bm{x}) & =\ln\frac{\mathrm{P}(x_{\tau}\vert x_{0})p(x_{0})}{\mathrm{P}(x_{0}\vert x_{\tau})p'(x_{\tau})},\label{eq:clPetz-2}
\end{align}
Thus, the fluctuating mismatch cost for a classical system quantifies
the time-asymmetry between the forward process and the reverse process,
as defined by the Bayesian inverse of the forward process run on the
optimal initial distribution $r$.

For more detailed derivations, see \ref{app:classical-fluct-discrete}
for discrete-state classical systems, and \ref{app:classical-fluct-cont}
for continuous-state classical systems.

\subsection{Classical EP rate}

Finally, we discuss instantaneous mismatch cost in the context of
classical systems.

Consider a classical system whose probability distribution at time
$t=0$ evolves according to a master equation,
\[
{\textstyle \frac{d}{dt}}p(t)=Lp(t),
\]
where $L$ is a linear operator that is the infinitesimal generator
of the dynamics. For a discrete-state classical system, $L$ will
be a rate matrix specifying transitions rates between different states,
while for a continuous-state classical system, $L$ will typically
be a Fokker-Planck operator. The classical EP rate can be written
as \citep{esposito2010three}
\begin{align}
\dot{\mathsf{\Sigma}}(p)={\textstyle \frac{d}{dt}}\mathsf{S}(p(t))+\dot{G}(p),\label{eq:classicEPrate-2}
\end{align}
where $\dot{G}(p)$ is the rate of entropy flow to environment.
As always, the form of $\dot{G}(p)$ will depend on the specifics
of the physical process, but can generally be expressed as an expectation
of some function over the microstates. \ref{eq:classicEPrate-2} is
the classical analogue of \ref{eq:EPr}.

A discrete-state classical system can be formulated as
a special case of a quantum system, as mentioned above in \ref{subsec:classical-Integrated-EP}
and described in more detail in \ref{app:classical}. In particular,
one can always express a discrete classical distribution as a density
matrix and a discrete rate matrix as a specially-constructed Lindbladian.
Under this mapping, the expressions for quantum and classical EP rate
(\ref{eq:EPr} versus \ref{eq:classicEPrate-2}) become equivalent,
and we can analyze instantaneous mismatch cost for discrete-state classical systems
using the results presented above for quantum systems, such as \ref{eq:EPr-equality},
\ref{eq:ineqConvex-1}, and \ref{eq:genBound-1}. In particular, we
have the following classical analogue of \ref{eq:EPr-equality}: given
any distribution $p$ and an optimal distribution $r\in\mathop{\arg\min}_{s}\dot{\mathsf{\Sigma}}(s)$
which minimizes EP rate,
\begin{equation}
\dot{\mathsf{\Sigma}}(p)-\dot{\mathsf{\Sigma}}(r)=-{\textstyle \frac{d}{dt}}D(p(t)\Vert r(t)),\label{eq:mnnn2-3-1-1-1}
\end{equation}
as long as $D(p\Vert r)<\infty$.

As mentioned above, for continuous-state classical systems, the mapping
to the quantum formalism is not as direct. Nonetheless, the same proof
techniques used to derive instantaneous mismatch cost for quantum
systems can be used to derive analogous results for continuous classical
systems, such as \ref{eq:mnnn2-3-1-1-1}. This can be done as long
as an appropriate ``translation'' is carried out between classical
and quantum formulations, which is described in detail in \ref{app:classical-EPrate-cont}.

\section{Logical vs. Thermodynamic Irreversibility}

\label{sec:logicalvsthermo}

The relationship between thermodynamic irreversibility (generation
of EP) and logical irreversibility (inability to know the initial
state corresponding to a given final state) is one of the foundational
issues in the thermodynamics of computation \citep{bennettNotesHistoryReversible1988}.
Despite some confusion in the early literature, it is now well-understood
that logically irreversible operations can in principle be carried
out in a thermodynamically reversible manner, without generating any
EP \citep{maroney_absence_2005,sagawa2014thermodynamic,wolpert2019stochastic}.

At the same time, our results demonstrate a different kind of universal
relationship between logical and thermodynamic irreversibility. By
\ref{eq:ineqConvex}, the mismatch cost of $\rho$ is lower-bounded
by the contraction of relative entropy $-\Delta S({\rho}\Vert{\varphi})$,
which is a principled information-theoretic measure of the logical
irreversibility of the quantum channel $\Phi$ on the pair of states
$\rho,\varphi$. This measure reaches its maximal value of $S(\rho\Vert\varphi)$
if and only if $\Phi(\varphi)=\Phi(\rho)$, in which case all information
about the choice of initial state ($\rho$ vs. $\varphi$) is lost.
It reaches its minimal value of 0 if and only if the map $\Phi$ is
logically reversible on the pair of states $\rho,\varphi$, meaning
that the Petz recovery map $\mathcal{R}_{\Phi}^{\varphi}$ can perfectly
restore both initial states $\rho$ and $\varphi$ from the output
of $\Phi$, $\mathcal{R}_{\Phi}^{\varphi}(\Phi(\varphi))=\varphi$
and $\mathcal{R}_{\Phi}^{\varphi}(\Phi(\rho))=\rho$ \citep{petz1988sufficiency,mosonyi2004structure,wildeRecoverabilityQuantumInformation2015a}.
For a unitary channel, $-\Delta S({\rho}\Vert{\varphi})=0$ for all
pairs of states $\rho,\varphi$.

Now imagine a physical process that implements some map $\Phi$ and
achieves minimal EP on some initial state $\varphi$. Our results
imply that the thermodynamic cost associated with choosing suboptimal
initial states, in terms of the additional EP that is generated on
those initial states above the minimum possible, increases with degree
of logical irreversibility of the channel $\Phi$. This is consistent
with the fact that the minimal EP incurred by a given process that
implements $\Phi$, $\min_{\omega}\Sigma(\omega)$, does not directly
depend on the logical irreversibility of $\Phi$ (and can
vanish even for logically irreversible channels).

Interestingly, recent work has uncovered the following inequality
between the contraction of relative entropy and the accuracy of ``recovery
maps''~\citep{jungeUniversalRecoveryMaps2018},
\begin{equation}
-\Delta S({\rho}\Vert{\varphi})\ge-2\ln F(\rho,\mathcal{N}_{\Phi}^{\varphi}(\Phi(\rho))),\label{eq:recoveryBound}
\end{equation}
where $F(\cdot,\cdot)$ is fidelity and $\mathcal{N}_{\Phi}^{\varphi}$
is a recovery map closely related to \ref{eq:recov}. This inequality
provides an information-theoretic condition for high-fidelity recovery
of an initial state $\rho$ that undergoes a noisy operation $\Phi$,
which is of fundamental interest in quantum error correction. Now
consider a process that implements the map $\Phi$ and achieves minimal
EP on the initial state $\varphi$. Combining \ref{eq:equalityBasis,eq:recoveryBound}
along with $\Sigma(\varphi)\ge0$ gives the inequality
\[
F(\rho,\mathcal{N}_{\Phi}^{\varphi}(\Phi(\rho))\ge e^{-\Sigma(\rho)/2},
\]
which implies that high-fidelity recovery of $\rho$ by $\mathcal{N}_{\Phi}^{\varphi}$
is possible only if the process incurs a small amount of EP on the
initial state $\rho$. Conversely, if $\mathcal{N}_{\Phi}^{\varphi}$
performs poorly at recovering $\rho$, then the EP incurred by initial
state $\rho$ must be large. While this relationship between the fidelity
of recovery and EP has been discussed for simple relaxation processes
to equilibrium~\citep{alhambraWorkReversibilityQuantum2018a}, our
results show that it actually holds for a much broader set of processes,
including ones with arbitrary driving and possibly coupled to multiple
reservoirs.

Motivated by these results, and in the spirit of recent work on information-theoretic
characterization of quantum channels~\citep{faist2019thermodynamic,gourHowQuantifyDynamical2019,gourEntropyQuantumChannel2020},
we propose the following measure of the logical irreversibility of
a given map $\Phi$:
\begin{equation}
\Lambda(\Phi):=\inf_{\varphi\in\mathcal{D}}\sup_{\rho\in\mathcal{D}}-\Delta S({\rho}\Vert{\varphi}).\label{eq:Cmeas}
\end{equation}
Our measure has a simple operational interpretation in thermodynamic
terms: for \emph{any} physical process that implements $\Phi$, there
must be \emph{some} initial state that incurs EP of at least $\Lambda(\Phi)$,
as follows from \ref{eq:equalityBasis} and $\Sigma(\varphi) \ge 0$. $\Lambda(\Phi)$ can be related
to some existing measures of logical irreversibility, such as the
``contraction coefficient of relative entropy'' from quantum information
theory \citep{choi_equivalence_1994,raginsky2002strictly,hiai2016contraction},
$\eta(\Phi)=\sup_{\rho\ne\omega}S(\Phi(\rho)\Vert\Phi(\omega))/S(\rho\Vert\omega)$.
Some simple algebra shows that $\Lambda(\Phi)\ge(1-\eta(\Phi))\ln d$,
where $d$ the dimension of the Hilbert space. More generally, it
is easy to verify that $\Lambda(\Phi)$ achieves its minimum value
of 0 if $\Phi$ is unitary, and achieves its maximum value of $\ln d$
if $\Phi$ is input-independent (where $d$ the dimension of the Hilbert
space).

Some care should be taken in relating these results to earlier arguments
concerning ``reversible computation''. Suppose that one wishes to
implement some logically irreversible map $\Phi$ while minimizing
EP. Our results show that it is possible to completely eliminate the
mismatch cost of running $\Phi$ by ``embedding'' $\Phi$ within
some larger logically reversible (i.e., unitary) map $\Phi'$, since
$\Lambda(\Phi')=0$. This is related to the idea of using logically
reversible embeddings to reduce the minimal generated heat involved
in carrying out a logically irreversible computation~\citep{bennett1973logical,bennettNotesHistoryReversible1988,fredkin1982conservative,
bennett1989time,levine1990note,lange2000reversible}.

At the same time, this strategy incurs an additional ``storage cost''
of having to encode extra output information in physical degrees of
freedom \citep{bennett1973logical}. This additional cost, which would
not exist in a direct (i.e., logically irreversible) implementation
of $\Phi$, can itself be interpreted thermodynamically, since it
involves an increase of the entropy of those extra physical degrees
of freedom (for further discussion of related issues, see \mycitep[Sec.~11]{wolpert2019stochastic}).
Thus, when considering implementing a desired channel $\Phi$ via
some larger embedding $\Phi'$, there is a tradeoff between mismatch
cost (which decreases as the logical reversibility of $\Phi'$ increases)
and storage cost (which increases as the logical reversibility of
$\Phi'$ increases).

Of course, one can avoid the storage cost by first carrying out the
larger embedding $\Phi'$ and then erasing the additional output information
with an erasure map $\Phi''$, so that the combined map recovers the
original logically irreversible map, $\Phi=\Phi''\circ\Phi'$. In
this case, however, the combined operation has mismatch cost of $S(\rho\Vert\varphi)-S(\Phi(\rho)\Vert\Phi(\varphi))$
on initial state $\rho$, where $\varphi$ is the initial state that
minimizes EP for this combined operation. For any given $\rho$, this
mismatch cost may be larger or smaller than the mismatch cost incurred
by some other implementation of $\Phi$ (such as an implementation
that does not make use of logically reversible intermediate steps),
depending on the optimal initial state of that other implementation
(see also related discussion in \ref{sec:integratedEP}).

\subsection{Example}

Consider a qubit which undergoes an input-independent reset process,
so that all input states $\rho$ are mapped to the output pure state
$\vert0\rangle\langle0\vert$,
\[
\Phi(\rho)=\vert0\rangle\langle0\vert\quad\forall\rho.
\]
(See also \ref{subsec:Example-reset-process}.) For this process,
\begin{align}
\Lambda(\Phi) & :=\inf_{\varphi\in\mathcal{D}}\sup_{\rho\in\mathcal{D}}-\Delta S({\rho}\Vert{\varphi})\nonumber \\
& =\inf_{\varphi\in\mathcal{D}}\sup_{\rho\in\mathcal{D}}S(\rho\Vert\varphi)\label{eq:gf2}
\end{align}
It is easy to verify that this optimization problem is solved by taking
$\varphi$ to be the maximally mixed state, $\varphi=(\vert0\rangle\langle0\vert+\vert1\rangle\langle1\vert)/2$,
and taking $\rho$ to be any pure state \footnote{Since relative entropy is convex in both arguments, the inner maximization
is satisfied by some pure state. Then, \ref{eq:gf2} can be rewritten
as \unexpanded{$\inf_{\varphi\in\mathcal{D}}\sup_{\vert i\rangle\langle i\vert\in\mathcal{D}}\langle i\vert-\ln\varphi\vert i\rangle$}.
Using the minimax principle, the inner maximization is satisfied by
the largest eigenvalue of \unexpanded{$-\ln\varphi$}, which for
a density matrix has to be no less than \unexpanded{$-\ln\frac{1}{d}=\ln d=\ln2$}.}, which gives
\[
\Lambda(\Phi)=\ln2.
\]
Thus, for \emph{any} physical implementation of a qubit reset, there
must exist initial states $\rho$ which have $\Sigma(\rho)\ge\ln2$.

\section{Mismatch Cost Beyond EP}

\label{sec:EP-Type-functionals}

This paper was formulated in terms of the state dependence of entropy
production. However, our results also apply to many other important
``cost functions'' that appear in nonequilibrium thermodynamics
and quantum information theory. In particular, as we show in \ref{app:proofs-integrated}
and \ref{app:classical}, our results for mismatch costs hold not
only for EP, but for any ``EP-type'' function $C(\rho)$ that can
be written in the following general form:
\begin{equation}
C(\rho)=S(\Phi(\rho))-S(\rho)+F(\rho),\label{eq:cost0}
\end{equation}
where $\Phi$ is any quantum channel (which may have different input
and output Hilbert spaces) and $F$ is any linear functional. Our
expression of EP, \ref{eq:EP0-ent}, is a special case of \ref{eq:cost0},
which arises when $F$ is defined as the entropy flow $Q$. (There
is also an analogous generalization of EP as alternatively defined in \ref{eq:functype2-maintext2};
see the appendix for details).

There are many costs beyond EP can be expressed in the form of \ref{eq:cost0},
including:
\begin{enumerate}[wide,labelindent=0pt,labelwidth=!]
\item \emph{Nonadiabatic EP}, the contribution to EP arising from the system
being out of stationarity. For a Markovian system evolving over $t\in[0,\tau]$,
nonadiabatic EP can be written as \citep{horowitz2013entropy,horowitz2014equivalent,esposito2010three,manzanoQuantumFluctuationTheorems2018},
\[
C(\rho)=S(\Phi(\rho))-S(\rho)-\int_{0}^{\tau}\mathrm{tr}\{(\partial_{t}\Phi_{t}(\rho))\ln\rho_{t}^{\text{st}}\}\,dt,
\]
where $\Phi_{t}(\rho)$ is the system's state at time $t$ given initial
state $\rho$ (and $\Phi(\rho)=\Phi_{\tau}(\rho)$) and $\rho_{t}^{\text{st}}$
is the nonequilibrium stationary state at time $t$. With some rearranging,
nonadiabatic EP for non-Markovian evolution \citep{garcia2012nonadiabatic}
and for quantum processes with measurement \citep{manzanoNonequilibriumPotentialFluctuation2015,manzanoQuantumFluctuationTheorems2018}
can also be put into the form of \ref{eq:cost0}.
\item The \emph{free energy loss} \citep{kolchinsky2017maximizing,faist2019thermodynamic,navascues2015nonthermal,muller2018correlating},
\begin{align}
C(\rho) & =\beta[\mathcal{F}_{\beta}(\rho,H_{0})-\mathcal{F}_{\beta}(\Phi(\rho),H_{\tau})]\nonumber \\
& =S(\Phi(\rho))-S(\rho)+\beta\mathrm{tr}\{\rho(H_{0}-\Phi^{\dagger}(H_{\tau}))\},\label{eq:distillable}
\end{align}
where $\mathcal{F}_{\beta}(\rho,H)=\mathrm{tr}\{\rho H\}-\beta^{-1}S(\rho)$
is the nonequilibrium free energy at inverse temperature $\beta$,
and $H_{0}$ and $H_{\tau}$ are the initial and final Hamiltonians.
Note that \ref{eq:distillable} can be negative, in which case it
reflects a net \emph{gain} of free energy (from this point of view,
the optimal initial state that minimizes \ref{eq:distillable} can
also be seen as the state that maximizes harvesting of free energy~\citep{kolchinsky2017maximizing}).
When $H_{0}=H_{\tau}$, the minimum value of \ref{eq:distillable}
across all initial states, as would be achieved by an optimizer $\varphi$,
is sometimes called the ``thermodynamic capacity'' of $\Phi$, and
provides operational bounds on quantum work extraction \citep{navascues2015nonthermal,faist2019thermodynamic}.
\item The drop of \emph{availability}, which is also called extractable
work \citep{schlogl1985thermodynamic,deffnerNonequilibriumEntropyProduction2011a,kolchinsky2017maximizing,esposito2011second},
\begin{align}
C(\rho) & =S(\rho\Vert\pi_{0})-S(\Phi(\rho)\Vert\pi_{\tau})\nonumber \\
& =S(\Phi(\rho))-S(\rho)+\mathrm{tr}\{\rho[\Phi^{\dagger}(\ln\pi_{\tau})-\ln\pi_{0}]\}\label{eq:dist2}
\end{align}
where $\pi_{t}=e^{-\beta H_{t}}/Z$ is the Gibbs state at time $t$.
Note that the difference between \ref{eq:dist2} and \ref{eq:distillable}
is $\beta$ times the decrease of \emph{equilibrium} free energy,
which is a constant that doesn't depend on $\rho$ and vanishes when
$H_{0}=H_{\tau}$.
\item The \emph{entropy gain} of a given channel and initial state \citep{holevo2011entropy,ramakrishnan2019non,holevo2011entropyB,buscemiApproximateReversibilityContext2016,das2018fundamental,alicki2004isotropic},
\begin{equation}
C(\rho)=S(\Phi(\rho))-S(\rho).\label{eq:eg}
\end{equation}
The minimum entropy gain for a given channel has been considered when
analyzing the capacity of quantum channels \citep{alicki2004isotropic}.
\end{enumerate}
It turns out that our results for mismatch cost for integrated EP,
such as \ref{eq:equalityBasis} and \ref{eq:ineqConvex}, apply to
all EP-type functions having the form \ref{eq:cost0}, including all
of the costs listed above. In particular, the additional cost incurred
by some state $\rho$, relative to an optimal state $\varphi\in\mathop{\arg\min}_{\omega}C(\omega)$
that minimizes that cost, has the universal information-theoretic
form,
\[
C(\rho)-C(\omega)=-\Delta S({\rho}\Vert{\varphi}),
\]
as long as the assumptions behind \ref{eq:equalityBasis} are satisfied.
(See also Appendix O in \citep{riechers2020initial} for a related
analysis.)

It is important to note that the optimal state $\varphi$ will vary
depending on the cost; for instance, in general the state that minimizes
drop of nonequilibrium free energy will not be the same
state that minimizes EP. Also, unlike EP, not all EP-type
functions are non-negative; for instance, the entropy gain incurred
by a given $\rho$ may in general be positive or negative. While our
main results do not assume that the non-negativity of EP-type functions,
some expressions (such as \ref{eq:genBound}) do assume non-negativity,
and therefore do not hold for those EP-type functions which may be
positive or negative.

Similarly, our results for fluctuating mismatch cost, such as \ref{eq:fluct0}
and \ref{eq:IFT}, hold for any fluctuating expression of the form
$-\ln\lambda_{\phi}^{\Phi(\rho)}+\ln\lambda_{i}^{\rho}+f$, where
$f$ is some arbitrary trajectory-level term. Different fluctuating
costs can be considered by selecting different $f$, including not
only fluctuating EP (\ref{eq:fluctEP}, which arises when $f$ is
defined as the trajectory-level entropy flow) but also fluctuating
nonadiabatic EP \citep{esposito_three_2010,manzanoQuantumFluctuationTheorems2018},
fluctuating drop in nonequilibrium free energy, and so on.

Finally, our instantaneous mismatch cost results, such as \ref{eq:EPr-equality}
and \ref{eq:ineqConvex-1}, hold for a general family of ``EP rate''-type
functions, which can be written as $\dot{C}(\rho)={\textstyle {\textstyle \frac{d}{dt}}}S(\rho(t))+\dot{F}(\rho)$,
where $\dot{F}$ is some arbitrary linear function. By appropriate
choice of $\dot{F}$, our results apply to the instantaneous rates
of various EP-type functions, such as the costs outlined above. For
example, our results imply that the rate of free energy loss incurred
by some state $\rho$, additional to that incurred by an optimal state
$\varphi$ that minimizes the rate of free energy loss, is given by
$-{\textstyle {\textstyle \frac{d}{dt}}}S(\rho(t)\Vert\varphi(t))$.

\section{Discussion}

\label{sec:Discussion}

EP is a central quantity of interest in both classical and quantum
thermodynamics. In this paper, we analyze how the EP incurred by a
fixed physical process varies as one changes the initial state of
a fixed physical process. We derive a universal information-theoretic
expression for the additional EP incurred by some initial state $\rho$,
relative to the optimal initial state $\varphi$ which minimizes EP. We show that versions of this result hold for integrated EP, fluctuating
trajectory-level EP, and instantaneous EP rate. Our approach can be
contrasted to much of the existing research in the field, which
considers how EP varies as one changes the driving protocol that is
applied to some fixed initial state.

At a high level, our results can be interpreted as a kind of ``strengthening''
of the second law of thermodynamics. The second law states that integrated
EP is non-negative, $\Sigma(\rho)\ge0$, as is the EP rate for Markovian
dynamics, $\dot{\Sigma}(\rho)\ge0$. We show that when the initial
state of a process is chosen sub-optimally, these bounds can be tightened,
via \ref{eq:genBound} and \ref{eq:genBound-1}. Similarly, stochastic
thermodynamics has demonstrated that fluctuating trajectory-level
EP obeys an integral fluctuation theorem, $\langle e^{-\sigma_{\rho}}\rangle=1$,
which implies that negative EP values are exponentially unlikely,
$P(\sigma_{\rho}<-\xi)<e^{-\xi}$. We show that, when the initial
state of a process is chosen sub-optimally, this fluctuation theorem
and bound can be modified via \ref{eq:IFT} and \ref{eq:fluctBound}.

It is interesting to note that, that unlike most work in stochastic
thermodynamics, our results do make explicit use of the connection
between entropy production and breaking of time-reversal symmetry.
Instead, they are derived by exploiting the algebraic structure of
EP, along with the mathematical properties of convex optimization.
Nonetheless, as we discuss in \ref{sec:Fluctuating-EP}, one can interpret
mismatch cost as implicitly referring to a violation of time-reversal
symmetry by a ``Bayesian inverse'' process, as expressed in \ref{eq:ldb1}
using the Petz recovery map.

Due to their generality and simplicity, we believe that
our results will be useful for analyzing the thermodynamics of various
biological and artificial systems, including engines and energy-harvesting
devices \citep{kolchinsky2017maximizing}, information-processing
systems \citep{wolpert2020thermodynamic,riechers2020initial,riechersImpossibilityLandauerBound2021},
and even quantum computers. Ultimately, they should also help in
design of such systems. Moreover, as we demonstrate in \ref{sec:logicalvsthermo},
our results imply a universal relationship between thermodynamic and
logical irreversibility, which we argue has implications for the thermodynamics
of quantum error correction.

\section*{Acknowledgements}

We thank Paul Riechers and Camille L. Latune for helpful comments
on this manuscript. We thank the Santa Fe Institute for helping to
support this research. This research was supported by Grant No. CHE-1648973
from the U.S. National Science Foundation, Grant No. FQXi-RFP-1622
from the Foundational Questions Institute, and Grant No. FQXi-RFP-IPW-1912
from the Foundational Questions Institute and Fetzer Franklin Fund,
a donor advised fund of Silicon Valley Community Foundation.

\noindent 
\bibliographystyle{IEEEtran}
\bibliography{main}

\begin{thebibliography}{100}
\providecommand{\url}[1]{#1}
\csname url@samestyle\endcsname
\providecommand{\newblock}{\relax}
\providecommand{\bibinfo}[2]{#2}
\providecommand{\BIBentrySTDinterwordspacing}{\spaceskip=0pt\relax}
\providecommand{\BIBentryALTinterwordstretchfactor}{4}
\providecommand{\BIBentryALTinterwordspacing}{\spaceskip=\fontdimen2\font plus
\BIBentryALTinterwordstretchfactor\fontdimen3\font minus
  \fontdimen4\font\relax}
\providecommand{\BIBforeignlanguage}[2]{{%
\expandafter\ifx\csname l@#1\endcsname\relax
\typeout{** WARNING: IEEEtran.bst: No hyphenation pattern has been}%
\typeout{** loaded for the language `#1'. Using the pattern for}%
\typeout{** the default language instead.}%
\else
\language=\csname l@#1\endcsname
\fi
#2}}
\providecommand{\BIBdecl}{\relax}
\BIBdecl

\bibitem{seifert2012stochastic}
U.~Seifert, ``Stochastic thermodynamics, fluctuation theorems and molecular
  machines,'' \emph{Reports on Progress in Physics}, vol.~75, no.~12, p.
  126001, 2012.

\bibitem{deffnerQuantumThermodynamicsIntroduction2019}
S.~Deffner and S.~Campbell, \emph{\BIBforeignlanguage{en}{Quantum
  {{Thermodynamics}}: {{An Introduction}} to the {{Thermodynamics}} of
  {{Quantum Information}}}}.\hskip 1em plus 0.5em minus 0.4em\relax {Morgan \&
  Claypool Publishers}, Jul. 2019.

\bibitem{esposito2010entropy}
M.~Esposito, K.~Lindenberg, and C.~Van~den Broeck, ``Entropy production as
  correlation between system and reservoir,'' \emph{New Journal of Physics},
  vol.~12, no.~1, p. 013013, 2010.

\bibitem{deffnerNonequilibriumEntropyProduction2011a}
S.~Deffner and E.~Lutz, ``\BIBforeignlanguage{en}{Nonequilibrium {{Entropy
  Production}} for {{Open Quantum Systems}}},''
  \emph{\BIBforeignlanguage{en}{Physical Review Letters}}, vol. 107, no.~14,
  Sep. 2011.

\bibitem{landi2020irreversible}
G.~T. Landi and M.~Paternostro, ``Irreversible entropy production: From
  classical to quantum,'' \emph{Reviews of Modern Physics}, vol.~93, no.~3, p.
  035008, 2021.

\bibitem{van2013stochastic}
C.~Van~den Broeck \emph{et~al.}, ``Stochastic thermodynamics: A brief
  introduction,'' \emph{Phys. Complex Colloids}, vol. 184, pp. 155--193, 2013.

\bibitem{jarzynski_equalities_2011}
C.~Jarzynski, ``Equalities and inequalities: irreversibility and the second law
  of thermodynamics at the nanoscale,'' \emph{Annu. Rev. Condens. Matter
  Phys.}, vol.~2, no.~1, pp. 329--351, 2011.

\bibitem{gingrich2016dissipation}
T.~R. Gingrich, J.~M. Horowitz, N.~Perunov, and J.~L. England, ``Dissipation
  bounds all steady-state current fluctuations,'' \emph{Physical Review
  Letters}, vol. 116, no.~12, p. 120601, 2016.

\bibitem{gingrich2017fundamental}
T.~R. Gingrich and J.~M. Horowitz, ``Fundamental bounds on first passage time
  fluctuations for currents,'' \emph{Physical Review Letters}, vol. 119,
  no.~17, p. 170601, 2017.

\bibitem{sivak2012thermodynamic}
D.~A. Sivak and G.~E. Crooks, ``Thermodynamic metrics and optimal paths,''
  \emph{Physical Review Letters}, vol. 108, no.~19, p. 190602, 2012.

\bibitem{esposito2010finite}
M.~Esposito, R.~Kawai, K.~Lindenberg, and C.~Van~den Broeck, ``Finite-time
  thermodynamics for a single-level quantum dot,'' \emph{EPL (Europhysics
  Letters)}, vol.~89, no.~2, p. 20003, 2010.

\bibitem{shiraishi_speed_2018}
N.~Shiraishi, K.~Funo, and K.~Saito, ``\BIBforeignlanguage{en}{Speed limit for
  classical stochastic processes},'' \emph{\BIBforeignlanguage{en}{Physical
  Review Letters}}, vol. 121, no.~7, Aug. 2018.

\bibitem{wilming_second_2016}
H.~Wilming, R.~Gallego, and J.~Eisert, ``\BIBforeignlanguage{en}{Second law of
  thermodynamics under control restrictions},''
  \emph{\BIBforeignlanguage{en}{Physical Review E}}, vol.~93, no.~4, Apr. 2016.

\bibitem{kolchinsky2020entropy}
A.~Kolchinsky and D.~H. Wolpert, ``Work, entropy production, and thermodynamics
  of information under protocol constraints,'' \emph{Physical Review X}, in
  press (arXiv:2008.10764).

\bibitem{kolchinsky2017maximizing}
A.~Kolchinsky, I.~Marvian, C.~Gokler, Z.-W. Liu, P.~Shor, O.~Shtanko,
  K.~Thompson, D.~Wolpert, and S.~Lloyd, ``Maximizing free energy gain,''
  \emph{arXiv preprint arXiv:1705.00041}, 2017.

\bibitem{kolchinsky2016dependence}
A.~Kolchinsky and D.~H. Wolpert, ``Dependence of dissipation on the initial
  distribution over states,'' \emph{Journal of Statistical Mechanics: Theory
  and Experiment}, p. 083202, 2017.

\bibitem{breuer2002theory}
H.-P. Breuer, F.~Petruccione \emph{et~al.}, \emph{The theory of open quantum
  systems}.\hskip 1em plus 0.5em minus 0.4em\relax Oxford University Press,
  2002.

\bibitem{Note1}
The fact that any equilibrium state can be chosen as the reference state
  follows immediately from our results as stated later in the paper, such as
  Eq.~(\protect \ref {eq:equalityBasis}). Consider any two equilibrium states
  $\pi ,\pi '$ and EP $\Sigma $ defined relative to reference equilibrium state
  $\pi $, as in Eq.~(\protect \ref {eq:relax}). Since $\pi '$ is also an
  equilibrium state, it must (1) be a minimizer of $\Sigma $, (2) achieve
  $\Sigma (\pi ')=0$, and (3) satisfy $\Phi (\pi ')=\pi '$. Then, as long as
  $S(\rho \Vert \pi ')<\infty $, Eq.~(\protect \ref {eq:equalityBasis}) gives
  $\Sigma (\rho )=S(\rho \Vert \pi ')-S(\Phi (\rho )\Vert \pi ')$, which means
  that EP defined relative to reference equilibrium state $\pi $ (LHS) is equal
  to EP defined relative to reference equilibrium state $\pi '$ (RHS).

\bibitem{wolpert2020thermodynamic}
D.~H. Wolpert and A.~Kolchinsky, ``Thermodynamics of computing with circuits,''
  \emph{New Journal of Physics}, 2020.

\bibitem{Note2}
See also \protect \citep {kolchinsky2020thermodynamic} for a derivation of
  Eq.~(\protect \ref {eq:classicalMismatch}) for a classical system with a
  countably infinite state space but deterministic dynamics.

\bibitem{muller2017monotonicity}
A.~M{\"u}ller-Hermes and D.~Reeb, ``Monotonicity of the quantum relative
  entropy under positive maps,'' in \emph{Annales Henri Poincar{\'e}}, vol.~18,
  no.~5.\hskip 1em plus 0.5em minus 0.4em\relax Springer, 2017, pp. 1777--1788.

\bibitem{riechers2020initial}
P.~M. Riechers and M.~Gu, ``Initial-state dependence of thermodynamic
  dissipation for any quantum process,'' \emph{Physical Review E}, vol. 103,
  no.~4, p. 042145, Apr. 2021.

\bibitem{riechersImpossibilityLandauerBound2021}
------, ``Impossibility of achieving landauer's bound for almost every quantum
  state,'' \emph{Physical Review A}, vol. 104, no.~1, p. 012214, 2021.

\bibitem{Note3}
Although Ref.~\protect \citep {riechers2020initial} never explicitly states the
  assumption of a finite-dimensional Hilbert space, it is implicit in the
  derivations of that paper. For example, in infinite dimensional spaces, it
  cannot be assumed that the directional derivative can be written in terms of
  the gradient (as in the derivation of Theorem 1 in \protect \citep
  {riechers2020initial}), that the directional derivative at the optimizer with
  full support vanishes (as in Eq. 10 in \protect \citep
  {riechers2020initial}), or that $S(\rho \Vert \varphi )<\infty $ whenever
  $\mathrm {supp}\,\rho \subseteq \mathrm {supp}\,\varphi $.

\bibitem{campisi2011colloquium}
M.~Campisi, P.~H{\"a}nggi, and P.~Talkner, ``Colloquium: Quantum fluctuation
  relations: Foundations and applications,'' \emph{Reviews of Modern Physics},
  vol.~83, no.~3, p. 771, 2011.

\bibitem{horowitz2013entropy}
J.~M. Horowitz and J.~M. Parrondo, ``Entropy production along nonequilibrium
  quantum jump trajectories,'' \emph{New Journal of Physics}, vol.~15, no.~8,
  p. 085028, 2013.

\bibitem{horowitz2014equivalent}
J.~M. Horowitz and T.~Sagawa, ``Equivalent definitions of the quantum
  nonadiabatic entropy production,'' \emph{Journal of Statistical Physics},
  vol. 156, no.~1, pp. 55--65, 2014.

\bibitem{esposito2010three}
M.~Esposito and C.~Van~den Broeck, ``Three faces of the second law. {I}.
  {M}aster equation formulation,'' \emph{Physical Review E}, vol.~82, no.~1, p.
  011143, 2010.

\bibitem{manzanoQuantumFluctuationTheorems2018}
G.~Manzano, J.~M. Horowitz, and J.~M.~R. Parrondo,
  ``\BIBforeignlanguage{en}{Quantum {{Fluctuation Theorems}} for {{Arbitrary
  Environments}}: {{Adiabatic}} and {{Nonadiabatic Entropy Production}}},''
  \emph{\BIBforeignlanguage{en}{Physical Review X}}, vol.~8, no.~3, Aug. 2018.

\bibitem{faist2019thermodynamic}
P.~Faist, M.~Berta, and F.~Brand{\~a}o, ``Thermodynamic capacity of quantum
  processes,'' \emph{Physical Review Letters}, vol. 122, no.~20, p. 200601,
  2019.

\bibitem{plastino1995fisher}
A.~Plastino and A.~Plastino, ``Fisher information and bounds to the entropy
  increase,'' \emph{Physical Review E}, vol.~52, no.~4, p. 4580, 1995.

\bibitem{holevo2011entropy}
A.~S. Holevo, ``Entropy gain and the {C}hoi-{J}amiolkowski correspondence for
  infinite-dimensional quantum evolutions,'' \emph{Theoretical and Mathematical
  Physics}, vol. 166, no.~1, pp. 123--138, 2011.

\bibitem{holevo2011entropyB}
------, ``The entropy gain of quantum channels,'' in \emph{2011 IEEE
  International Symposium on Information Theory Proceedings}.\hskip 1em plus
  0.5em minus 0.4em\relax IEEE, 2011, pp. 289--292.

\bibitem{kolchinsky2020thermodynamic}
A.~Kolchinsky and D.~H. Wolpert, ``Thermodynamic costs of {T}uring machines,''
  \emph{Physical Review Research}, vol.~2, no.~3, p. 033312, 2020.

\bibitem{maroney2009generalizing}
O.~Maroney, ``Generalizing {L}andauer's principle,'' \emph{Physical Review E},
  vol.~79, no.~3, p. 031105, 2009.

\bibitem{wolpert_arxiv_beyond_bit_erasure_2015}
D.~H. Wolpert, ``Extending {L}andauer's bound from bit erasure to arbitrary
  computation,'' 2015.

\bibitem{turgut_relations_2009}
S.~Turgut, ``Relations between entropies produced in nondeterministic
  thermodynamic processes,'' \emph{Physical Review E}, vol.~79, no.~4, p.
  041102, Apr. 2009.

\bibitem{wildeQuantumInformationTheory2017}
M.~Wilde, \emph{\BIBforeignlanguage{en}{Quantum information theory}}, second
  edition~ed.\hskip 1em plus 0.5em minus 0.4em\relax Cambridge, UK ; New York:
  Cambridge University Press, 2017.

\bibitem{Note4}
The definition in Eq.~(\protect \ref {eq:recov}) holds for finite dimensional
  spaces and $\rho $ such that $\mathrm {supp}\,\rho \subseteq \mathrm
  {supp}\,\varphi $. For a more general definition, see \protect \citep
  {petz1988sufficiency,jungeUniversalRecoveryMaps2018}.

\bibitem{leiferFormulationQuantumTheory2013}
M.~S. Leifer and R.~W. Spekkens, ``Towards a formulation of quantum theory as a
  causally neutral theory of {Bayesian} inference,'' \emph{Physical Review A},
  vol.~88, no.~5, p. 052130, Nov. 2013.

\bibitem{ptaszynskiEntropyProductionOpen2019}
K.~Ptaszy{\'n}ski and M.~Esposito, ``\BIBforeignlanguage{en}{Entropy production
  in open systems: The predominant role of intraenvironment correlations},''
  \emph{\BIBforeignlanguage{en}{Physical Review Letters}}, vol. 123, no.~20,
  Nov. 2019.

\bibitem{baumgratz_quantifying_2014}
T.~Baumgratz, M.~Cramer, and M.~B. Plenio,
  ``\BIBforeignlanguage{en}{Quantifying coherence},''
  \emph{\BIBforeignlanguage{en}{Physical Review Letters}}, vol. 113, no.~14,
  Sep. 2014.

\bibitem{santos_role_2019}
J.~P. Santos, L.~C. C{\'e}leri, G.~T. Landi, and M.~Paternostro, ``The role of
  quantum coherence in non-equilibrium entropy production,'' \emph{npj Quantum
  Information}, vol.~5, no.~1, pp. 1--7, 2019.

\bibitem{francica_role_2019}
G.~Francica, J.~Goold, and F.~Plastina, ``\BIBforeignlanguage{en}{Role of
  coherence in the nonequilibrium thermodynamics of quantum systems},''
  \emph{\BIBforeignlanguage{en}{Physical Review E}}, vol.~99, no.~4, p. 042105,
  Apr. 2019.

\bibitem{francica2020quantum}
G.~Francica, F.~C. Binder, G.~Guarnieri, M.~T. Mitchison, J.~Goold, and
  F.~Plastina, ``Quantum coherence and ergotropy,'' \emph{Physical Review
  Letters}, vol. 125, no.~18, p. 180603, 2020.

\bibitem{shirokovLowerSemicontinuityEntropic2017a}
M.~E. Shirokov and A.~S. Holevo, ``\BIBforeignlanguage{en}{On lower
  semicontinuity of the entropic disturbance and its applications in quantum
  information theory},'' \emph{\BIBforeignlanguage{en}{Izvestiya:
  Mathematics}}, vol.~81, no.~5, pp. 1044--1060, Oct. 2017.

\bibitem{buscemiUnifiedApproachInformationDisturbance2009}
F.~Buscemi and M.~Horodecki, ``\BIBforeignlanguage{en}{Towards a unified
  approach to information-disturbance tradeoffs in quantum measurements},''
  \emph{\BIBforeignlanguage{en}{Open Systems \& Information Dynamics}},
  vol.~16, no.~01, pp. 29--48, Mar. 2009.

\bibitem{buscemiApproximateReversibilityContext2016}
F.~Buscemi, S.~Das, and M.~M. Wilde, ``\BIBforeignlanguage{en}{Approximate
  reversibility in the context of entropy gain, information gain, and complete
  positivity},'' \emph{\BIBforeignlanguage{en}{Physical Review A}}, vol.~93,
  no.~6, Jun. 2016.

\bibitem{Note5}
This follows by writing $\Sigma (\rho )=S(\Phi (\rho ))-S(\rho )+\mathrm
  {tr}\{\Pi A\Pi \rho \}$$=S(\rho \Vert \varphi )+\text {const}$, where
  $\varphi $ is defined as in Eq.~(\protect \ref {eq:optsol}).

\bibitem{ramakrishnan2019non}
N.~Ramakrishnan, R.~Iten, V.~B. Scholz, and M.~Berta, ``Non-commutative
  {B}lahut-{A}rimoto algorithms,'' \emph{arXiv preprint arXiv:1905.01286},
  2019.

\bibitem{Note6}
In general, this decomposition will not be unique: imagine the trivial case
  where, in \ref {eq:EP0-ent}, $\Phi =\mathrm {Id}$ and $Q(\rho )=0$; then,
  $\Sigma (\rho )=0$ for all $\rho $, and any complete basis $\{\vert i\rangle
  \}$ can be used to define a basin decomposition.

\bibitem{janzing_quantum_2006}
D.~Janzing, ``\BIBforeignlanguage{en}{Quantum {Thermodynamics} with {Missing}
  {Reference} {Frames}: {Decompositions} of {Free} {Energy} {Into}
  {Non}-{Increasing} {Components}},'' \emph{\BIBforeignlanguage{en}{Journal of
  Statistical Physics}}, vol. 125, no.~3, pp. 761--776, Nov. 2006.

\bibitem{marvianHowQuantifyCoherence2016}
I.~Marvian and R.~W. Spekkens, ``\BIBforeignlanguage{en}{How to quantify
  coherence: {{Distinguishing}} speakable and unspeakable notions},''
  \emph{\BIBforeignlanguage{en}{Physical Review A}}, vol.~94, no.~5, Nov. 2016.

\bibitem{vaccaro_tradeoff_2008}
J.~A. Vaccaro, F.~Anselmi, H.~M. Wiseman, and K.~Jacobs,
  ``\BIBforeignlanguage{en}{Tradeoff between extractable mechanical work,
  accessible entanglement, and ability to act as a reference system, under
  arbitrary superselection rules},'' \emph{\BIBforeignlanguage{en}{Physical
  Review A}}, vol.~77, no.~3, Mar. 2008.

\bibitem{holevoNoteCovariantDynamical1993}
A.~Holevo, ``\BIBforeignlanguage{en}{A note on covariant dynamical
  semigroups},'' \emph{\BIBforeignlanguage{en}{Reports on Mathematical
  Physics}}, vol.~32, no.~2, pp. 211--216, Apr. 1993.

\bibitem{esposito2006fluctuation}
M.~Esposito and S.~Mukamel, ``Fluctuation theorems for quantum master
  equations,'' \emph{Physical Review E}, vol.~73, no.~4, p. 046129, 2006.

\bibitem{esposito2009nonequilibrium}
M.~Esposito, U.~Harbola, and S.~Mukamel, ``Nonequilibrium fluctuations,
  fluctuation theorems, and counting statistics in quantum systems,''
  \emph{Reviews of modern physics}, vol.~81, no.~4, p. 1665, 2009.

\bibitem{manzanoNonequilibriumPotentialFluctuation2015}
G.~Manzano, J.~M. Horowitz, and J.~M.~R. Parrondo,
  ``\BIBforeignlanguage{en}{Nonequilibrium potential and fluctuation theorems
  for quantum maps},'' \emph{\BIBforeignlanguage{en}{Physical Review E}},
  vol.~92, no.~3, Sep. 2015.

\bibitem{allahverdyanNonequilibriumQuantumFluctuations2014a}
A.~E. Allahverdyan, ``\BIBforeignlanguage{en}{Nonequilibrium quantum
  fluctuations of work},'' \emph{\BIBforeignlanguage{en}{Physical Review E}},
  vol.~90, no.~3, Sep. 2014.

\bibitem{kwonFluctuationTheoremsQuantum2019}
H.~Kwon and M.~S. Kim, ``\BIBforeignlanguage{en}{Fluctuation {{Theorems}} for a
  {{Quantum Channel}}},'' \emph{\BIBforeignlanguage{en}{Physical Review X}},
  vol.~9, no.~3, Aug. 2019.

\bibitem{micadeiQuantumFluctuationTheorems2020}
K.~Micadei, G.~T. Landi, and E.~Lutz, ``\BIBforeignlanguage{en}{Quantum
  {{Fluctuation Theorems}} beyond {{Two}}-{{Point Measurements}}},''
  \emph{\BIBforeignlanguage{en}{Physical Review Letters}}, vol. 124, no.~9,
  Mar. 2020.

\bibitem{funo2015quantum}
K.~Funo, Y.~Murashita, and M.~Ueda, ``Quantum nonequilibrium equalities with
  absolute irreversibility,'' \emph{New Journal of Physics}, vol.~17, no.~7, p.
  075005, 2015.

\bibitem{buscemi2020fluctuation}
F.~Buscemi and V.~Scarani, ``Fluctuation theorems from {B}ayesian
  retrodiction,'' \emph{Physical Review E}, vol. 103, no.~5, p. 052111, 2021.

\bibitem{allahverdyanExcludingJointProbabilities2018}
A.~E. Allahverdyan and A.~Danageozian, ``Excluding joint probabilities from
  quantum theory,'' \emph{Physical Review A}, vol.~97, no.~3, p. 030102, 2018.

\bibitem{spohn1978irreversible}
H.~Spohn and J.~L. Lebowitz, ``Irreversible thermodynamics for quantum systems
  weakly coupled to thermal reservoirs,'' \emph{Adv. Chem. Phys}, vol.~38, pp.
  109--142, 1978.

\bibitem{spohn_entropy_1978}
H.~Spohn, ``Entropy production for quantum dynamical semigroups,''
  \emph{Journal of Mathematical Physics}, vol.~19, no.~5, pp. 1227--1230, 1978.

\bibitem{alicki_quantum_1979}
R.~Alicki, ``The quantum open system as a model of the heat engine,''
  \emph{Journal of Physics A: Mathematical and General}, vol.~12, no.~5, p.
  L103, 1979.

\bibitem{baumgartnerAnalysisQuantumSemigroups2008a}
B.~Baumgartner, H.~Narnhofer, and W.~Thirring,
  ``\BIBforeignlanguage{en}{Analysis of quantum semigroups with
  {{GKS}}\textendash{{Lindblad}} generators: {{I}}. {{Simple}} generators},''
  \emph{\BIBforeignlanguage{en}{Journal of Physics A: Mathematical and
  Theoretical}}, vol.~41, no.~6, p. 065201, Feb. 2008.

\bibitem{baumgartnerAnalysisQuantumSemigroups2008}
B.~Baumgartner and H.~Narnhofer, ``Analysis of quantum semigroups with
  {{GKS}}--{{Lindblad}} generators {{II}}. {{General}},'' \emph{Journal of
  Physics A: Mathematical and Theoretical}, vol.~41, no.~39, p. 395303, Oct.
  2008.

\bibitem{carboneIrreducibleDecompositionsStationary2016}
R.~Carbone and Y.~Pautrat, ``\BIBforeignlanguage{en}{Irreducible decompositions
  and stationary states of quantum channels},''
  \emph{\BIBforeignlanguage{en}{Reports on Mathematical Physics}}, vol.~77,
  no.~3, pp. 293--313, Jun. 2016.

\bibitem{talknerRateDescriptionFokkerPlanck2004}
P.~Talkner and J.~{\L}uczka, ``\BIBforeignlanguage{en}{Rate description of
  {{Fokker}}-{{Planck}} processes with time-dependent parameters},''
  \emph{\BIBforeignlanguage{en}{Physical Review E}}, vol.~69, no.~4, Apr. 2004.

\bibitem{ford_entropy_2012}
I.~J. Ford and R.~E. Spinney, ``Entropy production from stochastic dynamics in
  discrete full phase space,'' \emph{Physical Review E}, vol.~86, no.~2, p.
  021127, 2012.

\bibitem{spinneyNonequilibriumThermodynamicsStochastic2012}
R.~E. Spinney and I.~J. Ford, ``Nonequilibrium {{Thermodynamics}} of
  {{Stochastic Systems}} with {{Odd}} and {{Even Variables}},'' \emph{Physical
  Review Letters}, vol. 108, no.~17, p. 170603, Apr. 2012.

\bibitem{spinneyEntropyProductionFull2012}
------, ``Entropy production in full phase space for continuous stochastic
  dynamics,'' \emph{Physical Review E}, vol.~85, no.~5, p. 051113, 2012.

\bibitem{esposito_three_2010}
M.~Esposito and C.~V.~d. Broeck, ``Three detailed fluctuation theorems,''
  \emph{Physical Review Letters}, vol. 104, no.~9, Mar. 2010.

\bibitem{parrondoEntropyProductionArrow2009a}
J.~M.~R. Parrondo, C.~V. den Broeck, and R.~Kawai,
  ``\BIBforeignlanguage{en}{Entropy production and the arrow of time},''
  \emph{\BIBforeignlanguage{en}{New Journal of Physics}}, vol.~11, no.~7, p.
  073008, Jul. 2009.

\bibitem{millerEntropyProductionTime2017}
H.~J.~D. Miller and J.~Anders, ``\BIBforeignlanguage{en}{Entropy production and
  time asymmetry in the presence of strong interactions},''
  \emph{\BIBforeignlanguage{en}{Physical Review E}}, vol.~95, no.~6, p. 062123,
  Jun. 2017.

\bibitem{strasberg_stochastic_2017}
P.~Strasberg and M.~Esposito, ``Stochastic thermodynamics in the strong
  coupling regime: {An} unambiguous approach based on coarse graining,''
  \emph{Physical Review E}, vol.~95, no.~6, p. 062101, Jun. 2017.

\bibitem{seifertFirstSecondLaw2016a}
U.~Seifert, ``\BIBforeignlanguage{en}{First and {{Second Law}} of
  {{Thermodynamics}} at {{Strong Coupling}}},''
  \emph{\BIBforeignlanguage{en}{Physical Review Letters}}, vol. 116, no.~2, p.
  020601, Jan. 2016.

\bibitem{bennettNotesHistoryReversible1988}
C.~H. Bennett, ``Notes on the history of reversible computation,'' \emph{ibm
  Journal of Research and Development}, vol.~32, no.~1, pp. 16--23, 1988.

\bibitem{maroney_absence_2005}
O.~J.~E. Maroney, ``The (absence of a) relationship between thermodynamic and
  logical reversibility,'' \emph{Studies in History and Philosophy of Science
  Part B: Studies in History and Philosophy of Modern Physics}, vol.~36, no.~2,
  pp. 355--374, Jun. 2005.

\bibitem{sagawa2014thermodynamic}
T.~Sagawa, ``Thermodynamic and logical reversibilities revisited,''
  \emph{Journal of Statistical Mechanics: Theory and Experiment}, vol. 2014,
  no.~3, p. P03025, 2014.

\bibitem{wolpert2019stochastic}
D.~H. Wolpert, ``The stochastic thermodynamics of computation,'' \emph{Journal
  of Physics A: Mathematical and Theoretical}, vol.~52, no.~19, p. 193001,
  2019.

\bibitem{petz1988sufficiency}
D.~Petz, ``Sufficiency of channels over von {N}eumann algebras,'' \emph{The
  Quarterly Journal of Mathematics}, vol.~39, no.~1, pp. 97--108, 1988.

\bibitem{mosonyi2004structure}
M.~Mosonyi and D.~Petz, ``Structure of sufficient quantum coarse-grainings,''
  \emph{Letters in Mathematical Physics}, vol.~68, no.~1, pp. 19--30, 2004.

\bibitem{wildeRecoverabilityQuantumInformation2015a}
M.~M. Wilde, ``Recoverability in quantum information theory,''
  \emph{Proceedings of the Royal Society A: Mathematical, Physical and
  Engineering Sciences}, vol. 471, no. 2182, p. 20150338, 2015.

\bibitem{jungeUniversalRecoveryMaps2018}
M.~Junge, R.~Renner, D.~Sutter, M.~M. Wilde, and A.~Winter, ``Universal
  recovery maps and approximate sufficiency of quantum relative entropy,'' in
  \emph{Annales {{Henri Poincar\'e}}}, vol.~19.\hskip 1em plus 0.5em minus
  0.4em\relax {Springer}, 2018, pp. 2955--2978.

\bibitem{alhambraWorkReversibilityQuantum2018a}
{\'A}.~M. Alhambra, S.~Wehner, M.~M. Wilde, and M.~P. Woods,
  ``\BIBforeignlanguage{en}{Work and reversibility in quantum
  thermodynamics},'' \emph{\BIBforeignlanguage{en}{Physical Review A}},
  vol.~97, no.~6, Jun. 2018.

\bibitem{gourHowQuantifyDynamical2019}
G.~Gour and A.~Winter, ``\BIBforeignlanguage{en}{How to {Quantify} a
  {Dynamical} {Quantum} {Resource}},'' \emph{\BIBforeignlanguage{en}{Physical
  Review Letters}}, vol. 123, no.~15, p. 150401, Oct. 2019.

\bibitem{gourEntropyQuantumChannel2020}
G.~Gour and M.~M. Wilde, ``\BIBforeignlanguage{en}{Entropy of a {Quantum}
  {Channel}: {Definition}, {Properties}, and {Application}},'' in
  \emph{\BIBforeignlanguage{en}{2020 {IEEE} {International} {Symposium} on
  {Information} {Theory} ({ISIT})}}.\hskip 1em plus 0.5em minus 0.4em\relax Los
  Angeles, CA, USA: IEEE, Jun. 2020, pp. 1903--1908.

\bibitem{choi_equivalence_1994}
M.-D. Choi, M.~B. Ruskai, and E.~Seneta, ``Equivalence of certain entropy
  contraction coefficients,'' \emph{Linear Algebra and its Applications}, vol.
  208-209, pp. 29--36, Sep. 1994.

\bibitem{raginsky2002strictly}
M.~Raginsky, ``Strictly contractive quantum channels and physically realizable
  quantum computers,'' \emph{Physical Review A}, vol.~65, no.~3, p. 032306,
  2002.

\bibitem{hiai2016contraction}
F.~Hiai and M.~B. Ruskai, ``Contraction coefficients for noisy quantum
  channels,'' \emph{Journal of Mathematical Physics}, vol.~57, no.~1, p.
  015211, 2016.

\bibitem{bennett1973logical}
C.~H. Bennett, ``Logical reversibility of computation,'' \emph{IBM journal of
  Research and Development}, vol.~17, no.~6, pp. 525--532, 1973.

\bibitem{fredkin1982conservative}
E.~Fredkin and T.~Toffoli, ``Conservative logic,'' \emph{International Journal
  of Theoretical Physics}, vol.~21, no.~3, pp. 219--253, 1982.

\bibitem{bennett1989time}
C.~H. Bennett, ``Time/space trade-offs for reversible computation,'' \emph{SIAM
  Journal on Computing}, vol.~18, no.~4, pp. 766--776, 1989.

\bibitem{levine1990note}
R.~Y. Levine and A.~T. Sherman, ``A note on bennett's time-space tradeoff for
  reversible computation,'' \emph{SIAM Journal on Computing}, vol.~19, no.~4,
  pp. 673--677, 1990.

\bibitem{lange2000reversible}
K.-J. Lange, P.~McKenzie, and A.~Tapp, ``Reversible space equals deterministic
  space,'' \emph{Journal of Computer and System Sciences}, vol.~60, no.~2, pp.
  354--367, 2000.

\bibitem{Note7}
Since relative entropy is convex in both arguments, the inner maximization is
  satisfied by some pure state. Then, \ref {eq:gf2} can be rewritten as $\inf
  _{\varphi \in \mathcal {D}}\sup _{\vert i\rangle \langle i\vert \in \mathcal
  {D}}\langle i\vert -\ln \varphi \vert i\rangle $. Using the minimax
  principle, the inner maximization is satisfied by the largest eigenvalue of
  $-\ln \varphi $, which for a density matrix has to be no less than $-\ln
  \frac {1}{d}=\ln d=\ln 2$.

\bibitem{garcia2012nonadiabatic}
R.~Garc{\'\i}a-Garc{\'\i}a, ``Nonadiabatic entropy production for non-markov
  dynamics,'' \emph{Physical Review E}, vol.~86, no.~3, p. 031117, 2012.

\bibitem{navascues2015nonthermal}
M.~Navascu{\'e}s and L.~P. Garc{\'\i}a-Pintos, ``Nonthermal quantum channels as
  a thermodynamical resource,'' \emph{Physical Review Letters}, vol. 115,
  no.~1, p. 010405, 2015.

\bibitem{muller2018correlating}
M.~P. M{\"u}ller, ``Correlating thermal machines and the second law at the
  nanoscale,'' \emph{Physical Review X}, vol.~8, no.~4, p. 041051, 2018.

\bibitem{schlogl1985thermodynamic}
F.~Schl{\"o}gl, ``Thermodynamic metric and stochastic measures,''
  \emph{Zeitschrift f{\"u}r Physik B Condensed Matter}, vol.~59, no.~4, pp.
  449--454, 1985.

\bibitem{esposito2011second}
M.~Esposito and C.~Van~den Broeck, ``Second law and {L}andauer principle far
  from equilibrium,'' \emph{EPL (Europhysics Letters)}, vol.~95, no.~4, p.
  40004, 2011.

\bibitem{das2018fundamental}
S.~Das, S.~Khatri, G.~Siopsis, and M.~M. Wilde, ``Fundamental limits on quantum
  dynamics based on entropy change,'' \emph{Journal of Mathematical Physics},
  vol.~59, no.~1, p. 012205, 2018.

\bibitem{alicki2004isotropic}
R.~Alicki, ``Isotropic quantum spin channels and additivity questions,''
  \emph{arXiv preprint quant-ph/0402080}, 2004.

\bibitem{wehrlGeneralPropertiesEntropy1978}
A.~Wehrl, ``\BIBforeignlanguage{en}{General properties of entropy},''
  \emph{\BIBforeignlanguage{en}{Reviews of Modern Physics}}, vol.~50, no.~2,
  pp. 221--260, Apr. 1978.

\bibitem{robertsConvexFunctions1973}
A.~W. Roberts and D.~E. Varberg, \emph{\BIBforeignlanguage{en}{Convex
  functions}}, ser. Pure and applied mathematics; a series of monographs and
  textbooks.\hskip 1em plus 0.5em minus 0.4em\relax New York: Academic Press,
  1973, no.~57.

\bibitem{audenaertQuantumSkewDivergence2014}
K.~M.~R. Audenaert, ``\BIBforeignlanguage{en}{Quantum skew divergence},''
  \emph{\BIBforeignlanguage{en}{Journal of Mathematical Physics}}, vol.~55,
  no.~11, p. 112202, Nov. 2014.

\bibitem{lindbladExpectationsEntropyInequalities1974}
G.~Lindblad, ``\BIBforeignlanguage{en}{Expectations and entropy inequalities
  for finite quantum systems},'' \emph{\BIBforeignlanguage{en}{Communications
  in Mathematical Physics}}, vol.~39, no.~2, pp. 111--119, Jun. 1974.

\bibitem{rockafellarConvexAnalysis1970}
R.~T. Rockafellar, \emph{Convex Analysis}.\hskip 1em plus 0.5em minus
  0.4em\relax {Princeton University Press}, 1970.

\bibitem{audenaert2011telescopic}
K.~M. Audenaert, ``Telescopic relative entropy,'' in \emph{Conference on
  Quantum Computation, Communication, and Cryptography}.\hskip 1em plus 0.5em
  minus 0.4em\relax Springer, 2011, pp. 39--52.

\bibitem{donaldFurtherResultsRelative1987}
M.~J. Donald, ``\BIBforeignlanguage{en}{Further results on the relative
  entropy},'' \emph{\BIBforeignlanguage{en}{Mathematical Proceedings of the
  Cambridge Philosophical Society}}, vol. 101, no.~2, pp. 363--373, Mar. 1987.

\bibitem{petzQuantumInformationTheory2008}
D.~Petz, \emph{\BIBforeignlanguage{en}{Quantum Information Theory and Quantum
  Statistics}}, ser. Theoretical and Mathematical Physics.\hskip 1em plus 0.5em
  minus 0.4em\relax {Berlin}: {Springer}, 2008.

\bibitem{bertaSmoothEntropyFormalism2016a}
M.~Berta, F.~Furrer, and V.~B. Scholz, ``\BIBforeignlanguage{en}{The smooth
  entropy formalism for von {Neumann} algebras},''
  \emph{\BIBforeignlanguage{en}{Journal of Mathematical Physics}}, vol.~57,
  no.~1, p. 015213, Jan. 2016.

\bibitem{tomamichelQuantumInformationProcessing2016}
M.~Tomamichel, \emph{\BIBforeignlanguage{en}{Quantum {{Information Processing}}
  with {{Finite Resources}}}}, ser. {{SpringerBriefs}} in {{Mathematical
  Physics}}.\hskip 1em plus 0.5em minus 0.4em\relax {Cham}: {Springer
  International Publishing}, 2016, vol.~5.

\bibitem{Seifert2005}
U.~Seifert, ``Entropy production along a stochastic trajectory and an integral
  fluctuation theorem,'' \emph{Physical Review Letters}, vol.~95, no.~4, p.
  040602, 2005.

\bibitem{van2010three}
C.~Van~den Broeck and M.~Esposito, ``Three faces of the second law. {II}.
  {F}okker-{P}lanck formulation,'' \emph{Physical Review E}, vol.~82, no.~1, p.
  011144, 2010.

\bibitem{polyanskiyLectureNotesInformation2019}
Y.~Polyanskiy and Y.~Wu, \emph{Lecture Notes on {{Information Theory}}}.\hskip
  1em plus 0.5em minus 0.4em\relax {MIT (6.441), UIUC (ECE 563)}, 2019.

\bibitem{cover_elements_2006}
T.~M. Cover and J.~A. Thomas, \emph{Elements of information theory}.\hskip 1em
  plus 0.5em minus 0.4em\relax John Wiley \& Sons, 2006.

\bibitem{posnerRandomCodingStrategies1975}
E.~Posner, ``\BIBforeignlanguage{en}{Random coding strategies for minimum
  entropy},'' \emph{\BIBforeignlanguage{en}{IEEE Transactions on Information
  Theory}}, vol.~21, no.~4, pp. 388--391, Jul. 1975.

\bibitem{topsoe1979information}
F.~Tops{\o}e, ``Information-theoretical optimization techniques,''
  \emph{Kybernetika}, vol.~15, no.~1, pp. 8--27, 1979.

\bibitem{pinskerInformationInformationStability1964}
M.~S. Pinsker, \emph{\BIBforeignlanguage{en}{Information and {{Information
  Stability}} of {{Random Variables}} and {{Processes}}}}.\hskip 1em plus 0.5em
  minus 0.4em\relax {Holden-Day}, 1964.

\bibitem{harremoesInformationTopologiesApplications2007}
P.~Harremo{\"e}s, ``\BIBforeignlanguage{en}{Information {{Topologies}} with
  {{Applications}}},'' in \emph{\BIBforeignlanguage{en}{Entropy, {{Search}},
  {{Complexity}}}}, I.~Csisz{\'a}r, G.~O.~H. Katona, G.~Tardos, and G.~Wiener,
  Eds.\hskip 1em plus 0.5em minus 0.4em\relax {Berlin, Heidelberg}: {Springer
  Berlin Heidelberg}, 2007, vol.~16, pp. 113--150.

\end{thebibliography}

\clearpage{}

\appendix

\section{Mismatch Cost for Integrated EP}

\newcommand{\klconvexref}{I}
\newcommand{\kllimAaRef}{II}
\newcommand{\kllimBaRef}{III}
\newcommand{\klupperboundpropref}{IV}
\newcommand{\klmonopropref}{V}
\newcommand{\kllimBref}{VII}
\newcommand{\klconvexprop}{\klconvexref. }
\newcommand{\klupperboundprop}{\klupperboundpropref. } \newcommand{\klmonoprop}{\klmonopropref. }
\newcommand{\kllimAaProp}{\kllimAaRef. }
\newcommand{\kllimBaProp}{\kllimBaRef. }

\label{app:proofs-integrated}

\subsection{Preliminaries}

In this appendix, we formally derive our results for mismatch cost
for integrated EP. We first introduce some notation.

We write $\mathcal{H}_{X'}$ and $\mathcal{H}'_{X'}$ --- or sometimes
simply $\mathcal{H}$ and $\mathcal{H}'$ --- to indicate two separable
Hilbert spaces (below, these will indicate the ``input'' and ``output''
spaces of a quantum channel $\Phi$), and write $\mathcal{T}$ and
$\mathcal{T}'$ to indicate the set of trace-class operators over
$\mathcal{H}_{X}$ and $\mathcal{H}'_{X'}$ respectively. We write
$\mathcal{D}\subseteq\mathcal{T}$ to indicate the set of density
operators over $\mathcal{H}$.

For any functional $f:\mathcal{D}\to\mathbb{R}\cup\{\infty\}$,
(semi)continuity is meant in the sense of the trace norm.
The \emph{support} of a density operator is the orthogonal complement of its kernel; we use $\mathrm{supp}\,\rho$ to indicate the support of $\rho\in\mathcal{D}$. For any pair of self-adjoint operators $\rho$ and $\omega$, we use the standard notation like $\rho \ge \omega$ to indicate that $\rho - \omega$ is positive.

We indicate a linear mixture of two states $\rho,\varphi\in\mathcal{D}$
with coefficient $\lambda\in\mathbb{R}$ as
\begin{align}
\varphi(\lambda) & :=(1-\lambda)\varphi+\lambda\rho.\label{eq:qCdef}
\end{align}

We will derive our results for a general family of ``EP-type'' functions.
An EP-type function, which we write generically as $\Sigma:\mathcal{D}\to\mathbb{R}\cup\{\infty\}$,
can take one of two mathematical forms. The first form is
\begin{equation}
\Sigma(\rho)=S(\Phi(\rho))-S(\rho)+Q(\rho),\label{eq:functype1}
\end{equation}
where $\Phi$ is a positive and trace-preserving map and $Q:\mathcal{D}\to\mathbb{R}\cup\{\infty\}$
is a lower semicontinuous linear functional. This form appears in
the main text as \ref{eq:EP0-ent}.

To introduce the second form, consider some system of interest coupled
to an environment $Y$, and let the separable Hilbert spaces $\mathcal{H}_{Y}$
and $\mathcal{H}_{Y'}'$ represent the environment at the beginning
and end of the protocol. Assume the system dynamically evolves according
to the completely-positive and trace-preserving (CPTP) map $\Phi:\mathcal{T}\to\mathcal{T}'$
with the representation $\Phi(\rho)=\mathrm{tr}_{Y'}\{V(\rho\otimes\omega)V^{\dagger}\}$
for some isometry $V:\mathcal{H}_{X}\otimes\mathcal{H}_{Y}\to\mathcal{H}'_{X'}\otimes\mathcal{H}_{Y'}'$
and fixed density operator $\omega$ over $\mathcal{H}_{Y}$. Then,
the second form of EP is given by
\begin{equation}
\Sigma(\rho)=S(V(\rho\otimes\omega)V^{\dagger}\Vert\Phi(\rho)\otimes\omega)+Q'(\rho),\label{eq:functype2}
\end{equation}
where $Q':\mathcal{D}\to\mathbb{R}\cup\{\infty\}$ is any lower-semicontinuous
linear functional. A special case of \ref{eq:functype2} appeared
in the main text as \ref{eq:functype2-maintext2} (where we took $V$
to be some unitary $U$ over system-and-environment and took $Q'=0$).

We draw attention to several important aspects of our definitions
of EP-type functions.

\begin{enumerate}[wide,labelindent=0pt,labelwidth=!]
\item Under both definitions \ref{eq:functype1} and \ref{eq:functype2},
the input and output spaces of the quantum channel $\Phi$
may be different.
\item For both definitions, we assume that $\Sigma(\rho)>-\infty$ for all
$\rho$ (so that a minimizer exists).
\item Unlike \ref{eq:functype2}, the definition in \ref{eq:functype1}
does not require that $\Phi$ be completely positive, but only positive.
\item The assumption of lower-semicontinuity of $Q$ in \ref{eq:functype1},
or of $Q'$ in \ref{eq:functype2} is only used in \ref{thm:eqresult}.
For many other results, it can be omitted.
\item For EP-type functions as in \ref{eq:functype1}, in infinite dimensions
there are states $\rho\in\mathcal{D}$ with infinite entropy, $S(\rho)=\infty$,
in which case \ref{eq:functype1} is not well-defined. To make $\Sigma$
well-defined for all $\rho\in\mathcal{D}$, we assume that $\Sigma(\rho)=\infty$
whenever $S(\rho)=\infty$. However, \ref{eq:functype2} is better suited for analyzing EP incurred by states with infinite entropy, $S(\rho)=\infty$,
since it can finite in such cases (unlike \ref{eq:functype1}). (Note,
however, that states with infinite entropy are sometimes argued to
be ``unphysical'' \citep{wehrlGeneralPropertiesEntropy1978}).
\item Many of our results reference \ref{prop:epConvMix}, \ref{prop:KLprop},
and \ref{prop:epLSC} below (along with some other useful lemmas),
which prove general properties of quantum relative entropy and EP-type
functions.
\end{enumerate}
As mentioned in the main text, for states with finite entropy, \ref{eq:functype2}
can always be re-written in the form of \ref{eq:functype1}, and vice
versa. This is proved in the following result.
\begin{prop}
\label{prop:formequiv}Given an isometry $V:\mathcal{H}_{X}\otimes\mathcal{H}_{Y}\to\mathcal{H}'_{X'}\otimes\mathcal{H}_{Y'}'$,
a CPTP map $\Phi(\rho)=\mathrm{tr}_{Y'}\{V(\rho\otimes\omega)V^{\dagger}\}$,
and any $\rho$ such that $S(\rho)<\infty$,
\begin{align}
& S(V(\rho\otimes\omega)V^{\dagger}\Vert\Phi(\rho)\otimes\omega)+Q'(\rho)\label{eq:form1}\\
& \qquad=S(\Phi(\rho))-S(\rho)+Q(\rho),\label{eq:form2}
\end{align}
where
\begin{equation}
Q(\rho):=Q'(\rho)-\mathrm{tr}\{\mathrm{tr}_{X'}\{V(\rho\otimes\omega)V^{\dagger}\}\ln\omega)\}-S(\omega).\label{eq:formEqF}
\end{equation}
\end{prop}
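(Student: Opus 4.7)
The plan is to prove this by direct algebraic expansion of the relative entropy term, reducing everything to partial traces and entropies of the constituent states. Let me introduce the shorthand $\sigma := V(\rho \otimes \omega) V^{\dagger}$ and note that $\mathrm{tr}_{Y'}\{\sigma\} = \Phi(\rho)$ by the definition of $\Phi$.

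First, I would unfold the relative entropy using its definition,
\begin{equation*}
S(\sigma \Vert \Phi(\rho) \otimes \omega) = -S(\sigma) - \mathrm{tr}\{\sigma \ln(\Phi(\rho) \otimes \omega)\}.
\end{equation*}
Because $V$ is an isometry, $\sigma$ is unitarily equivalent to $(\rho \otimes \omega) \oplus 0$ on the larger space (or, more carefully, $V$ maps the support of $\rho \otimes \omega$ isometrically onto the support of $\sigma$), so $\sigma$ and $\rho \otimes \omega$ share the same nonzero spectrum. Hence $S(\sigma) = S(\rho \otimes \omega) = S(\rho) + S(\omega)$, where finiteness of $S(\rho)$ (given) and the role of $S(\omega)$ are handled by the standard additivity of von Neumann entropy on tensor products.

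Next I would split the logarithm using $\ln(\Phi(\rho) \otimes \omega) = \ln \Phi(\rho) \otimes I + I \otimes \ln \omega$ (on the joint support), and evaluate the two resulting traces by pushing the trace inside on the complementary factor:
\begin{align*}
\mathrm{tr}\{\sigma\,(\ln \Phi(\rho) \otimes I)\} &= \mathrm{tr}\{\Phi(\rho) \ln \Phi(\rho)\} = -S(\Phi(\rho)), \\
\mathrm{tr}\{\sigma\,(I \otimes \ln \omega)\} &= \mathrm{tr}\{\mathrm{tr}_{X'}\{\sigma\} \ln \omega\}.
\end{align*}
Substituting these back and adding $Q'(\rho)$ to both sides yields
\begin{equation*}
S(\sigma \Vert \Phi(\rho)\otimes\omega) + Q'(\rho) = S(\Phi(\rho)) - S(\rho) + \bigl[Q'(\rho) - \mathrm{tr}\{\mathrm{tr}_{X'}\{\sigma\}\ln \omega\} - S(\omega)\bigr],
\end{equation*}
and the bracketed expression is exactly $Q(\rho)$ as defined in \ref{eq:formEqF}.

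The main obstacle is the infinite-dimensional case, where one needs some care to justify splitting the logarithm of the tensor product and moving partial traces inside. This is handled by restricting to the joint support of $\Phi(\rho) \otimes \omega$ (outside which $S(\sigma \Vert \Phi(\rho)\otimes\omega) = \infty$ and the identity is trivial, or the support condition $\mathrm{supp}\,\sigma \subseteq \mathrm{supp}\,\Phi(\rho)\otimes\omega$ — automatic from the Stinespring form — ensures everything is well-defined). Given the assumption $S(\rho) < \infty$, all three entropy terms and traces appearing are either finite or both sides diverge in the same way, so the equality holds as an identity in $\mathbb{R} \cup \{\infty\}$.
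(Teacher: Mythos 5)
Your proposal is correct and follows essentially the same route as the paper's proof: expand the relative entropy, use isometric invariance and additivity to get $S(V(\rho\otimes\omega)V^{\dagger})=S(\rho)+S(\omega)$, split $\ln(\Phi(\rho)\otimes\omega)$ into the two tensor factors, and evaluate the resulting traces via partial traces. The extra care you take with supports and the infinite-dimensional case goes slightly beyond what the paper writes down, but the core argument is identical.
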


\begin{proof}
Expand the RHS of \ref{eq:form1} as
\begin{align}
& S(V(\rho\otimes\omega)V^{\dagger}\Vert\Phi(\rho)\otimes\omega)+Q'(\rho)=\label{eq:hg2}\\
& Q'(\rho)-\mathrm{tr}\{(V(\rho\otimes\omega)V^{\dagger})\ln(\Phi(\rho)\otimes\omega)\}-S(V(\rho\otimes\omega)V^{\dagger}).\nonumber
\end{align}
One can rewrite the second term on the RHS of \cref{eq:hg2} as
\begin{align}
& \mathrm{tr}\{(V(\rho\otimes\omega)V^{\dagger})\ln(\Phi(\rho)\otimes\omega)\}\nonumber \\
& =\mathrm{tr}\{\mathrm{tr}_{Y'}\{(V(\rho\otimes\omega)V^{\dagger})\}\ln\Phi(\rho)\}\nonumber \\
& \qquad\qquad+\mathrm{tr}\{\mathrm{tr}_{X'}\{V(\rho\otimes\omega)V^{\dagger}\}\ln\omega)\}\nonumber \\
& =\mathrm{tr}\{\Phi(\rho)\ln\Phi(\rho)\}+\mathrm{tr}\{\mathrm{tr}_{X'}\{V(\rho\otimes\omega)V^{\dagger}\}\ln\omega)\}\nonumber \\
& =-S(\Phi(\rho))+\mathrm{tr}\{\mathrm{tr}_{X'}\{V(\rho\otimes\omega)V^{\dagger}\}\ln\omega)\}.\label{eq:hg3}
\end{align}
One can rewrite the third term on the RHS of \cref{eq:hg2} as
\begin{equation}
S(V(\rho\otimes\omega)V^{\dagger})=S(\rho\otimes\omega)=S(\rho)+S(\omega),\label{eq:hg4}
\end{equation}
where we've used that entropy is invariant under isometries and additive
for product states. Plugging \ref{eq:hg3,eq:hg4} into \ref{eq:hg2},
and then using \ref{eq:formEqF}, gives \ref{eq:form2}.
\end{proof}

\subsection{Main proofs}

Our first result shows that the directional derivative of $\Sigma$,
defined as in \ref{eq:functype1} or \ref{eq:functype2}, has a simple
information-theoretic form. This result appears as \ref{eq:dd0} in
the main text.
\begin{prop}
\label{thm:genf}For any $\rho,\varphi\in\mathcal{D}$ such that $\left|\Sigma(\rho)\right|<\infty,\left|\Sigma(\varphi)\right|<\infty,S(\rho\Vert\varphi)<\infty$,
\begin{align}
{\textstyle {\textstyle \partial_{\lambda}^{+}}}\Sigma(\varphi(\lambda))\vert_{\lambda=0} & :=\lim_{{\lambda\to0^{+}}}\frac{\Sigma(\varphi(\lambda))-\Sigma(\varphi)}{\lambda}\nonumber \\
& =\Sigma(\rho)-\Sigma(\varphi)+\Delta S({\rho}\Vert{\varphi}).\label{eq:ddeq-1}
\end{align}
\end{prop}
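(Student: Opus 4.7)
The plan is to derive (\ref{eq:ddeq-1}) by starting from the convex mixture identity (\ref{eq:conv2-1}) stated in the main text, rearranging it into a one-sided difference quotient, and then carefully passing to the limit using continuity estimates for the quantum relative entropy. The main obstacle will be controlling a term of the form $\lambda^{-1}\Delta S(\varphi\Vert\varphi(\lambda))$, which requires more than just continuity of $S(\cdot\Vert\cdot)$.

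First I would establish the algebraic identity
\begin{equation*}
\Sigma(\varphi(\lambda)) = (1-\lambda)\Sigma(\varphi) + \lambda\Sigma(\rho) + (1-\lambda)\Delta S(\varphi\Vert\varphi(\lambda)) + \lambda\Delta S(\rho\Vert\varphi(\lambda))
\end{equation*}
for EP of both forms (\ref{eq:functype1}) and (\ref{eq:functype2}). The key ingredient is the standard mixture identity
\begin{equation*}
S(\varphi(\lambda)) = (1-\lambda)S(\varphi) + \lambda S(\rho) + (1-\lambda)S(\varphi\Vert\varphi(\lambda)) + \lambda S(\rho\Vert\varphi(\lambda)),
\end{equation*}
which follows by expanding $-\mathrm{tr}\{\varphi(\lambda)\ln\varphi(\lambda)\}$ using $\varphi(\lambda)=(1-\lambda)\varphi+\lambda\rho$ and collecting terms. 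Applying this identity to both $S(\varphi(\lambda))$ and $S(\Phi(\varphi(\lambda)))$ and using the linearity of $Q$ gives the claim for form (\ref{eq:functype1}); for form (\ref{eq:functype2}) I would additionally invoke that $\rho\mapsto V(\rho\otimes\omega)V^{\dagger}$ is an isometry (preserving both von Neumann entropy and relative entropy) together with $\ln(\sigma\otimes\omega)=\ln\sigma\otimes I + I\otimes\ln\omega$, and linearity of $Q'$.

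Rearranging this identity and dividing by $\lambda>0$ produces
\begin{equation*}
\tfrac{1}{\lambda}\bigl[\Sigma(\varphi(\lambda))-\Sigma(\varphi)\bigr] = \Sigma(\rho) - \Sigma(\varphi) + \Delta S(\rho\Vert\varphi(\lambda)) + \tfrac{1-\lambda}{\lambda}\,\Delta S(\varphi\Vert\varphi(\lambda)),
\end{equation*}
so (\ref{eq:ddeq-1}) reduces to showing (i) $\Delta S(\rho\Vert\varphi(\lambda))\to\Delta S(\rho\Vert\varphi)$ and (ii) $\tfrac{1-\lambda}{\lambda}\Delta S(\varphi\Vert\varphi(\lambda))\to 0$ as $\lambda\to 0^{+}$. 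Item (i) should follow from a continuity-in-the-second-argument lemma such as Proposition \kllimAaRef{}: the sandwich $\varphi(\lambda)\ge(1-\lambda)\varphi$ bounds $\varphi(\lambda)$ uniformly below in terms of $\varphi$, and together with the hypothesis $S(\rho\Vert\varphi)<\infty$ this forces $S(\rho\Vert\varphi(\lambda))\to S(\rho\Vert\varphi)$; the same argument applied to the images under $\Phi$ (using monotonicity of relative entropy to propagate finiteness to $S(\Phi(\rho)\Vert\Phi(\varphi))$) controls the other half of $\Delta S$.

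The main obstacle is item (ii). Each of $S(\varphi\Vert\varphi(\lambda))$ and $S(\Phi(\varphi)\Vert\Phi(\varphi(\lambda)))$ is convex in $\lambda$ and vanishes at $\lambda=0$, so individually they are only $O(\lambda)$, and their ratios with $\lambda$ need not vanish; the cancellation must come from comparing them against each other. Heuristically, both quantities have the \emph{same} vanishing first-order coefficient in $\lambda$, because the function $\lambda\mapsto S(\varphi\Vert\varphi(\lambda))$ is minimized at $\lambda=0$ (giving zero one-sided derivative), and similarly for its image under $\Phi$ — so each is $O(\lambda^{2})$ and the difference $\Delta S(\varphi\Vert\varphi(\lambda))$ is $o(\lambda)$. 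Making this rigorous under only the hypotheses $\lvert\Sigma(\varphi)\rvert,\lvert\Sigma(\rho)\rvert,S(\rho\Vert\varphi)<\infty$, and in infinite dimensions where the Fréchet derivative of the operator logarithm is not immediately at hand, is the delicate step; I would isolate this as a separate lemma, Proposition \kllimBaRef{}, which establishes exactly the required rate of vanishing and which can then be cited to complete the passage to the limit.
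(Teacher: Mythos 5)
Your proposal is correct and follows essentially the same route as the paper's proof: rearrange the convex-mixture identity (\ref{prop:epConvMix}) into a one-sided difference quotient, evaluate limit (i) by continuity of relative entropy in its second argument (\ref{prop:KLprop}, part II), and isolate limit (ii) as a separate lemma (\ref{prop:KLprop}, part III). The one refinement worth noting is that no cancellation between the two halves of $\Delta S({\varphi}\Vert{\varphi(\lambda)})$ is required: each of $\tfrac{1}{\lambda}S(\varphi\Vert\varphi(\lambda))$ and $\tfrac{1}{\lambda}S(\Phi(\varphi)\Vert\Phi(\varphi(\lambda)))$ vanishes individually, via Audenaert's telescopic estimate $\lim_{\lambda\to0^{+}}S(\varphi\Vert\varphi(\lambda))/(-\ln(1-\lambda))=1-\mathrm{tr}\{\Pi^{\varphi}\rho\}$ together with the support condition implied by $S(\rho\Vert\varphi)<\infty$ (propagated through $\Phi$ by monotonicity), which yields exactly the $o(\lambda)$ rate you anticipated rather than requiring the stronger $O(\lambda^{2})$ rate you conjectured.
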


\begin{proof}
First, rearrange \ref{eq:conv1} in \ref{prop:epConvMix} and take
the $\lambda\to0^{+}$ limit to give
\begin{multline}
{\textstyle {\textstyle \partial_{\lambda}^{+}}}\Sigma(\varphi(\lambda))\vert_{\lambda=0}=\Sigma(\rho)-\Sigma(\varphi)+\\
\lim_{{\lambda\to0^{+}}}\big[\Delta S({\rho}\Vert{\varphi(\lambda)})+\frac{1-\lambda}{\lambda}\Delta S({\varphi}\Vert{\varphi(\lambda)})\big].\label{eq:dm}
\end{multline}
We now separately evaluate limits of the two terms inside the brackets
in \ref{eq:dm}. Before proceeding, note that $S(\rho\Vert\varphi)<\infty$
implies $S(\Phi(\rho)\Vert\Phi(\varphi))<\infty$ by \ref{prop:KLprop}(\klmonopropref).
Then,
\begin{align*}
\Delta S({\rho}\Vert{\varphi}) & :=S(\Phi(\rho)\Vert\Phi(\varphi))-S(\rho\Vert\varphi)\\
& =\lim_{{\lambda\to0^{+}}}S(\Phi(\rho)\Vert\Phi(\varphi(\lambda)))-\lim_{{\lambda\to0^{+}}}S(\rho\Vert\varphi(\lambda))\\
& =\lim_{{\lambda\to0^{+}}}[S(\Phi(\rho)\Vert\Phi(\varphi(\lambda)))-S(\rho \Vert \varphi(\lambda))]\\
& =\lim_{{\lambda\to0^{+}}}\Delta S({\rho}\Vert{\varphi(\lambda)}),
\end{align*}
where we first used \ref{eq:ddDef} and then applied
\ref{prop:KLprop}(\kllimAaRef) twice. Then,
\begin{align*}
& \lim_{{\lambda\to0^{+}}}\frac{1-\lambda}{\lambda}\Delta S({\varphi}\Vert{\varphi(\lambda)})\\
& =\lim_{{\lambda\to0^{+}}}\frac{1-\lambda}{\lambda}S(\Phi(\varphi)\Vert\Phi(\varphi(\lambda)))-\lim_{{\lambda\to0^{+}}}\frac{1-\lambda}{\lambda}S(\varphi\Vert\varphi(\lambda))\\
& =0,
\end{align*}
where we applied \ref{prop:KLprop}(\kllimBaRef) twice. Plugging
into \ref{eq:dm} gives \ref{eq:ddeq-1}.
\end{proof}
Next, we derive general bounds on the mismatch cost of $\rho$, relative
to the optimal state within some convex set of states. \ref{eq:convexBound,eq:dn1}
appear in the main text as \ref{eq:ineqConvex}.
\begin{prop}
\label{thm:ineqs}Given a convex set of states $\mathcal{S}\subseteq\mathcal{D}$,
for any $\varphi\in\mathop{\arg\min}_{\omega\in\mathcal{S}}\Sigma(\omega)$
and $\rho\in\mathcal{S}$ with $S(\rho\Vert\varphi)<\infty$,
\begin{align}
\Sigma(\rho)-\Sigma(\varphi) & \ge-\Delta S({\rho}\Vert{\varphi}).\label{eq:convexBound}
\end{align}
Furthermore, if $(1-\lambda)\varphi+\lambda\rho\in\mathcal{S}$ for
some $\lambda<0$,
\begin{align}
\Sigma(\rho)-\Sigma(\varphi) & =-\Delta S({\rho}\Vert{\varphi}).\label{eq:dn1}
\end{align}
\end{prop}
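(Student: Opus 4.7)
The plan is to reduce both statements to claims about the one-sided derivatives of $f(\lambda):=\Sigma(\varphi(\lambda))$ at $\lambda=0$, where $\varphi(\lambda)=(1-\lambda)\varphi+\lambda\rho$, and then invoke \ref{thm:genf} to identify those derivatives with the information-theoretic quantity $\Sigma(\rho)-\Sigma(\varphi)+\Delta S({\rho}\Vert{\varphi})$. Since $\mathcal{S}$ is convex, $\varphi(\lambda)\in\mathcal{S}$ for every $\lambda\in[0,1]$, so $f$ is defined there and attains its minimum at $\lambda=0$ (because $\varphi$ minimizes $\Sigma$ over $\mathcal{S}$). Consequently $\partial_\lambda^+ f(\lambda)\vert_{\lambda=0}\ge 0$; combined with \ref{thm:genf}, this yields
\[
0\;\le\;\partial_\lambda^+ f(\lambda)\vert_{\lambda=0}\;=\;\Sigma(\rho)-\Sigma(\varphi)+\Delta S({\rho}\Vert{\varphi}),
\]
which is exactly \ref{eq:convexBound}.

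For the equality \ref{eq:dn1}, the additional hypothesis that $\varphi(\lambda_0)\in\mathcal{S}$ for some $\lambda_0<0$ combined with convexity of $\mathcal{S}$ extends the natural domain of $f$ to all of $[\lambda_0,1]$, so that $\lambda=0$ is now an \emph{interior} minimizer. The right derivative remains nonnegative, but now we also have $\partial_\lambda^- f(\lambda)\vert_{\lambda=0}\le 0$. It therefore suffices to show that the left derivative equals the same expression as the right derivative, namely $\Sigma(\rho)-\Sigma(\varphi)+\Delta S({\rho}\Vert{\varphi})$; this common value must then vanish by the two-sided minimality constraint, which is exactly \ref{eq:dn1}.

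To compute the left derivative I would rerun the calculation behind \ref{thm:genf} from the opposite side. Starting from the mixing identity of \ref{prop:epConvMix} (Eq.~\ref{eq:conv1}), dividing by $\lambda$, and rearranging yields, for all sufficiently small $\lambda\ne 0$,
\[
\frac{\Sigma(\varphi(\lambda))-\Sigma(\varphi)}{\lambda}=\Sigma(\rho)-\Sigma(\varphi)+\Delta S({\rho}\Vert{\varphi(\lambda)})+\tfrac{1-\lambda}{\lambda}\,\Delta S({\varphi}\Vert{\varphi(\lambda)}).
\]
Taking $\lambda\to 0^-$ and using the same lemmas from \ref{prop:KLprop} that handle the $\lambda\to 0^+$ limit in the proof of \ref{thm:genf}, the first limit term tends to $\Delta S({\rho}\Vert{\varphi})$ and the second vanishes, giving $\partial_\lambda^- f(\lambda)\vert_{\lambda=0}=\Sigma(\rho)-\Sigma(\varphi)+\Delta S({\rho}\Vert{\varphi})$, as required.

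The main thing to verify is that the continuity/vanishing statements from \ref{prop:KLprop} really are symmetric in the sign of the perturbation, i.e.\ that they apply to \emph{any} $\mathcal{D}$-valued sequence converging to $\varphi$ in trace norm rather than only to sequences of the forward-mixed form with $\lambda>0$. Since $\varphi(\lambda)\in\mathcal{S}\subseteq\mathcal{D}$ for every $\lambda\in[\lambda_0,0)$ by convexity, and $\varphi(\lambda)\to\varphi$ in trace norm regardless of the sign of $\lambda$, the hypotheses of those lemmas are met on both sides, so the one-sided derivative on the left is given by the same formula as on the right and the two-sided minimality condition closes the argument.
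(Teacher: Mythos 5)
Your proof of the inequality \ref{eq:convexBound} is exactly the paper's argument: the one-sided derivative at the minimizer is non-negative, and \ref{thm:genf} identifies it with $\Sigma(\rho)-\Sigma(\varphi)+\Delta S({\rho}\Vert{\varphi})$. That part is fine.

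The equality \ref{eq:dn1} is where there is a genuine gap, and it sits precisely at the point you flag as ``the main thing to verify.'' Your claim that the limit lemmas of \ref{prop:KLprop} ``apply to any $\mathcal{D}$-valued sequence converging to $\varphi$ in trace norm'' is not correct: relative entropy is only \emph{lower semicontinuous} in trace norm, so trace-norm convergence of $\varphi(\lambda)$ to $\varphi$ by itself gives no convergence of the entropy values, let alone the $o(\lambda)$ rate you need. The two lemmas are one-sided statements proved with one-sided tools. The statement $\lim_{\lambda\to0^{+}}S(\rho\Vert\varphi(\lambda))=S(\rho\Vert\varphi)$ is proved via convexity along the segment with $\varphi$ as an \emph{endpoint} (Rockafellar's endpoint-continuity result); this one can in fact be rescued for $\lambda\to0^{-}$, because $0$ is interior to $[\lambda_{0},1]$ and one checks $S(\rho\Vert\varphi(\lambda))<\infty$ there, so the convex function of one variable is continuous at $0$. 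The real obstruction is the second limit, $\frac{1}{\lambda}S(\varphi\Vert\varphi(\lambda))\to0$: its proof rests on Audenaert's telescopic relative entropy theorem, which applies only to genuine convex mixtures $(1-\lambda)\varphi+\lambda\rho$ with $\lambda\in[0,1]$; for $\lambda<0$ the operator $\varphi(\lambda)$ lies outside the segment and that theorem says nothing. Nor does abstract convexity help: writing $g(\lambda)=S(\varphi\Vert\varphi(\lambda))$, one has $g\ge0=g(0)$ and $g$ convex, which yields only $\partial_{\lambda}^{-}g(0)\le0$ --- the wrong direction --- and the crude bound $g(\lambda)\le-\ln(1-|\lambda|/|\lambda_{0}|)=O(|\lambda|)$ obtained from $\varphi(\lambda)\ge(1-|\lambda|/|\lambda_{0}|)\varphi$ is not $o(|\lambda|)$. (Think of $|x|$ at $x=0$: right derivative $0$ does not force the left derivative to vanish.) You would need to prove a new two-sided differentiability lemma for $\lambda\mapsto S(\varphi\Vert\varphi(\lambda))$ under the hypothesis $\varphi\ge\alpha\rho$; this is plausible but requires genuinely new estimates, and you would also need to justify using the mixing identity of \ref{prop:epConvMix} for negative $\lambda$, which the paper only proves for $\lambda\in(0,1)$.

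The paper avoids left derivatives of relative entropy altogether. It sets $\omega=(1-\alpha)\varphi+\alpha\rho\in\mathcal{S}$ for the given $\alpha<0$, so that $\varphi=\omega(\lambda^{*})$ is an interior point of the segment from $\omega$ to $\rho$, and considers the function $f(\lambda)=\partial_{\eta}^{+}\Sigma((1-\eta)\omega(\lambda)+\eta\rho)\vert_{\eta=0}$ --- a \emph{right} derivative at every point of the segment, hence computable with the already-established one-sided formula of \ref{thm:genf}. It then shows $f$ is continuous on $(0,1)$ (being a linear combination of three finite convex functions of one real variable), that $f(\lambda)\le0$ for $\lambda<\lambda^{*}$ (otherwise convexity of $\Sigma$ would contradict minimality of $\varphi$), and that $f(\lambda^{*})\ge0$ by the first part; continuity forces $f(\lambda^{*})=0$, which is \ref{eq:dn1}. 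I would recommend adopting this route rather than trying to patch the left-derivative computation.
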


\begin{proof}
Since $\varphi$ is a minimizer, $\Sigma(\varphi)<\infty$ and $\Sigma(\omega)>-\infty$
for all $\omega\in\mathcal{S}$. Then, \ref{eq:convexBound} is trivially
true if $\Sigma(\rho)=\infty$. If $\Sigma(\rho)<\infty$, then the
directional derivative from the minimizer $\varphi$ to $\rho$ can
be expressed as \ref{eq:ddeq-1}. At the same time, the directional
derivative from the minimizer $\varphi$ to any $\rho\in\mathcal{S}$
must be non-negative, since otherwise one could achieve a smaller
value of $\Sigma$ by moving slightly from $\varphi$ toward $\rho$.
Thus, ${\textstyle {\textstyle \partial_{\lambda}^{+}}}\Sigma(\varphi(\lambda))\vert_{\lambda=0}\ge0$,
which gives \ref{eq:convexBound} when combined with \ref{eq:ddeq-1}.

We now prove \ref{eq:dn1}. Let $\omega:=(1-\alpha)\varphi+\alpha\rho\in\mathcal{S}$
for some $\alpha<0$ (which exists by assumption), and note that $\varphi$
can be written as the convex mixture $\varphi=(1-\lambda^{*})\omega+\lambda^{*}\rho$
with $\lambda^{*}=-\alpha/(1-\alpha)$. Note that for any pair of
states $\rho,\omega\in\mathcal{D}$ and $\lambda\in[0,1]$,
\begin{equation}
0\le-(1-\lambda)\Delta S({\rho}\Vert{\omega(\lambda)})-\lambda\Delta S({\rho}\Vert{\omega(\lambda)})\le h_{2}(\lambda),\label{eq:bnd42}
\end{equation}
where $\omega(\lambda)=(1-\lambda)\omega+\lambda\rho$ and $h_{2}(\lambda)=-\lambda\ln\lambda-(1-\lambda)\ln(1-\lambda)$
is the binary entropy function. The lower bound in \ref{eq:bnd42}
follows from the monotonicity of relative entropy, \ref{prop:KLprop}(\klmonopropref). The upper bound follows from $-\Delta S({\rho}\Vert{\omega(\lambda)})\le S(\rho\Vert\omega(\lambda))\le-\ln\lambda$,
\ref{prop:KLprop}(\klupperboundpropref), and similarly for $\Delta S({\omega}\Vert{\omega(\lambda)})$.
\ref{prop:epConvMix} then implies that for all $\lambda\in[0,1]$,
\begin{equation}
0\le(1-\lambda)\Sigma(\omega)+\lambda\Sigma(\rho)-\Sigma(\omega(\lambda))\le h_{2}(\lambda).\label{eq:prfineq}
\end{equation}
Since $h_{2}(\lambda)<\ln2$, \ref{eq:prfineq} implies that
\[
(1-\lambda^{*})\Sigma(\omega)+\lambda^{*}\Sigma(\rho)\le\Sigma(\omega(\lambda^{*}))+\ln2,
\]
thus $\Sigma(\rho),\Sigma(\omega)<\infty$. The lower bound in \ref{eq:prfineq}
also implies that $\Sigma$ is convex, so therefore $\Sigma(\omega(\lambda))<\infty$
for all $\lambda\in[0,1]$. In addition, $S(\rho\Vert\omega(\lambda))\le-\ln\lambda<\infty$
for all $\lambda\in(0,1)$ by \ref{prop:KLprop}(\klupperboundpropref),
hence $S(\Phi(\rho)\Vert\Phi(\omega(\lambda)))<\infty$ by monotonicity.

We now write the directional derivative of $\Sigma$ at $\omega(\lambda)$
toward $\rho$ as a function of $\lambda$,
\begin{align}
& f(\lambda):=\partial_{\eta}^{+}\Sigma((1-\eta)\omega(\lambda)+\eta\rho)=\nonumber \\
& \Sigma(\rho)-\Sigma(\omega(\lambda))+S(\Phi(\rho)\Vert\Phi(\omega(\lambda)))-S(\rho\Vert\omega(\lambda)),\label{eq:j1}
\end{align}
where in the second line we used \ref{thm:genf}. Since $\omega(\lambda^{*})=\varphi$,
by \ref{eq:convexBound},
\[
f(\lambda^{*})=\partial_{\eta}^{+}\Sigma((1-\eta)\varphi+\eta\rho)\ge0.
\]
At the same time, it must be that $f(\lambda)\le0$ for $\lambda<\lambda^{*}$,
since otherwise we'd have $\Sigma(\varphi(\lambda))<\Sigma(\varphi)$
by convexity of $\Sigma$, contradicting the assumption that $\varphi$
is a minimizer.

Finally, observe that by definition, $\eta(\lambda)$ is a linear combination of three functions of $\lambda$:
$\Sigma(\omega(\lambda))$, $S(\rho\Vert\omega(\lambda))$, and $S(\Phi(\rho)\Vert\Phi(\omega(\lambda)))$.
All three are finite on $\lambda\in(0,1)$ as we showed above, and
all three are also convex: $\Sigma$ is convex by the lower bound
in \ref{eq:prfineq}, while $S(\cdot\Vert\cdot)$ is convex by \ref{prop:KLprop}(\klconvexref).
Hence, by \mycitep[Theorem~I.11.A]{robertsConvexFunctions1973}, all
three are continuous functions of $\lambda$ in the interval $(0,1)$,
so $f(\lambda)$ is also continuous. Therefore, since $f(\lambda)\le0$
for $\lambda<\lambda^{*}$ and $f(\lambda^{*})\ge0$, it must be that
$f(\lambda^{*})=0$. This gives \ref{eq:dn1} when combined with \ref{eq:j1}
and $\omega(\lambda^{*})=\varphi$.
\end{proof}
We now derive the equality form of mismatch cost that appears as \ref{eq:equalityBasis}
in the main text.
\begin{prop}
\label{thm:eqresult}For any $\varphi\in\mathop{\arg\min}_{\omega\in\mathcal{D}_{P}}\Sigma(\omega)$
and $\rho\in\mathcal{D}_{P}$ with $S(\rho\Vert\varphi)<\infty$,
\begin{equation}
\Sigma(\rho)-\Sigma(\varphi)=-\Delta S({\rho}\Vert{\varphi}).\label{eq:equalityCor}
\end{equation}
\end{prop}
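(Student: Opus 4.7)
The plan is to leverage \ref{thm:ineqs}, which already yields the inequality $\Sigma(\rho)-\Sigma(\varphi) \ge -\Delta S(\rho\Vert\varphi)$ for a minimizer over any convex set, together with an equality clause that fires whenever the segment from $\varphi$ to $\rho$ can be extended slightly past $\varphi$ while staying in the feasible set. Since $\mathcal{D}_{P}$ is defined by the \emph{linear} constraint $\omega = \sum_{\Pi \in P}\Pi\omega\Pi$, that constraint is preserved under arbitrary real affine combinations of $\rho$ and $\varphi$, so the only obstruction to activating the equality clause is positivity of the extrapolated operator $\varphi(\lambda) = (1-\lambda)\varphi + \lambda\rho$ for some small $\lambda < 0$.

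I would split into two cases. In the \emph{dominated} case, when $\rho \le c\varphi$ for some finite $c > 0$, one checks directly that $\varphi(\lambda)$ stays positive for $\lambda$ in an open interval extending slightly below $0$, so \ref{thm:ineqs} yields the equality at once. In finite dimensions, $S(\rho\Vert\varphi) < \infty$ forces $\mathrm{supp}\,\rho \subseteq \mathrm{supp}\,\varphi$ and hence domination, so the proof finishes immediately. In infinite dimensions, when $\rho$ is not dominated, I would approximate by truncating along the eigenbasis of $\varphi$. Because $\varphi \in \mathcal{D}_{P}$ commutes with each $\Pi \in P$, one can diagonalize $\varphi$ blockwise relative to $P$; letting $\Pi_{n}$ be the projector onto the top $n$ eigenvectors in that joint basis, the states
\[
\rho_{n} := \frac{\Pi_{n}\rho\Pi_{n}}{\mathrm{tr}(\Pi_{n}\rho\Pi_{n})}
\]
lie in $\mathcal{D}_{P}$ (as $\Pi_{n}$ commutes with every $\Pi \in P$) and are dominated by $\varphi$ (on the image of $\Pi_{n}$ the operator $\varphi$ has a strictly positive smallest eigenvalue $\lambda_{n}$, so $\Pi_{n} \le \lambda_{n}^{-1}\varphi$, and hence $\rho_{n} \le (\lambda_{n}\,\mathrm{tr}(\Pi_{n}\rho\Pi_{n}))^{-1}\varphi$). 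The dominated case applied to each $\rho_{n}$ then yields $\Sigma(\rho_{n}) - \Sigma(\varphi) = -\Delta S(\rho_{n}\Vert\varphi)$ for every sufficiently large $n$.

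The main obstacle is passing to the limit $n\to\infty$ in the infinite-dimensional setting. One needs $\rho_{n} \to \rho$ in trace norm (which follows from the spectral theorem, since $\Pi_{n}$ converges strongly to the projector onto $\mathrm{supp}\,\varphi$, inside which $\mathrm{supp}\,\rho$ sits by hypothesis), together with convergence of each of $\Sigma(\rho_{n}) \to \Sigma(\rho)$, $S(\rho_{n}\Vert\varphi) \to S(\rho\Vert\varphi)$, and $S(\Phi(\rho_{n})\Vert\Phi(\varphi)) \to S(\Phi(\rho)\Vert\Phi(\varphi))$. The relative-entropy convergences should follow from standard monotone-approximation results for quantum relative entropy under finite-rank projections of the first argument, transferred through $\Phi$ via its monotonicity. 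For $\Sigma$ itself, I would combine the lower-semicontinuity built into the hypotheses on $Q$ (or $Q'$) with the convexity bound on $\Sigma$ appearing in the proof of \ref{thm:ineqs} --- controlled by the binary-entropy term $h_{2}(\lambda)$ --- to squeeze $\Sigma(\rho_{n})$ from both sides as $\rho_{n} \to \rho$, upgrading the finite-rank identities into the claimed equality.
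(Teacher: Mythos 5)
Your proposal follows essentially the same route as the paper's proof: the dominated case is dispatched via the equality clause of \ref{thm:ineqs}, and the general infinite-dimensional case via normalized projections of $\rho$ onto the top eigenvectors of $\varphi$ chosen blockwise with respect to $P$ (exactly the sequence of \ref{prop:epLSC}, with membership in $\mathcal{D}_{P}$ secured as in \ref{lem:staysinset}). The one caveat is the limiting step: the full two-sided convergences $\Sigma(\rho_{n})\to\Sigma(\rho)$ and $S(\Phi(\rho_{n})\Vert\Phi(\varphi))\to S(\Phi(\rho)\Vert\Phi(\varphi))$ that you list as requirements are not obtainable from the lower-semicontinuity hypotheses on $Q$ and on relative entropy, but --- as your closing remark essentially anticipates --- they are also not needed, since the $\liminf$ bounds alone give $0\ge\Sigma(\rho)-\Sigma(\varphi)+\Delta S(\rho\Vert\varphi)$ in the limit, and the reverse inequality is already supplied by \ref{eq:convexBound} applied directly to $\rho$.
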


\begin{proof}
First, consider the case when $\varphi\ge\alpha\rho$ for some $\alpha\in(0,1)$.
Then, $S(\rho\Vert\varphi)<\infty$ by \ref{prop:KLprop}(\klmonopropref), and $(1-\lambda)\varphi+\lambda\rho\in\mathcal{D}_{P}$
for $\lambda\in[-\alpha/(1-\alpha),1]$. Applying \ref{eq:dn1} gives
\ref{eq:equalityCor}.

Now consider the case where $S(\rho\Vert\varphi)<\infty$ but it is
not the case $\varphi\ge\alpha\rho$ for any $\alpha>0$ (which can
happen in infinite dimensions). Consider the sequence of states $\{\rho_{n}\}\subset\mathcal{D}_{P}$
defined in \ref{prop:epLSC}. By \ref{prop:epLSC}(I) for all $n$
there is some $\alpha_{n}>0$ such that $\rho_{n}\ge\alpha_{n}\varphi$.
Using the first part of this proof, this implies
\begin{equation}
0=\Sigma(\rho_{n})-\Sigma(\varphi)+\Delta S({\rho_{n}}\Vert{\varphi})\quad\forall n.\label{eq:gnd12}
\end{equation}
Taking the $n\to\infty$ limit infimum of both sides gives
\begin{equation}
0\ge\Sigma(\rho)-\Sigma(\varphi)+\Delta S({\rho}\Vert{\varphi}),\label{eq:bound0app-1}
\end{equation}
where we've used \ref{prop:epLSC}(II). At the same time, since $\mathcal{D}_{P}$
is a convex set, \ref{eq:convexBound} implies
\begin{equation}
0\le\Sigma(\rho)-\Sigma(\varphi)+\Delta S({\rho}\Vert{\varphi}).\label{eq:bound1app}
\end{equation}
Combining \ref{eq:bound0app-1} and \ref{eq:bound1app} gives \ref{eq:equalityCor}.
\end{proof}
The next results proves that the support of any optimizer $\varphi_{P}\in\mathop{\arg\min}_{\omega\in\mathcal{D}_{P}}\Sigma(\rho)$
and its orthogonal complement must be non-interacting subspaces under
the action of $\Phi$.
\begin{prop}
\label{thm:connectedmeansfullsupport}If $\Sigma(\vert i\rangle\langle i\vert)<\infty$
for all pure states $\vert i\rangle\langle i\vert\in\mathcal{D}_{P}$,
then for all $\varphi\in\mathop{\arg\min}_{\omega\in\mathcal{D}_{P}}\Sigma(\omega)$,
\begin{equation}
\Phi(\varphi)\perp\Phi(\rho)\quad\forall\rho\in\mathcal{D}_{P}:\rho\perp\varphi.\label{eq:connres-1}
\end{equation}
\end{prop}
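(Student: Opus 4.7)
I will prove the contrapositive by contradiction: assume $\varphi\in\mathop{\arg\min}_{\omega\in\mathcal{D}_{P}}\Sigma(\omega)$ but there exists some $\rho\in\mathcal{D}_{P}$ with $\rho\perp\varphi$ yet $\Phi(\rho)\not\perp\Phi(\varphi)$. The plan is to exhibit a perturbation of $\varphi$ inside $\mathcal{D}_{P}$ that strictly decreases $\Sigma$, contradicting minimality.

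First I reduce to a pure-state perturbation. Since $\rho$ commutes with every $\Pi\in P$, its spectral eigenvectors can be chosen to lie in the individual ranges of the $\Pi$'s, so each one-dimensional projector $\vert j\rangle\langle j\vert$ in the decomposition again lies in $\mathcal{D}_{P}$. Linearity of $\Phi$, combined with positivity of every term in $\mathrm{tr}(\Phi(\rho)\Phi(\varphi))=\sum_{j}p_{j}\mathrm{tr}(\Phi(\vert j\rangle\langle j\vert)\Phi(\varphi))$, forces at least one such eigenvector $\vert i\rangle$ of $\rho$ to satisfy $\Phi(\vert i\rangle\langle i\vert)\not\perp\Phi(\varphi)$. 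This $\vert i\rangle$ is orthogonal to $\mathrm{supp}\,\varphi$, belongs to $\mathcal{D}_{P}$, and has $\Sigma(\vert i\rangle\langle i\vert)<\infty$ by the hypothesis of the proposition.

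Next I follow the trajectory $\varphi(\lambda)=(1-\lambda)\varphi+\lambda\vert i\rangle\langle i\vert\in\mathcal{D}_{P}$ for $\lambda\in[0,1]$. Orthogonal supports and $S(\vert i\rangle\langle i\vert)=0$ give the exact identity $S(\varphi(\lambda))=(1-\lambda)S(\varphi)+h_{2}(\lambda)$, while the classical-quantum identity yields $S(\Phi(\varphi(\lambda)))=(1-\lambda)S(\Phi(\varphi))+\lambda S(\Phi(\vert i\rangle\langle i\vert))+h_{2}(\lambda)-\delta(\lambda)$, with $\delta(\lambda)$ equal to the conditional entropy of the classical register in the classical-quantum state $(1-\lambda)\vert 0\rangle\langle 0\vert\otimes\Phi(\varphi)+\lambda\vert 1\rangle\langle 1\vert\otimes\Phi(\vert i\rangle\langle i\vert)$. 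Combining these with linearity of the flow $Q$ gives the exact decomposition
\begin{equation*}
\Sigma(\varphi(\lambda))-\Sigma(\varphi)=\lambda\bigl[\Sigma(\vert i\rangle\langle i\vert)-\Sigma(\varphi)\bigr]-\delta(\lambda).
\end{equation*}

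The main obstacle is then to show $\delta(\lambda)/\lambda\to+\infty$ as $\lambda\to 0^{+}$ whenever $\Phi(\vert i\rangle\langle i\vert)\not\perp\Phi(\varphi)$, for this will make $-\delta(\lambda)$ dominate the finite linear term and force $\Sigma(\varphi(\lambda))<\Sigma(\varphi)$ for all sufficiently small $\lambda>0$. I plan to obtain this scaling from an eigenvalue expansion of $(1-\lambda)\Phi(\varphi)+\lambda\Phi(\vert i\rangle\langle i\vert)$: eigenvalues lying inside $\mathrm{supp}\,\Phi(\varphi)$ move smoothly and contribute only $O(\lambda)$, while the mass of $\Phi(\vert i\rangle\langle i\vert)$ overlapping with $\mathrm{supp}\,\Phi(\varphi)$, namely $\alpha=\mathrm{tr}(\Pi^{\Phi(\varphi)}\Phi(\vert i\rangle\langle i\vert))$, controls a dominant $-\alpha\lambda\ln\lambda$ contribution to $\delta(\lambda)$; the non-orthogonality assumption is exactly $\alpha>0$. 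Finally, the case of EP in the second form \ref{eq:functype2} follows from \ref{prop:formequiv}, which rewrites any such $\Sigma$ as an instance of the first form for states with finite entropy --- in particular for the pure states used above.
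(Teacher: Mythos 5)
Your architecture is essentially the paper's: reduce to a single eigenvector $\vert i\rangle\langle i\vert$ of $\rho$ with $\Phi(\vert i\rangle\langle i\vert)\not\perp\Phi(\varphi)$, move along $\varphi(\lambda)=(1-\lambda)\varphi+\lambda\vert i\rangle\langle i\vert\in\mathcal{D}_{P}$, and exploit an exact convexity identity. Your displayed identity is precisely \ref{eq:conv1} of \ref{prop:epConvMix} specialized to orthogonal inputs: your $\delta(\lambda)$ equals $h_{2}(\lambda)$ minus the output-side entropic disturbance $(1-\lambda)S(\Phi(\varphi)\Vert\Phi(\varphi(\lambda)))+\lambda S(\Phi(\vert i\rangle\langle i\vert)\Vert\Phi(\varphi(\lambda)))$. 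The packaging differs only cosmetically --- you conclude $\Sigma(\varphi(\lambda))<\Sigma(\varphi)$ for small $\lambda$, the paper concludes $\Sigma(\vert i\rangle\langle i\vert)=\infty$ --- and both contradictions rest on exactly the same divergence.

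The gap is that this divergence, $\delta(\lambda)/\lambda\to+\infty$, is the entire analytic content of the proposition, and you only announce a plan for it. Your eigenvalue-expansion route is a finite-dimensional perturbative heuristic: in a separable infinite-dimensional space $\Phi(\varphi)$ has infinitely many eigenvalues accumulating at zero, and ``eigenvalues inside $\mathrm{supp}\,\Phi(\varphi)$ move smoothly'' does not by itself give the uniform $O(\lambda)$ control of the summed contributions that your claimed expansion requires; moreover you aim for the exact coefficient $\alpha=\mathrm{tr}\{\Pi^{\Phi(\varphi)}\Phi(\vert i\rangle\langle i\vert)\}$, which is more than is needed and harder to prove than a positive lower bound. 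The paper closes this step differently: writing $\delta(\lambda)/\lambda\ge(-\ln\lambda)\big[1-S(\Phi(\vert i\rangle\langle i\vert)\Vert\Phi(\varphi(\lambda)))/(-\ln\lambda)\big]$ (the $(1-\lambda)$-term of $\delta$ is non-negative and can be dropped, and $S(\vert i\rangle\langle i\vert\Vert\varphi(\lambda))=-\ln\lambda$ exactly by orthogonality), and then invoking Audenaert's quantum skew-divergence theorems to bound the bracket below by $1-\frac{1}{2}\Vert\Phi(\vert i\rangle\langle i\vert)-\Phi(\varphi)\Vert_{1}>0$, which holds in any dimension. A secondary issue: your explicit decompositions of $S(\varphi(\lambda))$ and $S(\Phi(\varphi(\lambda)))$ presuppose $S(\varphi),S(\Phi(\varphi))<\infty$; for EP-type functions of the form \ref{eq:functype2} these entropies can be infinite while $\Sigma(\varphi)$ is finite, and \ref{prop:formequiv} does not rescue you because it also requires finite entropy. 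The paper sidesteps this by proving \ref{eq:conv1} for the second form via Donald's identity, never writing down individual entropies.
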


\begin{proof}
The result holds trivially if $\varphi$ has maximal support, $\mathrm{supp}\,\varphi=\mathrm{supp}\,\sum_{\Pi\in P}\Pi$,
since then $\{\rho\in\mathcal{D}_{P}:\rho\perp\varphi\}$ is an empty
set. Therefore, we assume that $\mathrm{supp}\,\varphi\ne\mathrm{supp}\,\sum_{\Pi\in P}\Pi$
and prove the result by contradiction.

Pick some $\rho\in\mathcal{D}_{P}$ such that $\rho\perp\varphi$
and $\Phi(\rho)\not\perp\Phi(\varphi)$. Let $\rho$ have a spectral
resolution $\rho=\sum_{i}p_{i}\vert i\rangle\langle i\vert$, and
note that $\rho\perp\varphi$ implies that
\begin{align}
\vert i\rangle\langle i\vert\perp\varphi\qquad\forall i:p_{i}>0.\label{eq:tv3}
\end{align}
Thus $\Phi(\rho)\not\perp\Phi(\varphi)$ implies that $\Phi(\vert i\rangle\langle i\vert)\not\perp\Phi(\varphi)$
for some $i$ such that $p_{i}>0$, which means that
\begin{align}
1>\frac{1}{2}\left\Vert \Phi(\vert i\rangle\langle i\vert)-\Phi(\varphi)\right\Vert .\label{eq:tvbnd}
\end{align}
Given some pure state $\vert i\rangle\langle i\vert$ that satisfies
\ref{eq:tv3,eq:tvbnd}, define $\varphi(\lambda):=(1-\lambda)\varphi+\lambda\vert i\rangle\langle i\vert$.
Rearrange \ref{eq:conv1} in \ref{prop:epConvMix} to write
\begin{multline*}
\Sigma(\vert i\rangle\langle i\vert)-\Sigma(\varphi)=\\
\frac{\Sigma(\varphi(\lambda))-\Sigma(\varphi)-(1-\lambda)\Delta S({\varphi}\Vert{\varphi(\lambda)})}{\lambda}-\Delta S({\vert i\rangle\langle i\vert}\Vert{\varphi(\lambda)}).
\end{multline*}
Since $\Sigma(\varphi(\lambda))-\Sigma(\varphi)\ge0$ (since $\varphi$
is a minimizer) and $-\Delta S({\varphi}\Vert{\varphi(\lambda)})\ge0$
by monotonicity (\ref{prop:KLprop}(\klmonopropref )),
\begin{equation}
\Sigma(\vert i\rangle\langle i\vert)-\Sigma(\varphi)\ge-\Delta S({\vert i\rangle\langle i\vert}\Vert{\varphi(\lambda)}).\label{eq:w1}
\end{equation}
Next, rewrite the RHS as
\begin{multline}
-\Delta S({\vert i\rangle\langle i\vert}\Vert{\varphi(\lambda)})=\\
(-\ln\lambda)\frac{S(\vert i\rangle\langle i\vert\Vert\varphi(\lambda))-S(\Phi(\vert i\rangle\langle i\vert)\Vert\Phi(\varphi(\lambda)))}{-\ln\lambda}.\label{eq:w0}
\end{multline}
Audenaert showed that $S(\rho\Vert\varphi(\lambda))/(-\ln\lambda)=1$
when $\rho\perp\varphi$ \mycitep[Thm.~1]{audenaertQuantumSkewDivergence2014}
and $S(\Phi(\rho)\Vert\Phi(\varphi(\lambda)))/(-\ln\lambda)\le\frac{1}{2}\left\Vert \Phi(\rho)-\Phi(\varphi)\right\Vert _{1}$
\mycitep[Thm.~9]{audenaertQuantumSkewDivergence2014}. Plugging into
\ref{eq:w0} gives
\[
-\Delta S({\vert i\rangle\langle i\vert}\Vert{\varphi(\lambda)})\ge(-\ln\lambda)\left[1-\frac{1}{2}\left\Vert \Phi(\vert i\rangle\langle i\vert)-\Phi(\varphi)\right\Vert _{1}\right].
\]
Given \ref{eq:tvbnd}, the term inside the brackets must be strictly
positive. Therefore,
\begin{multline}
\lim_{{\lambda\to0^{+}}}-\Delta S({\vert i\rangle\langle i\vert}\Vert{\varphi(\lambda)})\ge\\
\left[1-\frac{1}{2}\left\Vert \Phi(\vert i\rangle\langle i\vert)-\Phi(\varphi)\right\Vert _{1}\right]\lim_{{\lambda\to0^{+}}}(-\ln\lambda)=\infty.\label{eq:lim1}
\end{multline}
Combining with \ref{eq:w1} gives
\[
\Sigma(\vert i\rangle\langle i\vert)-\Sigma(\varphi)\ge-\lim_{{\lambda\to0^{+}}}\Delta S({\rho}\Vert{\varphi(\lambda)})=\infty.
\]
This can only hold if $\Sigma(\vert i\rangle\langle i\vert)=\infty$,
contradicting our assumption that $\Sigma$ is finite for pure states.
Thus, $\varphi$ cannot be a minimizer.
\end{proof}

\subsection{Properties of quantum relative entropy and EP}
\begin{prop}
\label{prop:KLprop}For any $\rho,\varphi\in\mathcal{D}$ and positive
map $\Phi$, the relative entropy $S(\rho\Vert\varphi)$ obeys the
following properties:
\begin{enumerate}
\item[\klconvexprop] \textup{$S(\rho\Vert\varphi)$ is jointly convex in both arguments.}
\item[\kllimAaProp] $\lim_{{\lambda\to0^{+}}}S(\rho\Vert(1-\lambda)\varphi+\lambda\rho)=S(\rho\Vert\varphi)$.
\item[\kllimBaProp] \emph{If $S(\rho\Vert\varphi)<\infty$, then}
\begin{equation}
\lim_{{\lambda\to0^{+}}}\frac{1-\lambda}{\lambda}S(\varphi\Vert(1-\lambda)\varphi+\lambda\rho)=0.\label{eq:mz0AA}
\end{equation}
\item[\klupperboundprop] \textup{If $\ensuremath{\varphi\ge\alpha\rho}$ for some $\ensuremath{\alpha>0}$,}
\textup{then}
\begin{align}
S(\rho\Vert\varphi)\le-\ln\alpha<\infty\text{.}\label{eq:boundKL}
\end{align}
\item[\klmonoprop] Monotonicity\emph{: if $S(\rho\Vert\varphi)<\infty$, then
\[
\Delta S({\rho}\Vert{\varphi}):=S(\Phi(\rho)\Vert\Phi(\varphi))-S(\rho\Vert\varphi)\le0.
\]
}
\end{enumerate}
\end{prop}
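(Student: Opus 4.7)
My plan is to handle the five claims in the order that best exposes their structure: I will treat (I) and (V) as established results from the literature, (IV) and (II) as short applications of operator monotonicity of the logarithm, and (III) as the main technical step, requiring a Fr\'echet-derivative computation.

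Joint convexity (I) is the classical Lindblad--Lieb theorem, which I will simply cite; alternatively it follows from the monotonicity (V) applied to a ``classical register'' embedding of the convex mixture followed by a partial trace. Monotonicity (V) under CPTP maps is Lindblad's theorem, and its extension to positive trace-preserving maps is the result of \citep{muller2017monotonicity} already invoked earlier in the paper. For (IV), the assumption $\varphi \ge \alpha\rho$ with $\alpha > 0$ forces $\mathrm{supp}\,\rho \subseteq \mathrm{supp}\,\varphi$, so relative entropy is well-defined, and it gives $\rho \le \alpha^{-1}\varphi$. Restricting to $\mathrm{supp}\,\rho$ (where both operators are positive-definite), operator monotonicity of the logarithm yields $\ln\rho \le -\ln\alpha + \ln\varphi$, and tracing against $\rho$ gives $S(\rho\Vert\varphi) \le -\ln\alpha$. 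For (II), the same trick applies: $\varphi(\lambda) \ge (1-\lambda)\varphi$ and operator monotonicity yield $\ln\varphi(\lambda) \ge \ln(1-\lambda)\,\Pi^{\varphi} + \ln\varphi$ on $\mathrm{supp}\,\varphi$, where $\Pi^{\varphi}$ is the projector onto $\mathrm{supp}\,\varphi$. When $S(\rho\Vert\varphi)<\infty$ one has $\mathrm{supp}\,\rho\subseteq\mathrm{supp}\,\varphi$, so pairing with $-\rho$ gives the upper bound $S(\rho\Vert\varphi(\lambda)) \le S(\rho\Vert\varphi) - \ln(1-\lambda) \to S(\rho\Vert\varphi)$; the matching lower bound comes from joint lower semicontinuity of relative entropy. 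When $S(\rho\Vert\varphi)=\infty$, lower semicontinuity alone forces the limit to be $\infty$.

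The core technical step is (III). Set $h(\lambda) := S(\varphi\Vert\varphi(\lambda))$; this is convex in $\lambda$, non-negative, and satisfies $h(0)=0$. Hence $h(\lambda)/\lambda$ is non-decreasing on $(0,1]$, so its limit as $\lambda \to 0^+$ exists and equals the right derivative $h'(0^+) \ge 0$; the prefactor $(1-\lambda)/\lambda$ does not change the limiting value. It thus suffices to show $h'(0^+) = 0$. I plan to compute this via the Fr\'echet-derivative formula
\begin{equation*}
\partial_{\lambda}\ln\varphi(\lambda)\big|_{\lambda=0} = \int_{0}^{\infty} (tI+\varphi)^{-1}(\rho-\varphi)(tI+\varphi)^{-1}\,dt
\end{equation*}
on $\mathrm{supp}\,\varphi$, which gives $h'(0^+) = -\mathrm{tr}\{\Pi^{\varphi}(\rho-\varphi)\} = 0$ upon using cyclicity of the trace, the spectral identity $\int_{0}^{\infty} \varphi(tI+\varphi)^{-2}\,dt = \Pi^{\varphi}$, the support condition $\Pi^{\varphi}\rho=\rho$ coming from $S(\rho\Vert\varphi)<\infty$, and $\mathrm{tr}\,\rho = \mathrm{tr}\,\varphi = 1$. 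The main obstacle is justifying this Fr\'echet calculation in infinite dimensions, where $\varphi$ need not be bounded below on its support. I would handle this by first establishing the identity for the finite-rank spectral truncations of $\varphi$ (where the calculation is elementary) and then passing to the limit using lower semicontinuity of relative entropy together with the upper bound $h(\lambda) \le -\ln(1-\lambda)$ already available from step (II) to control the tail.
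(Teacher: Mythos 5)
Your handling of (I), (IV), and (V) coincides with the paper's: (I) and (V) are delegated to the literature, and (IV) is the same operator-monotonicity-of-the-logarithm computation. For (II) and (III) you take genuinely different routes. The paper proves (II) abstractly, from convexity plus lower semicontinuity of relative entropy together with a general fact about closed convex functions restricted to line segments (Rockafellar, Cor.~7.5.1); your sandwich --- the explicit upper bound $S(\rho\Vert\varphi(\lambda))\le S(\rho\Vert\varphi)-\ln(1-\lambda)$ obtained from $\varphi(\lambda)\ge(1-\lambda)\varphi$, played against the lower bound from lower semicontinuity --- is more elementary and handles the $S(\rho\Vert\varphi)=\infty$ case cleanly. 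For (III) the paper factors out $-\ln(1-\lambda)$ and quotes Audenaert's telescopic-relative-entropy theorem, whose conclusion is exactly the value $1-\mathrm{tr}\{\Pi^{\varphi}\rho\}$ that your Fr\'echet-derivative computation produces; in effect you are re-proving the special case of that theorem needed here. Your reduction of the limit to the one-sided derivative $h'(0^{+})$ via the monotone difference quotient of the convex function $h(\lambda)=S(\varphi\Vert\varphi(\lambda))$ is correct, as is the finite-dimensional, full-rank derivative evaluation.

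The one step that is not yet a proof is the infinite-dimensional (or rank-deficient) justification of that derivative. What must be established is an \emph{upper} bound $h(\lambda)=o(\lambda)$; lower semicontinuity of relative entropy only controls $h$ from below along limits of finite-rank truncations, and the bound $h(\lambda)\le-\ln(1-\lambda)$ only yields $h'(0^{+})\le1$, so the ``truncate and pass to the limit'' plan as stated does not close. A fix that stays entirely within the paper's toolbox: Donald's identity (the compensation identity, which the paper already invokes in proving \ref{prop:epConvMix}) applied with reference state $\varphi$ and mixture $\varphi(\lambda)$ gives $(1-\lambda)S(\varphi\Vert\varphi(\lambda))+\lambda S(\rho\Vert\varphi(\lambda))\le\lambda S(\rho\Vert\varphi)$, hence
\begin{equation*}
0\le\frac{1-\lambda}{\lambda}S(\varphi\Vert\varphi(\lambda))\le S(\rho\Vert\varphi)-S(\rho\Vert\varphi(\lambda)),
\end{equation*}
and the right-hand side tends to $0$ as $\lambda\to0^{+}$ by your part (II). This proves (III) with no derivative calculus and no dimension restriction.
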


\begin{proof}
\klconvexprop Proved in \mycitep[Lemma~2]{lindbladExpectationsEntropyInequalities1974}.\\

\noindent \kllimAaProp It is clear that $\lim_{{\lambda\to0^{+}}}(1-\lambda)\varphi+\lambda\rho=\varphi$
in the topology of the trace norm. Note that relative entropy is convex
and lower-semicontinuous in trace norm \citep{wehrlGeneralPropertiesEntropy1978}.
The result then follows from \mycitep[Corollary~7.5.1]{rockafellarConvexAnalysis1970}.\\

\noindent \noindent \kllimBaProp Define $f(\lambda):=-\frac{1-\lambda}{\lambda}\ln(1-\lambda)$
and then write
\begin{align}
& \lim_{{\lambda\to0^{+}}}\frac{1-\lambda}{\lambda}S(\varphi\Vert(1-\lambda)\varphi+\lambda\rho)\nonumber \\
& =\lim_{{\lambda\to0^{+}}}f(\lambda)\lim_{{\lambda\to0^{+}}}\frac{S(\varphi\Vert(1-\lambda)\varphi+\lambda\rho)}{-\ln(1-\lambda)}\\
& =\lim_{{\lambda\to0^{+}}}\frac{S(\varphi\Vert(1-\lambda)\varphi+\lambda\rho)}{-\ln(1-\lambda)}\label{eq:mz0a}\\
& =1-\mathrm{tr}\{\Pi^{\varphi}\rho\},\label{eq:mz0b}
\end{align}
where $\Pi^{\varphi}$ indicates a projection onto the support of
$\varphi$. In \ref{eq:mz0a}, we used that $\lim_{{\lambda\to0^{+}}}f(\lambda)=1$
from L'H\^{o}pital's rule, and in \ref{eq:mz0b} we used \mycitep[Thm.~1]{audenaert2011telescopic}.
From the definition of relative entropy in \ref{eq:relentDef}, $S(\rho\Vert\varphi)<\infty$
implies that $\mathrm{supp}\,\rho\subseteq\mathrm{supp}\,\varphi$,
so $\mathrm{tr}\{\Pi^{\varphi}\rho\}=1$. Plugging into \ref{eq:mz0b}
gives \ref{eq:mz0AA}.\\

\noindent \noindent \klupperboundprop By monotonicity of operator
logarithm, $\ensuremath{\varphi\ge\alpha\rho}$ implies $\ln\ensuremath{\varphi\ge\ln\alpha\rho=\ln\alpha+\ln\rho}$.
The claim follows by plugging this into the definition of relative entropy in \ref{eq:relentDef}.\\

\noindent \noindent \klmonoprop Proved in \citep{muller2017monotonicity}.
\end{proof}
\begin{prop}
\label{prop:epConvMix}Consider an EP-type function $\Sigma$, as
in \ref{eq:functype1} or \ref{eq:functype2}. Then, for any $\rho,\varphi\in\mathcal{D}$,
$\lambda\in(0,1)$ such that $\Sigma(\varphi(\lambda))<\infty$:
\begin{multline}
(1-\lambda)\Sigma(\varphi)+\lambda\Sigma(\rho)-\Sigma(\varphi(\lambda))=\\
-(1-\lambda)\Delta S({\varphi}\Vert{\varphi(\lambda)})-\lambda\Delta S({\rho}\Vert{\varphi(\lambda)}).\label{eq:conv1}
\end{multline}
\end{prop}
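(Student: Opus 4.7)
The plan is to reduce the claim to the basic mixing identity for von Neumann entropy, applied once on the input side and once on the output side of the channel. The key algebraic fact is that, for any pair of positive trace-class operators $A_1, A_2$ with finite entropies and $A := (1-\lambda) A_1 + \lambda A_2$,
\[
(1-\lambda) S(A_1) + \lambda S(A_2) - S(A) = -(1-\lambda) S(A_1 \Vert A) - \lambda S(A_2 \Vert A),
\]
which follows directly from $S(A_i \Vert A) = -S(A_i) - \mathrm{tr}(A_i \ln A)$ together with linearity of $\mathrm{tr}(\,\cdot\,\ln A)$ in its first slot and the fact that $(1-\lambda) A_1 + \lambda A_2 = A$.

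For EP of form \ref{eq:functype1}, first observe that under the hypothesis $\Sigma(\varphi(\lambda)) < \infty$, the paper's convention that $\Sigma$ equals $\infty$ whenever the entropy is infinite forces $S(\varphi(\lambda)) < \infty$; concavity of entropy, $S(\varphi(\lambda)) \ge (1-\lambda) S(\varphi) + \lambda S(\rho)$, then yields $S(\varphi), S(\rho) < \infty$, so every entropy in sight is finite and well-defined. Linearity of $Q$ makes $(1-\lambda) Q(\varphi) + \lambda Q(\rho) - Q(\varphi(\lambda))$ vanish, while linearity of $\Phi$ yields $\Phi(\varphi(\lambda)) = (1-\lambda) \Phi(\varphi) + \lambda \Phi(\rho)$. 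Applying the mixing identity above first to the triple $\varphi, \rho, \varphi(\lambda)$ and then to $\Phi(\varphi), \Phi(\rho), \Phi(\varphi(\lambda))$ and subtracting (output minus input) produces exactly $-(1-\lambda) \Delta S({\varphi}\Vert{\varphi(\lambda)}) - \lambda \Delta S({\rho}\Vert{\varphi(\lambda)})$, as required.

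For EP of form \ref{eq:functype2}, I would invoke \ref{prop:formequiv} to rewrite $\Sigma$ as $S(\Phi(\rho)) - S(\rho) + \tilde{Q}(\rho)$ with an effective entropy flow $\tilde{Q}$ that remains linear in $\rho$, reducing the claim to the form-\ref{eq:functype1} case just treated. The main obstacle is that \ref{prop:formequiv} requires $S(\rho) < \infty$, whereas in infinite dimensions the relative-entropy form \ref{eq:functype2} can be finite even when $S(\rho) = \infty$. To close this gap I would either (i) derive a mixing identity directly in terms of the relative entropy appearing in \ref{eq:functype2}, exploiting that both arguments are linear in $\rho$ and that $V$ is an isometry, so that $\ln(\Phi(\rho) \otimes \omega) = \ln \Phi(\rho) \otimes I + I \otimes \ln \omega$ splits the cross terms cleanly, or (ii) approximate $\rho$ and $\varphi$ by finite-rank truncations supported on the top eigenvectors, apply the finite-entropy identity there, and pass to the limit using lower semicontinuity of relative entropy together with the continuity properties of \ref{prop:KLprop}.
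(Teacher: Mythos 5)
Your treatment of the form-\ref{eq:functype1} case is correct and is essentially the paper's own argument: establish finiteness of all entropies from $\Sigma(\varphi(\lambda))<\infty$ via concavity, cancel the linear $Q$ terms, and apply the entropy-mixing identity $S(\bar{\rho})-\sum_i z_i S(\rho_i)=\sum_i z_i S(\rho_i\Vert\bar{\rho})$ once on the input side and once on the output side. The paper cites this identity rather than rederiving it, but your derivation of it from $S(A_i\Vert A)=-S(A_i)-\mathrm{tr}\{A_i\ln A\}$ is fine given that all the entropies involved have been shown finite.

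For the form-\ref{eq:functype2} case you correctly diagnose that reduction via \ref{prop:formequiv} fails when $S(\rho)=\infty$, but neither of your proposed repairs closes the gap as stated. The paper does not reduce to \ref{eq:functype1}; it works directly with the relative-entropy form using two tools that operate entirely at the level of relative entropies and therefore never require the marginal entropies to be finite: Donald's identity, $S(\bar{\rho}\Vert\rho')=\sum_i z_i[S(\rho_i\Vert\rho')-S(\rho_i\Vert\bar{\rho})]$, applied with $\rho'=\Phi(\bar{\rho})\otimes\omega$ and combined with invariance of relative entropy under the isometry $V$, and the chain-rule decomposition $S(\Psi(\rho_i)\Vert\Phi(\bar{\rho})\otimes\omega)=S(\Phi(\rho_i)\Vert\Phi(\bar{\rho}))+S(\Psi(\rho_i)\Vert\Phi(\rho_i)\otimes\omega)$. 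Your option (i) — splitting $\ln(\Phi(\rho)\otimes\omega)$ into a sum of logarithms and cancelling cross terms — implicitly requires expanding each relative entropy as $-S(\cdot)-\mathrm{tr}\{\cdot\ln(\cdot)\}$, i.e., exactly the decomposition into individually possibly-infinite pieces that you set out to avoid; it needs to be replaced by the Donald/chain-rule machinery to become rigorous. Your option (ii) is also problematic for proving an \emph{equality}: lower semicontinuity of relative entropy under finite-rank truncation gives only one-sided bounds in the limit, and establishing the matching upper bound is nontrivial (the paper confronts precisely this difficulty elsewhere, in \ref{prop:epLSC}, and only obtains a $\liminf$ inequality there). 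So the second half of the proposition needs a genuinely different argument from the one you sketch.
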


\begin{proof}
\emph{EP-type functions as in \ref{eq:functype1}.} Assume that $\Sigma(\varphi(\lambda))<\infty$.
Then, given the definition in \ref{eq:functype1}, it must be that
$S(\varphi(\lambda))$, $S(\Phi(\varphi(\lambda)))$, and $Q(\varphi(\lambda))$
are finite. By concavity of entropy, this implies that $S(\rho)$,
$S(\varphi)$, $S(\Phi(\rho))$, and $S(\Phi(\varphi))$ are finite.
Since $Q$ is linear, $Q(\varphi(\lambda))=(1-\lambda)Q(\varphi)+\lambda Q(\rho)$,
which implies that $Q(\rho)$ and $Q(\varphi)$ are finite. Again
using that $Q$ is linear, write
\begin{align*}
& (1-\lambda)\Sigma(\varphi)+\lambda\Sigma(\rho)-\Sigma(\varphi(\lambda))\\
& =[S(\varphi(\lambda))-(1-\lambda)S(\varphi)-\lambda S(\rho)]\\
& \quad-[S(\Phi(\varphi(\lambda)))-(1-\lambda)S(\Phi(\varphi))-\lambda S(\Phi(\rho))].
\end{align*}
\ref{eq:conv1} follows from the following identity (Eq. 3 in \citep{shirokovLowerSemicontinuityEntropic2017a}):
\begin{align*}
& S(\varphi(\lambda))-(1-\lambda)S(\varphi)-\lambda S(\rho)\\
& \quad=(1-\lambda)S(\varphi\Vert\varphi(\lambda))+\lambda S(\rho\Vert\varphi(\lambda)),
\end{align*}
as well as the analogous identity for $S(\Phi(\varphi(\lambda)))$.

\vspace{5pt}
\emph{EP-type functions as in \ref{eq:functype2}.} For notational
convenience define $\Psi(\rho):=V(\rho\otimes\omega)V^{\dagger}$.
Donald's identity \mycitep[Lemma~2.9]{donaldFurtherResultsRelative1987}
states that for any state $\rho'\in\mathcal{D}$ and any convex mixture
$\bar{\rho}:=\sum_{i}z_{i}\rho_{i}$,
\begin{equation}
S(\bar{\rho}\Vert\rho')=\sum_{i}z_{i}[S(\rho_{i}\Vert\rho')-S(\rho_{i}\Vert\bar{\rho})],\label{eq:donald}
\end{equation}
Using this, we write
\begin{align*}
& S(\Psi(\bar{\rho})\Vert\Phi(\bar{\rho})\otimes\omega)\\
& =S({\textstyle \sum_{i}z_{i}}\Psi(\rho_{i})\Vert\Phi(\bar{\rho})\otimes\omega)\\
& =\sum_{i}z_{i}[S(\Psi(\rho_{i})\Vert\Phi(\bar{\rho})\otimes\omega)-S(\Psi(\rho_{i})\Vert\Psi(\bar{\rho}))]\\
& =\sum_{i}z_{i}[S(\Psi(\rho_{i})\Vert\Phi(\bar{\rho})\otimes\omega)-S(\rho_{i}\Vert\bar{\rho})]
\end{align*}
where in the last line we used the invariance of relative entropy
under isometries. Then, using \mycitep[Thm.~3.12]{petzQuantumInformationTheory2008},
\begin{align}
& S(\Psi(\rho_{i})\Vert\Phi(\bar{\rho})\otimes\omega)\label{eq:petzTr}\\
& =S(\mathrm{tr}_{Y'}\Psi(\rho_{i})\Vert\Phi(\bar{\rho}))+S(\Psi(\rho_{i})\Vert\mathrm{tr}_{Y'}\Psi(\rho_{i})\otimes\omega)\nonumber \\
& =S(\Phi(\rho_{i})\Vert\Phi(\bar{\rho}))+S(\Psi(\rho_{i})\Vert\Phi(\rho_{i})\otimes\omega)\nonumber
\end{align}
Combining gives
\begin{align}
& S(\Psi(\bar{\rho})\Vert\Phi(\bar{\rho})\otimes\omega)\label{eq:zz1}\\
& =\sum_{i}z_{i}[S(\Psi(\rho_{i})\Vert\Phi(\rho_{i})\otimes\omega)+\Delta S({\rho_{i}}\Vert{\bar{\rho}})].\nonumber
\end{align}
Taking $z_{1}=1-\lambda,z_{2}=\lambda$ and $\bar{\rho}=\varphi(\lambda),\rho_{1}=\varphi,\rho_{2}=\rho$
in this identity and rearranging leads to \ref{eq:conv1}:
\begin{align}
& (1-\lambda)\Delta S(\varphi\Vert\varphi(\lambda))+\lambda\Delta S({\rho}\Vert{\varphi(\lambda)})\label{eq:zzz0}\\
& =S(\Psi(\varphi(\lambda))\Vert\Phi(\varphi(\lambda))\otimes\omega)\nonumber \\
& \quad-(1-\lambda)S(\Psi(\varphi)\Vert\Phi(\varphi)\otimes\omega)-\lambda S(\Psi(\rho)\Vert\Phi(\rho)\otimes\omega)\nonumber \\
& =\Sigma(\varphi(\lambda))-Q'(\varphi(\lambda))\nonumber \\
& \quad-(1-\lambda)S(\Psi(\varphi)\Vert\Phi(\varphi)\otimes\omega)-\lambda S(\Psi(\rho)\Vert\Phi(\rho)\otimes\omega)\nonumber \\
& =\Sigma(\varphi(\lambda))-(1-\lambda)Q'(\varphi)-\lambda Q'(\rho)\nonumber \\
& \quad-(1-\lambda)S(\Psi(\varphi)\Vert\Phi(\varphi)\otimes\omega)-\lambda S(\Psi(\rho)\Vert\Phi(\rho)\otimes\omega)\nonumber \\
& =\Sigma(\varphi(\lambda))-(1-\lambda)\Sigma(\varphi)-\lambda\Sigma(\rho).\nonumber
\end{align}
\end{proof}
\begin{prop}
\label{prop:epLSC}Consider an EP-type function $\Sigma$, as in \ref{eq:functype1}
and \ref{eq:functype2}. For any $\rho,\varphi\in\mathcal{D}_{P}$
with $\Sigma(\rho),\Sigma(\varphi),S(\rho\Vert\varphi)<\infty$, there
is a sequence $\{\rho_{n}\}\subset\mathcal{D}_{P}$ such that:
\begin{enumerate}
\item[I.] For all $n$, there is some $\alpha_{n}>0$ such that $\rho_{n}\ge\alpha_{n}\varphi$.
\item[II.] $\liminf_{n\to\infty}\Sigma(\rho_{n})+\Delta S({\rho_{n}}\Vert{\varphi})\ge\Sigma(\rho)+\Delta S({\rho}\Vert{\varphi})$.
\end{enumerate}
\end{prop}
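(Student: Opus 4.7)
The plan is to take the explicit convex interpolation $\rho_n := (1-\lambda_n)\rho + \lambda_n\varphi$ for some positive sequence $\lambda_n\searrow 0$ (for concreteness, $\lambda_n = 1/n$). Since $\mathcal{D}_P$ is convex and contains both endpoints, $\rho_n\in\mathcal{D}_P$ for every $n$, and part (I) is immediate from $\rho_n - \lambda_n\varphi = (1-\lambda_n)\rho \ge 0$, so $\alpha_n := \lambda_n > 0$ works.

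For part (II), I intend to prove the stronger claim that $\Sigma(\rho_n)\to\Sigma(\rho)$ and $\Delta S(\rho_n\Vert\varphi)\to\Delta S(\rho\Vert\varphi)$ individually. After a brief finiteness check (the entropy-of-mixture identity $S(\rho_n)\le (1-\lambda_n)S(\rho)+\lambda_n S(\varphi)+h_2(\lambda_n)$ together with linearity of $Q$ gives $\Sigma(\rho_n)<\infty$ for the form \cref{eq:functype1}; for the form \cref{eq:functype2} one uses the Donald-type decomposition \cref{eq:zz1} from the proof of \cref{prop:epConvMix}), I can invoke the convexity identity \cref{eq:conv1} to rearrange
\begin{align*}
\Sigma(\rho_n) &= (1-\lambda_n)\Sigma(\rho) + \lambda_n\Sigma(\varphi)\\
&\quad + (1-\lambda_n)\Delta S(\rho\Vert\rho_n) + \lambda_n\Delta S(\varphi\Vert\rho_n).
\end{align*}
The two correction $\Delta S$ terms vanish quantitatively using the operator inequalities $\rho_n\ge (1-\lambda_n)\rho$ and $\rho_n\ge\lambda_n\varphi$: \cref{prop:KLprop}(\klupperboundpropref) combined with monotonicity (\klmonopropref) gives $|(1-\lambda_n)\Delta S(\rho\Vert\rho_n)|\le -\ln(1-\lambda_n)\to 0$ and $\lambda_n|\Delta S(\varphi\Vert\rho_n)|\le -\lambda_n\ln\lambda_n\to 0$, yielding $\Sigma(\rho_n)\to\Sigma(\rho)$.

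For the remaining piece, joint convexity of relative entropy (\cref{prop:KLprop}(\klconvexref)) together with $S(\varphi\Vert\varphi)=0$ gives the $\limsup$ bound $S(\rho_n\Vert\varphi)\le (1-\lambda_n)S(\rho\Vert\varphi)$, while lower-semicontinuity of relative entropy in trace norm supplies the matching $\liminf$; the same argument applied to $\Phi(\rho_n)=(1-\lambda_n)\Phi(\rho)+\lambda_n\Phi(\varphi)\to\Phi(\rho)$ handles $S(\Phi(\rho_n)\Vert\Phi(\varphi))$, so $\Delta S(\rho_n\Vert\varphi)\to\Delta S(\rho\Vert\varphi)$. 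Summing these two limits gives (II) with equality, which is stronger than the stated $\liminf$ bound. I expect the main technical hazard to lie in the opening finiteness check: in infinite dimensions, and especially for the form \cref{eq:functype2}, one must carefully track that each ingredient of $\Sigma(\rho_n)$ remains finite uniformly in $n$, which is routine but notationally delicate; everything else is controlled by the quantitative bound $S(\cdot\Vert\omega)\le-\ln\alpha$ from \cref{prop:KLprop}(\klupperboundpropref) whenever $\omega\ge\alpha(\cdot)$.
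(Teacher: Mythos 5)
Your analysis of part (II) for the interpolated sequence $\rho_n=(1-\lambda_n)\rho+\lambda_n\varphi$ is sound on its own terms: the rearrangement of \ref{eq:conv1}, the quantitative bounds $0\le-(1-\lambda_n)\Delta S({\rho}\Vert{\rho_n})\le-\ln(1-\lambda_n)$ and $0\le-\lambda_n\Delta S({\varphi}\Vert{\rho_n})\le-\lambda_n\ln\lambda_n$, and the convexity-plus-lower-semicontinuity squeeze for $S(\rho_n\Vert\varphi)$ and $S(\Phi(\rho_n)\Vert\Phi(\varphi))$ are all correct and do give (II) with equality for that sequence. The genuine gap is in part (I), and it is conceptual rather than technical. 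The inequality ``$\rho_n\ge\alpha_n\varphi$'' in the statement is a typo with the two sides swapped: the paper's own proof derives $\varphi\ge\alpha_n\rho_n$ from \ref{eq:nz0} and declares that this ``proves part I,'' and the lemma is consumed in \ref{thm:eqresult} in exactly that form --- one needs the mixture $(1-\lambda)\varphi+\lambda\rho_n$ to remain a state for some $\lambda<0$ so that the equality case \ref{eq:dn1} applies to each $\rho_n$, and that extrapolation is possible precisely when $\varphi$ dominates a positive multiple of $\rho_n$, not the reverse. Your $\rho_n$ satisfies $\varphi\ge\alpha\rho_n$ for some $\alpha>0$ if and only if $\varphi\ge\alpha'\rho$ for some $\alpha'>0$ already holds, which is exactly the regime the lemma is built to circumvent (in infinite dimensions one can have $S(\rho\Vert\varphi)<\infty$ while $\sup\{\alpha:\varphi\ge\alpha\rho\}=0$). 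So your construction proves a literally-true but vacuous version of the lemma that cannot feed the limiting argument it exists to support.

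The missing idea is the paper's choice of approximants: $\rho_n=\Pi_n^\varphi\rho\Pi_n^\varphi/\mathrm{tr}\{\Pi_n^\varphi\rho\}$, the normalized compression of $\rho$ onto the span of the top $n$ eigenvectors of $\varphi$ (\ref{eq:pndef}). This yields $\mathrm{tr}\{\Pi_n^\varphi\rho\}\,\rho_n\le\Pi_n^\varphi\le r_n^{-1}\varphi$ and hence the needed domination with $\alpha_n=r_n\,\mathrm{tr}\{\Pi_n^\varphi\rho\}>0$; one must also check $\rho_n\in\mathcal{D}_P$, which is not automatic and is handled by \ref{lem:staysinset}. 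For this sequence part (II) is only a $\liminf$ statement, and your clean route via \ref{eq:conv1} is no longer available: the paper instead establishes $\liminf_n\Sigma(\rho_n)\ge\Sigma(\rho)$ from lower semicontinuity of the ingredients of $\Sigma$ (\ref{prop:lscappcl-2}), gets $S(\rho_n\Vert\varphi)\to S(\rho\Vert\varphi)$ from Donald's convergence lemma rather than joint convexity, and uses lower semicontinuity of relative entropy for the $\Phi$-image term. If you want to repair your argument, keep your two-part structure but replace the convex interpolation by this finite-rank truncation and redo (II) accordingly.
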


\begin{proof}
Write a spectral resolution of $\varphi$ as $\varphi=\sum_{i}r_{i}\vert i\rangle\langle i\vert$,
where $r_{1},r_{2},\dots$ indicate the non-zero eigenvalues of $\varphi$
in decreasing order. Let $\Pi_{n}^{\varphi}:=\sum_{i=1}^{n}\vert i\rangle\langle i\vert$
indicate the projection onto the top $n$ eigenvectors of $\varphi$,
and let
\begin{equation}
\rho_{n}:=\Pi_{n}^{\varphi}\rho\Pi_{n}^{\varphi}/\mathrm{tr}\{\Pi_{n}^{\varphi}\rho\}\label{eq:pndef}
\end{equation}
indicate the normalized projection of $\rho$. Note that the basis
$\{\vert i\rangle\}$ can always be chosen so that $\rho_{n}\in\mathcal{D}_{P}$
for all $n$, by \ref{lem:staysinset} below. We then have the following
inequalities:
\begin{equation}
\mathrm{tr}\{\Pi_{n}^{\varphi}\rho\}\rho_{n}=\Pi_{n}^{\varphi}\rho\Pi_{n}^{\varphi}\le\Pi_{n}^{\varphi}I\Pi_{n}^{\varphi}=\Pi_{n}^{\varphi}\le\frac{1}{r_{n}}\varphi.\label{eq:nz0}
\end{equation}
\ref{eq:nz0} implies that $\varphi\ge\alpha_{n}\rho_{n}$ for $\alpha_{n}=r_{n}\mathrm{tr}\{\Pi_{n}^{\varphi}\rho\}>0$.
This proves part I.

Below in \ref{prop:lscappcl-2} we show that EP-type functions, as
in \ref{eq:functype1} and \ref{eq:functype2}, obey
\begin{equation}
\liminf_{n\to\infty}\Sigma(\rho_{n})\ge\Sigma(\rho).\label{eq:appEPLim0}
\end{equation}
One can also show that
\begin{align}
\lim_{n\to\infty}S(\rho_{n}\Vert\varphi) & =\lim_{n\to\infty}S(\rho_{n}\Vert\varphi_{n})-\ln\mathrm{tr}\{\Pi_{n}^{\varphi}\varphi\}\label{eq:mn1}\\
& =S(\rho\Vert\varphi).\label{eq:appKLlim0}
\end{align}
In the first line we defined $\varphi_{n}=\Pi_{n}^{\varphi}\varphi\Pi_{n}^{\varphi}/\mathrm{tr}\{\Pi_{n}^{\varphi}\varphi\}$,
and in the second line we used that $\mathrm{tr}\{\Pi_{n}^{\varphi}\varphi\}\to1$
and $S(\rho_{n}\Vert\varphi_{n})\to S(\rho\Vert\varphi)$ by \mycitep[Lemma~2.5]{donaldFurtherResultsRelative1987}.
Finally,
\begin{equation}
\liminf_{n\to\infty}S(\Phi(\rho_{n})\Vert\Phi(\varphi))\ge S(\Phi(\rho)\Vert\Phi(\varphi)),\label{eq:appKLlim1}
\end{equation}
by the lower-semicontinuity of relative entropy \citep{wehrlGeneralPropertiesEntropy1978}.
Combining \ref{eq:appEPLim0}, \ref{eq:appKLlim0}, and \ref{eq:appKLlim1}
proves part II.
\end{proof}
\begin{lem}
\label{prop:lscappcl-2}For any $\rho,\varphi\in\mathcal{D}_{P}$
with $\Sigma(\rho),\Sigma(\varphi),S(\rho\Vert\varphi)<\infty$, let
the sequence of states $\{\rho_{n}\}_{n}$ be defined as in the proof
of \ref{prop:epLSC}. Then, EP-type functions as in \ref{eq:functype1}
and \ref{eq:functype2} obey $\liminf_{n\to\infty}\Sigma(\rho_{n})\ge\Sigma(\rho)$.
\end{lem}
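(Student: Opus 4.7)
The plan is to prove trace-norm convergence $\rho_n\to\rho$ and then analyze the two admissible forms of $\Sigma$ separately.

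Since $S(\rho\Vert\varphi)<\infty$ implies $\mathrm{supp}\,\rho\subseteq\mathrm{supp}\,\varphi$, the spectral projections $\Pi_n^{\varphi}$ converge strongly to the support projection of $\varphi$, so $\tilde\rho_n:=\Pi_n^{\varphi}\rho\Pi_n^{\varphi}\to\rho$ in trace norm with $c_n=\mathrm{tr}\{\Pi_n^{\varphi}\rho\}\to 1$. Thus $\rho_n=\tilde\rho_n/c_n\to\rho$, and by continuity of the channel, $\Phi(\rho_n)\to\Phi(\rho)$ in trace norm. For EP-type functions of form \ref{eq:functype2}, the conclusion is then immediate: joint lower semicontinuity of quantum relative entropy in trace norm, applied to the continuous maps $\rho\mapsto V(\rho\otimes\omega)V^{\dagger}$ and $\rho\mapsto\Phi(\rho)\otimes\omega$, together with the assumed lower semicontinuity of $Q'$, gives $\liminf_n\Sigma(\rho_n)\ge\Sigma(\rho)$.

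The nontrivial case is form \ref{eq:functype1}, where the subtracted term $-S(\rho_n)$ obstructs a direct semicontinuity argument, since the von Neumann entropy is only lower semicontinuous. The strategy is to prove the stronger statement $S(\rho_n)\to S(\rho)$ using the identity
\[
S(\rho_n)=-\mathrm{tr}\{\rho_n\ln\varphi\}-S(\rho_n\Vert\varphi),
\]
which is valid because $\mathrm{supp}\,\rho_n\subseteq\mathrm{supp}\,\varphi$. The second term on the right converges, $S(\rho_n\Vert\varphi)\to S(\rho\Vert\varphi)$, as already established in the body of the surrounding proposition via Donald's lemma. For the first term, cyclicity of the trace and the fact that $\Pi_n^{\varphi}$ commutes with $\ln\varphi$ yield
\[
-\mathrm{tr}\{\tilde\rho_n\ln\varphi\}=\sum_{i\le n}(-\ln r_i)\,\langle i\vert\rho\vert i\rangle,
\]
a partial sum of non-negative terms. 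The full series equals $-\mathrm{tr}\{\rho\ln\varphi\}=S(\rho)+S(\rho\Vert\varphi)<\infty$, so monotone convergence gives $-\mathrm{tr}\{\tilde\rho_n\ln\varphi\}\to-\mathrm{tr}\{\rho\ln\varphi\}$, and dividing by $c_n\to 1$ gives $-\mathrm{tr}\{\rho_n\ln\varphi\}\to-\mathrm{tr}\{\rho\ln\varphi\}$. Hence $S(\rho_n)\to S(\rho)$.

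To conclude, lower semicontinuity of entropy gives $\liminf_n S(\Phi(\rho_n))\ge S(\Phi(\rho))$, and an analogous dominated-convergence argument (writing $Q(\rho)=\mathrm{tr}\{\rho A\}$ with $A$ bounded below and using $Q(\rho)<\infty$) yields $Q(\rho_n)\to Q(\rho)$. Assembling these gives
\[
\liminf_n\Sigma(\rho_n)\ge\liminf_n S(\Phi(\rho_n))-\lim_n S(\rho_n)+\lim_n Q(\rho_n)\ge\Sigma(\rho).
\]
The main obstacle is the entropy convergence $S(\rho_n)\to S(\rho)$, which in infinite dimensions cannot be obtained from semicontinuity alone; the trick is to split $S(\rho_n)$ into two pieces that both converge by elementary monotone-convergence arguments in the $\varphi$-eigenbasis.
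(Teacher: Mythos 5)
Your proof is essentially correct and, for the form-\ref{eq:functype2} case, coincides with the paper's (lower semicontinuity of relative entropy and of $Q'$ under the trace-norm convergence $\rho_n\to\rho$). For the form-\ref{eq:functype1} case you diverge in an interesting way: the paper simply cites Lindblad's Lemma~4 as a black box for $\lim_n S(\rho_n)=S(\rho)$, whereas you reprove this from scratch by writing $S(\rho_n)=-\mathrm{tr}\{\rho_n\ln\varphi\}-S(\rho_n\Vert\varphi)$, handling the first term by monotone convergence of the partial sums $\sum_{i\le n}(-\ln r_i)\langle i\vert\rho\vert i\rangle$ (finite because $-\mathrm{tr}\{\rho\ln\varphi\}=S(\rho)+S(\rho\Vert\varphi)<\infty$, using that $\Sigma(\rho)<\infty$ forces $S(\rho)<\infty$ under the paper's convention for \ref{eq:functype1}) and the second by the convergence $S(\rho_n\Vert\varphi)\to S(\rho\Vert\varphi)$ already established independently in the surrounding proposition, so there is no circularity. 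This is a legitimate and self-contained alternative; what it buys is transparency about exactly why the normalized finite-rank compressions preserve entropy in the limit, at the price of leaning on the specific structure of the $\varphi$-eigenbasis rather than a general cited lemma.

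The one step you should repair is the treatment of $Q$: you assume $Q(\rho)=\mathrm{tr}\{\rho A\}$ with $A$ bounded below in order to get full convergence $Q(\rho_n)\to Q(\rho)$, but the paper only assumes that $Q$ is a lower-semicontinuous linear functional, so no such trace representation is available. Fortunately you do not need convergence: since $\rho_n\to\rho$ in trace norm, lower semicontinuity alone gives $\liminf_n Q(\rho_n)\ge Q(\rho)$, and your final superadditivity-of-liminf assembly goes through unchanged because the $-S(\rho_n)$ term genuinely converges. With that substitution your argument matches the paper's conclusion under its actual hypotheses.
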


\begin{proof}
\emph{EP-type functions as in \ref{eq:functype1}}. Since $\Sigma(\rho)<\infty$,
it must be that $S(\rho)<\infty$. Then, $\liminf_{n\to\infty}S(\Phi(\rho_{n}))\ge S(\Phi(\rho))$
since entropy is lower-semicontinuous~\citep{wehrlGeneralPropertiesEntropy1978},
$\lim_{n}S(\rho_{n})=\lim_{n}S(\rho)$ by \mycitep[Lemma~4]{lindbladExpectationsEntropyInequalities1974},
and $\liminf_{n\to\infty}Q(\rho_{n})\ge Q(\rho_{n})$ by assumption
that $Q$ is lower-semicontinuous. Combining with the definition in
\ref{eq:functype1} gives \ref{eq:appEPLim0}.

\vspace{5pt}
\emph{EP-type functions as in \ref{eq:functype2}}. \ref{eq:appEPLim0}
holds because $\Sigma$, as defined in \ref{eq:functype2}, is lower-semicontinuous
(being the sum of two lower-semicontinuous functions, the relative
entropy \citep{wehrlGeneralPropertiesEntropy1978} and $Q'$).
\end{proof}

\subsection{Auxiliary lemma}

For the next result, we use the following notation: for any orthonormal
basis $\{\vert i\rangle\}$ and any subset of vectors $A\subseteq\{\vert i\rangle\}$,
\begin{equation}
\Pi^{A}=\sum_{\vert i\rangle\in A}\vert i\rangle\langle i\vert\label{eq:piS}
\end{equation}
indicate the projection onto the subspace spanned by $A$. In addition,
in analogy to \ref{eq:densBdef}, we use the following notation to
indicate the set of trace-class operators that are incoherent relative
to a set of orthogonal projections $P$.
\begin{equation}
\mathcal{T}_{P}:=\{\rho\in\mathcal{T}:\rho=\sum_{\Pi \in P}\Pi \rho\Pi \}\label{eq:tcsaBdef}
\end{equation}

\begin{lem}
\label{lem:staysinset}For any $\varphi,\rho\in\mathcal{T}_{P}$,
there is an orthonormal basis $\{\vert i\rangle\}$ such that $\varphi=\sum_{i}r_{i}\vert i\rangle\langle i\vert$
and for any $A\subseteq\{\vert i\rangle\}$, $\Pi^{A}\rho\Pi^{A}\in\mathcal{T}_{P}$.
\end{lem}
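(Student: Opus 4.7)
The plan is to exploit the block structure that membership in $\mathcal{T}_P$ forces on both $\varphi$ and $\rho$, and to choose the eigenbasis of $\varphi$ so that each basis vector is an eigenvector of every $\Pi\in P$. Since the projectors in $P$ are mutually orthogonal, the Hilbert space decomposes as $\mathcal{H}=\bigl(\bigoplus_{\Pi\in P}\mathrm{range}\,\Pi\bigr)\oplus\mathcal{H}_P^{\perp}$. The condition $\varphi\in\mathcal{T}_P$ means $\varphi=\sum_{\Pi\in P}\Pi\varphi\Pi$, so $\varphi$ is block-diagonal with respect to this decomposition and vanishes on $\mathcal{H}_P^{\perp}$.

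The first step is to diagonalize $\varphi$ block by block: within each $\mathrm{range}\,\Pi$, the operator $\Pi\varphi\Pi$ (restricted to that subspace) is self-adjoint and trace class, so it admits a countable orthonormal eigenbasis in $\mathrm{range}\,\Pi$. Taking the union of these eigenbases, together with any orthonormal basis for $\mathcal{H}_P^{\perp}$, yields an orthonormal basis $\{|i\rangle\}$ of $\mathcal{H}$ that diagonalizes $\varphi=\sum_i r_i|i\rangle\langle i|$ and has the crucial property that every $|i\rangle$ lies entirely in $\mathrm{range}\,\Pi$ for some single $\Pi\in P$ or entirely in $\mathcal{H}_P^{\perp}$. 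In other words, each $|i\rangle$ is simultaneously an eigenvector of every $\Pi\in P$, with eigenvalue $0$ or $1$.

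Now for any $A\subseteq\{|i\rangle\}$, the projector $\Pi^A=\sum_{|i\rangle\in A}|i\rangle\langle i|$ is a sum of rank-one projectors onto joint eigenvectors of $\{\Pi\}_{\Pi\in P}$, so $\Pi^A$ commutes with every $\Pi\in P$. Using $\rho\in\mathcal{T}_P$ and this commutation gives
\begin{equation*}
\sum_{\Pi\in P}\Pi\bigl(\Pi^A\rho\Pi^A\bigr)\Pi=\Pi^A\Bigl(\sum_{\Pi\in P}\Pi\rho\Pi\Bigr)\Pi^A=\Pi^A\rho\Pi^A,
\end{equation*}
which is exactly the defining condition for $\Pi^A\rho\Pi^A\in\mathcal{T}_P$.

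The only mild obstacle is bookkeeping in the case of an incomplete $P$ (i.e., $\sum_{\Pi\in P}\Pi\neq I$), but this is handled uniformly by recalling that $\varphi\in\mathcal{T}_P$ forces $\varphi$ to vanish on $\mathcal{H}_P^{\perp}$, so any orthonormal basis of $\mathcal{H}_P^{\perp}$ can be appended without affecting the diagonalization $\varphi=\sum_i r_i|i\rangle\langle i|$. Vectors in this appended piece are annihilated by every $\Pi\in P$, so they too are joint eigenvectors, and the commutation argument above goes through unchanged.
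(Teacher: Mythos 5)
Your proof is correct and follows essentially the same route as the paper's: diagonalize $\varphi$ block by block within each $\mathrm{range}\,\Pi$, observe that the resulting basis vectors are joint eigenvectors of all $\Pi\in P$ so that $\Pi^{A}$ commutes with each $\Pi$, and conclude $\Pi^{A}\rho\Pi^{A}=\sum_{\Pi\in P}\Pi(\Pi^{A}\rho\Pi^{A})\Pi\in\mathcal{T}_{P}$. Your explicit handling of the incomplete case by appending a basis of $\mathcal{H}_{P}^{\perp}$ (on which $\varphi$ vanishes) is a small but welcome extra care that the paper's proof leaves implicit.
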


\begin{proof}
For any $\Pi\in P$, let $B_{\Pi}:=\{\vert\phi\rangle,\vert\phi'\rangle,\dots\}$
be a complete orthonormal basis for the Hilbert subspace $\Pi\mathcal{H}$
that diagonalizes $\Pi\varphi\Pi$. Since $\varphi\in\mathcal{T}_{P}$,
it obeys $\varphi=\sum_{\Pi\in P}\Pi\varphi\Pi$. Since each $\Pi\varphi\Pi$
is diagonal in the basis $B_{\Pi}$, $\varphi$ can be diagonalized
in the basis $B:=\bigcup_{\Pi\in P}B_{\Pi}$. It is easy to show that
$B$ is orthogonal. In particular, consider any pair of vectors in
this basis, $\vert\phi\rangle\ne\vert\psi\rangle$. If these two vectors
belong to the same $B_{\Pi}$, they are orthogonal because each $B_{\Pi}$
is an orthogonal basis. If they belong to different $B_{\Pi}\ne B_{\Pi'}$,
they are orthogonal because $\Pi$ and $\Pi'$ are orthogonal.

For any $A\subseteq B$, define $\Pi^{A}$ as in \ref{eq:piS}. Since $\Pi^A$ can be diagonalized in the same basis as all of the $\Pi\in P$, $\Pi$ and $\Pi^{A}$
commute. Then,
\begin{align*}
\Pi^{A}\rho\Pi^{A}=\Pi^{A}\Big(\sum_{\Pi\in P}\Pi\rho\Pi\Big)\Pi^{A}=\sum_{\Pi\in P}\Pi(\Pi^{A}\rho\Pi^{A})\Pi\in\mathcal{T}_{P},
\end{align*}
where in the first equality we used that $\rho\in\mathcal{T}_{P}$.
\end{proof}

\section{Mismatch Cost for Fluctuating EP}

\label{app:fluctuating}Here we derive our results for fluctuating
mismatch cost, in the case when actual initial mixed state $\rho$
and the optimal initial mixed state $\varphi$ commute. (For the non-commuting
case, we exploit results from \citep{kwonFluctuationTheoremsQuantum2019}.)

As in the main text, let $\mathcal{S}\subseteq\mathcal{D}$ be some
convex set of states, and consider some $\rho\in\mathcal{S}$ and
$\varphi\in\mathop{\arg\min}_{\omega\in\mathcal{S}}\Sigma(\omega)$
such that $S(\rho\Vert\varphi)<\infty$ and $\Sigma(\rho)-\Sigma(\varphi)=-\Delta S({\rho}\Vert{\varphi})$.
Assume that the pair of states $\rho,\varphi$ commutes, and can therefore
be simultaneously diagonalized in the same basis $\vert i\rangle\langle i\vert$,
as does the pair of states $\Phi(\rho),\Phi(\varphi)$, and can therefore
be simultaneously diagonalized in the same basis $\vert\phi\rangle\langle\phi\vert$.
For notational convenience, define
\begin{align}
p_{\rho}(i,\phi):=p_{i}T_{\Phi}(\phi\vert i)=p_{i}\mathrm{tr}\{\Phi(\vert i\rangle\langle i\vert)\vert\phi\rangle\langle\phi\vert\}.\label{eq:app2pdef}
\end{align}

\subsection{Derivation of \ref{eq:fluct0}}

Given the above definition, \ref{eq:fluct0} follows by taking the
expectation of \ref{eq:mDef0},
\begin{align}
& \langle\sigma_{\rho}-\sigma_{\varphi}\rangle_{p_{\rho}}\nonumber \\
& =\sum_{i,\phi}p_{\rho}(i,\phi)\Big[(\ln p_{i}-\ln r_{i})-(\ln p'_{\phi}-\ln r'_{\phi})\Big]\nonumber \\
& \stackrel{(a)}{=}\sum_{{i:p_{i}>0}}p_{i}(\ln p_{i}-\ln r_{i})-\sum_{{\phi:p'_{\phi}>0}}p'_{\phi}(\ln p'_{\phi}-\ln r'_{\phi})\nonumber \\
& =S(\rho\Vert\varphi)-S(\Phi(\rho)\Vert\Phi(\varphi)).\label{eq:mmm1}
\end{align}
where in $(a)$ we used
\begin{align*}
p_{\rho}(i)=\sum_{\phi}p_{\rho}(i,\phi) & =\sum_{\phi}p_{i}\mathrm{tr}\{\Phi(\vert i\rangle\langle i\vert)\vert\phi\rangle\langle\phi\vert\}\\
& =p_{i}\mathrm{tr}\{\Phi(\vert i\rangle\langle i\vert)\}=p_{i}\\
p_{\rho}(\phi)=\sum_{i}p_{\rho}(i,\phi) & =\sum_{i}p_{i}\mathrm{tr}\{\Phi(\vert i\rangle\langle i\vert)\vert\phi\rangle\langle\phi\vert\}\\
& =\mathrm{tr}\{\Phi(\rho)\vert\phi\rangle\langle\phi\vert\}=p'_{\phi},
\end{align*}
and in \ref{eq:mmm1} we used that $\rho$ and $\varphi$ can be diagonalized
in the same basis, and similarly for $\Phi(\rho)$ and $\Phi(\varphi)$.
\ref{eq:fluct0} then follows from our assumption that $\Sigma(\rho)-\Sigma(\varphi)=-\Delta S({\rho}\Vert{\varphi})$.

\subsection{Derivation of \ref{eq:IFT}}

The derivation proceeds as follows:
\begin{align}
& \langle e^{-(\sigma_{\rho}-\sigma_{\varphi})}\rangle_{p_{\rho}}\nonumber \\
& =\;\sum_{{i,\phi:p_{\rho}(i,\phi)>0}}p_{\rho}(i,\phi)e^{-[(\ln p_{i}-\ln p'_{\phi})-(\ln r_{i}-\ln r'_{\phi})]}\nonumber\\
& =\;\sum_{{i,\phi:p_{\rho}(i,\phi)>0}}p_{\rho}(i,\phi)\frac{p'_{\phi}}{p_{i}}\frac{r_{i}}{r'_{\phi}}\nonumber \\
& =\sum_{i:p_{i}>0}\sum_{\phi}p_{i}\mathrm{tr}\{\Phi(\vert i\rangle\langle i\vert)\vert\phi\rangle\langle\phi\vert\}\frac{p'_{\phi}}{p_{i}}\frac{r_{i}}{r'_{\phi}}\nonumber \\
& =\sum_{i:p_{i}>0}\sum_{\phi}\mathrm{tr}\{\Phi(\vert i\rangle\langle i\vert)\vert\phi\rangle\langle\phi\vert\}\frac{p'_{\phi}}{r'_{\phi}}r_{i}\nonumber \\
& =\sum_{i:p_{i}>0}\mathrm{tr}\{\Phi(\vert i\rangle\langle i\vert)\Phi(\rho)\Phi(\varphi)^{-1}\}r_{i}\nonumber \\
& =\mathrm{tr}\{\Phi(\varphi\Pi^{\rho})\Phi(\rho)\Phi(\varphi)^{-1}\},\label{eq:IFTres}
\end{align}
where we've used that $\Phi(\varphi\Pi^{\rho})=\sum_{i:p_{i}>0}\Phi(\vert i\rangle\langle i\vert)r_{i}$
and $\Phi(\rho)\Phi(\varphi)^{-1}=\sum_{\phi}\vert\phi\rangle\langle\phi\vert p'_{\phi}/r'_{\phi}$.
Using the definition of the Petz recovery map in \ref{eq:recov},
and the fact that the pairs $\rho,\varphi$ and $\Phi(\rho),\Phi(\varphi)$
commute, we have
\begin{align*}
\gamma & :=\mathrm{tr}\{\Pi^{\rho}\mathcal{R}_{\Phi}^{\varphi}(\Phi(\rho))\}\\
& =\mathrm{tr}\{\Pi^{\rho}\varphi^{1/2}\Phi^{\dagger}(\Phi(\varphi)^{-1/2}(\Phi(\rho))\Phi(\varphi)^{-1/2})\varphi^{1/2}\}\\
& =\mathrm{tr}\{\varphi\Pi^{\rho}\Phi^{\dagger}(\Phi(\rho)\Phi(\varphi)^{-1})\}\\
& =\mathrm{tr}\{\Phi(\varphi\Pi^{\rho})\Phi(\rho)\Phi(\varphi)^{-1}\},
\end{align*}
Combining this with \ref{eq:IFTres} gives \ref{eq:IFT}. Note that
$\gamma\in(0,1]$, since $\gamma$ is the trace of $\rho$ (with trace 1) passed through a composition of three positive
non-trace-increasing maps: $\Phi$, $\mathcal{R}_{\Phi}^{\varphi}$
\citep{jungeUniversalRecoveryMaps2018}, and $\Pi^{\rho}$. When $\rho$
has the same support as $\varphi$, $\Pi^{\rho}\varphi=\varphi$ and
therefore
\begin{align*}
\gamma=\mathrm{tr}\{\Phi(\Pi^{\rho}\varphi)\Phi(\rho)\Phi(\varphi)^{-1}\}=\mathrm{tr}\{\Phi(\varphi)\Phi(\rho)\Phi(\varphi)^{-1}\}=1.
\end{align*}

\subsection{Derivation of \ref{eq:fluctBound}}

Our derivation is standard (e.g., see Eq. 20 in \citep{jarzynski_equalities_2011})
and proceeds as follows:
\begin{align*}
& \mathrm{Pr}\big[(\sigma_{\rho}-\sigma_{\varphi})\le-\xi\big]\\
& =\sum_{i,\phi}p_{\rho}(i,\phi)\Theta(-\xi-(\sigma_{\rho}-\sigma_{\varphi}))\\
& \le\sum_{i,\phi}p_{\rho}(i,\phi)\Theta(-\xi-(\sigma_{\rho}-\sigma_{\varphi}))e^{-\xi-(\sigma_{\rho}-\sigma_{\varphi})}\\
& =e^{-\xi}\sum_{i,\phi}p_{\rho}(i,\phi)\Theta(-\xi-(\sigma_{\rho}-\sigma_{\varphi}))e^{-(\sigma_{\rho}-\sigma_{\varphi})}\end{align*}
\begin{align*}
& \le e^{-\xi}\sum_{i,\phi}p_{\rho}(i,\phi)e^{-(\sigma_{\rho}-\sigma_{\varphi})}=\gamma e^{-\xi}.
\end{align*}
where $\Theta$ is the Heavyside function ($\Theta(x)=1$ if $x\ge0$
and $\Theta(x)=0$ otherwise) and the last line used the IFT.

\subsection{Derivation of \ref{eq:ldb1}}

First, write
\begin{align*}
& \frac{T_{\Phi}(\phi\vert i)}{T_{\mathcal{R}_{\Phi}^{\varphi}}(i\vert\phi)}=\frac{\mathrm{tr}\{\Phi(\vert i\rangle\langle i\vert)\vert\phi\rangle\langle\phi\vert\}}{\mathrm{tr}\{\mathcal{R}_{\Phi}^{\varphi}(\vert\phi\rangle\langle\phi\vert)\vert i\rangle\langle i\vert\}}\\
& =\frac{\mathrm{tr}\{\Phi(\vert i\rangle\langle i\vert)\vert\phi\rangle\langle\phi\vert\}}{\mathrm{tr}\{\varphi^{1/2}\Phi^{\dagger}(\Phi(\varphi)^{-1/2}(\vert\phi\rangle\langle\phi\vert)\Phi(\varphi)^{-1/2})\varphi^{1/2}\vert i\rangle\langle i\vert\}}\\
& =\frac{\mathrm{tr}\{\Phi(\vert i\rangle\langle i\vert)\vert\phi\rangle\langle\phi\vert\}}{\mathrm{tr}\{\Phi^{\dagger}(\vert\phi\rangle\langle\phi\vert/r'_{\phi})\vert i\rangle\langle i\vert r_{i}\}}=\frac{r'_{\phi}}{r_{i}},
\end{align*}
where we used the definition of the Petz recovery map in \ref{eq:recov}.
The result then follows by combining with \ref{eq:mm0}.

\section{Mismatch Cost for EP rate}

\label{app:EPrate}

\subsection{Main proofs}

Here we analyze mismatch cost for the EP rate, which has the general
form
\begin{equation}
\dot{\Sigma}(\rho)={\textstyle {\textstyle \frac{d}{dt}}}S(\rho(t))+\dot{Q}(\rho),\label{eq:EPrn}
\end{equation}
where $\rho$ evolves according to a Lindblad equation ${\textstyle \frac{d}{dt}}\rho(t)\!=\!\mathcal{L}(\rho(t))$,
and $\dot{Q}:\mathcal{D}\to\mathbb{R}\cup\{\infty\}$ is a linear
functional that reflects the rate of entropy flow into the environment.
Note that our results also apply to other ``EP rate''-type functionals
(such as rate of nonadiabatic EP, entropy gain, etc.), which correspond
to different choices of the linear functional $\dot{Q}$.

Consider some pair of states $\varphi,\rho\in\mathcal{D}$ such that
$\dot{\Sigma}(\rho)<\infty,\dot{\Sigma}(\varphi)<\infty,S(\rho\Vert\varphi)<\infty$. As before, let $\varphi(\lambda)=(1-\lambda)\varphi+\lambda\rho$
indicate a linear mixture of the two states. Our results will reference
the following regularity assumptions regarding the behavior of the
EP rate $\dot{\Sigma}(\varphi(\lambda))$ in the neighborhood of $\lambda=0$.
\begin{condition} \label{cond:eprate1}The following (one-sided)
partial derivatives at $\lambda=0,t=0$ are symmetric:
\begin{align}
\partial_{\lambda}^{+}\dot{\Sigma}(\varphi(\lambda)) & =\partial_{t}^{+}\partial_{\lambda}^{+}\int_{0}^{t}\dot{\Sigma}(e^{t'\mathcal{L}}(\rho))\,dt'.\label{eq:symmP}
\end{align}
\end{condition}

\begin{condition} \label{cond:eprate2}If $\varphi\ge\alpha\rho$
for some $\alpha>0$, then $\lambda\mapsto\dot{\Sigma}(\varphi(\lambda))$
is finite and continuously differentiable in some neighborhood of
$\lambda=0$. \end{condition}

Importantly, these two conditions always hold in finite dimensions,
as shown below in \ref{thm:eprateDeriv}.

If \ref{cond:eprate1} holds, then it is straightforward to show that
the directional derivative of $\dot{\Sigma}$ in the direction of
$\varphi$ at $\rho$ has a simple information-theoretic form. In
particular, use $\dot{\Sigma}$ to define a time-dependent integrated
EP as a function of the initial state $\rho$ at $t=0$,
\begin{align}
\Sigma(\rho,t) & =\int_{0}^{t}\dot{\Sigma}(e^{t'\mathcal{L}}(\rho))\,dt'\label{eq:EPinstEP}\\
& =S(e^{t\mathcal{L}}(\rho))-S(\rho)+Q(\rho,t),\nonumber
\end{align}
where $Q(\rho,t)=\int_{0}^{t}\dot{Q}(e^{t'\mathcal{L}}(p))\,dt'$ is
the integrated entropy flow. This is an EP-type function of type \ref{eq:functype1}
(technically, we have not shown that $Q$ is lower-semicontinuous in $\rho$;
however, this will not be required for the integrated EP results we
reference in our analysis of EP rate). One can then write
\begin{align}
\partial_{\lambda}^{+}\dot{\Sigma}(\varphi(\lambda))\vert_{\lambda=0} & =\partial_{t}^{+}\partial_{\lambda}^{+}\Sigma(\rho,t)\nonumber \\
& =\partial_{t}^{+}[\Sigma(\rho,t)-\Sigma(\varphi,t)+\Delta S({\rho}\Vert{\varphi})]\nonumber \\
& =\dot{\Sigma}(\rho)-\dot{\Sigma}(\varphi)+{\textstyle {\textstyle \frac{d}{dt}}}S(\rho(t)\Vert\varphi(t)),\label{eq:EPratedd}
\end{align}
where we used \ref{eq:symmP} and \ref{thm:genf}.

We use this result to derive bounds on instantaneous mismatch cost
(i.e., mismatch cost for instantaneous EP rate). \ref{eq:convexBoundEPR}
and \ref{eq:dn1EPR} appear in the main text as \ref{eq:ineqConvex-1}.
\begin{prop}
\label{thm:ineqsEPR}Given a convex set of states $\mathcal{S}\subseteq\mathcal{D}$,
consider any $\varphi\in\mathop{\arg\min}_{\omega\in\mathcal{S}}\dot{\Sigma}(\omega)$
and $\rho\in\mathcal{S}$. If $S(\rho\Vert\varphi)<\infty$ and \ref{cond:eprate1}
holds,
\begin{align}
\dot{\Sigma}(\rho)-\dot{\Sigma}(\varphi) & \ge-{\textstyle {\textstyle \frac{d}{dt}}}S(\rho(t)\Vert\varphi(t)).\label{eq:convexBoundEPR}
\end{align}
Furthermore, if $\varphi(\lambda)\in\mathcal{S}$ for some $\lambda<0$
and \ref{cond:eprate2} holds,
\begin{align}
\dot{\Sigma}(\rho)-\dot{\Sigma}(\varphi) & =-{\textstyle {\textstyle \frac{d}{dt}}}S(\rho(t)\Vert\varphi(t)).\label{eq:dn1EPR}
\end{align}
\end{prop}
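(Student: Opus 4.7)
The plan is to mirror the proof of Proposition~\ref{thm:ineqs} for integrated EP, using Eq.~(\ref{eq:EPratedd}) --- the instantaneous analogue of Eq.~(\ref{eq:ddeq-1}) --- which under Condition~\ref{cond:eprate1} identifies the one-sided directional derivative $\partial_{\lambda}^{+}\dot{\Sigma}(\varphi(\lambda))\vert_{\lambda=0}$ with $\dot{\Sigma}(\rho)-\dot{\Sigma}(\varphi)+\frac{d}{dt}S(\rho(t)\Vert\varphi(t))$. Before invoking this identity I would dispose of the trivial case $\dot{\Sigma}(\rho)=\infty$ (for which (\ref{eq:convexBoundEPR}) holds automatically), and note that $\dot{\Sigma}(\varphi)<\infty$ because $\varphi$ is a minimizer over a nonempty $\mathcal{S}$.

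For the inequality (\ref{eq:convexBoundEPR}), convexity of $\mathcal{S}$ gives $\varphi(\lambda)\in\mathcal{S}$ for all $\lambda\in[0,1]$, so $\varphi$ being a minimizer forces $\partial_{\lambda}^{+}\dot{\Sigma}(\varphi(\lambda))\vert_{\lambda=0}\ge 0$ --- otherwise a small perturbation toward $\rho$ would strictly decrease $\dot{\Sigma}$ without leaving $\mathcal{S}$. Substituting this non-negativity into Eq.~(\ref{eq:EPratedd}) and rearranging yields (\ref{eq:convexBoundEPR}).

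For the equality (\ref{eq:dn1EPR}), the additional hypothesis $\varphi(\lambda^{*})\in\mathcal{S}$ for some $\lambda^{*}<0$ is equivalent to the operator inequality $\varphi\ge\alpha\rho$ with $\alpha=-\lambda^{*}/(1-\lambda^{*})>0$, so Condition~\ref{cond:eprate2} applies and the map $\lambda\mapsto\dot{\Sigma}(\varphi(\lambda))$ is continuously differentiable on some open neighborhood of $\lambda=0$. Combined with convexity of $\mathcal{S}$ (which places $\varphi(\lambda)\in\mathcal{S}$ for all $\lambda$ in a two-sided neighborhood of $0$), this makes $\lambda=0$ an interior minimizer of $\lambda\mapsto\dot{\Sigma}(\varphi(\lambda))$ on an open interval, so the two-sided derivative there vanishes. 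Substituting $\partial_{\lambda}\dot{\Sigma}(\varphi(\lambda))\vert_{\lambda=0}=0$ into Eq.~(\ref{eq:EPratedd}) yields (\ref{eq:dn1EPR}).

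The main technical hurdle is not the convex-analytic argument above, but the rigorous justification of Eq.~(\ref{eq:EPratedd}) itself: this requires interchanging the one-sided directional derivative in $\lambda$ with $\partial_{t}^{+}$ at $t=0$ in the integrated EP $\Sigma(\rho,t)=\int_{0}^{t}\dot{\Sigma}(e^{t'\mathcal{L}}(\rho))\,dt'$, which is exactly the content of Condition~\ref{cond:eprate1}, together with a clean application of Proposition~\ref{thm:genf} to each fixed $t$. As the main text remarks that both conditions always hold in finite dimensions (with a separate proof deferred to this appendix), I would treat Eq.~(\ref{eq:EPratedd}) as the given starting point and keep the present proof focused on the two short convex-analytic arguments above.
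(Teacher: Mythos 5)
Your proposal is correct and follows essentially the same route as the paper: non-negativity of the one-sided directional derivative at the constrained minimizer combined with Eq.~(\ref{eq:EPratedd}) for the inequality, and the observation that $\varphi(\lambda^{*})\in\mathcal{S}$ for $\lambda^{*}<0$ yields $\varphi\ge\alpha\rho$ so that Condition~\ref{cond:eprate2} makes $\lambda=0$ an interior stationary point for the equality. The only nitpick is that $\varphi(\lambda^{*})\in\mathcal{S}$ \emph{implies} (rather than is equivalent to) $\varphi\ge\alpha\rho$, but you use only the correct direction of that implication.
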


\begin{proof}
Within the convex set $\mathcal{S}$, the directional derivative from
the minimizer $\varphi$ of $\dot{\Sigma}$ toward any $\rho$ must
be non-negative, ${\textstyle \partial_{\lambda}^{+}}\dot{\Sigma}(\varphi(\lambda))\vert_{\lambda=0}\ge0$.
\ref{eq:convexBoundEPR} then follows from \ref{cond:eprate1} and
\ref{eq:EPratedd}.

To derive \ref{eq:dn1EPR}, consider some $\alpha<0$ such that $(1-\alpha)\varphi+\alpha\rho\in\mathcal{S}$.
Then, $\varphi\ge-\alpha\rho/(1-\alpha)$ and so by \ref{cond:eprate2}
the function $\lambda\mapsto\dot{\Sigma}(\varphi(\lambda))$ is finite
and continuously differentiable in some neighborhood of $\lambda=0$.
That means that the directional derivative must vanish at the minimizer
$\lambda=0$, ${\textstyle \partial_{\lambda}^{+}}\dot{\Sigma}(\varphi(\lambda))\vert_{\lambda=0}=0$.
\ref{eq:dn1EPR} then follows from \ref{eq:EPratedd}.
\end{proof}
We now derive the equality form of instantaneous mismatch cost, which
appears as \ref{eq:EPr-equality} in the main text. To derive the
next result, we require that
\begin{equation}
\varphi\ge\alpha\rho\qquad\text{for some} \text{\ensuremath{\alpha}}>0.\label{eq:condDMAX}
\end{equation}
It is simple to show that in finite dimensions, \ref{eq:condDMAX}
is equivalent to requiring that $S(\rho\Vert\varphi)<\infty$ (this
is the condition mentioned in the main text when presenting \ref{eq:EPr-equality},
where only the finite dimensional case is analyzed). In infinite dimensions,
\ref{eq:condDMAX} is stronger that $S(\rho\Vert\varphi)<\infty$.
Interestingly, \ref{eq:condDMAX} can be restated in information-theoretic
terms as $S_{\max}(\rho\Vert\varphi)<\infty$, where $S_{\max}$ is
the so-called ``max-relative entropy'' \mycitep[Defn.10]{bertaSmoothEntropyFormalism2016a},
\[
S_{\max}(\rho\Vert\varphi)=\inf\{x\in\mathbb{R}:\varphi\ge2^{-x}\rho\}.
\]

\begin{prop}
\label{thm:eprate}Consider any $\varphi\in\mathop{\arg\min}_{\omega\in\mathcal{D}_{P}}\dot{\Sigma}(\omega)$
and $\rho\in\mathcal{D}_{P}$ such that $\dot{\Sigma}(\rho)<\infty$.
If $\varphi\ge\alpha\rho$ for some $\alpha>0$ and \ref{cond:eprate1,cond:eprate2}
holds,
\begin{equation}
\dot{\Sigma}(\rho)-\dot{\Sigma}(\varphi)=-{\textstyle {\textstyle \frac{d}{dt}}}S(\rho(t)\Vert\varphi(t)).\label{eq:equalityEPrateApp}
\end{equation}
\end{prop}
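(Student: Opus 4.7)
The plan is to derive \ref{eq:equalityEPrateApp} by applying the equality case of \ref{thm:ineqsEPR} with $\mathcal{S}=\mathcal{D}_P$. Two hypotheses of that earlier proposition must be checked: (i) $S(\rho\Vert\varphi)<\infty$, and (ii) there exists $\lambda<0$ such that $\varphi(\lambda):=(1-\lambda)\varphi+\lambda\rho\in\mathcal{D}_P$. Hypothesis (i) is immediate from \ref{prop:KLprop}(\klupperboundpropref): the assumption $\varphi\ge\alpha\rho$ gives $S(\rho\Vert\varphi)\le-\ln\alpha<\infty$.

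For (ii), I will show that $\varphi(\lambda)\in\mathcal{D}_P$ for every $\lambda\in[-\alpha/(1-\alpha),1]$, which in particular contains values strictly less than zero. Writing $\lambda=-\mu$ with $0<\mu\le\alpha/(1-\alpha)$, one has $\varphi(\lambda)=(1+\mu)\varphi-\mu\rho$, and since $\mu/(1+\mu)\le\alpha$ the hypothesis $\varphi\ge\alpha\rho$ yields $\varphi(\lambda)\ge 0$. The trace is preserved since $(1-\lambda)+\lambda=1$, and the defining condition $\omega=\sum_{\Pi\in P}\Pi\omega\Pi$ of $\mathcal{D}_P$ is preserved under real linear combinations, so $\varphi(\lambda)\in\mathcal{D}_P$. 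Also, $\dot{\Sigma}(\varphi)\le\dot{\Sigma}(\rho)<\infty$ since $\varphi$ is a minimizer over $\mathcal{D}_P$ and $\rho\in\mathcal{D}_P$.

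With (i) and (ii) verified, the equality \ref{eq:equalityEPrateApp} follows directly from the equality part of \ref{thm:ineqsEPR}. Unpacking that argument: the hypothesis $\varphi\ge\alpha\rho$ enables \ref{cond:eprate2}, which makes the scalar function $\lambda\mapsto\dot{\Sigma}(\varphi(\lambda))$ finite and continuously differentiable in an open neighborhood of $\lambda=0$; because $\varphi$ minimizes $\dot{\Sigma}$ on $\mathcal{D}_P$ and (by (ii)) $\lambda=0$ is interior to a segment lying in $\mathcal{D}_P$, the ordinary derivative at $\lambda=0$ must vanish. \ref{cond:eprate1} combined with identity \ref{eq:EPratedd} then rewrites that vanishing directional derivative as
\begin{equation*}
0=\dot{\Sigma}(\rho)-\dot{\Sigma}(\varphi)+{\textstyle \frac{d}{dt}}S(\rho(t)\Vert\varphi(t)),
\end{equation*}
which is exactly \ref{eq:equalityEPrateApp}.

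I do not foresee a real obstacle: the content of this proposition is essentially that the support hypothesis $\varphi\ge\alpha\rho$ simultaneously secures finite relative entropy and places the minimizer $\varphi$ in the relative interior of a one-dimensional segment within $\mathcal{D}_P$, allowing a two-sided derivative argument. The only mild point to watch is that $\mathcal{D}_P$ is not literally an affine space (because we need positivity), so one must verify that $\varphi(\lambda)$ remains a state for $\lambda$ slightly negative — but this is precisely what the hypothesis $\varphi\ge\alpha\rho$ buys.
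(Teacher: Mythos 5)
Your proposal is correct and follows essentially the same route as the paper: both proofs reduce the statement to the equality case of \ref{thm:ineqsEPR} (i.e., \ref{eq:dn1EPR}) by using $\varphi\ge\alpha\rho$ to secure $S(\rho\Vert\varphi)<\infty$ via \ref{prop:KLprop}(\klupperboundpropref) and to show that $(1-\lambda)\varphi+\lambda\rho$ remains in $\mathcal{D}_P$ for some $\lambda<0$. Your write-up is, if anything, slightly more explicit than the paper's about the positivity check for negative $\lambda$ and about why the two-sided derivative vanishes at the minimizer.
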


\begin{proof}
$\varphi\ge\alpha\rho$ for some $\alpha>0$ implies that $(1+\alpha)\varphi-\alpha\rho\ge0$,
so $\varphi\ge\frac{\alpha}{1+\alpha}\rho$ and therefore $S(\rho\Vert\varphi)<\infty$
by \ref{prop:KLprop}(\klupperboundpropref). \ref{eq:equalityEPrateApp} then follows from
\ref{eq:dn1EPR}.
\end{proof}
Our next results shows that our technical assumptions about $\dot{\Sigma}$
are always satisfied in finite dimensions.
\begin{prop}
\label{thm:eprateDeriv}Assume that $\dim\mathcal{H}<\infty$. Then,
\ref{cond:eprate1,cond:eprate2} hold for any pair of states $\varphi,\rho\in\mathcal{D}$
such that $\dot{\Sigma}(\rho),\dot{\Sigma}(\varphi),S(\rho\Vert\varphi)<\infty$.
\end{prop}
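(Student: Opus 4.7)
My plan is to exploit finite-dimensionality to strengthen $S(\rho\Vert\varphi)<\infty$ into the uniform domination $\varphi\ge\alpha\rho$ for some $\alpha>0$, then to show that $\lambda\mapsto\dot{\Sigma}(\varphi(\lambda))$ is in fact real-analytic on an open two-sided neighborhood of $\lambda=0$, and finally to deduce the symmetry in \ref{cond:eprate1} from joint $C^{\infty}$-smoothness of $(\lambda,t)\mapsto\Sigma(\varphi(\lambda),t)$ via Clairaut's theorem. The first reduction is standard: in finite dimensions $S(\rho\Vert\varphi)<\infty$ is equivalent to $\mathrm{supp}\,\rho\subseteq\mathrm{supp}\,\varphi$, and on $\mathrm{supp}\,\varphi$ the operator $\varphi^{-1/2}\rho\varphi^{-1/2}$ is bounded, yielding $\varphi\ge\alpha\rho$ with $\alpha$ equal to the reciprocal of its operator norm. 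For $\lambda$ in the resulting open interval $(-\alpha/(1-\alpha),1)$ the convex combination $\varphi(\lambda)$ keeps its support equal to $\mathrm{supp}\,\varphi$ with nonzero eigenvalues uniformly bounded away from $0$ on every compact sub-interval.

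To verify \ref{cond:eprate2}, I would observe that on this interval the nonzero eigenvalues of the linear pencil $\varphi(\lambda)\vert_{\mathrm{supp}\,\varphi}$ are real-analytic functions of $\lambda$ (they never cross zero by construction), so $\ln\varphi(\lambda)\vert_{\mathrm{supp}\,\varphi}$ is real-analytic as well. The assumed finiteness of $\dot{\Sigma}(\rho)$ and $\dot{\Sigma}(\varphi)$ forces $\mathrm{tr}\{\mathcal{L}(\rho)\ln\rho\}$ and $\mathrm{tr}\{\mathcal{L}(\varphi)\ln\varphi\}$ to be finite, which is only possible if the diagonal matrix elements of $\mathcal{L}(\rho)$ in $(\mathrm{supp}\,\rho)^{\perp}$ and of $\mathcal{L}(\varphi)$ in $(\mathrm{supp}\,\varphi)^{\perp}$ vanish; combined with $\mathrm{supp}\,\rho\subseteq\mathrm{supp}\,\varphi$ and linearity of $\mathcal{L}$, this makes $\mathcal{L}(\varphi(\lambda))$ compatible with $\mathrm{supp}\,\varphi$ throughout the interval. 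Hence $\dot{\Sigma}(\varphi(\lambda))=-\mathrm{tr}\{\mathcal{L}(\varphi(\lambda))\ln\varphi(\lambda)\}+\dot{Q}(\varphi(\lambda))$ is finite and real-analytic in $\lambda$ near $0$, which is strictly stronger than the $C^{1}$ requirement of \ref{cond:eprate2}.

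For \ref{cond:eprate1}, the idea is to apply the same argument to the evolved family $e^{t\mathcal{L}}\varphi(\lambda)$, which depends smoothly on $(\lambda,t)$ in finite dimensions, so that $(\lambda,t)\mapsto\Sigma(\varphi(\lambda),t)=\int_{0}^{t}\dot{\Sigma}(e^{t'\mathcal{L}}\varphi(\lambda))\,dt'$ is jointly $C^{\infty}$ on a two-sided neighborhood of $(0,0)$; Clairaut's theorem then gives $\partial_{\lambda}\partial_{t}\Sigma=\partial_{t}\partial_{\lambda}\Sigma$ at the origin, which coincides with the one-sided derivatives in the statement because everything in sight is two-sided smooth. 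A clean direct alternative, requiring less machinery, is to invoke \ref{thm:genf} at each $t$ to write $\partial_{\lambda}^{+}\Sigma(\varphi(\lambda),t)\vert_{\lambda=0}=\Sigma(\rho,t)-\Sigma(\varphi,t)+S(e^{t\mathcal{L}}\rho\Vert e^{t\mathcal{L}}\varphi)-S(\rho\Vert\varphi)$ and then differentiate in $t$ at $t=0$, comparing with the $\lambda$-derivative of $\dot{\Sigma}(\varphi(\lambda))$ obtained from the analytic formula of the previous paragraph. The main obstacle is this propagation step: under $e^{t\mathcal{L}}$ the support of $\varphi(\lambda)$ may grow and new eigenvalues emerging continuously from $0$ could in principle spoil the analyticity of $\ln e^{t\mathcal{L}}\varphi(\lambda)$. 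This is handled by observing that newly populated directions contribute only terms of the form $-x\ln x$ with $x\to 0^{+}$, which are continuous and bounded, and by shrinking the neighborhood of $(0,0)$ so as to avoid the isolated branch points of the spectral resolution of the analytic one-parameter family $e^{t\mathcal{L}}\varphi(\lambda)$, which exists by compactness in finite dimensions.
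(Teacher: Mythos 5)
Your reduction to $\varphi\ge\alpha\rho$ and your verification of \ref{cond:eprate2} are sound and essentially match the paper: the paper likewise uses \ref{lem:EPRsupp} to get a two-sided interval of positivity, deduces from $\dot{\Sigma}(\rho),\dot{\Sigma}(\varphi)<\infty$ that $\mathrm{supp}\,\mathcal{L}(\varphi(\lambda))\subseteq\mathrm{supp}\,\varphi(\lambda)$ throughout the interval, and concludes finiteness and continuous differentiability of $\lambda\mapsto\dot{\Sigma}(\varphi(\lambda))$ from continuity of $\lambda\mapsto\mathcal{L}(\varphi(\lambda))$ and $\lambda\mapsto\ln\varphi(\lambda)$ on the fixed support (your real-analyticity claim is stronger than needed but correct, since the spectrum of $\varphi(\lambda)$ restricted to $\mathrm{supp}\,\varphi$ stays in a compact subset of $(0,\infty)$ locally).

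The gap is in \ref{cond:eprate1}. Your Clairaut route needs joint $C^{2}$-type regularity of $(\lambda,t)\mapsto\Sigma(\varphi(\lambda),t)$ at the origin, and this can genuinely fail: even with $\mathrm{supp}\,\mathcal{L}(\sigma)\subseteq\mathrm{supp}\,\sigma$, populations can emerge from zero at second order in $t$ (e.g.\ $p_{i}(t)\sim ct^{2}$), so $t\mapsto S(e^{t\mathcal{L}}\sigma)$ contains terms like $-ct^{2}\ln(ct^{2})$ that are $C^{1}$ but not $C^{2}$ at $t=0$. You cannot "shrink the neighborhood to avoid the branch points," because the one-sided derivatives in \ref{eq:symmP} are evaluated exactly at $(\lambda,t)=(0,0)$, which is where the branch point sits; and noting that $-x\ln x$ is continuous and bounded does not give differentiability of the relevant order. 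Your "clean direct alternative" computes $\partial_{t}^{+}\partial_{\lambda}^{+}\Sigma$ via \ref{thm:genf} and then proposes to "compare" it with $\partial_{\lambda}^{+}\dot{\Sigma}(\varphi(\lambda))=\partial_{\lambda}^{+}\partial_{t}^{+}\Sigma$ --- but their equality is precisely the content of \ref{cond:eprate1}, so this is circular. The paper's mechanism for the interchange is different and is the ingredient you are missing: for each fixed $t$, $\lambda\mapsto\Sigma(\varphi(\lambda),t)$ is convex (being an EP-type function of type \ref{eq:functype1}), the difference quotients $\frac{1}{\lambda}\frac{1}{t}\Sigma(\varphi(\lambda),t)$ therefore have a limit as $\lambda\to0^{+}$, and pointwise-convergent sequences of convex functions converge uniformly on compact sets (\citep{rockafellarConvexAnalysis1970}, Thm.~10.8), which licenses exchanging the $\lambda\to0^{+}$ and $t\to0^{+}$ limits without any smoothness in $t$ beyond the existence of $\partial_{t}^{+}$. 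You would need to import that convexity/uniform-convergence argument (or an equivalent justification of the limit interchange) to close the proof.
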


\begin{proof}
First, note that in finite dimensions, $S(\rho\Vert\varphi)<\infty$ implies that $\mathrm{supp}\,\rho\subseteq\mathrm{supp}\,\varphi$
which, by \ref{lem:EPRsupp} below, means there is some $\alpha>0$
such that $\varphi(\lambda)\ge0$ for all $\lambda\in(-\alpha,1)$.

We now show that $\vert\dot{\Sigma}(\varphi(\lambda))\vert<\infty$
for all $\lambda\in(-\alpha,1)$. It is easy to see that $\dot{\Sigma}(\rho),\dot{\Sigma}(\varphi)<\infty$
implies that $\dot{Q}(\rho),\dot{Q}(\varphi)<\infty$ (see \ref{eq:EPrn}).
Since $\dot{Q}$ is a linear function, $\dot{Q}(\varphi(\lambda))=(1-\lambda)\dot{Q}(\varphi)+\lambda\dot{Q}(\rho)<\infty$
for all $\lambda\in(-\alpha,1)$. Then, in finite dimensions, the
derivative of the entropy obeys~\citep{spohn_entropy_1978,das2018fundamental}
\begin{equation}
{\textstyle {\textstyle \frac{d}{dt}}}S(\rho(t))=-\mathrm{tr}\{\mathcal{L}(\rho)\ln\rho\}=-\sum_{i}\langle i\vert\mathcal{L}(\rho)\vert i\rangle\ln p_{i},\label{eq:hf2}
\end{equation}
where we used the spectral resolution $\rho=\sum_{i}p_{i}\vert i\rangle\langle i\vert$
in some complete basis $\{\vert i\rangle\}$, and assume $0\ln0=0$
(as standard). From this expression, it is easy to see that $|{\textstyle {\textstyle \frac{d}{dt}}}S(\rho(t))|<\infty$
if and only if there is no $i$ such that $\langle i\vert\mathcal{L}(\rho)\vert i\rangle>0,p_{i}=0$,
or in other words iff $\mathrm{supp}\,\mathcal{L}(\rho)\subseteq\mathrm{supp}\,\rho$.
Given our assumption that $\dot{\Sigma}(\varphi),\dot{\Sigma}(\rho)<\infty$,
it must be that ${\textstyle {\textstyle \frac{d}{dt}}}S(\varphi(t)),{\textstyle {\textstyle \frac{d}{dt}}}S(\rho(t))<\infty$.
Therefore, $\mathrm{supp}\,\mathcal{L}(\varphi)\subseteq\mathrm{supp}\,\varphi$
and $\mathrm{supp}\,\mathcal{L}(\rho)\subseteq\mathrm{supp}\,\rho$.
Furthermore, $S(\rho\Vert\varphi)<\infty$ implies $\mathrm{supp}\,\rho\subseteq\mathrm{supp}\,\varphi$,
which means that $\mathrm{supp}\,\mathcal{L}(\rho)\subseteq\mathrm{supp}\,\varphi$.
This means that for $\lambda\in(-\alpha,1]$,
\begin{align}
\mathrm{supp}\,\mathcal{L}(\varphi(\lambda))\!=\!\mathrm{supp}\,[(1-\lambda)\mathcal{L}(\varphi)\!+\!\lambda\mathcal{L}(\rho)]\!\subseteq\!\mathrm{supp}\,\varphi.\label{eq:supp1app}
\end{align}
Combining \ref{eq:supp1app} with \ref{eq:suppapp2} in \ref{lem:EPRsupp}
gives
\[
\mathrm{supp}\,\mathcal{L}(\varphi(\lambda))\subseteq\mathrm{supp}\,\varphi(\lambda)\qquad\text{for all \ensuremath{\lambda\in(-\alpha,1)}}.
\]
Thus, $\vert{\textstyle {\textstyle \frac{d}{dt}}}S(\varphi(\lambda)(t))\vert<\infty$
for all $\lambda\in(-\alpha,1)$, which also means that $\vert\dot{\Sigma}(\varphi(\lambda))\vert<\infty$,
therefore proving the first part of \ref{cond:eprate2}.

Now consider the (two-sided) of the function $\lambda\mapsto\dot{\Sigma}(\varphi(\lambda))$
in the neighborhood of $\lambda=0$. Using \ref{eq:EPrn} and \ref{eq:hf2},
we write
\[
{\textstyle \partial_{\lambda}}\dot{\Sigma}(\varphi(\lambda))=-\partial_{\lambda}\mathrm{tr}\{\mathcal{L}(\varphi(\lambda))\ln\varphi(\lambda)\}+\dot{Q}(\rho-\varphi).
\]
This derivative is continuous in $\lambda$, since $\lambda\mapsto\mathcal{L}(\varphi(\lambda))$,
$\lambda\mapsto\ln\varphi(\lambda)$ are continuous in finite dimensions.
This proves the second part of f \ref{cond:eprate2}.

To prove \ref{cond:eprate1}, define the integrated EP function $\Sigma(\rho,t)$
as in \ref{eq:EPinstEP}. As we showed, the following limit is finite
for all $\lambda\in(-\alpha,1)$,
\begin{equation}
\dot{\Sigma}(\varphi(\lambda))=\partial_{t}^{+}\Sigma(\varphi(\lambda),t)=\lim_{t\to0^{+}}\frac{1}{t}\Sigma(\varphi(\lambda),t).\label{eq:pD}
\end{equation}
In addition, for each $t>0$, the map $\rho\mapsto\Sigma(\rho,t)$
is an EP-type function as in \ref{eq:functype1}. Therefore, the function
$\lambda\mapsto\Sigma(\varphi(\lambda),t)$ is convex over $\lambda\in(-\alpha,1)$.
This means that $\lim_{{\lambda\to0^{+}}}\frac{1}{t}\frac{1}{\lambda}\Sigma(\varphi(\lambda),t)$
exists for all $t$ \mycitep[Thm.~23.1]{rockafellarConvexAnalysis1970}.
Sequences of convex functions converge uniformly, and in particular
$\lim_{t\to0^{+}}\frac{1}{t}\frac{1}{\lambda}\Sigma(\varphi(\lambda),t)$
converges uniformly over $\lambda\in[0,1/2]$~\mycitep[Thm.~10.8]{rockafellarConvexAnalysis1970}.
This allows us to exchange the order of limits,
\[
\lim_{{\lambda\to0^{+}}}\frac{1}{\lambda}\lim_{t\to0^{+}}\frac{1}{t}\Sigma(\varphi(\lambda),t)=\lim_{t\to0^{+}}\frac{1}{t}\lim_{{\lambda\to0^{+}}}\frac{1}{\lambda}\Sigma(\varphi(\lambda),t),
\]
which proves \ref{cond:eprate1}.
\end{proof}
The next result is used to show that in many cases of interest, the
minimizer of EP rate will have full support.
\begin{prop}
\label{prop:epr-fsupp}Assume that $\dim\mathcal{H}<\infty$, and
suppose that
\begin{equation}
\mathrm{supp}\,\mathcal{L}(\rho)\not\subseteq\mathrm{supp}\,\rho\quad\forall\rho\in\mathcal{D}_{P}:\mathrm{supp}\,\rho\ne\mathcal{H}_{P}.\label{eq:fsupp-1}
\end{equation}
Then, any $\varphi\in\mathop{\arg\min}_{\omega\in\mathcal{D}_{P}}\dot{\Sigma}(\omega)$
obeys $\mathrm{supp}\,\varphi=\mathcal{H}_{P}$.
\end{prop}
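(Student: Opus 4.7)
I would argue by contradiction: suppose $\varphi\in\mathop{\arg\min}_{\omega\in\mathcal{D}_{P}}\dot{\Sigma}(\omega)$ with $V:=\mathrm{supp}\,\varphi\subsetneq\mathcal{H}_{P}$, and show that $\dot{\Sigma}(\varphi)=+\infty$. Since the maximally mixed state on $\mathcal{H}_{P}$, namely $\omega_{0}=(\sum_{\Pi\in P}\Pi)/\dim\mathcal{H}_{P}$, lies in $\mathcal{D}_{P}$ with full support $\mathcal{H}_{P}$, one expects $\dot{\Sigma}(\omega_{0})<\infty$ (in finite dimensions, under the standing assumption that a finite-valued minimizer exists at all), yielding the desired contradiction.

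The key step is to turn the hypothesis \ref{eq:fsupp-mn} into a genuine leakage of diagonal weight of $\mathcal{L}(\varphi)$ into $V^{\perp}$. First I would observe that since $t\mapsto e^{t\mathcal{L}}\varphi$ is positivity preserving and $P_{V^{\perp}}\varphi P_{V^{\perp}}=0$, the block $C:=P_{V^{\perp}}\mathcal{L}(\varphi)P_{V^{\perp}}$ is positive semidefinite (it is the derivative at $0$ of a curve of PSD blocks starting at $0$). Then I would argue that $C\neq 0$: writing $\mathcal{L}(\varphi)$ in block form with respect to $V\oplus V^{\perp}$ as $\bigl(\begin{smallmatrix}A & B\\ B^{*} & C\end{smallmatrix}\bigr)$, a short Schur-complement argument shows that for $\varphi+tM$ to remain PSD for small $t>0$ when $\varphi|_{V^{\perp}}=0$ one needs $\ker C\subseteq\ker B^{*}$; if $C=0$ this forces $B=0$ and hence $\mathrm{supp}\,\mathcal{L}(\varphi)\subseteq V$, contradicting \ref{eq:fsupp-mn}. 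So $C$ is a nonzero PSD operator on $V^{\perp}$.

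Next I would exploit the residual freedom in diagonalizing $\varphi$: choose an orthonormal basis $\{|i\rangle\}$ of $\mathcal{H}_{P}$ that diagonalizes $\varphi=\sum_{i}r_{i}|i\rangle\langle i|$ and, on the degenerate eigenspace $V^{\perp}$ (where all $r_{i}=0$), additionally diagonalizes $C$. Because $C\succeq 0$ and $C\neq 0$, at least one such basis vector $|i^{*}\rangle\in V^{\perp}$ satisfies $\langle i^{*}|\mathcal{L}(\varphi)|i^{*}\rangle>0$ while $r_{i^{*}}=0$. Substituting into the entropy-rate formula \ref{eq:hf2} gives
\begin{equation*}
{\textstyle\frac{d}{dt}}S(\varphi(t))=-\sum_{i}\langle i|\mathcal{L}(\varphi)|i\rangle\ln r_{i}=+\infty,
\end{equation*}
since the $i^{*}$ summand equals $+\infty$, all summands with $r_{i}>0$ are finite, and all remaining summands (with $r_{i}=0$) are non-negative with the convention $0\ln0=0$ used in \ref{eq:hf2}.

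Finally, since $\dot{Q}$ is a linear functional that is finite on all density operators in finite dimension, $\dot{\Sigma}(\varphi)={\textstyle\frac{d}{dt}}S(\varphi(t))+\dot{Q}(\varphi)=+\infty$, contradicting the minimality of $\varphi$. The main obstacle is the one handled in the middle paragraph: the literal linear-algebraic reading of \ref{eq:fsupp-mn} only asserts that $\mathcal{L}(\varphi)$ has range intersecting $V^{\perp}$, whereas the entropy formula detects only the \emph{diagonal} leakage encoded in $C$. Bridging these via the positivity-preservation constraint on the semigroup is the substantive content of the argument; the rest is bookkeeping.
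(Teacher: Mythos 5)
Your overall strategy is the paper's: show that a minimizer $\varphi$ with $V:=\mathrm{supp}\,\varphi\subsetneq\mathcal{H}_{P}$ would have ${\textstyle\frac{d}{dt}}S(\varphi(t))=+\infty$ via \ref{eq:hf2}, hence $\dot{\Sigma}(\varphi)=\infty$, contradicting minimality. (The paper's proof is a two-line appeal to the observation, made inside the proof of \ref{thm:eprateDeriv}, that the entropy rate is infinite whenever the support condition fails.) The difficulty you isolate in your middle paragraph is genuine: \ref{eq:hf2} diverges only when the \emph{diagonal} block $C=P_{V^{\perp}}\mathcal{L}(\varphi)P_{V^{\perp}}$ is nonzero, whereas \ref{eq:fsupp-1}, read literally as a statement about the range of the self-adjoint operator $\mathcal{L}(\varphi)$, only guarantees that $B$ or $C$ is nonzero. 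Your observation that $C\ge0$ (as the right derivative at $0$ of a curve of PSD blocks starting at $0$) is correct.

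The bridge you build, however --- ``$C=0$ forces $B=0$ by a Schur-complement argument applied to $\varphi+tM$'' --- fails. The flaw is that $e^{t\mathcal{L}}\varphi\ge0$ does \emph{not} imply $\varphi+t\mathcal{L}(\varphi)\ge0$ for small $t>0$: the $o(t)$ remainder can restore positivity exactly in the directions where the first-order off-diagonal block is unsupported by the first-order diagonal block. The Hamiltonian part of any Lindbladian is an explicit counterexample: for $\mathcal{L}(\rho)=-i[H,\rho]$ one computes $C=0$ while $B=P_{V}(i\varphi H)P_{V^{\perp}}\ne0$ whenever $P_{V^{\perp}}HP_{V}\ne0$, and $e^{-iHt}\varphi e^{iHt}$ stays PSD because the $V^{\perp}$-block grows only at order $t^{2}$. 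More generally, for $\mathcal{L}(\rho)=-i[H,\rho]+\sum_{k}(L_{k}\rho L_{k}^{\dagger}-\frac{1}{2}\{L_{k}^{\dagger}L_{k},\rho\})$ one has $C=\sum_{k}P_{V^{\perp}}L_{k}\varphi L_{k}^{\dagger}P_{V^{\perp}}$, which can vanish while $B$ receives nonzero contributions from $H$ and from $L_{k}^{\dagger}L_{k}$ (e.g., the qubit with $L=\vert0\rangle\langle0\vert$ and $H_{01}\ne0$ at $\varphi=\vert0\rangle\langle0\vert$: the range of $\mathcal{L}(\varphi)$ leaks out of $V$, yet ${\textstyle\frac{d}{dt}}S(\varphi(t))=0$). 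So ``range leaks $\Rightarrow$ diagonal leaks'' is false, and your contradiction does not go through under the range reading of \ref{eq:fsupp-1}. The paper's own (implicit) reading of $\mathrm{supp}\,\mathcal{L}(\rho)\not\subseteq\mathrm{supp}\,\rho$ is the diagonal one --- in the proof of \ref{thm:eprateDeriv} it is declared equivalent to the existence of some $i$ with $p_{i}=0$ and $\langle i\vert\mathcal{L}(\rho)\vert i\rangle>0$ --- and under that reading your middle paragraph is unnecessary and the rest of your argument coincides with the paper's. Under the strict range reading, no repair of the Schur step is possible, since the key implication it is meant to establish is simply untrue.
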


\begin{proof}
Note that ${\textstyle {\textstyle \frac{d}{dt}}}S(\rho(t))=\infty$
(and hence $\dot{\Sigma}(\rho)=\infty$) whenever $\mathrm{supp}\,\mathcal{L}(\rho)\not\subseteq\mathrm{supp}\,\rho$,
as shown in the proof of \ref{thm:eprateDeriv}. Since the minimizer
$\varphi$ must have $\dot{\Sigma}(\varphi)<\infty$, \ref{eq:fsupp-1}
implies that it cannot be that $\mathrm{supp}\,\varphi\ne\mathcal{H}_{P}$.
\end{proof}

\subsection{Auxiliary lemma}

The following lemma is used in some of the results above.
\begin{lem}
\label{lem:EPRsupp} If $\dim\mathcal{H}<\infty$ and $\mathrm{supp}\,\rho\subseteq\mathrm{supp}\,\varphi$,
then there is some $\alpha>0$ such that for all $\lambda\in(-\alpha,1)$,
\begin{align}
0 & \le(1-\lambda)\varphi+\lambda\rho\label{eq:suppapp1}\\
\mathrm{supp}\,\varphi & \subseteq\mathrm{supp}\,[(1-\lambda)\varphi+\lambda\rho]\label{eq:suppapp2}
\end{align}
\end{lem}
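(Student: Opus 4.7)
The plan is to exploit finite dimensionality to get a positive spectral gap for $\varphi$ on its own support, and then absorb $\rho$ into $\varphi$ via a simple operator inequality.

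First I would reduce to the support of $\varphi$. Let $\Pi$ denote the projector onto $\mathrm{supp}\,\varphi$. Both $\varphi$ and $\rho$ live inside this subspace (using the hypothesis $\mathrm{supp}\,\rho\subseteq\mathrm{supp}\,\varphi$), so both operators are annihilated outside $\Pi\mathcal{H}$; hence it suffices to analyze the mixture $(1-\lambda)\varphi+\lambda\rho$ restricted to $\Pi\mathcal{H}$, where the verification of \ref{eq:suppapp1,eq:suppapp2} reduces to showing strict positivity.

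Since $\dim\mathcal{H}<\infty$, the restriction of $\varphi$ to $\mathrm{supp}\,\varphi$ is positive definite, so its smallest positive eigenvalue $\mu>0$ exists and $\varphi\ge\mu\Pi$. Because $\rho$ is a density operator supported in $\Pi\mathcal{H}$, we have $\rho\le\Pi\le\mu^{-1}\varphi$. For $\lambda\in[0,1)$ the inequality $(1-\lambda)\varphi+\lambda\rho\ge(1-\lambda)\varphi\ge 0$ is immediate and already gives $\mathrm{supp}\,\varphi\subseteq\mathrm{supp}\,[(1-\lambda)\varphi+\lambda\rho]$. For $\lambda<0$, write $\lambda=-\beta$ with $\beta>0$ and estimate
\begin{equation}
(1+\beta)\varphi-\beta\rho\ \ge\ (1+\beta)\varphi-\beta\mu^{-1}\varphi\ =\ \bigl[1+\beta(1-\mu^{-1})\bigr]\varphi.
\end{equation}

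Choose $\alpha>0$ so that $1-\alpha(\mu^{-1}-1)>0$, that is, any $\alpha<\mu/(1-\mu)$ when $\mu<1$, and any $\alpha>0$ when $\mu\ge 1$. Then for every $\lambda\in(-\alpha,0)$ the bracketed coefficient above is strictly positive, so $(1-\lambda)\varphi+\lambda\rho\ge c\,\varphi$ with $c=1+\beta(1-\mu^{-1})>0$. This proves \ref{eq:suppapp1} and gives $\mathrm{supp}\,\varphi\subseteq\mathrm{supp}\,[(1-\lambda)\varphi+\lambda\rho]$, i.e. \ref{eq:suppapp2}, on the whole interval $\lambda\in(-\alpha,1)$. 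The only ``hard'' point is spotting that the smallest positive eigenvalue of $\varphi$ is strictly positive in finite dimensions, which is precisely where the hypothesis $\dim\mathcal{H}<\infty$ is used; everything else is a one-line operator inequality.
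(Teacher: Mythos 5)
Your proof is correct and uses essentially the same idea as the paper: finite dimensionality gives a strictly positive smallest nonzero eigenvalue $\mu$ of $\varphi$, hence $\rho\le\mu^{-1}\varphi$, and an operator inequality then controls the mixture for $\lambda$ slightly negative. The only cosmetic difference is that you derive the single bound $(1-\lambda)\varphi+\lambda\rho\ge c\,\varphi$ with $c>0$ and read off both conclusions at once, whereas the paper checks positivity at the endpoints of the interval and verifies the support inclusion by a separate matrix-element computation.
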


\begin{proof}
Let $\Pi^{\varphi}$ indicate the projection
onto the support of $\varphi$. Since $\dim\mathcal{H}<\infty$ and
$\mathrm{supp}\,\rho\subseteq\mathrm{supp}\,\varphi$,
\begin{equation}
\varphi\ge \alpha\Pi^{\varphi}\ge \alpha\rho,\label{eq:n2-3}
\end{equation}
where $\alpha>0$ is the smallest non-zero eigenvalue of $\varphi$.
Note that $0\le(1-\lambda)\varphi+\lambda\rho$
for $\lambda \in \{-\alpha,1\}$, hence also for all
$\lambda\in[-\alpha,1]$ (since the set of positive operators is convex).

Next we derive \ref{eq:suppapp2}. For any $\vert a\rangle\in\mathrm{supp}\,\varphi$
and $-\alpha < \lambda\ < 0$,
\begin{align*}
\langle a\vert(1-\lambda)\varphi+\lambda\rho\vert a\rangle & =(1-\lambda)\langle a\vert\varphi\vert a\rangle+\lambda\langle a\vert\rho\vert a\rangle\\
& > \langle a\vert\varphi\vert a\rangle - \alpha \langle a\vert\rho\vert a\rangle\\
& \ge \alpha\langle a \vert a \rangle - \alpha\langle a \vert a \rangle =0,
\end{align*}
where the strict inequality uses $\langle a\vert\varphi\vert a\rangle>0$ and $-\alpha < \lambda<0$.
Then, for any $0\le \lambda < 1$,
\begin{align*}
\langle a\vert(1-\lambda)\varphi+\lambda\rho\vert a\rangle & =(1-\lambda)\langle a\vert\varphi\vert a\rangle+\lambda\langle a\vert\rho\vert a\rangle\\
& \ge(1-\lambda)\langle a\vert\varphi\vert a\rangle>0,
\end{align*}
where the strict inequality uses $\langle a\vert\varphi\vert a\rangle>0$ and $0\le \lambda < 1$.
Combining implies that for all $\lambda\in(-\alpha,1)$, $\vert a\rangle \in \mathrm{supp} [(1-\lambda)\varphi+\lambda\rho]$
for all $\vert a\rangle\in\mathrm{supp}\,\varphi$, proving \ref{eq:suppapp2}.
\end{proof}

\section{Classical processes}

\label{app:classical}

In this appendix, we show that our expressions for mismatch cost also
apply to classical systems, as briefly discussed in \ref{sec:classical}
in the main text.

We first consider discrete-state classical systems, and show that
our quantum results immediately apply to them as a special case. After
that, we consider continuous-state classical systems, and demonstrate
how our quantum results can again be applied, once some appropriate
modifications are made.

Below we write classical entropy and entropy production in sans-serif
font, $\mathsf{S}$ and $\mathsf{\Sigma}$, so as to distinguish them
from quantum entropy $S$ and entropy production $\Sigma$. We will
also make use of classical relative entropy, also called Kullback-Leibler
(KL) divergence. The KL divergence between two probability density
functions $p$ and $r$ can be written as
\begin{align}
D(p\Vert r)= & \begin{cases}
\int p(x)\ln\frac{p(x)}{r(x)}dx & \text{if \ensuremath{\mathrm{supp}\,p\subseteq\mathrm{supp}\,r}}\\
\infty & \text{otherwise},
\end{cases}\label{eq:relentDef-1}
\end{align}
where $\mathrm{supp}\,p:=\{x\in X:p(x)>0\}$ indicates the support
of $p$ (and similarly for $r$). The same definition applies to discrete-state
probability mass functions, as long as the integral is replaced with
summation.

In this appendix we focus on converting results concerning EP in quantum
systems into results concerning EP in classical systems. We note though
that the same kind of reasoning we use below can also be used to convert
our results concerning the quantum ``EP-type'' functions discussed
in \ref{sec:EP-Type-functionals} into results concerning the associated
classical EP-type functions (e.g., classical non-adiabatic EP, entropy
gain, etc). All that's needed for our reasoning to apply is that the
classical EP-type function can be written in the form of classical
EP (\ref{eq:classicEP}, \ref{eq:EPclassicalKL}, or \ref{eq:contm2}
below), where $G$ is an arbitrary linear functional of the initial distribution $p$.

\subsection{Classical processes in discrete state-space}

\label{app:classical-discrete}

\subsubsection{Integrated EP}

We first discuss how our analysis of quantum mismatch cost for integrated
EP applies to discrete-state classical systems. Consider a classical
system with a discrete state space $X$ which undergoes a driving
protocol over some time interval $t\in[0,\tau]$ while coupled to
some thermodynamic reservoirs. As mentioned in \ref{subsec:classical-Integrated-EP},
we use $\mathrm{P}(\bm{x}\vert x_{0})$ to indicate the conditional
probability of the system undergoing the trajectory $\bm{x}=\{x_{t}:t\in[0,\tau]\}$
under the regular (``forward'') protocol, given initial microstate
$x_{0}$. We will also sometimes write the conditional probability
of final microstates $j$ given initial microstate $i$ in terms of
the transition matrix $T(j\vert i)=\mathrm{P}(x_{\tau}=j\vert x_{0}=i)$,
so that the map from initial to final distributions can be expressed
in matrix notation as $p'=Tp$. In addition, it will sometimes be
useful to consider the conditional probability $\tilde{\mathrm{P}}(\tilde{\bm{x}}\vert\tilde{x}_{\tau})$
of observing the \emph{time-reversed} trajectory $\tilde{\bm{x}}=\{\tilde{x}_{\tau-t}:t\in[0,\tau]\}$
under the \emph{time-reversed} driving protocol given initial microstate
$\tilde{x}_{\tau}$ (tilde notation like $\tilde{x}$ indicates conjugation
of odd variables such as momentum \citep{ford_entropy_2012,spinneyNonequilibriumThermodynamicsStochastic2012}).

Let the elements of the state space $X$ index a set of pure quantum
states in some complete orthonormal reference basis $\{\vert i\rangle:i\in X\}$.
One can then choose $P=\{\vert i\rangle\langle i\vert\}_{i\in X}$
and define $\mathcal{D}_{P}$ as in \ref{eq:densBdef} (i.e., as the
set of density operators diagonal in the reference basis). Any probability
distribution $p$ over $X$ now corresponds to the mixed quantum state
\begin{equation}
\rho^{p}=\sum_{i}p_{i}\vert i\rangle\langle i\vert\in\mathcal{D}_{P}.\label{eq:XpDef}
\end{equation}
Note that the quantum and classical relative entropy are identical
when applied to elements of $\mathcal{D}_{P}$:
\begin{equation}
S({\rho^{p}}\Vert{\rho^{r}})=D(p\Vert r).\label{eq:relEntAgree}
\end{equation}
Conversely to \ref{eq:XpDef}, any quantum state $\rho$ can be turned
into a distribution over $X$ via
\begin{equation}
p_{i}^{\rho}=\langle i\vert\rho\vert i\rangle.\label{eq:PXpdef}
\end{equation}
Note that the map $\rho\mapsto p^{\rho}$ is many-to-one, as it ignores
all off-diagonal elements of $\rho$ relative to the reference basis
(i.e., it ignores any coherence in $\rho$).

Now consider the quantum channel, which is defined in terms of $T$
as
\begin{align}
\Phi(\rho) & :=\sum_{i,j}T(j\vert i)\langle i\vert\rho\vert i\rangle\vert j\rangle\langle j\vert.\label{eq:clch0}
\end{align}
Applying the classical transition matrix $T$ to the classical distribution
$p$ and then converting it into a density matrix via \ref{eq:XpDef}
is equivalent to applying $\Phi$ to the associated quantum mixed
state $\rho^{p}$:
\begin{align}
\Phi(\rho^{p})=\sum_{j}\Big(\sum_{i}T(j\vert i)p_{i}\Big)\vert j\rangle\langle j\vert=\rho^{Tp}.\label{eq:clch}
\end{align}
In this sense, maps between the classical and quantum pictures commute
with the associated dynamic operators.

The expected classical entropy flow can also be written in terms of
a quantum functional, which is defined in terms of $G$ as
\begin{align}
Q(\rho):=G(p^{\rho}).\label{eq:chef}
\end{align}
$Q$ is a linear functional (since we assumed $G$ is linear).
In addition, for any ``classical'' mixed state
$\rho^{p}\in\mathcal{D}_{P}$, $Q(\rho^{p})=G(p)$ as expected.

Note that although $Q$ and $\Phi$ are defined in a quantum manner,
they behave classically. In particular, they are both invariant to
coherence relative to the reference basis $\{\vert i\rangle\}$,
\begin{align}
\Phi(\rho)=\Phi(\mathcal{P}_{P}(\rho)),\;Q(\rho)=Q(\mathcal{P}_{P}(\rho))\quad\forall\rho\in\mathcal{D},\label{eq:app83}
\end{align}
where $\mathcal{P}_{P}(\rho)=\sum_{i}\vert i\rangle\langle i\vert\rho\vert i\rangle\langle i\vert$
is the ``pinching map'' for the reference basis~\citep{tomamichelQuantumInformationProcessing2016}.
In addition, the output of $\Phi$ is always diagonal in the reference
basis, so its outputs always commute,
\begin{align}
[\Phi(\rho),\Phi(\varphi)]=0\quad\forall\rho,\varphi\in\mathcal{D}.\label{eq:app84}
\end{align}

With these definitions, the standard definition of integrated EP in
classical stochastic thermodynamics, \ref{eq:classicEP} (or equivalently
\ref{eq:EPclassicalKL}), can be seen as a special case of quantum
integrated EP, as defined in \ref{eq:functype1}, i.e., $\mathsf{\Sigma}(p)=\Sigma(\rho^{p})$.
Therefore one can analyze classical mismatch cost using the results
in the main text, such as \ref{eq:equalityBasis,eq:ineqConvex}, by
considering the quantum channel $\Phi$ and entropy flow functional
$Q$ defined above, and by restricting attention to the set of mixed
states in $\mathcal{D}_{P}$.

It is also possible to analyze classical mismatch cost within
the subset of probability distributions whose support is restricted
to some subset of microstates $S\subseteq X$. This can be done by
choosing $P$ to be the corresponding subset of pure states, $P=\{\vert i\rangle\langle i\vert\}_{i\in S}$,
and then analyzing mismatch cost within the resulting set of diagonal
mixed states $\mathcal{D}_{P}$.

\subsubsection{Fluctuating EP}

\label{app:classical-fluct-discrete}

Consider a quantum channel that has the form given in \ref{eq:clch}
and an entropy flow function that has the form given in \ref{eq:chef},
as might represent entropy flow in a classical system. We consider
two mixed states $\rho^{p}=\sum_{i}p_{i}\vert i\rangle\langle i\vert\in\mathcal{D}_{P}$
and $\rho^{r}=\sum_{i}r_{i}\vert i\rangle\langle i\vert\in\mathcal{D}_{P}$
that correspond to two classical probability distributions $p$ and
$r$, and we will use the shorthand $p'=Tp$ and $r'=Tr$. As in the
main text, we assume that $D(p\Vert r)<\infty$ and
\[
\Sigma(\rho^{p})-\Sigma(\rho^{r})=-\Delta S({\rho^{p}}\Vert{\rho^{r}}).
\]
(In particular, this might be because $\rho^{r}$ is a minimizer of
EP in some convex set.) It is clear that $\rho^{p}$ and $\rho^{r}$
commute since they are both diagonal in the reference basis. In addition,
$\Phi(\rho^{p})=\rho^{p'}$ and $\Phi_{T}(\rho^{r})=\rho^{r'}$ must
also commute, given \ref{eq:app84}. Therefore the simple commuting
case of fluctuating mismatch cost which is analyzed in the main text,
and in more detail in \ref{app:fluctuating}, applies to all classical
processes. In particular, the fluctuating mismatch cost in \ref{eq:mDef0}
can be written as in terms of probability values in $p$ and $r$
as
\begin{multline}
\sigma_{{\rho^{p}}}(i\to j,q)-\sigma_{{\rho^{r}}}(i\to j,q)=\\
(-\ln p'_{j}+\ln p_{i})-(-\ln r'_{j}+\ln r_{i}).
\end{multline}
This classical special case of fluctuating mismatch cost obeys the
fluctuating mismatch cost results described in the main text. In particular,
it agrees with average mismatch cost in expectation,
\begin{align*}
& \big\langle\sigma_{{\rho^{p}}}-\sigma_{{\rho^{r}}}\big\rangle_{\mathrm{P}(\bm{x}\vert x_{0})p(x_{0})}\\
& \qquad=-\Delta S({\rho^{p}}\Vert{\rho^{r}})=\Sigma({\rho^{p}})-\Sigma({\rho^{r}})\\
& \qquad=-\Delta D(p\Vert r)=\mathsf{\Sigma}(p)-\Sigma(r).
\end{align*}
In addition, it obeys an integral fluctuation theorem,
\begin{equation}
\big\langle e^{\sigma_{{\rho^{p}}}-\sigma_{{\rho^{r}}}}\big\rangle_{\mathrm{P}(\bm{x}\vert x_{0})p(x_{0})}=\gamma,\label{eq:iftcl}
\end{equation}
where
\begin{equation}
\gamma=\sum_{j}p'_{j}\frac{\sum_{i}T(j\vert i)r_{i}\mathbf{1}_{\mathrm{supp}\,p}(i)}{r'_{j}}\in(0,1],\label{eq:classicalGamma}
\end{equation}
where $\mathbf{1}$ is the indicator function.
\ref{eq:iftcl} is the classical analogue of \ref{eq:IFT}. It implies
that negative values of classical fluctuating mismatch cost are exponentially
unlikely: $\mathrm{Pr}\big[(\sigma_{{\rho^{p}}}-\sigma_{{\rho^{r}}})\le-\xi\big]\le\gamma e^{-\xi}$
(see \ref{app:fluctuating}).

For this classical channel, the Petz recovery map is simply
the Bayesian inverse of the transition matrix with respect to the
reference probability distribution \citep{leiferFormulationQuantumTheory2013,wildeQuantumInformationTheory2017}.
In other words, plugging $\Phi$ from \ref{eq:clch0} and $\varphi={\rho^{r}}$
into \ref{eq:recov} gives
\begin{equation}
T_{\mathcal{R}_{\Phi}^{\varphi}}(i\vert j)=\frac{T(j\vert i)r_{i}}{\sum_{i'}T(j\vert i')r_{i'}}.\label{eq:trm2}
\end{equation}
Thus the classical analogue of \ref{eq:ldb1} holds, which allows
us to write the classical mismatch cost as
\begin{multline}
\sigma_{{\rho^{p}}}(i\to j,q)-\sigma_{{\rho^{r}}}(i\to j,q)=\\
(-\ln p_{j}'+\ln p_{i})+\ln\frac{T(j\vert i)}{T_{\mathcal{R}_{\Phi}^{\varphi}}(i\vert j)}=\ln\frac{T(j\vert i)p_{i}}{T_{\mathcal{R}_{\Phi}^{\varphi}}(i\vert j)p'_{j}}.\label{eq:clPetz}
\end{multline}
In this sense, the classical fluctuating mismatch cost of $p$ quantifies
the time-asymmetry between the forward process and the reverse process,
as defined by the Bayesian inverse of the forward process run on the
optimal distribution $r$.

\subsubsection{EP rate}

\label{app:classical-eprate-disc}

Consider a discrete-state classical system which evolves according
to a Markovian master equation,
\[
{\textstyle \frac{d}{dt}}p_{j}(t)=\sum_{i}p_{i}(t)W_{ji}.
\]
In general, the classical EP rate can be written as \citep{esposito2010three}
\begin{align}
\dot{\mathsf{\Sigma}}(p)={\textstyle \frac{d}{dt}}\mathsf{S}(p(t))+\dot{G}(p),\label{eq:classicEPrate}
\end{align}
where $\dot{G}(p)$ is the rate of entropy flow to environment. As
always, the form of $\dot{G}(p)$ will depend on the specifics of
the physical process, but it can generally be written as an expectation
over the microstates. For instance, imagine a system coupled to some
number of thermodynamic reservoirs $\{\nu\}$ which contribute additively
to the overall rate matrix $W$ as $W=\sum_{\nu}W^{\nu}$. Then, the
expression for the rate of entropy flow is
\[
\dot{G}(p)=\sum_{i}p_{i}\sum_{\nu,j}W_{ji}^{\nu}\ln\frac{W_{ji}^{\nu}}{W_{ij}^{\nu}},
\]
where $W_{ji}^{\nu}$ is the transition rate from microstate $i$
to microstate $j$ due to transitions mediated by reservoir $\nu$
(for details, see~\citep{esposito2010three}).

We now show how mismatch cost for classical EP rate can be expressed
in the quantum formalism used in the main text. Define the
following Lindbladian in terms of $W$.
\begin{align}
\mathcal{L}(\rho):=\sum_{i,j}W_{ji}\langle i\vert\rho\vert i\rangle\vert j\rangle\langle j\vert,
\end{align}
Next, define a quantum functional corresponding to the entropy flow
rate in terms of $\dot{G}$,
\begin{align}
\dot{Q}(\rho):=\dot{G}(p^{\rho}),\label{eq:efrapp}
\end{align}
where $p^{\rho}$ is defined as in \ref{eq:PXpdef}.

Given these definitions, consider a mixed state $\rho^{p}=\sum_{i}p_{i}\vert i\rangle\langle i\vert\in\mathcal{D}_{P}$
that represents a classical distribution $p$. Applying the Lindbladian $\mathcal{L}$ to $\rho^{p}$ is equivalent to evolving $p$
under the classical rate matrix,
\begin{align}
\mathcal{L}(\rho^{p})=\sum_{i,j}W_{ji}p_{i}\vert j\rangle\langle j\vert=\sum_{j}\big({\textstyle \frac{d}{dt}}p_{j}(t)\big)\vert j\rangle\langle j\vert.
\end{align}
Similarly, the quantum entropy flow rate obeys $\dot{Q}(\rho^{p})=\dot{G}(p)$,
as expected, and is a linear functional since $\dot{G}$ is an expectation.
Therefore, one can analyze classical instantaneous mismatch cost using
\ref{eq:EPr-equality,eq:ineqConvex-1}, by defining the Lindbladian
$\mathcal{L}$ and entropy flow rate functional $\dot{Q}$ as above,
and by restricting attention to the set of states in $\mathcal{D}_{P}$.

Note that it is also possible to consider instantaneous mismatch cost
within the subset of probability distributions with support restricted
to some subset of microstates $S\subseteq X$. This can be done by
choosing $P=\{\vert i\rangle\langle i\vert\}_{i\in S}$ to be the
corresponding subset of pure states, and then analyzing instantaneous mismatch cost
within the resulting set of diagonal states $\mathcal{D}_{P}$.

\subsection{Classical processes in continuous phase space}

\label{app:classical-continuous}

Above we showed that mismatch cost for discrete-state classical systems
follows as a special case of our quantum analysis. However, the mapping
between quantum and continuous-state classical system is not as straightforward,
because it is not generally possible to represent a continuous probability
distribution in terms of a density operator over a separable Hilbert
space. Nonetheless, as we show in this appendix, the same proof techniques
used to derive our quantum results can also be used to derive mismatch
cost for continuous-state classical processes, as long as an appropriate
``translation'' is carried out.

We start with some definitions. Let $X\subseteq\mathbb{R}^{n}$ indicate
the continuous-state space of a classical system. This state space
can represent the configuration space of the system (only position d.o.f.s),
as might be appropriate for a system with overdamped
dynamics, or the full phase space of the system (both position and
momentum d.o.f.s), as might be appropriate for a system with underdamped dynamics.
In this subsection, we use the
term ``probability distribution'' to refer to a probability density
function.

\subsubsection{Integrated EP}

\label{app:contIntEP}

Consider a continuous-state system that undergoes a driving protocol
over some time interval $t\in[0,\tau]$, while coupled to some thermal
reservoir(s). As above, we use $\mathrm{P}(\bm{x}\vert x_{0})$ and
$\tilde{\mathrm{P}}(\tilde{\bm{x}}\vert\tilde{x}_{\tau})$ to indicate
the conditional trajectory distributions under the forward and backward
protocols, respectively. We will sometimes write the map from initial
to final probability distributions in operator notation as $p'=Tp$,
where the transition operator $T$ is defined in terms of the conditional
probability density as $[Tp](x_{\tau})=\int\mathrm{P}(x_{\tau}\vert x_{0})p(x_{0})\,dx_{0}$.

We will consider the following two classical EP-type functions. The
first is a slightly generalized form of \ref{eq:EPclassicalKL},
\begin{align}
\mathsf{\Sigma}(p) & =D\big(\mathrm{P}(\bm{X}\vert X_{0})p(X_{0})\Vert\tilde{\mathrm{P}}(\tilde{\bm{X}}\vert\tilde{X}_{\tau})p'(X_{\tau})\big)+G(p),\label{eq:contm1}
\end{align}
where $G'$ is any lower-semicontinuous linear functional (lower-semicontinuity
is taken to be in the topology of total variation).

The second is
a slightly generalized form of \ref{eq:contm2PRE},
\begin{align}
\mathsf{\Sigma}(p) & =D({p'(X_{\tau},Y_{\tau})}\Vert{p'(X_{\tau})q({Y_{\tau}\vert X_{\tau}})})+G'(p),\label{eq:contm2}
\end{align}
where $q(y_{\tau}\vert x_{\tau})$ is any conditional distribution
of bath states given system states and $G'$ is any lower-semicontinuous
linear functional.
As discussed near \ref{eq:contm2PRE}, this definition applies only when the system and environment evolve together in a Hamiltonian manner in the full phase space, so that the map from the initial to the final distribution is volume-preserving.

Note that in principle this conditional distribution
may be independent of $X$, in which case the right hand side of \ref{eq:contm2}
would have the form of $D({p'(X_{\tau},Y_{\tau})}\Vert{p'(X_{\tau})q(Y_{\tau})})+G'(p)$,
in complete analogy to \ref{eq:functype2}. (As in the other setting
we consider in this paper, the generalization to any such linear functional
allows us to consider various ``EP-type'' functions in the setting
of continuous-state classical systems, including not only EP but also
nonadiabatic EP, entropy gain, etc., see discussion in \ref{sec:EP-Type-functionals}).

Our results below apply to both forms of classical EP, \ref{eq:contm1}
and \ref{eq:contm2}. This is not surprising, as for Hamiltonian systems the two forms can
be shown to be mathematically equivalent up to the choice of the arbitrary
linear functions $G$and $G'$. This is proved in \ref{prop:formequiv-1}
below, which is the classical equivalent of \ref{prop:formequiv}.

Using these definitions, we show that our results for mismatch cost
for integrated EP apply to continuous-state classical systems. We
do so by using the exact same proofs as for the quantum case, as found
in \ref{app:proofs-integrated}, with the following replacements:
\begin{enumerate}[wide,labelindent=0pt,labelwidth=!]
\item The quantum EP $\Sigma$ should be re-interpreted as the classical
EP $\mathsf{\Sigma}$ (in particular, \ref{eq:functype1}
can be re-interpreted as \ref{eq:contm1}, while \ref{eq:functype2}
can be reinterpreted as \ref{eq:contm2}).
\item The quantum relative entropy $S(\cdot\Vert\cdot)$ should be re-interpreted
as the classical relative entropy, $D(\cdot\Vert\cdot)$. Similarly,
the change of quantum relative entropy under the quantum channel $\Phi$,
$\Delta S({\rho}\Vert{\varphi})=S(\Phi(\rho)\Vert\Phi(\varphi))-S(\rho\Vert\varphi)$,
should be re-interpreted as the change of KL divergence under the
conditional probability density $T$,
\[
\Delta D(p\Vert r)=D(Tp\Vert Tr)-D(p\Vert r).
\]
\item The set of quantum states $\mathcal{D}$ should be re-interpreted
as the set of probability density functions over $X$. $\mathcal{D}_{P}$
should be re-interpreted as the set of probability density functions
with support limited to some measurable subset $P\subseteq X$.
\item The quantum operator notation $p\ge\alpha r$ should be re-interpreted
to mean $p(x)\ge\alpha r(x)$ for all $x\in X$.
\item References to three propositions, which concern properties quantum
relative entropy and quantum EP, should be replaced by references
to the following propositions (proved below in \ref{app:auxClInt})
which prove analogous properties of KL divergence and classical EP
for continuous state spaces:
\begin{enumerate}
\item \ref{prop:KLpropCl} replaces \ref{prop:KLprop},
\item \ref{prop:epConvMix-1} replaces \ref{prop:epConvMix},
\item \ref{prop:epLSC-1} replaces \ref{prop:epLSC}.
\end{enumerate}
\end{enumerate}
By making these replacement, one can re-use the proofs
of \ref{thm:genf,thm:ineqs,thm:eqresult} to derive expressions
of mismatch cost for continuous-state classical systems rather than
quantum systems. First, consider any pair of distribution $p,r$ such
that $\mathsf{\Sigma}(p),\mathsf{\Sigma}(r),D(p\Vert r)<\infty$.
Then, by \ref{thm:genf}, the directional derivative of $\mathsf{\Sigma}$
at $p$ in the direction of $r$ obeys
\begin{equation}
{\textstyle \partial_{\lambda}^{+}}\mathsf{\Sigma}(r(\lambda))\vert_{\lambda=0}=\mathsf{\Sigma}(p)-\mathsf{\Sigma}(r)+\Delta D(p\Vert r),\label{eq:ddCLAPP}
\end{equation}

This equation is the starting point for deriving various expressions
for mismatch cost. Let $\mathcal{D}_{P}$ indicate the set of distributions
with support limited to some arbitrary measurable subset $P\subseteq X$,
and consider any $p\in\mathcal{D}_{P}$ and $r\in\mathop{\arg\min}_{w\in\mathcal{D}_{P}}\mathsf{\Sigma}(w)$
such that $D(p\Vert r)<\infty$. \ref{thm:eqresult} then shows that
\begin{equation}
\mathsf{\Sigma}(p)-\mathsf{\Sigma}(r)=-\Delta D(p\Vert r),\label{eq:appClMEP}
\end{equation}
which is the classical analogue of \ref{eq:equalityBasis}. More generally,
let $\mathcal{S}\subseteq\mathcal{D}$ be any convex subset of distributions.
Then, by \ref{thm:ineqs}, for any $p\in\mathcal{S}$ and $r_{\mathcal{S}}\in\mathop{\arg\min}_{w\in\mathcal{S}}\mathsf{\Sigma}(w)$
such that $D(p\Vert r_{\mathcal{S}})<\infty$,
\begin{equation}
\mathsf{\Sigma}(p)-\mathsf{\Sigma}(r_{\mathcal{S}})\ge-\Delta D(p\Vert r_{\mathcal{S}}),\label{eq:ineqConvex-2}
\end{equation}
with equality if $(1-\lambda)r_{\mathcal{S}}+\lambda p\in\mathcal{S}$
for some $\lambda<0$. Since $\mathsf{\Sigma}(r_{\mathcal{S}})\ge0$
by the second law, \ref{eq:ineqConvex-2} implies the EP bound
\begin{equation}
\mathsf{\Sigma}(p)\ge-\Delta D(p\Vert r_{\mathcal{S}}).
\end{equation}

We do not prove any result about the support of the optimizer $r\in\mathop{\arg\min}_{w}\mathsf{\Sigma}(w)$
for continuous-state classical systems (as we did for quantum systems
in \ref{thm:connectedmeansfullsupport}), instead leaving this for
future work.

\subsubsection{Fluctuating EP}

\label{app:classical-fluct-cont}

Here we show that our results for fluctuating mismatch cost also apply
to continuous-state classical systems. The underlying logic of the
derivation is the same as for the quantum case, though we slightly
modify our notation.

Consider a continuous-state classical system that undergoes a physical
process, which starts from the initial distribution $p$ and ends
on the final distribution $p'=Tp$. In general, the fluctuating EP
incurred by a continuous-state trajectory $\bm{x}$ can be expressed
as \citep{seifert2012stochastic}
\[
\sigma_{p}(\bm{x})=\ln p(x_{0})-\ln p'(x_{\tau})+q(\bm{x}),
\]
where $q(\bm{x})$ is the entropy flow in coupled reservoirs incurred
by trajectory $\bm{x}(t)$.

Now let $r$ indicate the initial probability distribution that minimizes
EP, so that the following mismatch cost relationship holds:
\begin{equation}
\mathsf{\Sigma}(p)-\mathsf{\Sigma}(r)=-\Delta D(p\Vert r).\label{eq:mnnn2}
\end{equation}
As in the main text, we define fluctuating mismatch cost as the difference
between the fluctuating EP incurred by the trajectory $\bm{x}$ under
the actual initial distribution $p$ and the optimal initial distribution
$r$,
\begin{multline}
\sigma_{p}(\bm{x})-\sigma_{r}(\bm{x})=[-\ln p'(x_{\tau})+\ln p(x_{0})]\\
-[-\ln r'(x_{\tau})+\ln r(x_{0}))],\label{eq:clfluctm}
\end{multline}
where $r'=Tr$, which is the classical analogue of \ref{eq:mDef0}.
It is easy to verify that \ref{eq:clfluctm} is the proper trajectory-level
expression of mismatch cost,
\[
\langle\sigma_{p}-\sigma_{r}\rangle_{\mathrm{P}(\bm{x}\vert x_{0})p(x_{0})}=-\Delta D(p\Vert r)=\mathsf{\Sigma}(p)-\mathsf{\Sigma}(r).
\]
Using a derivation similar to the one in \ref{app:fluctuating}, it
can also be shown that \ref{eq:clfluctm} obeys an integral fluctuation
theorem (IFT),
\begin{equation}
\langle e^{-(\sigma_{p}-\sigma_{r})}\rangle_{\mathrm{P}(\bm{x}\vert x_{0})p(x_{0})}=\gamma,\label{eq:clift}
\end{equation}
where the $\gamma$ correction factor is given by the formula in \ref{eq:classicalGamma}
(with summation replaced by integrals).

Finally, some simple algebra shows that fluctuating mismatch cost
can also be written in terms of the time-asymmetry between the forward
conditional probability distribution $\mathrm{P}(x_{\tau}\vert x_{0})$
and its Bayesian inverse $\mathrm{P}(x_{\tau}\vert x_{0})\frac{r(x_{0})}{r'(x_{\tau})}$,
as in \ref{eq:trm2} and \ref{eq:clPetz}.

\subsubsection{EP rate}

\label{app:classical-EPrate-cont}

Consider a system that evolves in continuous-time according to a Markovian
dynamical generator $L$, which we write generically as
\begin{equation}
\dot{p}(x):=\partial_{t}p(x,t)=Lp.\label{eq:cld}
\end{equation}
For example, this generator may represent an (underdamped or overdamped)
Fokker-Planck operator.

In classical stochastic thermodynamics, the EP rate incurred by distribution
$p$ is then given by \citep{Seifert2005,van2010three}
\begin{equation}
\dot{\mathsf{\Sigma}}(p)={\textstyle \frac{d}{dt}}\mathsf{S}(p(t))+\dot{G}(p)\label{eq:EPrAppCl}
\end{equation}
where the first term indicates rate of the increase of the (continuous)
entropy,
\[
\mathsf{S}(p):=-\int p(x)\ln p(x)\,dx,
\]
while the second term $\dot{G}$ reflects the rate of entropy flow.
While the particular form of $\dot{G}(p)$ will depend on the specific
setup, it has the general form of an expectation over some function
defined over the microstates, which is a linear functional of $p$.

Our results for instantaneous mismatch cost apply to continuous-state
classical systems. In fact, one can use the same proofs as for the
quantum case, as found in \ref{app:EPrate}, while making the quantum-to-classical
substitutions \emph{1-5} described in \ref{app:contIntEP}. We will
also need to make the same technical assumptions regarding the EP
rate as we made in \ref{app:EPrate}: the symmetry of partial derivatives
as in \ref{cond:eprate1}, and the finiteness and continuous differentiability
as in \ref{cond:eprate2}.

Consider any pair of distribution $p,r$ such that $\dot{\mathsf{\Sigma}}(p)<\infty,\dot{\mathsf{\Sigma}}(r)<\infty,D(p\Vert r)<\infty$.
Using the same derivation as in \ref{eq:EPratedd}, the directional
derivative of $\dot{\mathsf{\Sigma}}$ at $p$ in the direction of
$r$ obeys
\begin{equation}
{\textstyle \partial_{\lambda}^{+}}\dot{\mathsf{\Sigma}}(r(\lambda))\vert_{\lambda=0}=\dot{\mathsf{\Sigma}}(p)-\dot{\mathsf{\Sigma}}(r)+{\textstyle \frac{d}{dt}}D(p(t)\Vert r(t)),\label{eq:ddCLAPP-1}
\end{equation}
which allows us to derive various expressions for mismatch cost. In
particular, let $\mathcal{D}_{P}$ indicate set of distributions with
support limited to some arbitrary measurable subset $P\subseteq X$.
Consider $r\in\mathop{\arg\min}_{w\in\mathcal{D}_{P}}\dot{\mathsf{\Sigma}}(w)$
and any $p\in\mathcal{D}_{P}$ such that $p\ge\alpha r$ for some
$\alpha>0$. \ref{thm:eprate} then shows that
\begin{equation}
\dot{\mathsf{\Sigma}}(p)-\dot{\mathsf{\Sigma}}(r)=-{\textstyle \frac{d}{dt}}D(p(t)\Vert r(t)),\label{eq:appClMEP-1}
\end{equation}
which is the classical analogue of \ref{eq:EPr-equality}. More generally,
let $\mathcal{S}\subseteq\mathcal{D}$ be any convex subset of distributions.
By \ref{thm:ineqs}, for any $p\in\mathcal{S}$ and $r_{\mathcal{S}}\in\mathop{\arg\min}_{w\in\mathcal{S}}\dot{\mathsf{\Sigma}}(w)$
such that $D(p\Vert r_{\mathcal{S}})<\infty$,
\begin{equation}
\dot{\mathsf{\Sigma}}(p)-\dot{\mathsf{\Sigma}}(r_{\mathcal{S}})\ge-{\textstyle \frac{d}{dt}}D(p(t)\Vert r_{\mathcal{S}}(t))\label{eq:ineqConvex-2-2}
\end{equation}
with equality if $(1-\lambda)r_{\mathcal{S}}+\lambda p\in\mathcal{S}$
for some $\lambda<0$. Since $\dot{\mathsf{\Sigma}}(r_{\mathcal{S}})\ge0$
by the second law, \ref{eq:ineqConvex-2-2} implies the EP rate bound
\[
\dot{\mathsf{\Sigma}}(p)\ge-{\textstyle \frac{d}{dt}}D(p(t)\Vert r_{\mathcal{S}}(t)).
\]

We do not prove any results about the support of the optimizer $r\in\mathop{\arg\min}_{w}\dot{\mathsf{\Sigma}}(w)$
for continuous-state classical systems (as we did for quantum systems
in \ref{prop:epr-fsupp}), instead leaving this for future work.

\subsubsection{Properties of KL divergence and classical EP for continuous-state
systems}

\label{app:auxClInt}

We now state several (mostly well-known) results about classical EP
and relative entropy in continuous-state spaces. These results serve the role of
\ref{prop:KLprop,prop:epConvMix,prop:epLSC} for continuous-state
classical systems.
\begin{prop}
\label{prop:KLpropCl}For any $p,r\in\Omega$ and conditional probability
density $T(x'\vert x)$, the classical relative entropy $D(p\Vert r)$
obeys the following properties:
\end{prop}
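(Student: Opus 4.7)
The plan is to mirror the structure of the quantum version, \ref{prop:KLprop}, replacing quantum tools with their classical counterparts. Since all five assertions are classical analogues of well-known facts about KL divergence, the main work is verifying that the same limits behave correctly for probability densities on a continuous state space (and for discrete mass functions as a special case), rather than for density operators.

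For part I (joint convexity), I would invoke the log-sum inequality, which gives joint convexity of $D(\cdot\|\cdot)$ on pairs of probability measures over any measurable space; this is standard and can be cited directly from classical information theory. For part V (monotonicity under a stochastic kernel $T$), I would apply the classical data processing inequality, again a textbook result. Part IV is immediate from the definition: if $r(x)\ge \alpha p(x)$ pointwise, then $\ln r(x) \ge \ln\alpha + \ln p(x)$ on $\mathrm{supp}\,p$, so integrating against $p$ gives $D(p\|r)\le -\ln\alpha$.

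The two nontrivial parts are II and III, which need limit computations. For II, I would note that $(1-\lambda)r+\lambda p \to r$ in total variation as $\lambda\to 0^+$, and use lower-semicontinuity of $D(\cdot\|\cdot)$ together with the monotone bound $D(p\|(1-\lambda)r+\lambda p)\le D(p\|(1-\lambda)r) = D(p\|r) - \ln(1-\lambda)$ to sandwich the limit at $D(p\|r)$. Alternatively, one can apply dominated convergence directly to the integrand $p\ln\bigl(p/[(1-\lambda)r+\lambda p]\bigr)$, using $\ln(p/r)\cdot \mathbf{1}_{\{r>0\}}$ plus a $-\ln(1-\lambda)$ envelope as the dominating function. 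For III, I would reproduce the argument of \ref{prop:KLprop}(\kllimBaRef) in the classical setting: factor $\frac{1-\lambda}{\lambda}D(r\|(1-\lambda)r+\lambda p)$ as $f(\lambda)\cdot\frac{D(r\|(1-\lambda)r+\lambda p)}{-\ln(1-\lambda)}$ with $f(\lambda)=-\frac{1-\lambda}{\lambda}\ln(1-\lambda)\to 1$ via L'H\^opital, and then use the classical analogue of Audenaert's telescopic-relative-entropy identity — namely $\lim_{\lambda\to 0^+} D(r\|(1-\lambda)r+\lambda p)/(-\ln(1-\lambda)) = 1 - \int_{\mathrm{supp}\,r} p\,dx$, which in the classical case follows from an elementary dominated-convergence computation on $\int r\ln\bigl(1/[(1-\lambda)+\lambda p/r]\bigr)\,dx$, split over $\{p/r \le M\}$ and its complement. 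The assumption $D(p\|r)<\infty$ forces $\mathrm{supp}\,p\subseteq\mathrm{supp}\,r$, so $\int_{\mathrm{supp}\,r}p\,dx=1$ and the limit vanishes.

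The main obstacle I anticipate is III: the quantum proof relies on a nontrivial spectral telescopic identity, but in the classical setting its counterpart reduces to a direct integral computation whose only subtlety is uniform integrability near $\lambda = 0$. All that is needed is a dominating function for $r(x)\ln\bigl(r(x)/[(1-\lambda)r(x)+\lambda p(x)]\bigr)/(-\ln(1-\lambda))$ on compact $\lambda$-neighborhoods of $0$, which is provided by the uniform bound $r/[(1-\lambda)r+\lambda p] \le 1/(1-\lambda)$ on $\mathrm{supp}\,r$. Once this is in place, the remaining pieces assemble exactly as in the quantum case, and the proof of the proposition concludes.
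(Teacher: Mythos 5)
Your proposal is correct in substance and, for parts \klconvexref, \klupperboundpropref, and \klmonopropref, coincides with the paper's proof (direct citation of joint convexity and the data processing inequality, and a one-line manipulation of the definition for the upper bound). The two places where you diverge are II and III. For II, the paper deduces the limit from convexity plus lower-semicontinuity of $D$ via \mycitep[Corollary~7.5.1]{rockafellarConvexAnalysis1970}; your sandwich argument --- the elementary upper bound $D(p\Vert(1-\lambda)r+\lambda p)\le D(p\Vert r)-\ln(1-\lambda)$ from $(1-\lambda)r+\lambda p\ge(1-\lambda)r$ pointwise, combined with lower-semicontinuity for the matching lower bound --- is more self-contained and avoids the convex-analysis citation entirely; both routes are valid. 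For III, you and the paper use the same skeleton (factor out $f(\lambda)=-\frac{1-\lambda}{\lambda}\ln(1-\lambda)\to 1$, then evaluate $\lim_{\lambda\to 0^+}D(r\Vert(1-\lambda)r+\lambda p)/(-\ln(1-\lambda))=1-\int_{\mathrm{supp}\,r}p\,dx$ by dominated convergence), and you are right that the quantum telescopic identity of Audenaert is replaced here by an elementary integral computation.

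One small caution on III: the dominating function you propose, namely the bound $r/[(1-\lambda)r+\lambda p]\le 1/(1-\lambda)$ on $\mathrm{supp}\,r$, only controls the integrand from above (i.e., on the set where $p<r$, where the numerator $-r\ln[(1-\lambda)+\lambda p/r]$ is positive and bounded by $-r\ln(1-\lambda)$). On the set where $p>r$ the integrand is negative and this bound says nothing; you need the separate elementary estimate $r\ln[1+\lambda(p/r-1)]\le\lambda(p-r)$, which after dividing by $-\ln(1-\lambda)\ge\lambda$ gives domination by $p-r$. The paper sidesteps this case split by using the single two-sided bound $|\ln[(1-\lambda)+\lambda z]|\le|1-z|$ for small $\lambda$, yielding the dominating function $|r-p|$ in one stroke. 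The fix is one line, so this is an incompleteness of detail rather than a gap in the argument.
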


\begin{enumerate}
\item[\klconvexprop] $D(p\Vert r)$ is jointly convex in both arguments.
\item[\kllimAaProp] $\lim_{{\lambda\to0^{+}}}D(p\Vert(1-\lambda)r+\lambda p)=D(p\Vert r).$
\item[\kllimBaProp] If $D(p\Vert r)<\infty$, then
\begin{equation}
\lim_{{\lambda\to0^{+}}}\frac{1-\lambda}{\lambda}D(r\Vert(1-\lambda)r+\lambda p)=0.\label{eq:lm9A-1}
\end{equation}
\item[\klupperboundprop] If $\ensuremath{r\ge\alpha p}$ and some $\ensuremath{\alpha>0}$,
\begin{align}
D(p\Vert r)\le-\ln\alpha<\infty\text{.}\label{eq:boundKL-1}
\end{align}
\item[\klmonoprop] \emph{Monotonicity:} if $D(p\Vert r)<\infty$, then \emph{
\[
\Delta D(p\Vert r):=D(Tp\Vert Tr)-D(p\Vert r)\le0.
\]
}
\end{enumerate}
\begin{proof}
\klconvexprop Proved in \citep{polyanskiyLectureNotesInformation2019}.\\

\noindent \noindent \kllimAaProp It is clear that $\lim_{\lambda\to0}(1-\lambda)r+\lambda p=r$
in the topology of total variation distance. Note that KL divergence
obeys monotonicity, convexity in both arguments \citep{cover_elements_2006}
and lower-semicontinuity in the topology of weak convergence \citep{posnerRandomCodingStrategies1975}
(thus also in the topology of total variation distance, which is stronger).
The result then follows from \mycitep[Corollary~7.5.1]{rockafellarConvexAnalysis1970}.\\

\noindent \noindent \kllimBaProp Define $f(\lambda):=-\frac{1-\lambda}{\lambda}\ln(1-\lambda)$
and then write
\begin{align}
& \lim_{{\lambda\to0^{+}}}\frac{1-\lambda}{\lambda}D(r\Vert(1-\lambda)r+\lambda p)\nonumber \\
& =\lim_{{\lambda\to0^{+}}}f(\lambda)\lim_{{\lambda\to0^{+}}}\frac{D(r\Vert(1-\lambda)r+\lambda p)}{-\ln(1-\lambda)}\\
& =\lim_{{\lambda\to0^{+}}}\frac{D(r\Vert(1-\lambda)r+\lambda p)}{-\ln(1-\lambda)},\label{eq:mz0a-1}
\end{align}
where we used that $\lim_{{\lambda\to0^{+}}}f(\lambda)=1$ from L'H\^{o}pital's
rule. A bit of rearranging then gives
\begin{align*}
& \frac{D(r\Vert(1-\lambda)r+\lambda p)}{-\ln(1-\lambda)}\\
& =\int r(x)\frac{\ln\Big((1-\lambda)+\lambda p(x)/r(x)\big)}{\ln(1-\lambda)}dx
\end{align*}
Note that $\left|\ln[(1-\lambda)+\lambda z]\right|\le\left|1-z\right|$for
$\lambda\in[0,1-1/e)$ and $z>0$. That implies that for $\lambda\in[0,1-1/e)$,
\begin{align*}
r(x)\left|\ln[(1-\lambda)+\lambda p(x)/r(x)]\right| & \le r(x)\left|1-p(x)/r(x)\right|\\
& =\left|r(x)-p(x)\right|.
\end{align*}
Then, by the dominated convergence theorem, one can move the limit
inside the integral:
\begin{align*}
& \lim_{{\lambda\to0^{+}}}\frac{D(r\Vert(1-\lambda)r+\lambda p)}{-\ln(1-\lambda)}\\
& =\int r(x)\lim_{{\lambda\to0^{+}}}\frac{\ln\Big((1-\lambda)+\lambda p(x)/r(x)\big)}{\ln(1-\lambda)}dx\\
& =\int r(x)\left[1-\frac{p(x)}{r(x)}\right]dx\\
& =1-\int\mathbf{1}_{\mathrm{supp}\,r}(x)p(x)\,dx\\
& =0,
\end{align*}
where in the last line we used that $D(p\Vert r)<\infty$ implies
that $\ensuremath{\mathrm{supp}\,p\subseteq\mathrm{supp}\,r}$ (by
the definition of KL divergence in \ref{eq:relentDef-1}). Plugging
into \ref{eq:mz0a-1} gives \ref{eq:lm9A-1}.\\

\noindent \noindent \klupperboundprop Follows from a simple manipulation
of \ref{eq:relentDef-1}.\\

\noindent \noindent \klmonoprop Follows from the monotonicity property
of KL divergence, i.e., the ``data processing inequality'' \citep{polyanskiyLectureNotesInformation2019}.
\end{proof}
The next result shows that the definitions in \ref{eq:contm1} and
\ref{eq:contm2} are equivalent. We note that this result applies under the assumption that the
system and environment jointly evolve in a Hamiltonian manner, so that \ref{eq:contm2} is a valid definition of integrated EP.
We will use $g:X\times Y\to X\times Y$
to indicate the invertible volume-preserving evolution function specified by the Hamiltonian
dynamics over system and environment from time $t=0$ to time $t=\tau$.

\begin{prop}
\label{prop:formequiv-1}Given the definitions of the terms in \ref{eq:contm1}
and \ref{eq:contm2}, and assuming all relevant terms are finite,
\begin{align}
& D\big(\mathrm{P}(\bm{X}\vert X_{0})p(X_{0})\Vert\tilde{\mathrm{P}}(\tilde{\bm{X}}\vert\tilde{X}_{\tau})p'(X_{\tau})\big)+G(p)\nonumber \\
& \quad=D({p'(X_{\tau},Y_{\tau})}\Vert{p'(X_{\tau})q({Y_{\tau}\vert X_{\tau}})})+G'(p),\label{eq:gd2}
\end{align}
where $G'(p):=G(p)+\int p(x_{0})f(x_{0})\,dx_{0}$ and $f:X\to\mathbb{R}$
is defined as
\[
f(x_{0})\!:=\!\Big\langle\!\ln\frac{\mathrm{P}(\bm{x}\vert x_{0})}{\tilde{\mathrm{P}}(\tilde{\bm{x}}\vert\tilde{x}_{\tau})}\!\Big\rangle_{\mathrm{P}(\bm{x}\vert x_{0})}\!+\!\Big\langle\!\ln\frac{q(y\vert x)\vert_{g(x_{0},y_{0})}}{q(y_{0}\vert x_{0})}\!\Big\rangle_{q(y_{0}\vert x_{0})}.
\]
\end{prop}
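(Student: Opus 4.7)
The plan is to expand both sides of \ref{eq:gd2} into a common form involving the change of system entropy $\mathsf{S}(p') - \mathsf{S}(p)$ plus trajectory-level and bath-level logarithmic terms, and then show that the defining $f$ collects precisely the residual terms by which the two expressions differ. The key structural input is that the system-plus-environment dynamics $g$ is volume-preserving (Hamiltonian), which implies invariance of the joint Shannon entropy, $\mathsf{S}(p'(X_\tau,Y_\tau)) = \mathsf{S}(p(X_0,Y_0))$, and that the initial joint distribution factorizes as $p(x_0,y_0) = p(x_0)q(y_0\vert x_0)$, as implicit in the form of $f$.

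First I would handle the trajectory-level term on the LHS. Using the factorization $\ln \frac{p(x_0)\mathrm{P}(\bm{x}\vert x_0)}{p'(x_\tau)\tilde{\mathrm{P}}(\tilde{\bm{x}}\vert\tilde{x}_\tau)} = \ln\frac{p(x_0)}{p'(x_\tau)} + \ln\frac{\mathrm{P}(\bm{x}\vert x_0)}{\tilde{\mathrm{P}}(\tilde{\bm{x}}\vert\tilde{x}_\tau)}$ and taking the expectation under the forward trajectory measure, and noting that the marginal of the forward measure at time $\tau$ is $p'$, yields
\begin{equation*}
D\big(\mathrm{P}p\Vert\tilde{\mathrm{P}}p'\big) = \mathsf{S}(p') - \mathsf{S}(p) + \int p(x_0)\,f_1(x_0)\,dx_0,
\end{equation*}
where $f_1(x_0)$ is the first summand in the definition of $f$. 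Adding $G(p)$ then gives the final form of the LHS.

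Next I would expand the state-level KL on the RHS by writing it as $-\mathsf{S}(p'(X_\tau,Y_\tau)) + \mathsf{S}(p'(X_\tau)) - \langle \ln q(y_\tau\vert x_\tau)\rangle_{p'(x_\tau,y_\tau)}$. Invoking volume-preservation of $g$ replaces $\mathsf{S}(p'(X_\tau,Y_\tau))$ with $\mathsf{S}(p(X_0,Y_0))$, which decomposes as $\mathsf{S}(p) + \int p(x_0)\mathsf{S}(q(\cdot\vert x_0))\,dx_0$ using the factorized initial joint. For the remaining $\langle \ln q(y_\tau\vert x_\tau)\rangle$ term, I would push the expectation back to $t=0$ by change of variables $(x_\tau,y_\tau) = g(x_0,y_0)$, whose Jacobian has determinant $1$, obtaining $\int p(x_0)q(y_0\vert x_0)\ln q(y\vert x)\vert_{g(x_0,y_0)}\,dx_0\,dy_0$. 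Combining these pieces gives
\begin{equation*}
D\big(p'(X_\tau,Y_\tau)\Vert p'(X_\tau)q(Y_\tau\vert X_\tau)\big) = \mathsf{S}(p') - \mathsf{S}(p) - \int p(x_0)\,f_2(x_0)\,dx_0,
\end{equation*}
with $f_2$ the second summand of $f$ (the sign is produced by grouping $\ln q(y_0\vert x_0) - \ln q(y\vert x)\vert_{g(x_0,y_0)}$). Adding $G'(p) = G(p) + \int p(x_0)[f_1(x_0)+f_2(x_0)]\,dx_0$ then exactly cancels the $-\int p f_2$ contribution and reproduces the LHS.

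The only nontrivial step is the change-of-variables argument for pushing the $\ln q(y_\tau\vert x_\tau)$ integral back to the initial time; this is where Hamiltonianity (and hence the unit Jacobian of $g$) is essential. All other manipulations are algebraic rearrangements of logarithms and applications of marginal consistency. Since the statement itself presupposes finiteness of all the entropy and KL terms involved, no additional regularity conditions need be verified, and the proof reduces to the bookkeeping above.
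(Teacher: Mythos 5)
Your proposal is correct and follows essentially the same route as the paper's proof: expand the trajectory-level divergence into $\mathsf{S}(p')-\mathsf{S}(p)$ plus the $f_1$ term, expand the joint-state divergence, invoke invariance of differential entropy under the volume-preserving map $g$ together with the entropy chain rule, and push the $\ln q(y_\tau\vert x_\tau)$ expectation back to $t=0$ by a unit-Jacobian change of variables. The sign bookkeeping for $f_2$ and the cancellation against $G'$ check out, so nothing is missing.
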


\begin{proof}
Rewrite the KL divergence in \ref{eq:contm1} as
\begin{multline}
D\big(\mathrm{P}(\bm{X}\vert X_{0})p(X_{0})\Vert\tilde{\mathrm{P}}(\tilde{\bm{X}}\vert\tilde{X}_{\tau})p'(X_{\tau})\big)=\\
S(p'(X_{\tau}))-S(p(X_{0}))+\Big\langle\ln\frac{\mathrm{P}(\bm{x}\vert x_{0})}{\tilde{\mathrm{P}}(\tilde{\bm{x}}\vert\tilde{x}_{\tau})}\Big\rangle_{\mathrm{P}(\bm{x}\vert x_{0})p(x_{0})}.\label{eq:gda2}
\end{multline}
One can also rewrite the KL divergence in \ref{eq:contm2} as
\begin{multline}
D({p'(X_{\tau},Y_{\tau})}\Vert{p'(X_{\tau})q({Y_{\tau}\vert X_{\tau}})})=\\
S(p'(X_{\tau}))-S(p'(X_{\tau},Y_{\tau}))-\Big\langle\ln q(y_{\tau}\vert x_{\tau})\Big\rangle_{p'(x_{\tau},y_{\tau})}.\label{eq:gd}
\end{multline}
The second entropy term can be written as
\begin{align*}
& S(p'(X_{\tau},Y_{\tau}))=S(p(X_{0},Y_{0}))\\
& \qquad=S(p(X_{0}))-\int p(x_{0})q(y_{0}\vert x_{0})\ln q(y_{0}\vert x_{0})dx_{0}dy_{0},
\end{align*}
where we first used the invariance of differential entropy under volume-preserving
transformations, and in the second line the chain rule for entropy.
One can then rewrite the last term in \ref{eq:gd} as
\begin{align*}
& \Big\langle\ln q(y_{\tau}\vert x_{\tau})\Big\rangle_{p'(x_{\tau},y_{\tau})}\\
& =\int p'(x_{\tau},y_{\tau})\ln q(y_{\tau}\vert x_{\tau})\,dx_{\tau}dy_{\tau}\\
& =\int p(x_{0},y_{0})\ln q(y\vert x)\vert_{(x,y)=g(x_{0},y_{0}),}\,dx_{0}dy_{0},
\end{align*}
where we performed a change of variables and used that $p(x_{0},y_{0})=p'(x_{\tau},y_{\tau})\vert_{(x_{\tau},y_{\tau})=g(x_{0},y_{0})}$.
Combining lets us rewrite the right hand side of \ref{eq:gd} as
\[
S(p'(X_{\tau}))-S(p(X_{0}))-\Big\langle\ln\frac{q(y\vert x)\vert_{g(x_{0},y_{0})}}{q(y_{0}\vert x_{0})}\Big\rangle_{q(y_{0}\vert x_{0})p(x_{0})}.
\]
Combining with \ref{eq:gda2} and rearranging gives \ref{eq:gd2}.
\end{proof}
We now prove the classical analogues of \ref{prop:epConvMix} and
\ref{prop:epLSC}.
\begin{prop}
\label{prop:epConvMix-1}Consider a classical EP-type function $\mathsf{\Sigma}$,
as in \ref{eq:contm1} or \ref{eq:contm2}. Then, for any $p,r\in\mathcal{D}$,
$\lambda\in(0,1)$ such that $\Sigma(r(\lambda))<\infty$:
\begin{multline}
(1-\lambda)\mathsf{\Sigma}(r)+\lambda\mathsf{\Sigma}(p)-\mathsf{\Sigma}(r(\lambda))=\\
-(1-\lambda)\Delta D(r\Vert r(\lambda))-\lambda\Delta D(p\Vert r(\lambda)).\label{eq:conv1-1}
\end{multline}
\end{prop}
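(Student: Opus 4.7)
The plan is to mirror the proof of \ref{prop:epConvMix}, treating both functional forms in parallel via Donald's identity for classical relative entropy. Donald's identity reads: for any reference distribution $\mu$ and convex mixture $\bar\nu = \sum_i z_i \nu_i$,
\[
D(\bar\nu\Vert\mu) = \sum_i z_i\bigl[D(\nu_i\Vert\mu) - D(\nu_i\Vert\bar\nu)\bigr],
\]
which follows by direct expansion of the logarithm. Throughout, write $\bar p := r(\lambda) = (1-\lambda)r + \lambda p$ and $\bar p' := T\bar p = (1-\lambda)r' + \lambda p'$ (the latter by linearity of $T$).

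For \ref{eq:contm1}, I will apply Donald's identity to the trajectory-level convex mixture
\[
\mathrm{P}(\bm{x}\vert x_0)\bar p(x_0) = (1-\lambda)\mathrm{P}(\bm{x}\vert x_0)r(x_0) + \lambda\mathrm{P}(\bm{x}\vert x_0)p(x_0),
\]
with reference $\tilde{\mathrm{P}}(\tilde{\bm{x}}\vert\tilde{x}_\tau)\bar p'(x_\tau)$. The diagonal cross-terms simplify as $D(\mathrm{P}(\cdot\vert x_0)r(x_0)\Vert\mathrm{P}(\cdot\vert x_0)\bar p(x_0)) = D(r\Vert\bar p)$, since the conditional factor is common to both arguments. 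The off-diagonal cross-terms split by the chain rule for relative entropy: inserting $\ln\tilde{\mathrm{P}}\bar p'/\tilde{\mathrm{P}}r'$ into the integrand and simplifying yields
\[
D\bigl(\mathrm{P}(\cdot\vert x_0)r(x_0)\Vert\tilde{\mathrm{P}}(\cdot\vert\tilde x_\tau)\bar p'(x_\tau)\bigr) = \Sigma^{0}(r) + D(r'\Vert\bar p'),
\]
where $\Sigma^0(\cdot)$ denotes the KL part of \ref{eq:contm1}, and similarly with $p$ in place of $r$. Substituting these into Donald's identity and invoking linearity of $G$, so that $G(\bar p) = (1-\lambda)G(r) + \lambda G(p)$, produces exactly \ref{eq:conv1-1} after rearrangement.

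For \ref{eq:contm2}, I will apply Donald's identity to the convex mixture of joint end-state distributions,
\[
\bar p'(x_\tau,y_\tau) = (1-\lambda)r'(x_\tau,y_\tau) + \lambda p'(x_\tau,y_\tau),
\]
which holds by linearity of the Hamiltonian pushforward, taking the reference to be $\bar p'(x_\tau)q(y_\tau\vert x_\tau)$. By volume preservation of the Hamiltonian flow, the diagonal cross terms $D(r'(X_\tau,Y_\tau)\Vert\bar p'(X_\tau,Y_\tau))$ reduce to $D(r(X_0)q(Y_0\vert X_0)\Vert\bar p(X_0)q(Y_0\vert X_0)) = D(r\Vert\bar p)$, and the off-diagonal cross terms split by the same chain-rule computation as above. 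Combined with linearity of $G'$, this again yields \ref{eq:conv1-1}.

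The main obstacle is bookkeeping around finiteness: the sole hypothesis is $\mathsf{\Sigma}(r(\lambda))<\infty$, and it must be propagated to guarantee that every Donald-identity summand is finite so the rearrangement does not produce an $\infty - \infty$. I would handle this by first invoking \ref{prop:KLpropCl}(IV), applied to $\bar p \ge (1-\lambda)r$ and $\bar p \ge \lambda p$, to bound $D(r\Vert\bar p)$ and $D(p\Vert\bar p)$, and then propagating finiteness to the output-side divergences $D(r'\Vert\bar p')$ and $D(p'\Vert\bar p')$ by monotonicity, \ref{prop:KLpropCl}(V). The assumption $\mathsf{\Sigma}(r(\lambda))<\infty$ together with the non-negativity of relative entropy then forces the trajectory (or joint end-state) divergences for $r$ and $p$ themselves to be finite, and linearity of $G$ or $G'$ completes the decomposition. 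Once all terms are finite the remaining algebra is identical to the quantum proof.
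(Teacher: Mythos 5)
Your proposal is correct and follows essentially the same route as the paper: the paper handles \ref{eq:contm1} by a direct add-and-subtract of $\Delta D(\cdot\Vert r(\lambda))$ inside the mixture expectation (which is algebraically the same computation as your Donald's-identity step) and handles \ref{eq:contm2} exactly as you do, via the compensation (Donald) identity on the joint end-state mixture together with the chain rule for KL divergence and linearity of $G$, $G'$. Your finiteness bookkeeping via \ref{prop:KLpropCl}(IV) and (V) is sound and, if anything, more explicit than the paper's.
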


\begin{proof}
\emph{EP-type functions as in \ref{eq:contm1}.} For notational convenience,
define
\[
f(\bm{x})=\ln\frac{\mathrm{P}(\bm{x}\vert x_{0})}{\tilde{\mathrm{P}}(\tilde{\bm{x}}\vert\tilde{x}_{\tau})}.
\]
We will also use shorthand like $\langle\cdot\rangle_{p}$ to indicate
expectation under the distribution $\mathrm{P}(\bm{x}\vert x_{0})p(x_{0})$.
Then, write the EP incurred by initial distribution $r(\lambda)$
as
\begin{align}
& \mathsf{\Sigma}(r(\lambda))\nonumber \\
& =D\big(\mathrm{P}(\bm{X}\vert X_{0})r(\lambda)(X_{0})\Vert\tilde{\mathrm{P}}(\tilde{\bm{X}}\vert\tilde{X}_{\tau})r'(\lambda)(X_{\tau})\big)+G'(r(\lambda))\nonumber \\
& =\Big\langle\ln\frac{r(\lambda)(x_{0})}{r'(\lambda)(\tilde{x}_{\tau})}+f(\bm{x})\Big\rangle_{r(\lambda)}\!\!\!+G'(r(\lambda))\nonumber \\
& =(1-\lambda)\left[\Big\langle\ln\frac{r(\lambda)(x_{0})}{r'(\lambda)(\tilde{x}_{\tau})}+f(\bm{x})\Big\rangle_{r}\!\!\!+G'(r)\right]\label{eq:ghj2}\\
& \qquad+\lambda\left[\Big\langle\ln\frac{r(\lambda)(x_{0})}{r'(\lambda)(\tilde{x}_{\tau})}+f(\bm{x})\Big\rangle_{p}\!\!\!+G'(p)\right].\label{eq:ghj3}
\end{align}
where we used that the expectation and $G'$ are linear. Now consider
that the change of KL divergence between $r$ and $r(\lambda)$ can
be written as
\begin{align*}
\Delta D(r\Vert r(\lambda)) & =\Big\langle\ln\frac{r(\lambda)(x_{0})}{r'(\lambda)(\tilde{x}_{\tau})}-\ln\frac{r(x_{0})}{r'(\tilde{x}_{\tau})}\Big\rangle_{r}.
\end{align*}
By adding and subtracting $\Delta D(r\Vert r(\lambda))$ to the bracketed
term in \ref{eq:ghj2}, one can rewrite that term as
\begin{multline*}
\Delta D(r\Vert r(\lambda))+\Big\langle\ln\frac{r(x_{0})}{r'(\tilde{x}_{\tau})}+f(\bm{x})\Big\rangle_{r}+G'(r)\\
=\Delta D(r\Vert r(\lambda))+\mathsf{\Sigma}(r).
\end{multline*}
Performing a similar rewriting of the bracketed term in \ref{eq:ghj3},
and then combining with the above expression for $\mathsf{\Sigma}(r(\lambda))$,
gives
\begin{multline*}
\mathsf{\Sigma}(r(\lambda))=\\
(1-\lambda)[\Delta D(r\Vert r(\lambda))+\mathsf{\Sigma}(r)]+\lambda[\Delta D(p\Vert r(\lambda))+\mathsf{\Sigma}(p)].
\end{multline*}
This leads to \ref{eq:conv1-1} after some simple rearrangement.

\vspace{5pt}

\emph{EP-type functions as in \ref{eq:contm2}.} For EP-type functions
as in \ref{eq:contm2}, the derivation proceeds in exactly the same
manner as the derivation of \ref{prop:epConvMix} for quantum EP-type
functions as in \ref{eq:functype2} (up to a change of quantum notation
for classical probability notation). For this reason, we omit details
and refer the reader to the proof of \ref{prop:epConvMix}. We will
only mention the classical analogues of two quantum identities used
in that derivation: ``Donald's identity'' as stated in \ref{eq:donald}
and \mycitep[Thm.~3.12]{petzQuantumInformationTheory2008} as used
in \ref{eq:petzTr}. Donald's identity is usually called the ``compensation
identity'' in classical information theory, which can be found as
\mycitep[Lemma~7]{topsoe1979information}. For classical distributions,
the lines after \ref{eq:petzTr} can be derived using the chain rule
for KL divergence,
\begin{align*}
& D({p'(X_{\tau},Y_{\tau})}\Vert{r'(\lambda)(X_{\tau})q(Y_{\tau}\vert X_{\tau})})=\\
& \quad D({p'(X_{\tau})}\Vert{r'(\lambda)(X_{\tau})})+D({p'(Y_{\tau}\vert X_{\tau})}\Vert{q(Y_{\tau}\vert X_{\tau})})=\\
& D({p'(X_{\tau})}\Vert{r'(\lambda)(X_{\tau})})+D({p'(X_{\tau},Y_{\tau})}\Vert{p'(X_{\tau})q(Y_{\tau}\vert X_{\tau})}).
\end{align*}
\end{proof}
\begin{prop}
\label{prop:epLSC-1}Consider a classical EP-type function $\mathsf{\Sigma}$,
as in \ref{eq:contm1} and \ref{eq:contm2}. For any $p,r\in\mathcal{D}_{P}$
with $\mathsf{\Sigma}(p),\mathsf{\Sigma}(r),D(p\Vert r)<\infty$,
there is a sequence $\{p_{n}\}\subset\mathcal{D}_{P}$ such that:
\end{prop}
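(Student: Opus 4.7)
The plan is to mirror the structure of the quantum proof of \ref{prop:epLSC}, but with the spectral truncation of $\varphi$ replaced by a classical construction that plays the same role, namely producing a sequence whose density ratio to $r$ is bounded. First I would define, for each $n\in\mathbb{N}$, the measurable set $E_{n}:=\{x\in X:p(x)\le n\,r(x)\}$, together with $Z_{n}:=\int_{E_{n}}p(x)\,dx$ and the truncated distribution $p_{n}(x):=p(x)\,\mathbf{1}_{E_{n}}(x)/Z_{n}$ (defined for $n$ large enough that $Z_{n}>0$). Since $D(p\Vert r)<\infty$ forces $\mathrm{supp}\,p\subseteq\mathrm{supp}\,r$, the ratio $p/r$ is finite almost everywhere on $\mathrm{supp}\,p$, so $\mathbf{1}_{E_{n}}\uparrow 1$ on $\mathrm{supp}\,p$ and $Z_{n}\to 1$. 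In particular $p_{n}$ is a probability density, $\mathrm{supp}\,p_{n}\subseteq\mathrm{supp}\,p\subseteq P$ so $p_{n}\in\mathcal{D}_{P}$, and a direct computation gives $\|p_{n}-p\|_{\mathrm{TV}}=2(1-Z_{n})\to 0$.

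For part I, the defining inequality on $E_{n}$ gives $p_{n}(x)\le(n/Z_{n})\,r(x)$, hence $r\ge\alpha_{n}\,p_{n}$ with $\alpha_{n}:=Z_{n}/n>0$. For part II, I would split
\[
\mathsf{\Sigma}(p_{n})+\Delta D(p_{n}\Vert r)=\bigl[\mathsf{\Sigma}(p_{n})+D(Tp_{n}\Vert Tr)\bigr]-D(p_{n}\Vert r),
\]
and handle the two pieces separately. The bracketed term is lower-semicontinuous along $p_{n}\to p$: for definition \ref{eq:contm1} this follows because classical relative entropy is lower-semicontinuous under total-variation convergence, it lifts to the trajectory distributions, and $G$ is lower-semicontinuous by assumption; for definition \ref{eq:contm2} the same two ingredients apply. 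Combined with $Tp_{n}\to Tp$ in total variation (by the data-processing inequality), this gives $\liminf_{n}\bigl[\mathsf{\Sigma}(p_{n})+D(Tp_{n}\Vert Tr)\bigr]\ge\mathsf{\Sigma}(p)+D(Tp\Vert Tr)$.

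For the remaining term I would establish the full convergence $\lim_{n}D(p_{n}\Vert r)=D(p\Vert r)$ from the explicit identity
\[
D(p_{n}\Vert r)=\frac{1}{Z_{n}}\int_{E_{n}}p(x)\,\ln\frac{p(x)}{r(x)}\,dx-\ln Z_{n},
\]
and dominated convergence applied separately to the positive and negative parts of $p\ln(p/r)$. The negative part $p\ln(r/p)\,\mathbf{1}_{\{p<r\}}$ is dominated by $r$ via $p\ln(r/p)\le r-p\le r$, hence integrable; the positive part $p\ln(p/r)\,\mathbf{1}_{\{p>r\}}$ is then integrable because its integral equals $D(p\Vert r)$ plus the finite negative-part integral. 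Each truncated integrand converges pointwise (where $p/r<\infty$) to the full one and is dominated by the corresponding integrable part, so the integrals converge. Together with $Z_{n}\to 1$, this yields $\lim_{n}D(p_{n}\Vert r)=D(p\Vert r)$, and combining with the previous step gives part II.

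The main obstacle is precisely the equality $\lim_{n}D(p_{n}\Vert r)=D(p\Vert r)$: one cannot merely appeal to lower-semicontinuity of relative entropy, since it is the \emph{upper} bound $\limsup\le D(p\Vert r)$ that is needed. In the quantum case, this is supplied cleanly by Donald's convergence lemma \mycitep[Lemma~2.5]{donaldFurtherResultsRelative1987}, whereas classically it requires explicit control of both signed parts of $p\ln(p/r)$, using only the finiteness of $D(p\Vert r)$ itself. Once this convergence is secured, the rest of the argument closely mirrors the final step of the quantum proof of \ref{prop:epLSC}.
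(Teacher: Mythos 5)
Your proof is correct, but it constructs the approximating sequence in a genuinely different way from the paper. The paper invokes the Gelfand--Yaglom--Perez theorem to obtain finite quantizers $f_{n}$ with $D(P(f_{n}(X))\Vert R(f_{n}(X)))\to D(p\Vert r)$ and sets $p_{n}(x)=r(x\vert f_{n}(x))P(f_{n}(x))$, i.e.\ $p_{n}$ is proportional to $r$ within each quantization bin; the crucial limit $\lim_{n}D(p_{n}\Vert r)=D(p\Vert r)$ then comes for free because $D(p_{n}\Vert r)$ \emph{equals} the quantized divergence, and total-variation convergence $p_{n}\to p$ is extracted from $D(p\Vert p_{n})=D(p\Vert r)-D(P(f_{n}(X))\Vert R(f_{n}(X)))\to 0$ via Pinsker. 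Your ratio truncation $p_{n}\propto p\,\mathbf{1}_{\{p\le nr\}}$ avoids the quantization theorem entirely: TV convergence is immediate from $Z_{n}\to1$, and you correctly isolate the one delicate point --- the upper bound $\limsup_{n}D(p_{n}\Vert r)\le D(p\Vert r)$, which lower semicontinuity cannot give --- and supply it by dominated convergence on the two signed parts of $p\ln(p/r)$, with the estimate $p\ln(r/p)\le r-p$ controlling the negative part using only $D(p\Vert r)<\infty$. The remaining liminf step for $\mathsf{\Sigma}(p_{n})+D(Tp_{n}\Vert Tr)$ goes through exactly as in \ref{prop:lscappcl}, since only TV convergence of $p_{n}$ and its pushforwards is needed there; your route is arguably more elementary and self-contained. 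One remark on part I: you prove $r\ge\alpha_{n}p_{n}$ rather than the literal statement $p_{n}\ge\alpha_{n}r$, and indeed your construction does \emph{not} satisfy the latter (the ratio $p/r$ may be arbitrarily small on $E_{n}$). The quantum counterpart \ref{prop:epLSC} exhibits the same discrepancy --- its proof establishes $\varphi\ge\alpha_{n}\rho_{n}$ --- and domination by the minimizer is precisely what the application in \ref{thm:eqresult} (via \ref{eq:dn1}) requires, so your inequality is the operative one; you should simply note that it is the reverse of the inequality as stated.
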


\begin{enumerate}
\item[I.] For all $n$, there is some $\alpha_{n}>0$ such that $p_{n}\ge\alpha_{n}r$.
\item[II.] ${\displaystyle \liminf_{n\to\infty}\mathsf{\Sigma}(p_{n})+\Delta D(p_{n}\Vert r)\ge\mathsf{\Sigma}(p)+\Delta D(p\Vert r)}$.
\end{enumerate}
\begin{proof}
Let $P$ and $R$ be two probability measures over the same measurable
space $(X,\mathcal{A})$ that correspond to the densities $p$ and
$r$. By the Gelfand-Yaglom-Perez theorem \citep{pinskerInformationInformationStability1964,polyanskiyLectureNotesInformation2019},
there is a sequence of measurable functions (i.e., ``quantizers'')
$f_{1},f_{2},\dots$ over $X$ such that each $f_{i}(X)$ is a finite
set, and
\begin{equation}
\lim_{n\to\infty}D(P(f_{n}(X))\Vert R(f_{n}(X)))=D(p\Vert r).\label{eq:n12}
\end{equation}
For each $n$, define the following probability density function:
\[
p_{n}(x):=\begin{cases}
r(x\vert f_{n}(x))P(f_{n}(x)) & \text{if \ensuremath{r(x)>0}}\\
0 & \text{otherwise}.
\end{cases}.
\]
In words, $p_{n}$ has the same distribution as $p$ over the coarse-grained
quantized bins $f_{n}(X)$, and the same conditional distribution
as $r$ within each quantized bin.
Note that $\mathrm{supp}\,p\subseteq\mathrm{supp}\,r$,
which follows from $D(p\Vert r)<\infty$. Thus, it is easy to verify that for each
$n$, $\mathrm{supp}\,p_{n}\subseteq\mathrm{supp}\,r$, therefore
$p_{n}\in\mathcal{D}_{P}$. It is also easy to verify that for each
$n$ and any $x\in\mathrm{supp}\,p_{n}$,
\[
\frac{p_{n}(x)}{r(x)}=\frac{P(f_{n}(x))}{R(f_{n}(x))}\ge\alpha_{n}:=\min_{z}\frac{P(f_{n}(X)=z)}{R(f_{n}(X)=z)}>0,
\]
where the last inequality uses that $f_{n}(X)$ is a finite set and
that $p_{n}(x)>0\implies P(f_{n}(x))>0\implies R(f_{n}(x))>0$ (the
last implication follows from $\mathrm{supp}\,p\subseteq\mathrm{supp}\,r$). This proves (I).

To prove (II), observe that
\[
D(p_{n}\Vert r)=D(P(f_{n}(X))\Vert R(f_{n}(X)))
\]
which follows from \ref{eq:relentDef-1} and some simple algebra.
Along with \ref{eq:n12}, this implies
\begin{equation}
\lim_{n\to\infty}D(p_{n}\Vert r)=D(p\Vert r).\label{eq:lk2}
\end{equation}
Next, consider the KL divergence between $Tp$ and $Tp_{n}$:
\begin{align}
& D(Tp\Vert Tp_{n})\le D(p\Vert p_{n})\nonumber \\
& \quad=D(p(X\vert f_{n}(X))\Vert r(X\vert f_{n}(X)))\nonumber \\
& \quad=D(p\Vert r)-D(P(f_{n}(X))\Vert R(f_{n}(X))),\label{eq:kjs1}
\end{align}
where in the first line we used monotonicity, and in the third line
we used the chain rule for KL divergence \citep{polyanskiyLectureNotesInformation2019}.
Given \ref{eq:n12}, the expression in \ref{eq:kjs1} vanishes in
the $n\to\infty$ limit, so
\begin{equation}
\lim_{n\to\infty}D(p\Vert p_{n})=\lim_{n\to\infty}D(Tp\Vert Tp_{n})=0.\label{eq:nmmm2}
\end{equation}
Note that convergence in KL divergence \citep{harremoesInformationTopologiesApplications2007}
implies convergence in total variation distance (by Pinsker's inequality),
which in turns implies weak convergence. Since KL divergence is lower-semicontinuous
in the topology of weak converge \mycitep[Theorem~1]{posnerRandomCodingStrategies1975},
\begin{equation}
\liminf_{n\to\infty}D(Tp_{n}\Vert Tr)\ge D(Tp\Vert Tr).\label{eq:lk3}
\end{equation}
Finally, in \ref{prop:lscappcl} below we show that classical EP-type
functions, as in \ref{eq:contm1} and \ref{eq:contm2}, obey
\begin{equation}
\liminf_{n\to\infty}\mathsf{\Sigma}(p_{n})\ge\mathsf{\Sigma}(p).\label{eq:appEPLim0-1}
\end{equation}
(II) follows by combining \ref{eq:lk2}, \ref{eq:lk3},
and \ref{eq:appEPLim0-1}.
\end{proof}
\begin{lem}
\label{prop:lscappcl}For any $p,r\in\mathcal{D}_{P}$ with $\mathsf{\Sigma}(p),\mathsf{\Sigma}(r),D(p\Vert r)<\infty$,
let the sequence of distribution $\{p_{n}\}_{n}$ be defined as in
the proof of \ref{prop:epLSC-1}. Then, EP-type functions as in \ref{eq:contm1}
and \ref{eq:contm2} obey $\liminf_{n\to\infty}\mathsf{\Sigma}(p_{n})\ge\mathsf{\Sigma}(p)$.
\end{lem}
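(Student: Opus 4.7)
The plan is to establish lower-semicontinuity of both classical EP-type functionals along the specific sequence $\{p_n\}$ constructed in the proof of \ref{prop:epLSC-1}. Both forms \ref{eq:contm1} and \ref{eq:contm2} are written as the sum of a classical relative entropy and a lower-semicontinuous linear functional ($G$ or $G'$), so the lemma reduces to showing that the relative-entropy term is lower-semicontinuous at $p$ along this sequence and then invoking the assumed lower-semicontinuity of the linear part.

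First I would extract from equation \ref{eq:nmmm2} in the proof of \ref{prop:epLSC-1} the limits $\lim_n D(p\Vert p_n)=0$ and $\lim_n D(Tp\Vert Tp_n)=0$, and invoke Pinsker's inequality to obtain convergence $p_n\to p$ and $Tp_n\to Tp$ in total variation. I would then lift these to the distributions appearing inside the KL terms. For \ref{eq:contm1}, the forward trajectory law $\mathrm{P}(\bm{X}\vert X_0)p_n(X_0)$ converges in total variation to $\mathrm{P}(\bm{X}\vert X_0)p(X_0)$, because when two joint distributions share a common conditional kernel their TV distance equals the TV distance of the marginals; an analogous statement holds for the time-reversed trajectory distribution $\tilde{\mathrm{P}}(\tilde{\bm{X}}\vert\tilde{X}_{\tau})p'_n(X_\tau)$. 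For \ref{eq:contm2}, the initial joint $p_n(X_0)q(Y_0\vert X_0)$ converges in TV to $p(X_0)q(Y_0\vert X_0)$ by the same argument, and since Hamiltonian evolution is an invertible volume-preserving (hence TV-isometric) map, the time-$\tau$ joint $p'_n(X_\tau,Y_\tau)$ converges in TV to $p'(X_\tau,Y_\tau)$; the reference product $p'_n(X_\tau)q(Y_\tau\vert X_\tau)$ converges likewise.

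Next I would invoke Posner's theorem, already used in the proof of \ref{prop:KLpropCl}, which asserts lower-semicontinuity of classical relative entropy in the topology of weak convergence; this property is inherited immediately in the stronger total-variation topology. Combined with the assumed lower-semicontinuity of $G$ (respectively $G'$), and using that the sum of two lower-semicontinuous functionals is lower-semicontinuous, this yields $\liminf_n\mathsf{\Sigma}(p_n)\ge\mathsf{\Sigma}(p)$ in both cases. This mirrors the structure of the quantum proof in \ref{prop:lscappcl-2} for EP-type functions of form \ref{eq:functype2}, except that in the classical setting both forms \ref{eq:contm1} and \ref{eq:contm2} are already expressed as relative entropies, so a single lower-semicontinuity argument handles both.

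The main obstacle is the explicit lift from total-variation convergence of the system marginals $p_n$ and $Tp_n$ to total-variation convergence of the full joint trajectory or system-environment distributions appearing inside the KL divergences. This is a routine consequence of two facts — the TV distance between two measures of the form $\mathrm{P}(\cdot\vert x_0)\mu(x_0)$ is invariant under the shared conditional kernel $\mathrm{P}(\cdot\vert x_0)$, and invertible volume-preserving maps are TV-isometries — but both steps have to be made explicit since the classical setting lacks the compact operator-theoretic machinery used in the quantum proof. Once this lift is in hand, the lemma follows without further work.
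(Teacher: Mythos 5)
Your proposal is correct and follows essentially the same route as the paper's proof: establish convergence of the joint trajectory (respectively system--environment) distributions appearing inside the KL terms, then invoke joint lower-semicontinuity of relative entropy together with the assumed lower-semicontinuity of $G$ (resp.\ $G'$). The only cosmetic difference is that you lift the convergence to the joints via Pinsker and the kernel-invariance of total variation, whereas the paper does it directly in KL divergence using the chain rule and invariance under the volume-preserving map, noting afterwards that this implies total-variation convergence; both yield the same hypotheses for the lower-semicontinuity step.
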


\begin{proof}
\emph{EP-type functions as in} \ref{eq:contm1}. Consider the following
limit of KL divergences,
\begin{align*}
& \lim_{n\to\infty}D\big(\mathrm{P}(\bm{X}\vert X_{0})p(X_{0})\Vert\mathrm{P}(\bm{X}\vert X_{0})p_{n}(X_{0})\big)\\
& =\lim_{n\to\infty}D(p\Vert p_{n})=0,
\end{align*}
where we used the chain rule and then \ref{eq:nmmm2}. A similar derivation
shows that
\begin{multline*}
\lim_{n\to\infty}D\big(\tilde{\mathrm{P}}(\tilde{\bm{X}}\vert\tilde{X}_{\tau})p'(X_{\tau})\Vert\tilde{\mathrm{P}}(\tilde{\bm{X}}\vert\tilde{X}_{\tau})p_{n}'(X_{\tau})\big)\\
=\lim_{n\to\infty}D(p'\Vert p_{n}')=0.
\end{multline*}
This shows that $\mathrm{P}(\bm{x}\vert x_{0})p_{n}(x_{0})\to\mathrm{P}(\bm{x}\vert x_{0})p(x_{0})$
and $\tilde{\mathrm{P}}(\tilde{\bm{x}}\vert\tilde{x}_{\tau})p_{n}'(\tilde{x}_{\tau})\to\tilde{\mathrm{P}}(\tilde{\bm{x}}\vert\tilde{x}_{\tau})p'(\tilde{x}_{\tau})$
in KL divergence, thus also in total variation. Then, by lower-semicontinuity
of KL and $G$,
\begin{align*}
& \liminf_{n\to\infty}\mathsf{\Sigma}(p_{n})\\
& =\liminf_{n\to\infty}D\big(\mathrm{P}(\bm{X}\vert X_{0})p_{n}(X_{0})\Vert\tilde{\mathrm{P}}(\tilde{\bm{X}}\vert\tilde{X}_{\tau})p_{n}'(X_{\tau})\big)+G(p_{n})\\
& \ge D\big(\mathrm{P}(\bm{X}\vert X_{0})p(X_{0})\Vert\tilde{\mathrm{P}}(\tilde{\bm{X}}\vert\tilde{X}_{\tau})p'(X_{\tau})\big)+G(p)=\mathsf{\Sigma}(p).
\end{align*}

\vspace{5pt}
\emph{EP-type functions as in} \ref{eq:contm2}. Let $g:X\times Y\to X\times Y$
be the invertible volume-preserving evolution function specified by the Hamiltonian
dynamics over system and environment from time $t=0$ to time $t=\tau$.
Let $p_{n}(x_{0},y_{0})=p_{n}(x_{0})q(y_{0}\vert x_{0})$ and $p_{n}'(x_{\tau},y_{\tau})=p_{n}(x_{0},y_{0})\vert_{(x_{0},y_{0})=g^{-1}(x_{\tau},y_{\tau})}$,
and similarly $p(x_{0},y_{0})=p(x_{0})q(y_{0}\vert x_{0})$ and $p'(x_{\tau},y_{\tau})=p(x_{0},y_{0})\vert_{(x_{0},y_{0})=g^{-1}(x_{\tau},y_{\tau})}$.
Then, consider the following limit of KL divergences:
\begin{align*}
& \lim_{n\to\infty}D({p'(X_{\tau},Y_{\tau})}\Vert p_{n}'({X_{\tau}},{Y_{\tau}}))\\
& =\lim_{n\to\infty}D({p(X_{0},Y_{0})}\Vert p_{n}({X_{0}},{Y_{0}}))\\
& =\lim_{n\to\infty}D(p({X_{0}})q(Y_{0}\vert{X_{0}})\Vert p_{n}({X_{0}})q(Y_{0}\vert{X_{0}}))\\
& =\lim_{n\to\infty}D(p\Vert p_{n})=0,
\end{align*}
where we first used the invariance of KL under invertible transformations,
and in the last line we used the chain rule and then \ref{eq:nmmm2}.
Similarly,
\begin{multline*}
\lim_{n\to\infty}D({p'(X_{\tau})q(Y_{\tau}\vert X_{\tau})}\Vert{p'_{n}(X_{\tau})q(Y_{\tau}\vert X_{\tau})})\\
=\lim_{n\to\infty}D(p'\Vert p_{n}')=0,
\end{multline*}
where we've used the chain rule and \ref{eq:nmmm2}. This shows that
$p_{n}'(x_{\tau},y_{\tau})\to p'(x_{\tau},y_{\tau})$ and $p_{n}'(x_{\tau})q(y_{\tau}\vert x_{\tau})\to p'(x_{\tau})q(y_{\tau}\vert x_{\tau})$
in KL divergence, thus also in total variation. In addition, we know
that $p_{n}\to p$ by \ref{eq:nmmm2}. Then, by lower-semicontinuity
of KL and $G'$,
\begin{align*}
& \liminf_{n\to\infty}\mathsf{\Sigma}(p_{n})\\
& =\liminf_{n\to\infty}[D({p'_{n}(X_{\tau},Y_{\tau})}\Vert{p'_{n}(X_{\tau})q(Y_{\tau}\vert X_{\tau})})+G'(p_{n})]\\
& \ge D({p'(X_{\tau},Y_{\tau})}\Vert{p'(X_{\tau})q({Y_{\tau}\vert X_{\tau}})})+G'(p)=\mathsf{\Sigma}(p).
\end{align*}
\end{proof}
\clearpage
\end{document}